\newcommand{\poly}{\mathop{\mathrm{poly}}}
\newcommand{\polylog}{\mathop{\mathrm{polylog}}}
\newcommand{\rank}{\mathop{\operatorname{rank}}}
\newcommand{\ch}{\mathop{\operatorname{ch}}}
\newcommand{\nnz}{\mathop{\operatorname{nnz}}}
\newcommand{\nnzr}{\mathop{\operatorname{nnz_r}}}
\newcommand{\nnzc}{\mathop{\operatorname{nnz_c}}}
\newcommand{\argmax}{\mathop{\operatorname{argmax}}}
\newcommand{\F}{\mathbb{F}}
\newcommand{\Z}{\mathbb{Z}}
\newcommand{\N}{\mathbb{N}}
\newcommand{\C}{\mathbb{C}}
\newcommand{\rig}{\mathcal{R}}
\newcommand{\eps}{\varepsilon}
\theoremstyle{plain}\newtheorem{theorem}{Theorem}[section]
\newtheorem{lemma}[theorem]{Lemma}
\newtheorem{proposition}[theorem]{Proposition}
\newtheorem{corollary}[theorem]{Corollary}
\newtheorem{remark}[theorem]{Remark}
\theoremstyle{definition}
\newtheorem{definition}[theorem]{Definition}
\title{Kronecker Products, Low-Depth Circuits, and Matrix Rigidity} 
\author{Josh Alman\footnote{Harvard University. \texttt{jalman@seas.harvard.edu}. Supported by a Michael O. Rabin postdoctoral fellowship.}}
\begin{document}

\maketitle

\begin{abstract}
For a matrix $M$ and a positive integer $r$, the rank $r$ rigidity of $M$ is the smallest number of entries of $M$ which one must change to make its rank at most $r$. There are many known applications of rigidity lower bounds to a variety of areas in complexity theory, but fewer known applications of rigidity upper bounds. In this paper, we use rigidity upper bounds to prove new upper bounds in a few different models of computation. Our results include:
\begin{itemize}
    \item For any $d>1$, and over any field $\F$, the $N \times N$ Walsh-Hadamard transform has a depth-$d$ linear circuit of size $O(d \cdot N^{1 + 0.96/d})$. This circumvents a known lower bound of $\Omega(d \cdot N^{1 + 1/d})$ for circuits with bounded coefficients over $\C$~\cite{pudlak2000note}, by using coefficients of magnitude polynomial in $N$. Our construction also generalizes to linear transformations given by a Kronecker power of any fixed $2 \times 2$ matrix.
    \item The $N \times N$ Walsh-Hadamard transform has a linear circuit of size $\leq (1.81 + o(1)) N \log_2 N$, improving on the bound of $\approx 1.88 N \log_2 N$ which one obtains from the standard fast Walsh-Hadamard transform.
    \item A new rigidity upper bound, showing that the following classes of matrices are not rigid enough to prove circuit lower bounds using Valiant's approach:
    \begin{itemize}
        \item for any field $\F$ and any function $f : \{0,1\}^n \to \F$, the matrix $V_f \in \F^{2^n \times 2^n}$ given by, for any $x,y \in \{0,1\}^n$, $V_f[x,y] = f(x \wedge y)$, and
        \item for any field $\F$ and any fixed-size matrices $M_1, \ldots, M_n \in \F^{q \times q}$, the Kronecker product $M_1 \otimes M_2 \otimes \cdots \otimes M_n$.
    \end{itemize}
    This generalizes recent results on non-rigidity, using a simpler approach which avoids needing the polynomial method.
    \item New connections between recursive linear transformations like Fourier and Walsh-Hadamard transforms, and circuits for matrix multiplication.
\end{itemize}
\end{abstract}

\thispagestyle{empty}
\newpage
\setcounter{page}{1}

\section{Introduction}

For a matrix $M$ and a positive integer $r$, the rank $r$ rigidity of $M$, denoted $\rig_M(r)$, is the smallest number of entries of $M$ which one must change to make its rank at most $r$. Matrix rigidity was introduced by L.~Valiant~\cite{valiant1977graph} as a tool for proving low-depth circuit lower bounds. He showed that for any family $\{M_N\}_{N \in \N}$ of matrices with $M_N \in \F^{N \times N}$, if $\rig_{M_N}(O(N/\log\log N)) \geq N^{1+\eps}$ for any fixed $\eps>0$, then the linear transformation which takes as input a vector $x \in \F^N$ and outputs $M_N  x$ cannot be computed by an arithmetic circuit of size $O(N)$ and depth $O(\log N)$. We say $M_N$ is \emph{Valiant-rigid} if it satisfies this rigidity lower bound. It remains a major open problem to prove that any explicit family of matrices\footnote{We say $\{M_N\}_{N \in \N}$ with $M_N \in \F^{N \times N}$ is \emph{explicit} if there is an algorithm which, on input $N$, outputs $M_N$ in $\poly(N)$ deterministic time.} cannot be computed by circuits of size $O(N)$ and depth $O(\log N)$, and one of the most-studied approaches to this problem is to try to construct an explicit family of Valiant-rigid matrices.

Many researchers have subsequently shown that rigidity lower bounds for explicit matrices, both in this parameter regime and others, would lead to new lower bounds in a variety of areas, including in arithmetic complexity, communication complexity, Boolean circuit complexity, and cryptography. We refer the reader to~\cite{lokam2009complexity} for more on the background and known applications of matrix rigidity. However, despite 40+ years of efforts, and plenty of known applications, there are no known fully explicit constructions of rigid matrices.

A recent line of work~\cite{alman2017probabilistic,dvir2017matrix,dvir2019fourier} has instead shown that a number of families of explicit matrices are in fact not Valiant rigid, including the Walsh-Hadamard transform~\cite{alman2017probabilistic} and the discrete Fourier transform~\cite{dvir2019fourier}. These had been some of the most-studied candidate rigid matrices, which are now ruled out for proving lower bounds using this approach. This raises the question: Do these rigidity upper bounds imply any other interesting upper bounds? Although there are many results showing that rigid matrices imply a variety of lower bounds, there are few known connections showing that rigidity upper bounds would yield new algorithms or circuits.

In this paper, we give new upper bounds in a few different models which make use of recent rigidity upper bounds. Some of them apply rigidity upper bounds directly, while others are inspired by the proof techniques of recent rigidity upper bounds.

\subsection{Low-Depth Linear Circuits} \label{introsec:lowdepth}

We begin by studying linear circuits for computing a linear transformation $M \in \F^{N \times N}$. These are circuits in which the inputs are the $N$ entries of a vector $x \in \F^N$, the outputs must be the $N$ entries of $M x$, and each gate computes an $\F$-linear combination of its inputs. We focus on low-depth circuits with unbounded fan-in gates, so we measure their size by the number of \emph{wires} in the circuit. A special type of linear circuit which we focus on is a \emph{synchronous} linear circuit, in which the inputs to each gate must all have the same depth. One can see that a synchronous linear circuit of size $s$ and depth $d$ for $M$ corresponds to $d$ matrices $M_1, \ldots, M_d$ such that $M = M_1 \times \cdots \times M_d$ and $\nnz(M_1) + \cdots + \nnz(M_d) = s$, where $\nnz(A)$ denotes the number of nonzero entries in matrix $A$. A depth $d$ linear circuit can be converted into a depth $d$ synchronous linear circuit with a multiplicative size blowup of only $d$.

Rigidity upper bounds naturally give depth-$2$ linear circuit constructions. Indeed, it is not hard to see that any $M \in \F^{N \times N}$ has a depth-$2$ linear circuit of size $O(N \cdot \rank(M))$, and a depth-$1$ linear circuit of size $O(\nnz(M))$, and hence, for any $r$, a depth-$2$ linear circuit of size $O(N \cdot r + \rig_M(r))$. Thus, for instance, letting $H_n$ denote the $N \times N$ Walsh-Hadamard transform for $N = 2^n$, using the rigidity upper bound $\rig_{H_n}(N^{1 - \Theta(\eps^2 / \log^2(1/\eps))}) \leq N^{1 + \eps}$ for any $\eps>0$ of~\cite{alman2017probabilistic}, it follows that there is a fixed $\delta>0$ such that $H_n$ has a depth-$2$ linear circuit of size $O(N^{2-\delta})$. 

However, there is actually a smaller and simpler circuit known for $H_n$. Using an approach similar to the fast Walsh-Hadamard Transform, we can see that for any $d$, $H_n$ has a depth-$d$ synchronous linear circuit of size only $O(d \cdot N^{1+1/d})$. (The circuit involves, at each depth, computing $N^{1-1/d}$ independent copies of the $N^{1/d} \times N^{1/d}$ Walsh-Hadamard transform $H_{n/d}$.) Thus, $H_n$ has a depth-$2$ circuit of size only $O(N^{1.5})$, which is much better than $O(N^{2-\delta})$. Despite a fair bit of work by the author, it is unclear how to use the rigidity upper bound of~\cite{alman2017probabilistic} to improve on $O(N^{1.5})$.

Nonetheless, we are able to construct smaller circuits for $H_n$, as well as any other family of transforms defined as the Kronecker power of a fixed matrix, by making use of new, different rigidity upper bounds for $H_n$. For a fixed $2 \times 2$ matrix
$$M = 
\begin{bmatrix}
a & b \\
c & d
\end{bmatrix}$$
over a field $\F$, the family of Kronecker powers of $M$, denoted by $M^{\otimes n} \in \F^{2^n \times 2^n}$, is defined recursively by $M^{\otimes 1} = M$, and for $n \geq 1$, $$M^{\otimes (n+1)} = 
\begin{bmatrix}
a \cdot M^{\otimes n} & b \cdot M^{\otimes n} \\
c \cdot M^{\otimes n} & d \cdot M^{\otimes n}
\end{bmatrix}.$$
For instance, the $2^n \times 2^n$ Walsh-Hadamard transform $H_n$ is defined as $H_n := H_1^{\otimes n}$, where 
$$H_1 := 
\begin{bmatrix}
1 & 1 \\
1 & -1
\end{bmatrix}.$$
Kronecker powers arise naturally in many settings. For instance, when $$M = 
\begin{bmatrix}
1 & 1 \\
1 & \omega
\end{bmatrix}$$ for some element $\omega \in F$, then the linear transformation $M^{\otimes n}$ corresponds to evaluating an $n$-variate multilinear polynomial over $\F$ on all inputs in $\{1,\omega\}^n$.

Our main result is as follows:
\begin{theorem} \label{intro:mainthm}
Let $\F$ be any field, and let $M \in \F^{2 \times 2}$ be any matrix over $\F$. There is a constant $\eps > 0.01526$ such that, for any positive integers $n,d$, the linear transformation $M^{\otimes n} \in \F^{N \times N}$ for $N = 2^n$ has a depth-$d$ synchronous linear circuit of size $2^{\eps} \cdot d \cdot N^{1 + (1-\eps)/d}$. When $M = H_1$, so that $M^{\otimes n}$ is the Walsh-Hadamard transform $H_n$, we can improve the bound to $\eps > 0.04816$. 
\end{theorem}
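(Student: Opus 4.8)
The plan is to prove a rigidity upper bound for $M^{\otimes k}$ for some small fixed $k$, and then bootstrap it recursively across depth. The key observation is that the existing fast-Hadamard-style construction gives a depth-$d$ circuit of size $O(d \cdot N^{1+1/d})$ by writing $M^{\otimes n} = (M^{\otimes n/d})^{\otimes d}$ and computing, at each of $d$ layers, $N^{1-1/d}$ independent copies of the $N^{1/d} \times N^{1/d}$ transform $M^{\otimes n/d}$. Each such layer is a block-diagonal matrix (after permuting), hence has $\nnz$ equal to $N^{1-1/d} \cdot (N^{1/d})^2 = N^{1+1/d}$. The improvement must come from replacing each $N^{1/d} \times N^{1/d}$ block $M^{\otimes m}$ (where $m = n/d$) by a \emph{sparser} two-layer gadget: if $M^{\otimes m}$ has a depth-$2$ synchronous circuit of size $c \cdot 2^{(1+\beta)m}$ for some $\beta < 1$ and constant $c$, then splicing these gadgets in gives a depth-$2d$ circuit, and reindexing $d \mapsto d/2$ shows the exponent $1 + 1/d$ improves to $1 + \beta/d$. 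So the whole theorem reduces to: find a constant $k$ (independent of everything) and a constant $c$ such that $M^{\otimes k}$ has a depth-$2$ synchronous linear circuit of size $c \cdot 2^{(1+\beta)k}$ with $\beta$ as small as possible; then $\eps := 1 - \beta$ and the constant $2^\eps$ in front absorbs $c$ and the boundary effects from $n$ not being divisible by $kd$.

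**Next I would** establish the needed depth-$2$ bound via rigidity: a depth-$2$ synchronous circuit of size $O(N \cdot r + \rig_{M^{\otimes k}}(r))$ exists for every $r$, as noted in the excerpt (write $M^{\otimes k} = L \cdot R$ where $L$ is $N \times r$, $R$ is $r \times N$ from a rank-$r$ approximation, then add a sparse correction layer for the changed entries — this is three layers naively but collapses to two since the sparse layer can be merged). So I need a rigidity upper bound of the form $\rig_{M^{\otimes k}}(2^{\rho k}) \leq 2^{(1+\beta')k}$ for suitable $\rho < 1$, and then optimize: depth-$2$ size is $\max(2^{(1+\rho)k}, 2^{(1+\beta')k})$, so set $\rho$ so both terms balance. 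The cleanest route to such a rigidity bound for a Kronecker power is the tensor-power / direct-sum trick underlying the recent non-rigidity results (and promised in the abstract): if the single matrix $M$ (or a small power $M^{\otimes k_0}$) admits a decomposition as a low-rank matrix plus a sparse matrix over $\F$ — concretely, write $M^{\otimes k_0} = A + B$ with $\rank(A) = r_0$ and $B$ supported on $s_0$ entries — then taking Kronecker powers and expanding $(A+B)^{\otimes t}$ multinomially, the terms with at most $j$ factors of $B$ contribute rank at most $\binom{t}{j} r_0^{t-j} q^{j k_0}$ (where $q = 2^{k_0}$) and the remaining terms (at least $t - j$ factors of $B$, i.e. "many sparse factors") contribute sparsity at most $q^{(t-j)k_0}$-ish times combinatorial factors; balancing $j$ against $t$ gives the tradeoff. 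The point is that $\rank(A) = r_0 \leq 3$ is achievable for any $2\times 2$ matrix $M$ trivially (rank $\leq 2$), but we want $A$ sparse-complemented cleverly — e.g. if some entry of $M$ is $0$ we win immediately; in general one works with $M^{\otimes 2}$ or $M^{\otimes 3}$ and uses that a $4 \times 4$ or $8 \times 8$ matrix can be written as rank-$r_0$ plus $s_0$-sparse with $r_0 \cdot$ (something) $< 4$ or $8$, which is a finite computation.

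**The main obstacle** will be squeezing a good enough constant out of the base case — that is, finding the decomposition $M^{\otimes k_0} = A + B$ that simultaneously keeps $\rank(A)$ and $\nnz(B)$ small enough that, after the multinomial expansion and the depth-bootstrapping, the resulting $\eps$ clears $0.01526$ (resp. $0.04816$ for $H_1$, where the extra structure — symmetry, the $\pm 1$ entries, the tensor-square $H_1^{\otimes 2} = H_2$ being itself a nice $4\times 4$ Hadamard matrix — should be exploited to get a better $A, B$ split). This is fundamentally an optimization over a few discrete parameters ($k_0$, the target rank $r_0$, the number $j$ of sparse factors kept, and the "inner depth" of the recursion), and I expect the bulk of the paper's work here is (i) choosing the right small $k_0$ and exhibiting the explicit low-rank-plus-sparse decomposition, (ii) carrying the multinomial bound through carefully to get $\rig_{M^{\otimes k}}(\cdot)$ as a clean function of $k$, and (iii) verifying that the boundary/rounding losses from non-divisibility and from converting the depth-$2d$ circuit back to depth $d$ only cost constant factors, which get absorbed into the $2^{\eps} \cdot d$ prefactor. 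The general-$M$ case and the $H_1$ case then differ only in the quality of the base decomposition and hence in the final numerical constant.
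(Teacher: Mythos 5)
Your high-level framework matches the paper's: prove something nontrivial for a small fixed Kronecker power, get it from a rigidity upper bound via a depth-$2$ factorization $M^{\otimes k}=B\times C$ with $\nnz(C)=q(r+1)$ and $\nnz(B)=qr+\rig(r)$, and then Kronecker-bootstrap to all $n$. But there are two concrete gaps. First, your depth accounting is wrong at the crucial step. Splicing a depth-$2$ gadget of size $q^{1+\beta}$ into each of the $d$ layers of the fast-transform decomposition yields a depth-$2d$ circuit of size $d\cdot N^{1+\beta/d}$; reindexing to target depth $D=2d$ gives exponent $1+2\beta/D$, not $1+\beta/D$. So you need $\beta<1/2$ to beat the trivial $1+1/D$, whereas the achievable gadget (e.g.\ $q=8$, $r=1$, $\rig_{M^{\otimes 3}}(1)\le 23$, total sparsity $16+31=47=8^{1.852}$) only gives $\beta\approx 0.85$, and your construction is strictly worse than the baseline. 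The paper avoids the depth doubling by \emph{symmetrizing}: it writes $M^{\otimes d}$ as a product of $d$ matrices, the $j$-th being a permuted copy of $B\otimes C\otimes I_{q^{d-2}}$ with the $B,C$ factors placed in cyclically shifted tensor positions, so each of the $d$ factors has sparsity $\nnz(B)\cdot\nnz(C)\cdot q^{d-2}$ and the relevant condition becomes the \emph{product} bound $\nnz(B)\cdot\nnz(C)<q^{3}$ (here $31\cdot 16=496<512$), not a sum bound. That geometric-mean condition is what the rank-$1$ rigidity bounds actually satisfy, and it is the idea missing from your writeup.

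Second, your proposed route to the rigidity bound --- amplifying a low-rank-plus-sparse split of a small power via the multinomial expansion of $(A+B)^{\otimes t}$ --- lands in the wrong parameter regime. That amplification produces rank $q^{\rho}$ with $\rho$ close to $1$ (the regime of the earlier non-rigidity results), whereas the symmetrized construction needs $(r+1)(r+\rig(r)/q)<q$, which forces $r<\sqrt{q}$; the paper points this out explicitly. What it actually uses is a direct \emph{rank-one} rigidity bound for a tiny fixed power, exhibited by hand: a rank-$1$ matrix agreeing with $M^{\otimes 3}$ on all but $23$ of its $64$ entries (after reducing $M$ to outer-$1$ form via diagonal scalings), and $\rig_{H_4}(1)\le 96$ obtained by tensoring a rank-$1$ approximation of $H_2$ with itself. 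Your instinct that the base case is ``a finite computation'' is right, but without the correct quantitative target $(r+1)(r+\rig(r)/q)<q$ and without the symmetrization, the optimization you describe would not produce the claimed exponents.
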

Our new result shows that $H_n$ has a depth-$2$ linear circuit of size only $O(N^{1.47592})$, and more generally improves the size of a depth-$d$ linear circuit for $H_n$ or any $n$th Kronecker power when $d < o(\log n)$. When $d$ divides $n$, we can improve the upper bound to $d \cdot N^{1 + (1-\eps)/d}$, removing the $2^{\eps}$ factor. This construction may be of practical interest, as it improves on the previous bound of $d \cdot N^{1 + 1/d}$, even for small constant values of $N$ and $d$.

\Cref{intro:mainthm} is also particularly interesting when compared to a lower bound of Pudl{\'a}k~\cite{pudlak2000note} against low-depth linear circuits with \emph{bounded coefficients} for computing $H_n$ over $\C$. Recall that in a linear circuit over $\C$, each gate computes a $\C$-linear combination of its inputs. For a positive real number $c$, we say the circuit has $c$-bounded coefficients if, for each gate, the coefficients of the linear combination are complex numbers of magnitude at most $c$. Motivated by the fact that the best known linear circuits for many important linear transformations, including the Walsh-Hadamard transform and the discrete Fourier transform, use only $1$-bounded coefficients (prior to this paper), a  line of work~\cite{morgenstern1973note,chazelle1994spectral,lokam2001spectral,nisan1996lower,pudlak2000note,burgisser2004lower,raz2002complexity} (see also~\cite[{Section~3.3}]{lokam2009complexity}) has shown strong, often tight lower bounds for linear circuits with bounded coefficients. Pudl{\'a}k~\cite{pudlak2000note} showed that the aforementioned circuit of depth $d$ and size $O(d \cdot N^{1 + 1/d})$ is optimal for bounded coefficient circuits:

\begin{theorem}[\cite{pudlak2000note}] \label{pudlak1}
Any depth $d$ synchronous linear circuit with $c$-bounded coefficients for computing the Walsh-Hadamard transform $H_n \in \C^{N \times N}$ for $N=2^n$ has size $\geq d \cdot N^{1 + 1/d} / c^2$. 
\end{theorem}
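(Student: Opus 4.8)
The natural route is to convert the circuit into a product of layer matrices and then run a ``volume'' (exterior-power) argument, in the spirit of Morgenstern's bounded-coefficient determinant bound, but tracking the top $N$ singular values rather than a full determinant so that wide intermediate layers cause no trouble.

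First I would invoke the correspondence recalled earlier in the paper: a synchronous depth-$d$ linear circuit of size $s$ for $H_n$ yields matrices $M_1, \dots, M_d$ with $H_n = M_d M_{d-1} \cdots M_1$, each $M_i \in \C^{n_i \times n_{i-1}}$ having entries of magnitude $\le c$, with $n_0 = n_d = N$ and $\sum_{i=1}^d \nnz(M_i) = s$. Since $\rank(H_n) = N$ and rank is submultiplicative, every $n_i \ge N$, so each $M_i$ has at least $N$ singular values $\sigma_1(M_i) \ge \sigma_2(M_i) \ge \cdots$, and the quantity defined next makes sense for every layer.

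The engine is $\Phi(A) := \prod_{j=1}^N \sigma_j(A)$, the product of the $N$ largest singular values of $A$. I would use two facts: (i) $\Phi(A) = \| \Lambda^N A \|_{\mathrm{op}}$, the operator norm of the $N$-th exterior power, so $\Phi$ is submultiplicative under matrix products because $\Lambda^N(AB) = (\Lambda^N A)(\Lambda^N B)$ — this is the only place matrix analysis enters, and it is standard (equivalently one may cite the log-majorization inequality $\prod_{j\le N}\sigma_j(AB) \le \prod_{j\le N}\sigma_j(A)\sigma_j(B)$); and (ii) since $H_n/\sqrt{N}$ is unitary, all $N$ singular values of $H_n$ equal $\sqrt{N}$, so $\Phi(H_n) = N^{N/2}$. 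I would then bound $\Phi$ of a single sparse bounded layer: if $A$ has $c$-bounded entries and $w := \nnz(A)$ nonzeros, then $\sum_{j=1}^{N}\sigma_j(A)^2 \le \|A\|_F^2 = \sum |A_{jk}|^2 \le c^2 w$, and AM--GM gives $\Phi(A)^2 = \prod_{j=1}^N \sigma_j(A)^2 \le \big(\tfrac1N \sum_{j=1}^N \sigma_j(A)^2\big)^N \le (c^2 w / N)^N$.

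Combining these, $N^{N/2} = \Phi(H_n) \le \prod_{i=1}^d \Phi(M_i) \le \prod_{i=1}^d (c^2 w_i / N)^{N/2}$, where $w_i := \nnz(M_i)$; taking $N$-th roots and rearranging yields $\prod_{i=1}^d w_i \ge N^{d+1}/c^{2d}$, and one final application of AM--GM gives $s = \sum_{i=1}^d w_i \ge d\,\big(\prod_{i} w_i\big)^{1/d} \ge d\, N^{1+1/d}/c^2$. I do not anticipate a serious obstacle: the argument is essentially a short chain of AM--GM inequalities glued together by the multiplicativity of $\Lambda^N$. The only points requiring care are (a) confirming that all intermediate layer widths are $\ge N$ so that $\Phi$ is defined, which follows from $\rank(H_n) = N$, and (b) being comfortable with the exterior-power / log-majorization fact used to pass the product of the top $N$ singular values through a matrix product. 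As a sanity check, the circuit achieving $d\cdot N^{1+1/d}$ (block-diagonal copies of $H_{n/d}$ at each of the $d$ layers) saturates every inequality above, so no slack is being lost.
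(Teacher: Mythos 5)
Your proof is correct. Note that the paper itself gives no proof of this statement --- it is quoted directly from Pudl\'{a}k's note --- so there is nothing internal to compare against; your argument via the product of the top $N$ singular values (the ``$N$-dimensional volume''), its submultiplicativity through $\Lambda^N$, and the Frobenius-norm/AM--GM bound $\prod_{j\le N}\sigma_j(A)^2 \le (c^2\,\nnz(A)/N)^N$ per layer is exactly the standard route, and is essentially Pudl\'{a}k's own volume argument. You correctly handle the two points that need care: every intermediate layer width is at least $N$ because $\rank(H_n)=N$, and the final chain $N^{N/2}\le\prod_i(c^2 w_i/N)^{N/2}$ followed by AM--GM on $\sum_i w_i$ yields precisely $d\,N^{1+1/d}/c^2$, with the block-diagonal fast-transform circuit saturating every step.
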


Our \Cref{intro:mainthm} circumvents this lower bound by using large coefficients. Indeed, we will see that over $\F = \C$, we use coefficients which are integers of magnitude up to $N^{O(1)}$. That said, it should be noted that, since our coefficients are only $O(\log N)$-bit integers, the additional time required to do the arithmetic for the coefficients of our circuit is still negligible compared to the circuit size savings in any reasonable model of computation.

To our knowledge, this is the first non-trivial upper bound surpassing one of the aforementioned bounded-coefficient lower bounds.
This shows that using larger coefficients can make a substantial difference in the circuit size required, even when computing the linear transformation of a matrix whose entries are all in $\{-1,1\}$. 
At the same time, it is interesting to note that our \Cref{intro:mainthm} works over any field, even a constant-sized finite field like $\F_3$ where there are no `large' coefficients. One could have imagined that overcoming bounded-coefficient lower bounds, when possible, \emph{requires} using an infinite field and large coefficients, but at least in this setting, that is not the case.

Our proof of \Cref{intro:mainthm} begins with a new general framework for designing smaller low-depth circuits for recursively-defined families of matrices like $H_n$. We show that a nontrivial synchronous circuit construction for any fixed matrix in the family leads to a smaller circuit for every matrix in the family. 

\begin{lemma} \label{intro:part1}
Let $M \in \F^{q \times q}$ be a $q \times q$ matrix over any field $\F$, and suppose there are matrices $A_1, \ldots, A_{d}$ such that $M = \prod_{j=1}^{d} A_j$ and $\nnz(A_i) \leq q^c$ for all $i \in [d]$. Then, for every positive integer $n$, letting $N = q^n$, the $N \times N$ matrix $M^{\otimes n}$ has a depth-$d$ synchronous linear circuit of size $O(N^c)$.
\end{lemma}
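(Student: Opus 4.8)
The plan is to lift the hypothesized factorization of the small matrix $M$ to a factorization of $M^{\otimes n}$ using the \emph{mixed-product property} of the Kronecker product, namely that $(A \otimes B)(C \otimes D) = (AC) \otimes (BD)$ whenever the dimensions are compatible. First I would prove, by induction on $d$ and on $n$, that $(A_1 A_2 \cdots A_d)^{\otimes n} = A_1^{\otimes n} A_2^{\otimes n} \cdots A_d^{\otimes n}$; applying this to $M = \prod_{j=1}^d A_j$ yields $M^{\otimes n} = \prod_{j=1}^d A_j^{\otimes n}$, a product of $d$ matrices. (If the $A_j$ are rectangular, say of shape $k_{j-1} \times k_j$, then $A_j^{\otimes n}$ has shape $k_{j-1}^n \times k_j^n$, so the intermediate layer widths of the resulting circuit are $k_j^n$, which is permitted in a synchronous linear circuit.)

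Next I would control the number of nonzero entries of each factor. Because an entry $(A \otimes B)[(i,k),(j,l)] = A[i,j] \cdot B[k,l]$ of a Kronecker product over a field is nonzero precisely when both $A[i,j]$ and $B[k,l]$ are nonzero, we have $\nnz(A \otimes B) = \nnz(A) \cdot \nnz(B)$, and hence $\nnz(A_j^{\otimes n}) = \nnz(A_j)^n \leq (q^c)^n = N^c$ for each $j \in [d]$.

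Finally I would invoke the circuit--factorization correspondence recalled in \Cref{introsec:lowdepth}: the product representation $M^{\otimes n} = \prod_{j=1}^d A_j^{\otimes n}$ gives a depth-$d$ synchronous linear circuit of size $\sum_{j=1}^d \nnz(A_j^{\otimes n}) \leq d \cdot N^c$, which is $O(N^c)$ since $d$, $q$, and $c$ are constants fixed together with $M$. Every step is a direct application of a standard identity, so there is no real obstacle here; the only points that need a line of care are the rectangular case of the mixed-product identity and the observation that multiplicativity of $\nnz$ under $\otimes$ uses nothing beyond the absence of zero divisors in $\F$.
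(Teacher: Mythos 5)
Your proposal is correct and is essentially identical to the paper's proof (given as \cref{lem:fixedtobig} specialized to $t=1$): both lift the factorization via the mixed-product property to get $M^{\otimes n} = \prod_{j=1}^d A_j^{\otimes n}$ and then use multiplicativity of $\nnz$ under Kronecker products to bound each factor's sparsity by $N^c$. Your extra remarks on the rectangular case and on why $\nnz$ is multiplicative are fine and require no changes.
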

\noindent \Cref{intro:part1} follows by simply calculating how taking a Kronecker power changes the given circuit for $M$, but it is nonetheless conceptually interesting: in order to design a small circuit for the entire family of matrices $M^{\otimes n}$, it suffices to design one for any fixed matrix in the family. \Cref{intro:part1} is similar to the approach for designing matrix multiplication algorithms spearheaded by Strassen~\cite{strassen1969gaussian}, where an identity for quickly multiplying fixed size matrices implies asymptotic improvements for multiplying matrices of any sizes. Our proof was inspired by this, as Kronecker products also play a central role in the definition and study of matrix multiplication tensors.

We then use rigidity upper-bounds for the $q \times q$ matrix $M$ to construct fixed upper bounds. One can see by concatenating the two parts of a non-rigidity expression for $M$ that, for any rank $r$, we can find matrices $B,C$ with $M = B \times C$, $\nnz(B) = q(r+1)$, and $\nnz(C) = q\cdot r + \rig_M(r)$. We can `symmetrize' this construction using a Kronecker product trick, then apply \Cref{intro:part1} to yield:

\begin{lemma} \label{intro:rigiditytoconstruction}
Let $M \in \F^{q \times q}$ be a $q \times q$ matrix over any field $\F$, and $1 \leq r \leq q$ be any rank, and define $$c := \log_q( (r+1) \cdot (r + \rig_M(r)/q)).$$ Then, for any positive integer $n$, setting $N = q^n$, the $N \times N$ matrix $M^{\otimes n}$ has a depth-$d$ synchronous circuit of size $O(d \cdot N^{1 + c/d})$.
\end{lemma}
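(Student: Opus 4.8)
My plan is to obtain the circuit from \Cref{intro:part1} in three moves: turn the rigidity bound into an explicit two–factor product for $M$, \emph{symmetrize} that product into a \emph{balanced} $d$–factor product for a Kronecker power of $M$, and then invoke \Cref{intro:part1}.

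\emph{From rigidity to a factorization.} By definition of $\rig_M(r)$ there is $S\in\F^{q\times q}$ with $\nnz(S)\le\rig_M(r)$ and $\rank(M-S)\le r$, hence $M-S=UV$ for some $U\in\F^{q\times r}$, $V\in\F^{r\times q}$. Setting $B:=[\,U\mid I_q\,]\in\F^{q\times(q+r)}$ and $C:=\begin{bmatrix}V\\ S\end{bmatrix}\in\F^{(q+r)\times q}$ yields $M=BC$ with $\nnz(B)\le\nnz(U)+q\le q(r+1)$ and $\nnz(C)\le\nnz(V)+\nnz(S)\le qr+\rig_M(r)$, so that $\nnz(B)\cdot\nnz(C)\le q^{2}(r+1)\bigl(r+\rig_M(r)/q\bigr)=q^{c+2}$. (Transpose‑invariance of rigidity also gives the analogous product $M^{T}=C^{T}B^{T}$, with the two weights swapped.)

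\emph{Symmetrization.} The point is that if the $q^{d}\times q^{d}$ matrix $M^{\otimes d}$ admits a depth‑$d$ synchronous circuit each of whose layers has at most $q^{c+d}=(q^{d})^{1+c/d}$ wires, then \Cref{intro:part1}, applied to that circuit viewed as a product of $d$ matrices of size $q^{d}$, produces a depth‑$d$ circuit for $(M^{\otimes d})^{\otimes n'}=M^{\otimes dn'}$ of size $O\!\bigl(d\cdot(q^{dn'})^{1+c/d}\bigr)=O(d\,N^{1+c/d})$ with $N=q^{dn'}$, as desired. To build such a layered circuit I would compute the $d$ tensor copies of $M$ in a ``staircase'': the copy living on coordinate $j$ has its first factor applied in layer $j$ and its second factor applied in layer $j+1$. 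Then every \emph{interior} layer performs exactly one ``open'' ($C$ on some coordinate) and one ``close'' ($B$ on some coordinate), and hence carries at most $\nnz(C)\cdot\nnz(B)\cdot q^{d-2}\le q^{c+2}\cdot q^{d-2}=q^{c+d}$ wires — no matter how lopsided $\nnz(B)$ and $\nnz(C)$ are. The delicate points, which I expect to be the crux, are (a) the two boundary layers, where layer $1$ opens two coordinates and layer $d$ closes two, so that a naive staircase pays $\max(\nnz(B),\nnz(C))^{2}q^{d-2}$ there; this must be rebalanced by having a boundary layer combine a $B$–weight with a $C$–weight, which is exactly what the transpose factorization $M^{T}=C^{T}B^{T}$ supplies; and (b) making (a) legitimate when $M$ is \emph{not} symmetric, since ``applying $B^{T}$ before $C^{T}$'' computes $M^{T}\neq M$, so the two factorizations cannot simply be mixed coordinate‑by‑coordinate, and the obvious fix of replacing $M$ by the symmetric matrix $M\otimes M^{T}$ (which does have the balanced two‑factor product $(B\otimes C^{T})(C\otimes B^{T})$) squares the ambient dimension and therefore the exponent. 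Finding a rebalancing of the boundary layers that works for general $M$ without incurring this dimension blow‑up is the step I expect to be hardest.

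\emph{Assembling the bound.} Once $M^{\otimes d}$ has such a balanced depth‑$d$ product, \Cref{intro:part1} gives the depth‑$d$ circuit for $M^{\otimes dn'}$ of size $O(d\,N^{1+c/d})$ for every $n'$. For $n$ not a multiple of $d$, write $n=dn'+t$ with $0\le t<d$, use $M^{\otimes n}=M^{\otimes dn'}\otimes M^{\otimes t}$, and absorb the extra factor $M^{\otimes t}$ — an operation on a $q^{t}$‑dimensional block, with at most $q^{2t}\le q^{2d}$ wires per copy — into one layer of the circuit, a change swallowed by the $O(\cdot)$. This yields the stated bound for all $n$ and $d$.
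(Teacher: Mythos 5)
Your overall architecture is exactly the paper's: a two-factor product $M=BC$ from the rigidity decomposition, a ``staircase'' Kronecker assembly of $M^{\otimes d}$ in which every layer carries one heavy and one light factor, and then \Cref{intro:part1}. The one step you flag as unresolved --- rebalancing the boundary layers when $M$ is not symmetric --- is a genuine gap in your writeup, but it has a one-line fix that you nearly state and then talk yourself out of. You correctly observe that merely transposing $M=BC$ yields $M^{T}=C^{T}B^{T}$, which is a factorization of $M^{T}$ and therefore cannot be used on a tensor coordinate that must compute $M$. What you should do instead is run the \emph{entire rigidity construction} on $M^{T}$: since $\rig_{M^{T}}(r)=\rig_{M}(r)$, you get $M^{T}=B_{0}C_{0}$ with $\nnz(B_{0})\le qr+\rig_M(r)$ and $\nnz(C_{0})\le q(r+1)$, and transposing \emph{that} gives $M=C_{0}^{T}B_{0}^{T}$ --- a bona fide factorization of $M$ itself with the light factor and the heavy factor in the opposite order from $M=BC$. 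The wrap-around coordinate of the staircase uses $M=C_{0}^{T}B_{0}^{T}$ while all other coordinates use $M=BC$; then layer $1$ combines one light factor ($C_0^T$) with one heavy factor ($B$), layer $d$ combines one heavy factor ($B_0^T$) with one light factor ($C$), and every layer has sparsity at most $\bigl(qr+\rig_M(r)\bigr)\cdot q(r+1)\cdot q^{d-2}=q^{c+d}$. No symmetry of $M$ is used and there is no dimension blow-up; this is precisely the paper's Remark after its two-factor lemma, and your proposed fallback via $M\otimes M^{T}$ is unnecessary.

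A secondary, smaller issue: your handling of $n$ not divisible by $d$ absorbs all of $M^{\otimes t}$ with $t<d$ into a single layer, multiplying that layer's sparsity by up to $q^{2t}$, which is not $O(N^{t(1+c/d)/n}\cdot\text{const})$ and hence not swallowed by the $O(\cdot)$ uniformly in $d$. The cleaner route (and the paper's) is to distribute the $t$ leftover copies of $M$ across $t$ distinct layers, one copy per layer, so each affected layer pays only a factor $\nnz(M)\le q^{2}$; alternatively, round $n$ up to the next multiple of $d$ inside \Cref{intro:part1} and take a submatrix, at the cost of a single factor of $q^{1+c/d}$ per layer. Either repair keeps the bound at $O(d\cdot N^{1+c/d})$ with a constant depending only on $q$.
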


Thus, rigidity upper bounds on $H_m$ for a fixed $m$ can give nontrivial low-depth circuit upper bounds for $H_n$ for all $n$. Unfortunately, we cannot simply substitute in the rigidity upper bound of~\cite{alman2017probabilistic} to prove our result. Indeed, to achieve $c < 1$ in \Cref{intro:rigiditytoconstruction} when applying it to the $q \times q$ matrix $H_m$ for $q = 2^m$, it is not hard to see that we need $r < \sqrt{q}$. By comparison, the bound from~\cite{alman2017probabilistic} is primarily interesting for higher rank $r > q^{1 - \eps'}$ for small $\eps'>0$. Other known constructions, including those from probabilistic polynomials~\cite{alman2015probabilistic}, do not seem to give a nontrivial bound here either. Instead, to prove our upper bound, we use a new rigidity upper bound for $H_n$ for rank $r=1$, and more specifically, \Cref{intro:mainthm} ultimately follows from a new construction we give for the $16 \times 16$ matrix $H_4$ showing that $\rig_{H_4}(1) \leq 96$.

Using rigidity upper bounds naturally leads to `symmetric' circuits to use in \Cref{intro:part1}, but one could imagine other approaches that lead to more `lopsided' constructions. We additionally prove a generalization of \Cref{intro:part1}, that even such constructions can lead to upper bounds for $M^{\otimes n}$ for all $n$:
\begin{lemma} \label{intro:strassenylemma}
Let $M \in \F^{q \times q}$ be a $q \times q$ matrix over any field $\F$, and suppose there are matrices $A_1, \ldots, A_{d}$ such that $M = \prod_{j=1}^{d} A_j$, which is nontrivial in the sense that $\prod_{j=1}^{d} \nnz(A_j) \leq q^{d + c'}$ for some $c'<1$. Then, for every positive integer $n$, letting $N = q^n$, the $N \times N$ matrix $M^{\otimes n}$ has a depth-$d$ synchronous circuit of size $O(N^{1 + c/d})$ for a constant $c<1$ which depends only on $c'$.
\end{lemma}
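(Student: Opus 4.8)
The plan is to build, for every $n$, a depth-$d$ synchronous circuit for $M^{\otimes n}$ directly, by ``spreading'' the work of each factor across the $d$ layers. The starting point is the mixed-product identity $M^{\otimes n} = A_1^{\otimes n}A_2^{\otimes n}\cdots A_d^{\otimes n}$, which already computes $M^{\otimes n}$ in depth $d$ but has layer $j$ carrying $\nnz(A_j)^n$ nonzeros, too many when some $\nnz(A_j)$ is large. The key is that each $A_j^{\otimes n}$ is itself cheaply factorable: grouping the $n$ tensor coordinates into $p$ consecutive blocks, $A_j^{\otimes n}=\prod_{b=1}^{p}\big(I\otimes\cdots\otimes A_j^{\otimes n/p}(\text{on block }b)\otimes\cdots\otimes I\big)$, a product of $p$ pairwise-commuting matrices each with only $\nnz(A_j)^{n/p}\cdot q^{n(1-1/p)}$ nonzeros --- this is exactly the fast-transform / sparse trick underlying \Cref{intro:part1}, now applied to the single matrix $A_j$. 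So $M^{\otimes n}$ becomes a product of many ``block pieces,'' and the problem reduces to regrouping them into exactly $d$ layers.

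The regrouping must satisfy: (i) within each coordinate block $b$, the pieces coming from $A_1,A_2,\dots,A_d$ appear in that left-to-right order (forced, since block $b$ must undergo $(A_1\cdots A_d)^{\otimes n/p}$); (ii) pieces placed in a common layer act on distinct blocks (so their product is a Kronecker product and its nnz multiplies); and (iii) the layers are balanced. Under (i)--(ii), a layer holding pieces from factors $A_{i_1},\dots,A_{i_m}$ on $m$ distinct blocks has $N^{1+(\sum_t v_{i_t})/p}$ nonzeros, where $v_j:=\log_q\nnz(A_j)-1\ge 0$ satisfies $\sum_j v_j\le c'$ and $\max_j v_j\le c'$. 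The total $v$-weight over all layers is $p\sum_j v_j\le pc'$, so it suffices to choose $p$ and the assignment so that no layer carries $v$-weight more than $\tfrac{c}{d}\,p$ for some constant $c<1$ depending only on $c'$. To create slack, I first fold every factor that is a generalized permutation matrix ($v_j=0$) into an adjacent factor (this never increases any nnz), reducing to $k\le d$ genuinely heavy factors $B_1,\dots,B_k$ with $\sum_i(\log_q\nnz(B_i)-1)\le c'$; when $k<d$, each block needs only $k$ of the $d$ layer-slots, which is what permits a balanced interleaving --- taking $p$ a multiple of $\binom{d}{k}$ and distributing blocks evenly over all increasing length-$k$ sequences of layers, while choosing the number of sub-blocks used for $B_i$ roughly proportional to $v_i$ so that even the heaviest factor is spread thinly. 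The remaining cases ($k$ small, so $M$ is sparse up to generalized permutations and the sparse trick applies directly to $M$; or $k=d$, handled by peeling the heavy rows/columns of the factors to write $M$ as $(\text{low rank})+(\text{sparse})$ and invoking \Cref{intro:rigiditytoconstruction}) are treated separately.

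To pass from $p\mid n$ to arbitrary $n$, and from one Kronecker power to all of them, I will use that depth-$d$ circuits compose under $\otimes$: if $M^{\otimes a}=\prod_\ell C_\ell$ and $M^{\otimes b}=\prod_\ell C'_\ell$ then $M^{\otimes(a+b)}=\prod_\ell(C_\ell\otimes C'_\ell)$, with $\nnz(C_\ell\otimes C'_\ell)=\nnz(C_\ell)\nnz(C'_\ell)$. Hence it suffices to establish the bound for one convenient $n$ (a multiple of $p$), compose it $m$ times to get $M^{\otimes mp}$, and pad an arbitrary $n=mp+r$ by tensoring the first layer with a depth-$1$ circuit for the fixed $q^r\times q^r$ matrix $M^{\otimes r}$, absorbing the constant blow-up into the $O(\cdot)$.

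The step I expect to be the main obstacle is the balancing (iii) against the ordering constraint (i). Because every coordinate block must apply $B_1,\dots,B_k$ in the \emph{same} fixed order, one cannot freely permute factors across blocks --- e.g.\ two permutations $\pi_1,\pi_2$ of the layers can never satisfy $\pi_1(b)<\pi_2(b)$ for all $b$ --- so some layers are necessarily forced above the average $v$-weight, and naive interleavings only give a constant $c<1$ for small $c'$. Pushing $c$ strictly below $1$ for \emph{every} $c'<1$ is precisely where the freedom to choose the number of sub-blocks per heavy factor (to thin out the heaviest one) together with the slack $d-k\ge 1$ must be exploited, and this combinatorial optimization is the technical heart of the lemma.
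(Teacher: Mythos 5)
Your proposal correctly identifies the crux --- the ordering constraint (i) versus the balancing requirement (iii) --- but it does not resolve it, and in the generic case your framework cannot resolve it. If all $d$ factors are ``heavy'' ($k=d$, e.g.\ every $\nnz(A_j)=q^{1+c'/d}$), then each coordinate block must receive $d$ pieces in the fixed order $A_1,\dots,A_d$ distributed among only $d$ layers, which forces the $A_j$-piece of \emph{every} block into layer $j$; each layer is then exactly $A_j^{\otimes n}$ and no savings are possible within your regrouping scheme. Your two fallbacks for this do not follow from the hypothesis: folding permutation factors and having few heavy factors does not make $M$ sparse (a product of two matrices with $q^{1.4}$ nonzeros each can be fully dense), and ``peeling heavy rows/columns and invoking \Cref{intro:rigiditytoconstruction}'' presupposes a nontrivial rank-$r$ rigidity bound for $M$ that the product hypothesis $\prod_j\nnz(A_j)\le q^{d+c'}$ does not provide. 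Since you explicitly defer the balancing combinatorics as ``the technical heart,'' the proof is not complete.

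The missing idea (this is how the paper proves it, as \Cref{lem:evenout}) is that a coordinate block need not receive the factors $A_1,\dots,A_d$ \emph{individually}: it can receive the entire product $M^{\otimes b_j n}$ as a single \emph{dense} piece sitting in one layer $j$, at cost $q^{2b_jn}$ nonzeros, with identities in the other $d-1$ layers. Concretely, write $M^{\otimes n}=M^{\otimes bn}\otimes\bigotimes_{j=1}^d M^{\otimes b_jn}$ with $b+\sum_j b_j=1$; decompose the first tensor factor as $\prod_j A_j^{\otimes bn}$ and the $j$th remaining factor trivially with all the work in layer $j$. By the mixed-product property, layer $j$ becomes $A_j^{\otimes bn}\otimes M^{\otimes b_jn}\otimes I$ (up to permutations), with $\nnz=O(q^{(1+b_j+(a_j-1)b)n})$ where $a_j=\log_q\nnz(A_j)$. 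Choosing $b_j$ proportional to $a_{j^*}-a_j$ (where $a_{j^*}=\max_j a_j$) and $b=1/(1+\sum_j(a_{j^*}-a_j))$ equalizes all layers at exponent $1+(a_{j^*}-1)b$, which one checks is $<1+1/d$ whenever $\sum_j a_j<d+1$. This sidesteps your ordering deadlock entirely because cheap layers absorb extra dense padding rather than extra factor-pieces; no interleaving combinatorics is needed. I'd encourage you to either adopt this padding construction or demonstrate concretely that your interleaving can be made to work in the $k=d$ case, since as stated that case defeats the proposal.
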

Note that one could achieve $c'=1$ in \Cref{intro:strassenylemma} trivially by picking $A_1 = M$ and $A_2 = \cdots = A_d = I_q$, the $q \times q$ identity matrix. \Cref{intro:strassenylemma} shows that any construction which improves on this at all leads to an asymptotically smaller circuit for $M^{\otimes n}$. While \Cref{intro:part1} required that each $A_i$ has $\nnz(A_i) < q^{1 + 1/d}$, \Cref{intro:strassenylemma} instead only requires that the geometric mean of all the $\nnz(A_i)$ is less than $q^{1 + 1/d}$. However, it results in a slightly worse final size bound, which is why we use \Cref{intro:part1} to prove \Cref{intro:mainthm}.

\subsection{Surpassing Other Bounded-Coefficient Lower Bounds?}

It is natural to ask next whether our techniques can be used to overcome other bounded-coefficient lower bounds. We discuss a few more:

    \paragraph{Unbounded-Depth Circuits for $H_n$} Pudl{\'a}k~\cite{pudlak2000note} also showed a lower bound against unbounded-depth bounded-coefficient synchronous linear circuits for computing $H_n$. 
    \begin{theorem}[\cite{pudlak2000note}]  \label{pudlak2}
    Any synchronous linear circuit with $c$-bounded coefficients for computing the Walsh-Hadamard transform $H_n \in \C^{N \times N}$ for $N=2^n$ has size $\geq \frac{e \cdot \log_e(2)}{c^2}N \log_2 N$. 
    \end{theorem}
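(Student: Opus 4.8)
The plan is to obtain \Cref{pudlak2} directly from the depth‑bounded statement \Cref{pudlak1} by optimizing over the depth. A synchronous linear circuit, no matter how deep, has some finite depth $d$, so \Cref{pudlak1} applies verbatim to it and shows that its size $s$ satisfies $s \ge d \cdot N^{1+1/d}/c^2 = (N/c^2)\cdot d\, N^{1/d}$. It therefore suffices to prove the purely analytic fact that $d\, N^{1/d} \ge e \cdot \log_e 2 \cdot \log_2 N = e\ln N$ for every positive integer $d$ — indeed for every real $d>0$ — since then $s \ge (N/c^2)\cdot e\ln N = \tfrac{e\log_e 2}{c^2}\, N\log_2 N$, which is the claimed bound.

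To prove $d\, N^{1/d}\ge e\ln N$, substitute $u := (\ln N)/d>0$, so that $d\,N^{1/d} = \tfrac{\ln N}{u}\,e^{u} = \ln N\cdot \tfrac{e^u}{u}$; hence the inequality reduces to $e^u/u \ge e$, i.e.\ $e^{u-1}\ge u$, which is the standard estimate $e^{x}\ge 1+x$ at $x = u-1$. Equality holds exactly at $u=1$, i.e.\ at depth $d = \ln N$, consistent with the fact that the matching upper bound (the depth‑$d$ fast Walsh--Hadamard circuit of size $O(d\cdot N^{1+1/d})$) is most efficient at depth $d = \Theta(\log N)$ — for any other depth the circuit is necessarily larger. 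This completes the argument modulo \Cref{pudlak1}.

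If one prefers a self‑contained proof, the only extra ingredient is a proof of \Cref{pudlak1} itself, via the Morgenstern--Pudl\'ak determinant‑potential method: write the circuit in layered form as a product $H_n = A_1\cdots A_d$ of the $d$ per‑layer transition matrices; use $\log_2|\det H_n| = \tfrac N2\log_2 N$ (since $H_n$ has pairwise orthogonal rows of norm $\sqrt N$); bound the contribution of each layer from above by Hadamard's inequality, $\log_2|\det A_i| \le \sum_{j}\bigl(\log_2 c + \tfrac12\log_2 k_{ij}\bigr)$, where $k_{ij}$ is the number of nonzeros in row $j$ of $A_i$; sum over $i$; and apply concavity of $\log$ together with $\sum_{i,j}k_{ij}=s$ to conclude $s \ge d\,N^{1+1/d}/c^2$. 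The only mildly delicate points there are reducing to square layer matrices so that $\det$ is multiplicative, and counting wires rather than gates; the sharp constant $e\log_e 2$ then comes entirely from the one‑variable optimization above, which is really the heart of the unbounded‑depth statement. I do not expect any serious obstacle — given \Cref{pudlak1}, the proof is a single application of $e^{x}\ge 1+x$.
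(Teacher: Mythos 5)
Your derivation is correct: the paper itself does not prove \Cref{pudlak2} but only cites it from Pudl\'ak, and your route --- apply \Cref{pudlak1} at whatever depth $d$ the circuit happens to have, then minimize $d\,N^{1/d} = \ln N\cdot e^u/u$ with $u=(\ln N)/d$ via $e^{u-1}\ge u$ to get the sharp constant $e\log_e 2$ at $d=\ln N$ --- is exactly the standard argument (and the same one-variable optimization the paper itself performs, with $d = c\log_e N$, to obtain the matching upper bound in \Cref{cor:cktfromrigidity2}). No gaps; the sketch of \Cref{pudlak1} via the determinant/Hadamard-inequality potential is also the right method, though it is not needed here since \Cref{pudlak1} is taken as given.
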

    For $c=1$ (as is the case in all previous circuits for $H_n$), this gives a lower bound of $e \cdot \log_e(2) \cdot N \log_2 N \approx 1.884 \cdot N \log_2 N$. This is known to be tight, as optimizing for $d$ in the usual fast Walsh-Hadamard transform gives a matching upper bound. In fact, we give a new construction which also beats this lower bound, although only by a constant factor.
    \begin{theorem} \label{intro:unboundedsizeub}
    Let $\F$ be any field, and let $M \in \F^{2 \times 2}$ be any matrix over $\F$. There is a constant $\eps > 0.01526$ such that, for any positive integer $n$, the linear transformation $M^{\otimes n} \in \F^{N \times N}$ for $N = 2^n$ has a synchronous linear circuit of size $(1 - \eps + o(1)) \cdot e \cdot \log_e(2) \cdot N \log_2 N$. When $M = H_1$, so that $M^{\otimes n}$ is the Walsh-Hadamard transform $H_n$, we can improve the bound to $\eps > 0.04816$. 
    \end{theorem}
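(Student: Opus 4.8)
The plan is to deduce \Cref{intro:unboundedsizeub} from \Cref{intro:mainthm} by optimizing the depth $d$ as a function of $n$. Writing $N = 2^n$, \Cref{intro:mainthm} produces, for each positive integer $d$, a synchronous linear circuit for $M^{\otimes n}$ of size at most $2^{\eps}\cdot d\cdot N^{1 + (1-\eps)/d}$, and of size at most $d\cdot N^{1 + (1-\eps)/d}$ when $d \mid n$. Since $N^{1 + (1-\eps)/d} = N\cdot 2^{n(1-\eps)/d}$, it suffices to minimize $h(d) := d\cdot 2^{n(1-\eps)/d}$. Viewing $d$ as a real parameter, $\log_e h(d) = \log_e d + \frac{(1-\eps)\log_e 2}{d}\,n$, so $h'(d) = 0$ at $d^{\star} := (1-\eps)\log_e(2)\cdot n$, and $h(d^{\star}) = (1-\eps)\log_e(2)\cdot n\cdot 2^{1/\log_e 2} = (1-\eps)\,e\log_e(2)\cdot n$, using the identity $2^{1/\log_e 2} = e$. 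Thus the optimal depth yields size $(1-\eps)\,e\log_e(2)\cdot n\cdot N = (1-\eps)\,e\log_e(2)\cdot N\log_2 N$, which is exactly the target constant; it then remains to show that rounding $d^{\star}$ to an integer, and the generic $2^{\eps}$ prefactor, together contribute only a $1+o(1)$ multiplicative factor.

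Rounding $d^{\star}$ to the nearest integer is harmless, since $h$ is smooth and $d^{\star}$ is its interior minimizer, so the change is a factor $1 + O(1/d^{\star}) = 1 + O(1/n) = 1+o(1)$. The real work is removing the $2^{\eps}$ prefactor, i.e.\ meeting the divisibility condition of \Cref{intro:mainthm} (equivalently, of the block count used in \Cref{intro:rigiditytoconstruction}) at a depth still within a $1+o(1)$ factor of $d^{\star}$. I would handle this by re-entering the construction behind \Cref{intro:mainthm} and allowing \emph{non-uniform} layer sizes: instead of $d$ layers each processing $n/d$ of the tensor coordinates, use layers processing slightly differing numbers of coordinates, so that $d$ can be taken essentially equal to $d^{\star}$ with no exact divisibility, at the cost of replacing the clean per-layer exponent $1 + (1-\eps)/d$ by $1 + (1-\eps)/d\cdot(1+o(1))$; since $d = \Theta(n)$, the factor $N^{o(1)/d} = 1+o(1)$. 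An alternative is to split $n = n' + n''$ with $n''$ a slowly-growing function of $n$, run the $2^{\eps}$-free bound on a batched copy of the $n'$-bit transform (with $n'$ chosen to meet the divisibility requirement), and handle the batched $n''$-bit transform by the trivial recursive circuit at cost $O(n''\cdot 2^{n}) = o(N\log_2 N)$. Either way the accumulated overhead is $1+o(1)$, giving size $(1 - \eps + o(1))\cdot e\log_e(2)\cdot N\log_2 N$; taking $\eps > 0.01526$ for a general $2\times 2$ matrix $M$ and $\eps > 0.04816$ for $M = H_1$, exactly as in \Cref{intro:mainthm}, yields the two stated bounds.

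The step I expect to be the main obstacle is precisely the removal of the $2^{\eps}$ factor at the optimal depth. The difficulty is structural: since $1-\eps < 1$, we have $d^{\star} = (1-\eps)\log_e(2)\cdot n \in (n/2,\, n)$, and the only divisor of $n$ exceeding $n/2$ is $n$ itself, which is far from $d^{\star}$ and gives only the trivial bound $\approx 2\,N\log_2 N$. Hence one cannot just pick a divisor of $n$ near $d^{\star}$ and invoke the clean case of \Cref{intro:mainthm}; the divisibility obstruction must be circumvented by the more careful, non-uniform construction sketched above, and the delicate point is verifying that the resulting overhead is genuinely $o(1)$ rather than a constant bounded away from $1$ (for instance, the crude choice $d = n/2$ only gives size $2^{1-2\eps}\cdot N\log_2 N$, which already beats the bounded-coefficient bound for $M = H_1$ but not for a general $M$). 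Everything else — the calculus optimization, and its input — is either routine or already established in \Cref{intro:mainthm}.
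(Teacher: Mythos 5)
Your overall route --- optimize the depth $d$ in \Cref{intro:mainthm}, locate $d^{\star} = (1-\eps)\log_e(2)\cdot n$, obtain the constant $e\log_e(2)$, and then fight the divisibility overhead --- is the same as the paper's, which proves the statement via \Cref{cor:cktfromrigidity} (a depth-increasing gadget splitting $M^{\otimes n} = (M^{\otimes n'}\otimes I)\times(I\otimes M^{\otimes k})$ with $d\mid n'$, the leftover $M^{\otimes k}$ handled by a separate depth-$k$ circuit; this is your option (b)) followed by \Cref{cor:cktfromrigidity2} (setting $d = c\log_e N$). Your calculus is right, and your diagnosis of the obstruction --- $d^{\star}\in(n/2,\,n)$, so no divisor of $n$ is anywhere near it --- is correct and sharp.

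The gap is that neither of your two mechanisms actually delivers the $1+o(1)$ overhead, and this is precisely the step you flag as delicate. For option (a): at $d=\Theta(n)$ each layer processes only $O(1)$ base coordinates ($n/d^{\star}\approx 1.47$, and the atomic unit behind \Cref{thm:cktfromrigidity} is really a block of $3$ or $4$ binary coordinates, so the granularity is even coarser), hence ``slightly differing numbers of coordinates'' is impossible: block sizes are integers, must differ by at least $1$, which is a constant fraction of $n/d$, and your claimed per-layer exponent $1+(1-\eps)(1+o(1))/d$ is not what the construction gives. Quantitatively, a layer whose block is $m$ base coordinates has $\approx N q^{cm}$ wires, so the total is at least $nN\cdot\min_{m\in\Z,\,m\geq 1} q^{cm}/m$, and this exceeds the real-valued optimum $nN\cdot c\,e\log_e q = c\,e\log_e(2)\cdot N\log_2 N$ by a constant factor bounded away from $1$ whenever $1/(c\log_e q)$ is not an integer --- which is the situation here. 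Option (b) is the paper's own gadget, but its $(1+o(1))$ guarantee is only established for $d = o(n)$ (the leftover costs $k\cdot q^{n+c}$ with $k<d$, negligible only when $n/d\to\infty$), whereas the optimal depth is $\Theta(n)$; as you yourself observe, there is then no usable divisor, so the split does not help. In short, you have correctly isolated the one step that needs a genuinely new idea --- showing that the integrality of the layer structure costs only $1+o(1)$ at depth $\Theta(n)$ --- but neither sketch supplies it; and, for what it is worth, the paper's own written proof invokes \Cref{cor:cktfromrigidity} at $d = c\log_e N$, outside its stated range $d<o(n)$, so it leans on exactly the same unestablished claim.
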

    It is no coincidence that our bounds on $\eps$ in \Cref{intro:unboundedsizeub} are the same as those in \Cref{intro:mainthm}: We prove \Cref{intro:unboundedsizeub} by introducing a gadget which increases the depth in \Cref{intro:mainthm} but removes the additional unwanted $2^{\eps}$ term in the circuit size (which would otherwise impact our constant-factor savings), and then optimizing over all choices of $d$.
    
    Of course, it would be much more exciting to design a circuit of size $o(N \log N)$ for $H_n$, but that is currently beyond our techniques. That said, we believe \Cref{intro:unboundedsizeub} gives the first improvement of any kind on the standard fast Hadamard transform for computing $H_n$, and we are optimistic that further improvements are possible.

    \paragraph{Circuits for the Fourier Transform} Pudl{\'a}k showed that both \Cref{pudlak1} and \Cref{pudlak2} also hold for the Discrete Fourier transform\footnote{Morgenstern~\cite{morgenstern1973note} first showed such a result for linear circuits which need not be synchronous, with slightly lower leading constant factors.}, $F_N \in \C^{N \times N}$. Can our approach be used to beat these lower bounds as well? We remark that $F_N$ is actually \emph{too rigid} for our approach using \Cref{intro:rigiditytoconstruction} to apply to overcome this bound. Interestingly, the rigidity lower bound we use to show this is not the asymptotically best known bound of $\rig_{F_N}(r) \geq \Omega(\frac{N^2}{r}\log(N/r))$, but instead the bound $\rig_{F_N}(r) \geq (N-r)^2/(r+1)$~\cite{Shparlinski} which has better known constant factors for small $r$.
    
    It should be noted that we do not rule out the existence of $o(d \cdot N^{1 + 1/d})$ size depth-$d$ linear circuits for $F_N$, or even rule out that \Cref{intro:part1} could be used to construct such circuits. However, an approach different from our non-rigidity approach would be needed to give the nontrivial construction needed by \Cref{intro:part1}.

    \paragraph{Matrix Multiplication} Raz~\cite{raz2002complexity} showed that any bilinear circuit with bounded coefficients for computing the product of two $N \times N$ matrices over $\C$ requires size $\Omega(N^2 \log N)$. This is not known to be tight: the best known circuit for $N \times N \times N$ matrix multiplication has size $N^{\omega + o(1)}$ where $\omega \leq 2.373$~\cite{williams2012multiplying,le2014powers,alman2020refined} is the matrix multiplication exponent. That said, as we will discuss soon in more detail in \Cref{introsec:mmult}, there is a strong connection between this lower bound and the aforementioned bounded-coefficient lower bounds: if one could surpass Raz's lower bound and design an $o(N^2 \log N)$ size circuit for matrix multiplication, it would lead to linear circuits of size $o(N \log N)$ for both the $N \times N$ discrete Fourier transform and the $N \times N$ Walsh-Hadamard transform, as well as many related linear transformations.

\subsection{More Matrices Are Not Valiant-Rigid}

Our next upper bound is a new non-rigidity result, which generalizes and sheds new light on the non-rigidity of the Walsh-Hadamard transform~\cite{alman2017probabilistic}. We focus on two families of matrices $M$ which generalize $H_n$.

\begin{enumerate}
    \item Matrices $M \in \F^{q^n \times q^n}$ of the form $M = \bigotimes_{\ell=1}^n M_i$ for positive integers $q,n$ and any matrices $M_1, \ldots, M_n \in \F^{q \times q}$ (where $\otimes$ denotes the Kronecker product). Kronecker power matrices like $H_n$ which we discussed earlier are of this form with $M_1 = M_2 = \cdots = M_n$, but here we also allow for different choices of the matrices $M_1, \ldots, M_n$. 
    
    \item Matrices $M \in \F^{q^n \times q^n}$ whose entries are given by, for $x,y \in \{0,1,\ldots,q-1\}^n$: $$M[x,y] = f(\max\{ x[1],y[1] \}, \max\{ x[2],y[2] \}, \max\{ x[3],y[3] \}, \ldots, \max\{ x[n],y[n] \})$$ for \emph{any} function $f : \{0,1,\ldots,q-1\}^n \to \F$. For instance, $H_n$ is of this form with $q=2$ when $f$ is the parity function, but we also allow for more complicated choices of $f$.
\end{enumerate}

\begin{theorem}\label{intro:mainnonrig}
Any matrix of either of the above forms with $q \leq O(\log n)$ is not Valiant-rigid. More precisely, setting $N = q^n$, any such $M$ satisfies, for any sufficiently small $\eps>0$: $$\rig_M(N^{1 - \frac{q}{2^q} \cdot O(\eps^2 / \log^2(1/\eps))}) \leq N^{1+\eps}.$$
\end{theorem}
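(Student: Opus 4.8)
The plan is to exhibit, for each of the two families, an explicit decomposition of $M$ as a sum of $N$ rank-one matrices whose factors are $0/1$ vectors supported on nested order ideals, and then to discard the rank-one pieces whose supports are small. For a Kronecker product $M = \bigotimes_{\ell=1}^{n} M_\ell$, I would decompose each fixed $q \times q$ factor $M_\ell$ by peeling off, from the bottom-right corner, the rank-one matrix matching its last row, last column, and corner entry, and recursing on the remaining $(q-1)\times(q-1)$ block (a Schur-complement-style peeling). This writes $M_\ell = \sum_{s=1}^{q} p_{\ell,s}\, r_{\ell,s}^{\top}$ where $p_{\ell,s}$ and $r_{\ell,s}$ are supported on a nested family of subsets of the $\ell$-th coordinate, the $s$-th of size $s$; whenever a trailing minor vanishes one first permutes rows/columns of $M_\ell$ (which preserves both the Kronecker structure and rigidity, since $\bigotimes_\ell(P_\ell M_\ell Q_\ell) = (\bigotimes P_\ell)M(\bigotimes Q_\ell)$ with $\bigotimes P_\ell$ a permutation of $[N]$), and degenerate $M_\ell$ of smaller rank only make the task easier. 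Taking Kronecker products gives $M = \sum_{w \in [q]^n} U_w V_w^{\top}$ with $U_w = \bigotimes_\ell p_{\ell,w_\ell}$ supported exactly on a down-set $I_w$ of size $\prod_\ell w_\ell$ (under the product order induced by the peelings). For a $\max$-type matrix $M[x,y] = f(x \vee y)$, I would instead use the coordinatewise identity $\mathbf{1}[\max(x_\ell,y_\ell)=z_\ell] = \mathbf{1}[x_\ell\le z_\ell]\,\mathbf{1}[y_\ell\le z_\ell] - \mathbf{1}[x_\ell\le z_\ell-1]\,\mathbf{1}[y_\ell\le z_\ell-1]$, expand, and collect terms, obtaining the Möbius-type expansion $M = \sum_{w} \widehat f(w)\, \mathbf{1}[{\cdot}\le w]\,\mathbf{1}[{\cdot}\le w]^{\top}$ where $\mathbf{1}[{\cdot}\le w]$ is the $0/1$ vector supported on the down-set $\{x : x\le w\}$, of size $\prod_\ell(w_\ell+1)$. (The $\min$/$\wedge$ variant is identical up to relabeling.) In both cases $M$ is written as a sum over $w\in[q]^n$ of rank-one matrices with $0/1$ factors supported on nested order ideals $I_w$ whose size $s_w$ is $\asymp\prod_\ell w_\ell$.

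Given such a decomposition, discarding the rank-one term indexed by $w$ affects an entry $(x,y)$ only when $x,y$ both lie in $I_w$, hence $x\vee y\le w$ and $s_{x\vee y}\le s_w$. So letting $R$ be the sum of the rank-one pieces with $s_w$ above a threshold, $\operatorname{rank}(R)$ is at most the number of such ``heavy'' indices $w$, while $M - R$ is supported on the set of $(x,y)$ with $s_{x\vee y}$ below the threshold, of size at most $\sum_{u\,:\,s_u \text{ small}}|\{(x,y):x\vee y=u\}| \lesssim \sum_{u\,:\,s_u\text{ small}} s_u^2$. A single global threshold turns out to be too weak --- already for $q=2$ the rank saving and the error budget then cannot both be met --- so instead I would first group the $n$ coordinates into $t = n/b$ blocks of size $b$, write $M = \bigotimes_{j=1}^{t}\widetilde M_j$ accordingly, split each $\widetilde M_j = L_j + S_j$ into its heavy and light rank-one pieces, and take $R = \sum_{|P|\le k}\bigotimes_{j}(S_j\text{ if }j\in P\text{ else }L_j)$, i.e.\ keep only cross-terms involving at most $k$ ``light'' blocks. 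Then $\operatorname{rank}(R) \le \sum_{p\le k}\binom{t}{p}\operatorname{rank}(S_j)^{p}\operatorname{rank}(L_j)^{t-p}$ and $\nnz(M-R) \le \sum_{p>k}\binom{t}{p}\nnz(S_j)^{p}\nnz(L_j)^{t-p}$.

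Both sums are estimated by large deviations: for $w$ uniform in $[q]^n$, $\log s_w = \sum_\ell\log w_\ell$ is a sum of i.i.d.\ bounded variables, so the number of heavy $w$ is $q^{n}\cdot\Pr\!\big[\tfrac1n\sum_\ell\log w_\ell \ge \text{thr}\big]$, an upper-tail Chernoff/Cramér bound, while $\sum_{s_u\text{ small}}s_u^2 = \big(\sum_{j}j^2\big)^{n}\cdot\Pr_{\nu}\!\big[\tfrac1n\sum_\ell\log u_\ell < \text{thr}\big]$ for the tilted measure $\nu(j)\propto j^2$, a lower-tail bound; blocking turns these into per-block rate functions that scale with $b$. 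Choosing the block size $b$, the heavy/light threshold, and the cutoff $k$ so as to drive $\operatorname{rank}(R)\le N^{1-\delta}$ and $\nnz(M-R)\le N^{1+\eps}$ at once is a constrained optimization, and carrying it out produces the stated $\delta = \frac{q}{2^q}\cdot O(\eps^2/\log^2(1/\eps))$; it also forces $b$ to be a sufficiently large constant, which is exactly where the hypothesis $q\le O(\log n)$ is needed (so that $N = q^{n}$ dominates the $2^{O(n/b)}$-type binomial factors $\binom{t}{k}$). The main obstacle is precisely this quantitative balancing act: the rank saving and the error tolerance pull the threshold in opposite directions, and it takes the blocking-plus-low-weight-cross-term refinement together with a careful joint choice of all three parameters to make both sides fit --- the same difficulty already present in the $q=2$ case of~\cite{alman2017probabilistic}, here carried through without any appeal to probabilistic polynomials.
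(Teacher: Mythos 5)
Your rank-one peeling is essentially the paper's decomposition $M_\ell = J_1 + A_\ell$ unrolled all the way down (and your M\"obius expansion for the $\max$-type family matches the paper's $V_f = \sum_S V_{f_S}^{\otimes S}\otimes J_1^{\otimes[n]\setminus S}$ identity), so that part is sound. The gap is in the thresholding step. If the sparse approximation is taken to be the sum of the light rank-one terms, its support is (for generic $f$, with no cancellation) exactly $\{(x,y): s_{x\vee y}\le \mathrm{thr}\}$, and this set is of size $N^{1+\Omega(1)}$ at \emph{any} threshold that yields a nontrivial rank bound. Concretely for $q=2$: to get $\operatorname{rank}(R)\le\sum_{j\ge\theta n}\binom{n}{j}\le N^{1-\delta}$ you need $\theta\ge 1/2$, and then $\nnz(M-R)=\sum_{j<\theta n}\binom{n}{j}3^{j}\ge\binom{n}{n/2}3^{n/2}/\poly(n)\approx N^{1+\frac12\log_2 3}\approx N^{1.79}$, far above $N^{1+\eps}$. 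Your blocking refinement does not repair this: the all-light term $\bigotimes_{j}S_j$ always has $|P|=t>k$ and hence lands in $M-R$, and its sparsity is $\prod_j\nnz(S_j)=\bigl(2^{(1+\frac12\log_2 3)b-O(\log b)}\bigr)^{n/b}=N^{1.79-o(1)}$ whenever each per-block threshold is at least $b/2$ (which is forced by the per-block rank requirement). No joint choice of $b$, the threshold, and $k$ escapes this, because the per-block tradeoff is identical to the global one and tensoring preserves the exponents.

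The missing idea is that the sparse part must be made sparse by \emph{deleting atypical rows and columns}, each deletion being a rank-one update charged to the low-rank side, rather than by discarding light rank-one terms. This is what the paper does: it routes everything through the disjointness matrix via $V_f=R_n\times D_f\times R_n$ (and the analogous factorization for Kronecker products), and shows $R_n$ is non-rigid by removing the $\binom{n}{<(1/2-\delta)n}\approx N^{1-\Theta(\delta^2)}$ rows and columns indexed by low-weight $x$ --- precisely the rows on which the matrix is dense --- after which each surviving row has only $\binom{(1/2+\delta)n}{\le 2\delta n}\approx N^{\Theta(\delta\log(1/\delta))}$ nonzeros. The quantity your large-deviation calculation controls (the total mass of the light supports) is the wrong one; what is needed is the row-wise support of the matrix restricted to typical-weight rows \emph{and} columns, together with the composition lemma that a product of two row/column-non-rigid matrices (interleaved with diagonals) is non-rigid. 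For $q>2$ the paper additionally restricts the cross-terms to $|K|\approx(q-1)n/q$ and again deletes rows/columns of atypical Hamming weight; without these deletions your optimization cannot reach sparsity $N^{1+\eps}$.
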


The constant hidden by the $O$ in \Cref{intro:mainnonrig} is not too small; for instance, we show that when $q=2$, any such $M$ has $\rig_M(O(N^{0.981})) < o(N^2)$.

\Cref{intro:mainnonrig} shows that it was not just a `coincidence' that $H_n$ is not rigid, but in fact a number of big families of matrices generalizing $H_n$ are also not rigid. It, of course, rules out the Valiant-rigidity approach for proving circuit lower bounds for any of these linear transformations. We now discuss the two families of matrices in some more detail.

\begin{enumerate}
    \item Aside from being a natural generalization of $H_n$, Kronecker products like this are ubiquitous in many areas of computational science (see e.g.~\cite{van2000ubiquitous}). The non-rigidity of these matrices is also interesting compared with our observation which we discuss in detail in the upcoming \Cref{introsec:mmult} that: if there are Valiant-rigid matrices in this family for any fixed $n$ and growing $q$, then we would get a lower bound for $N \times N \times N^n$ matrix multiplication. By comparison, \Cref{intro:mainnonrig} shows there are no Valiant-rigid matrices in this family for fixed $q$ and growing $n$. The difference between this family of matrices when $n$ is growing versus when $q$ is growing is not unlike the difference between the families of Walsh-Hadamard transforms and Fourier transforms (which are both Hadamard matrices for different choices of which of the two defining parameters is growing). Perhaps the techniques of~\cite{dvir2019fourier} for showing that Fourier transforms are not rigid could help to approach this other setting.
    
    \item As noticed by~\cite{alman2017probabilistic}, matrices of this form for different choices of the function $f : \{0,1,\ldots,q-1\}^n \to \F$ arise frequently in fine-grained complexity, especially in the case $q=2$. In fact, the best known algorithms for a number of different problems have used, as their key insight, that this type of matrix $M$ is not rigid, including the Orthogonal Vectors problem~\cite{abboud2014more} (for $f = AND$), All-Pairs Shortest Paths~\cite{williams2014faster} (also for $f = AND$), and Hamming Nearest Neighbors~\cite{alman2015probabilistic, alman2016polynomial} (for $f = MAJORITY$). These algorithms all use the `polynomial method' to show that $M$ is not rigid in a low-rank, high-error regime, but it is unclear how to extend them to less structured functions $f$. By comparison, \Cref{intro:mainnonrig} shows that $M$ is not rigid in a higher-rank, lower-error regime, and it applies to \emph{any} function $f$.
    
    In fact, in addition to these aforementioned algorithms, all the prior work on showing that matrices of interest are not Valiant-rigid~\cite{alman2017probabilistic,dvir2017matrix,dvir2019fourier} has used the polynomial method. For instance, the previous proof of the non-rigidity of the Walsh-Hadamard transform~\cite{alman2017probabilistic} critically used the fact that the corresponding function $f = PARITY$ has low-degree polynomial approximations (which are correct on most inputs) over any field. Our rigidity upper bound \emph{does not} use the polynomial method (at least explicitly), and applies to \emph{any} function $f$ without any restriction on how well it can be approximated by polynomials. In other words, this central property of $f$ that was used by prior work is actually unnecessary for proving that $M$ is not Valiant-rigid.
\end{enumerate}

Our proof of \Cref{intro:mainnonrig} in the case $q=2$ is actually quite simple, and it simplifies the previous proof of the non-rigidity of the Walsh-Hadamard transform. Inspired by Dvir and Liu~\cite{dvir2019fourier}, who frequently make use of the fact that the product of a constant number of matrices which are not Valiant-rigid is, itself, not Valiant-rigid (see \Cref{lem:productnotrigid} below), we begin by noticing that any matrix $M$ from either of the two families can be written as \begin{align}\label{eqn:decomp}M = D \times R_n \times D' \times R_n \times D'',\end{align}
where $D, D', D'' \in \F^{2^n \times 2^n}$ are three carefully-chosen diagonal matrices (which are evidently not Valiant-rigid), and $R_n \in \{0,1\}^{2^n \times 2^n}$ is the \emph{disjointness} matrix, given by $R_n := R_1^{\otimes n}$ where $$R_1 := 
\begin{bmatrix}
1 & 1 \\
1 & 0
\end{bmatrix}.$$
Thus, to show that \emph{any} such $M$ is not Valiant-rigid, it suffices to show that $R_n$ is not Valiant-rigid. However, this is not too difficult, since $R_n$ is a fairly sparse matrix to begin with! Indeed, $R_n$ is a $2^n \times 2^n$ matrix, but has only $3^n$ nonzero entries. Moreover, most of these nonzero entries are concentrated in a few rows and columns: for each integer $0 \leq k \leq n$, the matrix $R_n$ has $\binom{n}{k}$ rows (or columns) with $2^k$ nonzero entries. Using standard bounds on binomial coefficients, we thus see that, by removing only the $2^{n(1 - \Theta(\eps^2 / \log^2 (1/\eps)))}$ densest rows and columns of $R_n$, we are left with a matrix with only $2^{n \cdot \eps}$ nonzero entries per row or column. Since changing a single row or column of a matrix is a rank-1 update, this shows that $R_n$ is not Valiant-rigid as desired. 

Extending this result to larger $q$ is quite a bit more involved. Let us focus for now on family 1 of matrices above (Kronecker products of $n$ different $q \times q$ matrices); the proof for family 2 is similar. We will proceed by induction on $q$. Our starting point is the remark that any $q \times q$ matrix $M_i$ can be written as the sum of a $q \times q$ rank-1 matrix $J_i$, and a $(q-1) \times (q-1)$ matrix $L_i$ (padded with a row and column of $0$s). For instance, in the case $q=3$ we have (assuming the top-left entry $a$ is nonzero):
$$
\begin{bmatrix}
a & b & c\\
d & e & f\\
g & h & i\\
\end{bmatrix}
=
\begin{bmatrix}
a & b & c\\
d & \frac{bd}{a} & \frac{bc}{a}\\
g & \frac{bg}{a} & \frac{bc}{a}\\
\end{bmatrix}
+
\begin{bmatrix}
0 & 0 & 0\\
0 & e-\frac{bd}{a} & f-\frac{bc}{a}\\
0 & h-\frac{bg}{a} & i-\frac{bc}{a}\\
\end{bmatrix}.$$
We have now written $M_i = J_i + L_i$, and we know that $\bigotimes_{i=1}^n J_i$ is not Valiant-rigid (in fact, it has rank $1$), and $\bigotimes_{i=1}^n L_i$ is not Valiant-rigid, even when thought of as a $(q-1)^n \times (q-1)^n$ matrix, by the inductive hypothesis. This does not imply that $\bigotimes_{i=1}^n M_i$ is not Valiant-rigid on its own, however, because there are cross-terms:
$$\bigotimes_{i=1}^n M_i = \bigotimes_{i=1}^n (J_i + L_i) = \sum_{K \subseteq \{1,2,\ldots,n\}} \bigotimes_{i=1}^n \left( [i \in K] ~?~ L_i : J_i \right)$$
(Here, we are using $\left( [i \in K] ~?~ L_i : J_i \right)$ as the ternary operator, which equals $L_i$ when $i \in K$, and equals $J_i$ when $i \notin K$).  For any particular $K$, the matrix $M_K := \bigotimes_{i=1}^n \left( [i \in K] ~?~ L_i : J_i \right)$ can be seen as the Kronecker product of a $q^{|K|} \times q^{|K|}$ matrix of rank $1$, and a $q^{n-|K|} \times q^{n-|K|}$ matrix which, by the inductive hypothesis, is not Valiant-rigid. It can be shown (see e.g.~\cite[{Section~6}]{dvir2019fourier}) that the Kronecker product of matrices which are not Valiant-rigid is itself not Valiant-rigid, and hence that $M_K$ is not Valiant-rigid. However, this is still not sufficient: we have now only expressed $M$ as the sum of $2^n$ matrices which are not Valiant-rigid, but whose sum might still be.

We instead first perform a number of low-rank updates to $M$ to simplify the problem. We first subtract away all the matrices $M_K$ for which $|K|$ is not close to $(q-1)n/q$. Next, we remove all rows and columns corresponding to $x \in \{0,1,\ldots,q-1\}^n$ for which $\nnz(x)$ is not close to $(q-1)n/q$. Finally, we observe that each remaining row of $M$ only intersects with a nonzero row of $q^{O(\eps \cdot n)}$ different choices of remaining matrices $M_K$ (compared with $q^n$ before). Hence, the fact that each $M_K$ is not Valiant-rigid implies our desired non-rigidity, as the sparsity per row is now only multiplied by $q^{O(\eps \cdot n)}$. We have, of course, glossed over many important and intricate aspects of the proof; we refer the reader to \Cref{sec:bigproof} for the details.

We briefly remark that the techniques for manipulating Kronecker products used by Dvir and Liu~\cite{dvir2019fourier} do not appear sufficient to prove our \Cref{intro:mainnonrig}. They observed that the Kronecker product of matrices $M_1, \ldots, M_n$ which are not Valiant-rigid is itself not Valiant-rigid. In particular, they begin with a decomposition $M_i = J_i + L_i$ where $J_i$ has low rank like in our setting, but they further assume that $L_i$ is very sparse. In our case, $M_1, \ldots, M_n$ are arbitrary matrices, and may all be very rigid on their own, and so a more intricate argument seems necessary.

\subsection{Connections Between Matrix Multiplication and Kronecker Product Linear Transformations} \label{introsec:mmult}
    
    As we previously mentioned, Raz~\cite{raz2002complexity} showed that any bilinear circuit with bounded coefficients for computing the product of two $N \times N$ matrices over $\C$ requires size $\Omega(N^2 \log N)$. A key insight behind Raz's lower bound is that, for a fixed matrix $A \in \F^{N \times N}$, the following two problems are equivalent:
    \begin{itemize}
        \item Given as input a matrix $B \in \F^{N \times N}$, output the matrix $A \times B$.
        \item Given as input a vector $b \in \F^{N^2}$, output the linear transformation $(I_N \otimes A) b$.
    \end{itemize}
    In particular, if one could show that there is any matrix $A \in \F^{N \times N}$ for which the linear transformation $I_N \otimes A \in \F^{N^2 \times N^2}$ does not have $O(N^2)$ size circuits, then $N \times N \times N$ matrix multiplication does not have $O(N^2)$ size circuits. One intriguing avenue toward showing this is to show that there exists an $A \in \F^{N \times N}$ such that $I_N \otimes A$ is \emph{Valiant-rigid}. In contrast with the usual setting in matrix rigidity, here, to show a lower bound against a particular problem (matrix multiplication), it suffices to show that there exists a rigid matrix among a large family of matrices. (Roughly, Raz's lower bound is proved by showing there exists an $A \in \F^{N \times N}$ such that $I_N \otimes A$ has a high value of a variant of rigidity which corresponds to bounded-coefficient circuits.)
   
   We take this observation further, showing that there is a much larger family of matrices for which a circuit lower bound would imply lower bounds for matrix multiplication. The key idea is the following algorithm for using matrix multiplication to compute linear transformations defined by Kronecker products (which is not very difficult to prove, and is likely folklore):
    \begin{proposition} \label{intro:mmultprop}
    For any field $\F$, and any fixed positive integer $k$, suppose that $N \times N \times N^{k-1}$ matrix multiplication over $\F$ has an arithmetic circuit of size $o(N^{k} \log N)$. Then, the $N \times N$ Fourier transform, $N \times N$ Walsh-Hadamard transform, and any transform which can be written as the Kronekcer product of $k$ different $N^{1/k} \times N^{1/k}$ size matrices, have arithmetic circuits of size $o(N \log N)$. 
    \end{proposition}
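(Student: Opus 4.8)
The plan is to reduce the task of applying a Kronecker-product linear transformation to a vector to a small, \emph{fixed} number of rectangular matrix multiplications, and then invoke the hypothesis with the matrix dimension set to $N^{1/k}$. Fix $k$ and write $m := N^{1/k}$ (for the Fourier and Walsh--Hadamard transforms we may assume $N$ is a perfect $k$-th power by taking $n$ divisible by $k$; more generally one can factor $N$ into $k$ factors each of size $\Theta(N^{1/k})$ and argue identically). Suppose $M = M_1 \otimes M_2 \otimes \cdots \otimes M_k$ with each $M_j \in \F^{m \times m}$. The key identity is the mixed-product rule
$$M = \prod_{j=1}^{k}\bigl(I_{m^{j-1}} \otimes M_j \otimes I_{m^{k-j}}\bigr),$$
which says that $Mx$ is obtained by viewing $x \in \F^N$ as a tensor with $k$ modes of size $m$ and successively contracting mode $j$ with $M_j$. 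Contracting one mode is exactly one rectangular matrix multiplication: after reshaping $x$ (a permutation of its $N$ coordinates) so that the mode being contracted indexes the rows of an $m \times m^{k-1}$ matrix $B$ and the remaining $k-1$ modes jointly index its columns, the $j$-th factor above maps $B \mapsto M_j \cdot B$, an $m \times m \times m^{k-1}$ matrix product in which the left factor $M_j$ is a fixed constant matrix.

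By hypothesis, $N' \times N' \times N'^{\,k-1}$ matrix multiplication over $\F$ has an arithmetic circuit of size $o(N'^{\,k}\log N')$. Taking $N' = m = N^{1/k}$ and hardwiring the entries of the constant matrix $M_j$ (which only shrinks the circuit) gives, for each $j$, a circuit computing $B \mapsto M_j B$ of size $o(m^k \log m) = o\bigl(\tfrac1k N \log N\bigr) = o(N \log N)$, using that $k$ is a fixed constant so that $m^k \log m = \tfrac1k N \log N$ and the little-$o$ composes. The reshaping permutations between consecutive contractions are free in the arithmetic-circuit model, so composing the $k$ circuits in series yields a circuit for $M$ of total size $k \cdot o(N \log N) = o(N \log N)$. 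This handles any transform that is literally a Kronecker product of $k$ matrices of side $N^{1/k}$, and in particular the Walsh--Hadamard transform, since $H_n = H_1^{\otimes n} = (H_{n/k})^{\otimes k}$ with $H_{n/k} \in \{-1,1\}^{2^{n/k} \times 2^{n/k}}$.

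For the discrete Fourier transform $F_N$, the matrix $F_N$ is not literally $F_m^{\otimes k}$, but the Cooley--Tukey factorization writes it (for $N = m^k$) as a product of $k$ ``butterfly stages'', each of the form $Q_j \, D_j \, (I_a \otimes F_m \otimes I_b)$ with $ab = m^{k-1}$, where $I_a \otimes F_m \otimes I_b$ is a one-mode DFT contraction as above (an $m \times m \times m^{k-1}$ matmul with the fixed left factor $F_m$), $D_j$ is a diagonal matrix of twiddle factors, and $Q_j$ is a permutation. Each diagonal $D_j$ costs only $O(N)$ wires and each $Q_j$ is free, so the total overhead of the auxiliary matrices is $O(kN) = o(N \log N)$, and together with the $k$ contractions this gives a circuit for $F_N$ of size $o(N \log N)$.

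The reduction is essentially folklore and the argument is short, so I do not expect a substantial obstacle; the only points requiring care are bookkeeping. One must confirm that substituting $N^{1/k}$ for the dimension in the hypothesis yields exactly $o(N \log N)$ (which, as noted, works precisely because $k$ is fixed), one must be careful about the reshaping permutations interleaving the $k$ contractions, and in the Fourier case one must invoke the Cooley--Tukey identity and check that its twiddle-factor and permutation matrices contribute only lower-order terms. The mildly delicate part is simply making the ``Kronecker product applied to a vector'' $\leftrightarrow$ ``rectangular matrix multiplication'' correspondence precise and verifying the dimensions line up with the stated $N \times N \times N^{k-1}$ shape.
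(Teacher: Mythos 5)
Your proof is correct and follows essentially the same route as the paper: the paper's Lemma~\ref{butterflytomm} is exactly your factorization of $\bigotimes_j M_j$ into $k$ one-mode contractions, each a permuted copy of $M_j \otimes I_{N^{(k-1)/k}}$ and hence one $N^{1/k} \times N^{1/k} \times N^{(k-1)/k}$ matrix product, after which Corollary~\ref{cor:mmtohad} applies the hypothesis at scale $N^{1/k}$ just as you do. Your explicit Cooley--Tukey treatment of $F_N$ (twiddle diagonals costing $O(kN)$, permutations free) is a point the paper leaves implicit, and you handle it correctly.
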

    Applying \Cref{intro:mmultprop} with $k=2$, we see that if one shows there are any matrices $A,B \in \F^{N \times N}$ such that $A \otimes B \in \F^{N^2 \times N^2}$ requires circuits of size $\Omega(N^2 \log N)$ (perhaps making use of a proof that $A \otimes B$ is Valiant-rigid\footnote{Actually, showing that $A \otimes B$ is Valiant-rigid would only prove a $\omega(N^2)$ lower bound against $O(\log N)$-depth circuits for $N \times N \times N$ matrix multiplication. Normally, a $O(\log N)$ depth restriction on circuits for $N \times N \times N$ matrix multiplication is not very limiting, since it is known that arithmetic circuits for matrix multiplication can be converted into logarithmic-depth circuits with only a $O(N^\eps)$ blowup in size for any $\eps>0$ (which, in particular, does not effect the value of the matrix multiplication exponent $\omega$). However, in our setting where the resulting lower bounds are only for size $\Omega(N^2 \log N)$, this $N^\eps$ term may be non-negligible.}, or in some other way), then $N \times N$ matrix multiplication requires circuits of size $\Omega(N^2 \log N)$. By comparison, even for very simple matrices of the form $A \otimes B$ such as the $N^2 \times N^2$ Discrete Fourier transform or Walsh-Hadamard transform, the best known circuit size is only $\Theta(N^2 \log N)$.
    
    \Cref{intro:mmultprop} becomes more exciting from an algorithmic perspective as we consider larger $k$. For $k=2$, the upper bound of $o(N^2 \log N)$ needed for $N \times N \times N$ matrix multiplication is quite far away from our current best upper bound of roughly $O(N^{2.373})$. However, as $k$ grows, the exponent is known to approach $k$ as well:

\begin{proposition}[\cite{huang1998fast}] \label{intro:rectmm}
For every field $\F$ and integer $k>1$, there is a circuit of size $O(N^{k \cdot \log_{k-1}(k)})$ for performing $N \times N \times N^{k-1}$ matrix multiplication. Here, the $O$ is hiding a function of $k$. Note that the exponent is $$k \cdot \log_{k-1}(k) = k + O\left(\frac{1}{\log k}\right).$$
\end{proposition}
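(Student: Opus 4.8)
The plan is to use the standard tensor-power (Strassen-style) framework: reduce the asymptotic statement to a single bilinear identity for constant-size matrices, then recurse. Write $\langle a,b,c\rangle$ for the bilinear map sending an $a\times b$ matrix and a $b\times c$ matrix to their product. Two standard facts drive everything: (i) the Kronecker identity $\langle a,b,c\rangle\otimes\langle a',b',c'\rangle=\langle aa',bb',cc'\rangle$, so $\langle q,q,q^{k-1}\rangle^{\otimes t}=\langle q^t,q^t,q^{t(k-1)}\rangle=\langle N,N,N^{k-1}\rangle$ for $N=q^t$; and (ii) if $\langle q,q,q^{k-1}\rangle$ admits a bilinear algorithm using $r$ multiplications then, unwinding the recursion (the additions at each level contribute only a lower-order term, since $r$ is at least the output dimension $q^k$), $\langle N,N,N^{k-1}\rangle$ has an arithmetic circuit of size $O(r^t)=O(N^{\log_q r})$ whenever $N$ is a power of $q$. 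For a general $N$, round up to the next power of $q$ and zero-pad; since $q=q(k)$ is a fixed constant this costs only a constant factor depending on $k$, which is absorbed into the hidden $O(\cdot)$. All of this is field-independent.

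With this setup, it suffices to find a constant $q$ and a bilinear algorithm for $\langle q,q,q^{k-1}\rangle$ using at most $q^{\,k\log_{k-1}(k)}$ multiplications; taking the $\log_q N$-th tensor power then gives the claimed $O(N^{k\log_{k-1}(k)})$ circuit. The natural choice is $q=k-1$, where the requirement becomes: multiply a $(k-1)\times(k-1)$ matrix by a $(k-1)\times(k-1)^{k-1}$ matrix using at most $k^k$ multiplications (using $(k-1)^{\log_{k-1}k}=k$). The point of departure for such an identity is the decomposition $\langle q,q,q^{k-1}\rangle=\langle q,q,q\rangle\otimes\langle 1,1,q^{k-2}\rangle$, which lets one feed a fast square-matrix-multiplication algorithm into the square factor while handling the one-dimensional factor trivially; a careful \emph{recursive} version of this — one that does not treat the two factors independently but instead amortizes the cost of the square multiplications across the $q^{k-2}$ column-blocks — is what yields the stated count. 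As a sanity check that the target is in the right range: the crude version, splitting the $q\times q^{k-1}$ matrix into $q^{k-2}$ square $q\times q$ blocks and applying a size-$N^{\omega+o(1)}$ algorithm to each, already gives an arithmetic circuit of size $N^{k-2+\omega+o(1)}$, i.e.\ exponent $k+O(1)$; the content of the proposition is that a sharper construction shrinks the additive slack from the fixed constant $\omega-2$ down to $O(1/\log k)$.

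I expect the crux — and the main obstacle — to be exactly this last point. One cannot obtain an exponent tending to $k$ merely by plugging a single fixed fast square-matrix-multiplication algorithm (with its fixed $\omega>2$) into a block decomposition, since any such scheme loses an additive $\omega-2$ in the exponent. Instead one needs a family of bilinear constructions whose relative efficiency improves with $k$: e.g.\ a direct recursion on the long dimension that reuses the expensive square-multiplication work across all $q^{k-2}$ blocks, or an application of Sch\"onhage's asymptotic-sum inequality to combine several disjoint rectangular products of differing aspect ratios. Checking that one of these reaches the precise value $k\log_{k-1}(k)$ — concretely, that the base-case multiplication count can be pushed down to $k^k$ rather than the $(k-1)^{\,k-2+\omega}$ one gets for free — is the delicate, calculational heart of the argument, and is where I would follow the detailed construction of~\cite{huang1998fast}.
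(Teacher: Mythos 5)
Your framework (reduce to a constant-size base case, then tensor-power) is the right general shape, and you are candid that the crux is the base construction, which you defer to~\cite{huang1998fast} --- as does the paper itself, whose ``proof'' of this proposition is only a sketch pointing at Equation~(7.1) of that reference. But the specific route you propose for the crux is not the one that works, and this matters. You reduce the problem to exhibiting an \emph{exact} bilinear algorithm for $\langle k-1,\,k-1,\,(k-1)^{k-1}\rangle$ of rank at most $k^k$, to be found by ``amortizing'' square multiplications across the $(k-1)^{k-2}$ column blocks. No such exact low-rank identity is known, and no amount of rearranging a fixed fast square-matrix-multiplication algorithm will produce one: as you yourself observe, any scheme of that shape loses an additive $\omega-2$ in the exponent. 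The actual argument in~\cite{huang1998fast} is of a different type: it is a \emph{border-rank} bound on a \emph{direct sum} of matrix-multiplication tensors (a Sch\"onhage partial-matrix-multiplication construction, with parameters $q=r=k$ in their notation), converted into an exponent bound $\omega(1,1,k-1) < k\log_{k-1}(k)$ via Sch\"onhage's asymptotic sum inequality. You mention the asymptotic sum inequality only as a parenthetical alternative; it is in fact the load-bearing step, and your primary proposal (exact rank, Strassen-style recursion) would not reach the stated exponent.

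This also explains a point your write-up glosses over: why the proposition must say ``the $O$ is hiding a function of $k$.'' Sch\"onhage's theorem yields, for every $\eps>0$, circuits of size $O_\eps\bigl(N^{\omega(1,1,k-1)+\eps}\bigr)$ with a constant depending on $\eps$; the paper takes $\eps = k\log_{k-1}(k)-\omega(1,1,k-1)>0$, which depends only on $k$, so the hidden constant is a function of $k$ alone. In your exact-rank framework the constant would come only from padding and additions and this caveat would be unnecessary --- but tracking that constant is precisely why the paper states the proposition this way (it is what the later discussion of whether $f(k)<o(\log k)$ hinges on). So while your high-level reduction is sound as far as it goes, the argument as proposed has a genuine gap at the one step that carries the content of the proposition, and the route you sketch toward filling it is not the correct one.
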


In fact, working through the details (see \Cref{sec:mmult} below), we find that for a slightly super-constant choice of $k = \log N / \log\log N$, a circuit of size $O(N^{k \cdot \log_{k-1}(k)})$ for $N \times N \times N^{k-1}$ matrix multiplication would lead to an $o(N \log N)$ time algorithm for the $N \times N$ Fourier transform and the $N \times N$ Walsh-Hadamard transform.
Unfortunately, this is not exactly what is guaranteed to us by \Cref{intro:rectmm}; we only know there is such a circuit of size $f(k) \cdot N^{k \cdot \log_{k-1}(k)}$ for some function $f$. When $k$ is super-constant, the term $f(k)$, which is usually part of the leading constant in fast matrix multiplication algorithms, becomes relevant and may swamp our other savings. We show in \Cref{sec:mmult} below that any bound $f(k) < o(\log k)$ would suffice to speed up the $N \times N$ Fourier transform and the $N \times N$ Walsh-Hadamard transform. The growth of $f(k)$ in fast rectangular matrix multiplication algorithms is typically not the focus of study, as one typically thinks of $k$ as a constant\footnote{The only work proving something like a bound on $f(k)$ that the author is aware of is Williams'~\cite{williams2014faster} analysis of Coppersmith's~\cite{coppersmith1982rapid} rectangular matrix multiplication algorithm. He shows the algorithm for $N \times N^{0.17} \times N$ matrix multiplication has a running time of only $N^2 \polylog(N)$, compared to the bound of $O(N^{2+\eps})$ for any $\eps>0$ that one achieves using Coppersmith's identity combined with standard fast matrix multiplication techniques.}, but it may warrant further investigation!

\subsection{Fast Batch Computations on Low-Dimensional Points} \label{introsec:batches}

For our last new upper bound, we remark that some ideas in the proof of \Cref{intro:mainnonrig} can be used to extend certain algorithms for the Orthogonal Vectors problem (which corresponds to the disjointness matrix $R_n$) to a more general class of problems. Recall that in the Orthogonal Vectors problem, we are given as input $m$ vectors from $\{0,1\}^d$, and the goal is to determine whether there is a pair which is orthogonal (over $\Z$). Equivalently, we are given as input $m$ row and column indices into the matrix $R_d$, and we want to determine whether there are any $1$s in the corresponding submatrix. This can be solved in $O(m^2 \cdot d)$ time (and even faster when $d \leq O(\log m)$~\cite{abboud2014more}), but in the regime when $m \geq \tilde{\Omega}(2^{d/2})$, there is a faster folklore algorithm running in time only $O(m + d \cdot 2^d)$. In fact, this latter algorithm corresponds directly to the fact that the linear transformation $R_d$ can be computed in time $O(d \cdot 2^d)$. 

Using \Cref{eqn:decomp}, we can extend this to a more general class of problems, defined as follows. Let $f : \{0,1\}^d \to \F$ be a function which can be evaluated in time $T$. Then, given as input a set $S \subseteq \{0,1\}^d$ of size $|S|=m$, there is an algorithm running in time $O(m + (d+T) \cdot 2^d)$ for computing, for all $s \in S$, the sum $\sum_{t \in S} f(s[1] \wedge t[1], s[2] \wedge t[2], \ldots, s[d] \wedge t[d])$. When $f=NOR$, this algorithm counts the number of Orthogonal Vectors. However, other functions $f$ correspond to other interesting tasks. For instance, when $f$ is a threshold function (such as $MAJORITY$), this algorithm counts the number of pairs of points which share a certain number of $1$s in common, which is a basic nearest neighbor search problem, in time $O(m + d \cdot 2^d)$. This improves on the more straightforward $O(m \cdot 2^d)$ time algorithm for this problem when $d = o(m)$. 

\subsection{Other Related Work}

\paragraph{Rigidity Upper Bounds from Low-Depth Circuit Upper Bounds}
Our results discussed in \Cref{introsec:lowdepth} above show how rigidity upper bounds for a matrix $M$ can be used to construct small low-depth circuits for $M$. Relatedly, Pudl{\'a}k~\cite{pudlak1994communication} showed a type of converse: that low-depth circuit upper bounds can be used to show rigidity upper bounds.

\begin{proposition}[{\cite[Proposition~2]{pudlak1994communication}}]
For any field $\F$, positive integers $r,d$, real $c,\eps \geq 0$ and $M \in \F^{N \times N}$, if $M$ has a depth-$d$ linear circuit of size $O(d \cdot N^{1 + c/d})$, then $\rig_M(\eps \cdot N) \leq (d/\eps)^d \cdot N^{1+c}$.
\end{proposition}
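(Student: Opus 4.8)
The plan is to use the standard ``heavy part is low rank, light part is sparse'' decomposition of each layer of the circuit, glued together by a telescoping identity. First I would convert the given depth-$d$ linear circuit into a synchronous one (costing at most a factor $d$ in size, absorbed into the constants), so that we have matrices $M_1, \ldots, M_d \in \F^{N \times N}$ with $M = M_1 M_2 \cdots M_d$ and $s := \sum_{i=1}^d \nnz(M_i) \le O(d \cdot N^{1+c/d})$. The goal is then to write $M$ as (low rank, with rank at most $\eps N$) plus (few nonzero entries).

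Next, fix a ``rank budget'' $r_i$ for each layer, to be optimized at the end, and split $M_i = D_i + E_i$, where $D_i$ consists of those columns of $M_i$ containing more than $\nnz(M_i)/r_i$ nonzero entries, and $E_i$ of the remaining columns. A counting argument shows $D_i$ has fewer than $r_i$ nonzero columns, so $\rank(D_i) < r_i$; and $E_i$ has at most $t_i := \nnz(M_i)/r_i$ nonzero entries in every column. The key step is the telescoping identity
\begin{align*}
M \;=\; \prod_{i=1}^d M_i \;=\; \prod_{i=1}^d E_i \;+\; \sum_{i=1}^d \Bigl( \prod_{j<i} M_j \Bigr)\, D_i \,\Bigl( \prod_{j>i} E_j \Bigr),
\end{align*}
which follows by induction on $d$ from the substitutions $M_i = E_i + D_i$. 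Each of the $d$ summands on the right contains the factor $D_i$, hence has rank less than $r_i$, so their sum has rank less than $\sum_i r_i$. Meanwhile, column-sparsity is submultiplicative under matrix products, so $\prod_{i=1}^d E_i$ has at most $\prod_i t_i$ nonzero entries per column, i.e.\ $\nnz\bigl(\prod_i E_i\bigr) \le N \prod_i t_i = N \prod_i \nnz(M_i)/r_i$. Combining, $\rig_M\bigl(\sum_i r_i\bigr) \le N \prod_i \nnz(M_i)/r_i$.

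Finally I would optimize the budget: taking $r_i = \eps N / d$ for all $i$ (rounding up if needed, which only changes the hidden constant) gives $\sum_i r_i = \eps N$ and
\begin{align*}
\rig_M(\eps \cdot N) \;\le\; N \prod_{i=1}^d \frac{d\,\nnz(M_i)}{\eps N} \;=\; \Bigl(\frac{d}{\eps N}\Bigr)^{d} N \prod_{i=1}^d \nnz(M_i) \;\le\; \Bigl(\frac{d}{\eps N}\Bigr)^{d} N \Bigl(\frac{s}{d}\Bigr)^{d} \;=\; \Bigl(\frac{s}{\eps N}\Bigr)^{d} N,
\end{align*}
where the middle inequality is AM--GM applied to the $\nnz(M_i)$. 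Plugging in $s \le O(d \cdot N^{1+c/d})$ yields $\bigl(s/(\eps N)\bigr)^d N = O(d/\eps)^{d}\, N^{1+c}$, which is the claimed bound.

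Every step here is elementary; the two things to get right are (i) working with \emph{column} sparsity throughout, so that the all-$E$ term of the product stays sparse (submultiplicativity of column-sparsity under products), and (ii) using the telescoping identity rather than the naive $2^d$-term expansion of $\prod_i(D_i + E_i)$ — the latter would leave $2^d-1$ low-rank terms whose ranks sum to far more than $\eps N$. The only real bookkeeping obstacle I anticipate is tracking how the factor-$d$ synchronization blowup and the AM--GM step interact with the precise $(d/\eps)^d$ constant; in particular the cleanest statement is for synchronous circuits of size $s = \sum_i \nnz(M_i)$, where the bound is exactly $\bigl(s/(\eps N)\bigr)^d N$, and the $(d/\eps)^d$ form follows once $s \le d\cdot N^{1+c/d}$.
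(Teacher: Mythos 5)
Your argument is correct, and I can't compare it to an in-paper proof because the paper states this proposition only as a citation to Pudl\'ak and gives no proof; your telescoping decomposition $M_i = D_i + E_i$ into heavy (low-rank) and light (column-sparse) columns, with the $d$-term rather than $2^d$-term expansion, is exactly the standard route to this kind of statement. The only caveat is the one you already flag: the proposition's hypothesis carries an $O(\cdot)$ while its conclusion has an exact constant $(d/\eps)^d$, so the clean bound really holds for synchronous circuits of size $d\cdot N^{1+c/d}$, and the synchronization blowup (and the rounding of $r_i = \eps N/d$, which should be rounded \emph{down} to keep $\sum_i r_i \le \eps N$) is absorbed into constants that the statement itself does not track precisely.
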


Although this can be combined with our \Cref{intro:mainthm} to prove rigidity upper bounds for $H_n$ and other Kronecker power matrices, the resulting bounds are weaker than what we prove in \Cref{intro:mainnonrig} using a different approach, and do not suffice to prove that these matrices are not Valiant-rigid. Perhaps there is a different way to reconcile the two?

\paragraph{Data Structures and Rigidity}
Rigidity upper bounds are known to give rise to data structure bounds: Dvir, Golovnev, and Weinstein~\cite{dvir2019static} recently showed this for static data structures, and Ramamoorthy and Rashtchian~\cite{ramamoorthy2019equivalence} showed this for systematic linear data structures.

\paragraph{Small Depth Circuit Lower Bounds}

The best-known lower bounds on the size of a depth-$2$ linear circuit for computing an explicit $N \times N$ linear transformation are only $\Omega(N \log^2 N / (\log \log N)^2)$ for efficient error-correcting codes over constant-size finite fields~\cite{gal2012tight}, or $\Omega(N \log^2 N / \log \log N)$ for matrices arising from super-concentrator graphs over larger fields~\cite{radhakrishnan2000bounds}. Two recent lower bounds were also shown for less-explicit matrices: Kumar and Volk~\cite{kumar2019lower} constructed a matrix in time $\exp(N^{\Theta(1)})$, over a field of size $\exp(N^{\Theta(1)})$, which requires depth-$d$ circuits of size $N^{1 + 1/(2d)}$. With Chen~\cite{alman2019efficient}, we construct a matrix in $P^{NP}$ which has $\{0,1\}$ entries over any fixed-size finite field and which requires depth-$2$ circuits of size $\Omega(N \cdot 2^{(\log N)^{1/4 - \delta}})$ for any $\delta>0$. In other words, the known techniques are far from proving that any of the depth-$d$ upper bounds presented here, which are of the form $O(N^{1 + (1-\eps)/d})$ for somewhat small constants $\eps>0$, are tight.

\paragraph{Other Circuit Models for Matrices}

Circuit models other than linear circuits have also been studied for computing matrices in certain settings. For instance, when working with matrices over a semigroup (like the OR semigroup) or a semiring (like the SUM semiring) instead of a field, one can consider circuits where the gates compute sums from that semigroup or semiring instead. See, for instance, the book by Jukna and Sergeev which studies these models in detail~\cite{jukna2013complexity}. These models have applications to areas like communication complexity, and the techniques for constructing circuits in these models often apply to the linear circuit model as well. For instance, we remark in \Cref{sec:disjconst} below that a construction by Jukna and Sergeev for the disjointness matrix $R_n$, which takes advantage of both the recursive definition and the sparsity of $R_n$, leads to a better upper bound for low-depth circuits for $R_n$ than we are able to prove using our rigidity approach.

\subsection{Outline}

In Section 2, we introduce the notions and notation we will use, and we present a number of basic tools for working with Kronecker products and linear circuits. We then prove \Cref{intro:mainthm} in Sections 3 and 4: we prove \Cref{intro:part1} and \Cref{intro:rigiditytoconstruction} in Section 3, and then we study low-rank rigidity upper bounds for a number of families of matrices in Section 4. In Sections 5--7 we prove \Cref{intro:mainnonrig}: we prove that $R_n$ is not Valiant-rigid in Section 5, we show how to express other matrices of interest in terms of $R_n$ in Section 6, and we give our extension to Kronecker products of larger matrices (the $q>2$ case of \Cref{intro:mainnonrig}) in Section 7. Finally, in Section 8 we investigate connections between the linear complexity of Kronecker products and matrix multiplication, in Section 9 we present other algorithms which we design using ideas from the remainder of the paper, and in Section 10 we prove \Cref{intro:strassenylemma}, the generalization of \Cref{intro:part1}.

\section{Preliminaries}

\subsection{Notation and Basic Properties}

\subsubsection{Matrix Indexing}

For a positive integer $n$, we write $[n] := \{1,2,\ldots,n\}$ and $[n]_0 := \{0,1,\ldots,n-1\}$.

By default, we use zero-based numbering for the indices of matrices, meaning, for any set $S$, positive integers $n,m$, matrix $M \in S^{n \times m}$, $i \in [n]_0$ and $j \in [m]_0$, we write $M[i,j]$ for the corresponding entry of $M$. That said, if $S_n, S_m$ are sets of sizes $|S_n|=n$ and $|S_m|=m$, we may sometimes say that the rows and columns of $M$ are \emph{indexed by} $S_n$ and $S_m$, respectively. In this case, we implicitly define bijections $f_{S_n} : S_n \to [n]_0$ and $f_{S_m} : S_m \to [m]_0$, and then for $s_n \in S_n$ and $s_m \in S_m$ we write $M[s_n, s_m] := M[f_{S_n}(s_n), f_{S_m}(s_m)]$.

\subsubsection{Matrix Products}

\begin{definition}
For any field $\F$, positive integers $n_A,n_B,m_A,m_B$, and matrices $A \in \F^{a_1 \times a_2}$, $B \in \F^{b_1 \times b_2}$, the \emph{Kronecker product} of $A$ and $B$, denoted $A \otimes B$, is the matrix $A \otimes B \in \F^{(a_1 \cdot b_1) \times (a_2 \cdot b_2)}$, whose rows and columns are indexed by $[a_1]_0 \times [b_1]_0$ and $[a_2]_0 \times [b_2]_0$, respectively, and whose entries are given by $$A \otimes B[(i_A,i_B),(j_A,j_B)] := A[i_A,j_A] \cdot B[i_B,j_B].$$

The Kronecker product is not commutative in general, however, there are always permutation matrices $P \in \{0,1\}^{(a_1 \cdot b_1) \times (a_1 \cdot b_1)}$ and $P' \in \{0,1\}^{(a_2 \cdot b_2) \times (a_2 \cdot b_2)}$, which depend only on $a_1, a_2, b_2$, and $b_2$, such that $$A \otimes B = P \times (B \otimes A) \times P'.$$

For a matrix $A$ and positive integer $n$, we write $A^{\otimes n}$ to denote the Kronecker product of $n$ copies of $A$, i.e., $A^{\otimes 1} = A$ and $A^{\otimes n} = A^{\otimes (n-1)} \otimes A$.

We will need some additional notation for dealing with more complicated Kronecker products. For positive integers $n,q$, matrices $A,B \in \F^{q \times q}$, and sets $S_A \subseteq [n]$ and $S_B = [n] \setminus S_A$, we write $A^{\otimes S_A} \otimes B^{\otimes S_B}$ for the matrix in $\F^{q^n \times q^n}$ given by, for $i,j \in [q]_0^n$,
$$A^{\otimes S_A} \otimes B^{\otimes S_B}[i,j] := \left( \prod_{\ell \in S_A} A[i[\ell],j[\ell]] \right) \cdot \left( \prod_{\ell \in S_B} B[i[\ell],j[\ell]] \right).$$

Similarly, if $A \in \F^{q \times q}$ and $B \in \F^{q^{|S_B|} \times q^{|S_B|}}$ then we write $A^{\otimes S_A} \otimes B^{\otimes S_B}$ for the matrix in $\F^{q^n \times q^n}$ given by, for $i,j \in [q]_0^n$,
$$A^{\otimes S_A} \otimes B^{\otimes S_B}[i,j] := \left( \prod_{\ell \in S_A} A[i[\ell],j[\ell]] \right) \cdot \left( B[i|_{S_B}, j|_{S_B}] \right).$$ Here, `$i|_{S_B}$' denotes $i$ restricted to the coordinates of $S_B$.
\end{definition}

In addition to using $\otimes$ to denote the Kronecker product of matrices, we will use $\times$ to denote the (usual) product of matrices, and for emphasis, we will use $\cdot$ to denote the product of field elements.

\subsubsection{Matrix Sparsity and Rigidity}

For a matrix $A \in \F^{a_1 \times a_2}$, its sparsity, written $\nnz(A)$, denotes number of non-zero entries in $A$. We similarly define its row sparsity, $\nnzr(A)$, to be the maximum number of non-zero entries in a row of $A$, and its column sparsity, $\nnzc(A)$, to be the maximum number of non-zero entries in a column of $A$. Some basic properties we will use are that, for any $A \in \F^{a_1 \times a_2}$ and $B \in \F^{b_1 \times b_2}$:
\begin{itemize}
    \item $\nnz(A \otimes B) = \nnz(A) \cdot \nnz(B)$,
    \item $\nnzr(A \otimes B) = \nnzr(A) \cdot \nnzr(B)$,
    \item if $a_2=b_1$ then $\nnzr(A \times B) \leq \nnzr(A) \cdot \nnzr(B)$,
    \item if $a_2=b_1$ then $\nnz(A \times B) \leq \nnz(A) \cdot \nnzr(B)$, and
    \item if $D \in \F^{a_1 \times a_1}$ is a diagonal matrix, then $\nnz(D \times A) \leq \nnz(A)$ and $\nnzr(D \times A) \leq \nnzr(A)$.
\end{itemize}

For a matrix $A \in \F^{a \times a}$ and a nonnegative integer $r$, we write $\rig_A(r)$ to denote the rank-$r$ rigidity of $A$ over $\F$, which is the minimum number of entries of $A$ which must be changed to other values in $\F$ to make its rank at most $r$. In other words:
$$\rig_A(r) := \min_{\overset{B \in \F^{a \times a},}{\rank(A+B) \leq r}} \nnz(B).$$
The definition of $\rig_A(r)$ depends on the field $\F$, which we will explicitly mention when it is not clear from context.

We similarly define the rank-$r$ row/column rigidity of $A$, denoted $\rig_A^{rc}(r)$, to be the minimum number of entries which must be changed per row or column of $A$ to make its rank at most $r$, i.e.
$$\rig_A^{rc}(r) := \min_{\overset{B \in \F^{a \times a},}{\rank(A+B) \leq r}} \max\{ \nnzr(B), \nnzc(B) \}.$$

It follows that, for any positive integer $r$, and any $A \in \F^{a \times a}$, we have $$\rig_A(r) \leq a \cdot \rig_A^{rc}(r).$$
\subsubsection{Important Families of Matrices} \label{subsubsec:matricesdefined}

\begin{itemize}
\item The family of Walsh-Hadamard transforms, $H_n \in \{-1,1\}^{2^n \times 2^n}$, is defined by $$H_1 = \begin{bmatrix}
1 & 1 \\
1 & -1
\end{bmatrix}$$
and for $n \in \N$, $H_n = H_1^{\otimes n}$. 

\item The family of Disjointness matrices, $R_n \in \{0,1\}^{2^n \times 2^n}$, is defined by $$R_1 = \begin{bmatrix}
1 & 1 \\
1 & 0
\end{bmatrix}$$
and for $n \in \N$, $R_n = R_1^{\otimes n}$. 

\item The family of Fourier transforms, $F_N \in \C^{N \times N}$, is defined by picking $\omega_N := e^{2 \pi i/N}$ to be a primitive $N$th root of unity, then setting $F_N[i,j] = \omega_N^{i \cdot j}$.

\item For $k \in \N$ we write $I_k$ to denote the $k\times k$ identity matrix. 

\item A diagonal matrix $D \in \F^{N \times N}$ is any matrix such that, if $i \neq j$, then $D[i,j]=0$. $D$ has full rank if and only if $D[i,i]\neq 0$ for all $i$.

\item A weighted permutation matrix $\Pi \in \F^{N \times N}$ is a matrix with exactly one nonzero entry in each row or column. A permutation matrix is a weighted permutation matrix in which each nonzero entry is $1$.
\end{itemize}

\subsubsection{Arithmetic Circuits and Linear Circuits}

An \emph{arithmetic circuit} over a field $\F$ is a circuit whose inputs are variables and constants from $\F$, and whose gates compute the product or the sum over $\F$ of their inputs. A \emph{linear circuit} over $\F$ is a circuit whose inputs are variables from $\F$, and whose gates compute $\F$-linear combinations of their inputs. The depth of a circuit is the length (number of edges) of the longest path from an input to an output. The size might either be measured by number of gates, or number of wires. 

For a field $\F$ and matrix $A \in \F^{q_1 \times q_2}$, we say that a circuit $C$ computes the linear transformation $A$ (or simply `computes $A$') if $C$ has $q_1$ inputs and $q_2$ outputs, such that on input $x \in \F^{q_1}$, the output of $C$ is $A \times x$.

In a \emph{synchronous} linear circuit, the inputs to each gate must all have the same depth. A synchronous linear circuit $C$ of depth $d$ for a matrix $A$ corresponds to matrices $A_1, \ldots, A_d$ such that $A = \prod_{j=1}^d A_j$, and the size (number of wires) of $C$ is given by $\sum_{j=1}^d \nnz(A_j)$. Any depth-$d$ linear circuit can be converted into a depth-$d$ synchronous linear circuit for the same linear transformation with at most a $O(d)$ multiplicative blow-up in the size. In this paper, $O(d)$ will typically be negligible, so we will focus on synchronous linear circuits.

\subsubsection{Binary Entropy Function}

The binary entropy function $H : [0,1] \to [0,1]$ is defined by
$$H(p) := -p \cdot \log_2(p) - (1-p) \cdot \log_2(1-p),$$
where we take $0 \cdot \log_2(0) = 0$.
For every integer $n>1$ and every $p \in (0,1)$, it is known that
$$\frac{1}{n+1} 2^{n \cdot H(p)} \leq \binom{n}{p \cdot n} \leq 2^{n \cdot H(p)}.$$

We will make use of the following calculations:

\begin{lemma}\label{binentropy}
For any integer $q>1$ and any real $0 < \delta< 1/q - 1/(q+1)$ we have:
\begin{enumerate}
    \item $H(1/q) = \log_2(q) - \frac{q-1}{q}\log_2(q-1)$,
    \item $H(1/q + \delta) - H(1/q) \leq \delta \cdot \log_2(q-1) - \delta^2 \cdot \frac{q^2}{(q-1)\log_e(4)} + O(\delta^3)$, and
    \item $H(1/q) - H(1/q - \delta) \leq \delta \cdot \log_2(q-1) + \delta^2 \cdot \frac{q^2}{(q-1)\log_e(4)} + O(\delta^3)$.
\end{enumerate}
\end{lemma}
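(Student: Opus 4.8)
The plan is to handle part~1 by direct substitution into the definition of $H$, and parts~2 and~3 by a second-order Taylor expansion of $H$ about the point $p = 1/q$ with an explicit Lagrange remainder.

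For part~1, I would expand $H(1/q) = -\tfrac1q\log_2\tfrac1q - \tfrac{q-1}{q}\log_2\tfrac{q-1}{q}$, substitute $\log_2\tfrac1q = -\log_2 q$ and $\log_2\tfrac{q-1}{q} = \log_2(q-1) - \log_2 q$, and collect terms: the $\log_2 q$ contributions combine to $\tfrac1q\log_2 q + \tfrac{q-1}{q}\log_2 q = \log_2 q$, leaving $-\tfrac{q-1}{q}\log_2(q-1)$, which is exactly the claimed identity.

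For parts~2 and~3, I would first record the derivatives of $H$. Differentiating the definition gives $H'(p) = \log_2\!\tfrac{1-p}{p}$ and $H''(p) = -\tfrac{1}{p(1-p)\log_e 2}$, both valid for $p \in (0,1)$, and $H'''$ is continuous (hence bounded) on any closed subinterval of $(0,1)$. Evaluating at $p = 1/q$ gives $H'(1/q) = \log_2(q-1)$ and $H''(1/q) = -\tfrac{q^2}{(q-1)\log_e 2}$, so $\tfrac12 H''(1/q) = -\tfrac{q^2}{(q-1)\log_e 4}$ using $\log_e 4 = 2\log_e 2$. The hypothesis $0 < \delta < 1/q - 1/(q+1) = \tfrac{1}{q(q+1)}$ ensures both $1/q + \delta$ and $1/q - \delta$ lie strictly inside $(0,1)$, so Taylor's theorem with Lagrange remainder on the segment between $1/q$ and $1/q \pm \delta$ yields
$$H(1/q \pm \delta) = H(1/q) \pm \delta\, H'(1/q) + \tfrac{\delta^2}{2} H''(1/q) \pm \tfrac{\delta^3}{6} H'''(\xi_\pm)$$
for some $\xi_\pm$ strictly between $1/q$ and $1/q \pm \delta$. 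Since $H'''$ is bounded on that interval by a constant depending only on $q$, the last term is $O(\delta^3)$. Rearranging gives $H(1/q+\delta) - H(1/q) = \delta\log_2(q-1) - \delta^2\tfrac{q^2}{(q-1)\log_e 4} + O(\delta^3)$ and $H(1/q) - H(1/q-\delta) = \delta\log_2(q-1) + \delta^2\tfrac{q^2}{(q-1)\log_e 4} + O(\delta^3)$, and in both cases the claimed inequality follows immediately (indeed as an equality up to the $O(\delta^3)$ term).

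There is no genuine obstacle here; the only points requiring a bit of care are verifying that the restriction on $\delta$ keeps the argument of $H$ bounded away from $0$ and $1$ so that the remainder is honestly $O(\delta^3)$ (with a $q$-dependent implied constant), and tracking the identity $\log_e 4 = 2\log_e 2$ so the second-order coefficient emerges in exactly the stated form.
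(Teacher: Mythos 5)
Your proof is correct and follows essentially the same route as the paper: part 1 by direct substitution, and parts 2--3 by a second-order expansion of $H$ about $1/q$ using the same derivative values $H'(1/q)=\log_2(q-1)$ and $\tfrac12 H''(1/q) = -\tfrac{q^2}{(q-1)\log_e(4)}$. The only difference is cosmetic --- the paper writes the increment as $\int H'$ and Taylor-expands the midpoint value, whereas you apply Taylor's theorem with Lagrange remainder directly to $H$ --- and your version is if anything cleaner, since it makes explicit that the restriction on $\delta$ keeps the argument in a compact subinterval of $(0,1)$ so the implied $O(\delta^3)$ constant depends only on $q$.
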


\begin{proof}
(1) is a simple rearrangement of the definition:
$$H(1/q) = \frac{1}{q}\log_2(q) + \frac{q-1}{q}\log_2(q/(q-1)) = \log_2(q) - \frac{q-1}{q}\log_2(q-1).$$

To prove (2), start by writing
$$H(1/q) - H(1/q - \delta) = \int_{1/q - \delta}^{1/q} H'(z) dz = \int_{1/q - \delta}^{1/q} \log_2\left( \frac{1-z}{z} \right) dz.$$
Since $\log((1-z)/z)$ is convex, we can bound this above using the midpoint value by
$$\delta \cdot \log_2\left( \frac{1-(1/q + \delta/2)}{1/q + \delta/2} \right) dz = \delta \cdot \log_2(q-1) - \delta^2 \cdot \frac{q^2}{(q-1)\log_e(4)} + O(\delta^3),$$
where the last step is the Taylor expansion at $\delta=0$. 

Similarly, (3) follows by
$$H(1/q) - H(1/q - \delta) \leq \delta \cdot \log_2\left( \frac{1-(1/q - \delta/2)}{1/q - \delta/2} \right) dz = \delta \cdot \log_2(q-1) + \delta^2 \cdot \frac{q^2}{(q-1)\log_e(4)} + O(\delta^3).$$
\end{proof}

\subsection{Basic Tools for Rigidity and Kronecker Products}

We now give a number of basic tools which will be of use throughout our proofs.

\begin{proposition}[The mixed-product property] \label{mixedproductproperty}
Let $\F$ be any field, and let $A \in \F^{a_1 \times a_2}, B \in \F^{b_1 \times b_2}, C \in \F^{c_1 \times c_2}, D \in \F^{d_1 \times d_2}$ be any matrices over $\F$ with $a_2=c_1$ and $b_2=d_1$. Then, $(A \otimes B)\times(C \otimes D) = (A \times C) \otimes (B \times D)$.
\end{proposition}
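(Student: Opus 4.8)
The plan is to prove the identity entry-by-entry, directly from the index-level definition of the Kronecker product and the standard formula for matrix multiplication. First I would check that both sides are well-defined matrices of the same shape: since $a_2 = c_1$ and $b_2 = d_1$, the matrix $A \otimes B \in \F^{(a_1 b_1) \times (a_2 b_2)}$ has its columns indexed by $[a_2]_0 \times [b_2]_0$, which is exactly the index set of the rows of $C \otimes D \in \F^{(c_1 d_1) \times (c_2 d_2)} = \F^{(a_2 b_2) \times (c_2 d_2)}$, so $(A \otimes B) \times (C \otimes D)$ makes sense and lies in $\F^{(a_1 b_1) \times (c_2 d_2)}$. On the other side, $A \times C \in \F^{a_1 \times c_2}$ and $B \times D \in \F^{b_1 \times d_2}$ are defined by the same hypotheses, so $(A \times C) \otimes (B \times D)$ also lies in $\F^{(a_1 b_1) \times (c_2 d_2)}$ with rows indexed by $[a_1]_0 \times [b_1]_0$ and columns by $[c_2]_0 \times [d_2]_0$.

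Next I would fix arbitrary indices $(i_A, i_B) \in [a_1]_0 \times [b_1]_0$ for the row and $(j_C, j_D) \in [c_2]_0 \times [d_2]_0$ for the column, and expand the left-hand entry using matrix multiplication over the index set $[a_2]_0 \times [b_2]_0$, then substitute the Kronecker product definition for each factor. This yields a double sum $\sum_{k \in [a_2]_0} \sum_{\ell \in [b_2]_0} A[i_A,k]\, B[i_B,\ell]\, C[k,j_C]\, D[\ell,j_D]$. The key algebraic step is to observe that this sum factors: since $A[i_A,k] C[k,j_C]$ depends only on $k$ and $B[i_B,\ell] D[\ell,j_D]$ depends only on $\ell$, the double sum equals $\bigl(\sum_{k} A[i_A,k] C[k,j_C]\bigr) \cdot \bigl(\sum_{\ell} B[i_B,\ell] D[\ell,j_D]\bigr)$, which is precisely $(A \times C)[i_A,j_C] \cdot (B \times D)[i_B,j_D]$, i.e. the corresponding entry of $(A \times C) \otimes (B \times D)$. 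Since the indices were arbitrary, the two matrices are equal.

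There is no real obstacle here; the only thing to be careful about is bookkeeping of the row/column index sets (keeping the pairs $(i_A,i_B)$, $(k,\ell)$, $(j_C,j_D)$ straight) and invoking commutativity/distributivity in $\F$ to split the double sum into a product of two single sums. I would present this as a short direct computation rather than a multi-step argument.
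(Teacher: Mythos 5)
Your proof is correct: the dimension check is right ($a_2 b_2 = c_1 d_1$ so the product is defined), and the entry-wise expansion followed by factoring the double sum into a product of two single sums is the standard and complete argument. The paper states this proposition without proof, treating it as a well-known fact, so your computation fills in exactly the expected justification.
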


\begin{proposition}
For any field $\F$, any positive integers $a,b$, and any matrices $A \in \F^{a \times a}$ and $B \in \F^{b \times b}$, we have $\rank(A \otimes B) = \rank(A) \cdot \rank(B)$.
\end{proposition}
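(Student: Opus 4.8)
The plan is to reduce both $A$ and $B$ to rank normal form and then exploit the mixed-product property (\Cref{mixedproductproperty}) together with the fact that a Kronecker product of invertible matrices is invertible. Write $r = \rank(A)$ and $s = \rank(B)$. By Gaussian elimination (valid over any field, since it uses only elementary row and column operations) there are invertible matrices $P_1, Q_1 \in \F^{a \times a}$ and $P_2, Q_2 \in \F^{b \times b}$ such that $P_1 A Q_1$ is the $a \times a$ matrix with $I_r$ in its top-left corner and zeros elsewhere, and $P_2 B Q_2$ is the $b \times b$ matrix with $I_s$ in its top-left corner and zeros elsewhere. Call these two matrices $E_r$ and $E_s$.

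First I would note that $P_1 \otimes P_2$ and $Q_1 \otimes Q_2$ are invertible: applying \Cref{mixedproductproperty}, $(P_1 \otimes P_2) \times (P_1^{-1} \otimes P_2^{-1}) = (P_1 P_1^{-1}) \otimes (P_2 P_2^{-1}) = I_a \otimes I_b = I_{ab}$, and likewise for $Q_1 \otimes Q_2$. Next, applying \Cref{mixedproductproperty} twice (legitimate since all four matrices are square, so the inner dimensions always match),
\[
(P_1 \otimes P_2) \times (A \otimes B) \times (Q_1 \otimes Q_2) = (P_1 A Q_1) \otimes (P_2 B Q_2) = E_r \otimes E_s.
\]
Since multiplying on the left or right by an invertible matrix preserves rank, this yields $\rank(A \otimes B) = \rank(E_r \otimes E_s)$.

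Finally I would compute $\rank(E_r \otimes E_s)$ straight from the definition of the Kronecker product: $E_r \otimes E_s$ is the $ab \times ab$ matrix whose $((i_A,i_B),(j_A,j_B))$ entry is $E_r[i_A,j_A] \cdot E_s[i_B,j_B]$, which is $1$ exactly when $i_A = j_A < r$ and $i_B = j_B < s$, and $0$ otherwise. Hence $E_r \otimes E_s$ is diagonal with exactly $rs$ nonzero entries, so its rank is $rs = \rank(A) \cdot \rank(B)$, as claimed.

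I do not expect a genuine obstacle here: the only points needing care are checking that the two uses of the mixed-product property have compatible block dimensions (they do) and that the reduction to normal form works over an arbitrary field (it does). If one preferred to avoid normal forms, an alternative is to take rank factorizations $A = UV$ and $B = WX$ with $U, W$ of full column rank and $V, X$ of full row rank, write $A \otimes B = (U \otimes W) \times (V \otimes X)$ via \Cref{mixedproductproperty}, and observe that $U \otimes W$ retains full column rank $rs$ (it has the left inverse $U^- \otimes W^-$ built from left inverses of $U$ and $W$) while $V \otimes X$ retains full row rank $rs$, forcing the product to have rank exactly $rs$; but the normal-form argument above seems cleanest and most self-contained.
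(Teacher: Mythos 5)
Your proof is correct. The paper states this proposition without proof, treating it as a standard fact about Kronecker products, so there is no in-paper argument to compare against; your derivation via rank normal form, the mixed-product property, and the observation that $E_r \otimes E_s$ is a $0$/$1$ diagonal matrix with exactly $rs$ ones is a complete and self-contained justification that works over any field. The alternative you sketch via rank factorizations $A = UV$, $B = WX$ is equally standard and would also suffice.
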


\begin{proposition}\label{sumtoconcat}
For any field $\F$, integers $d_1, d_2, d_3, d_4$ and matrices $X_1 \in \F^{d_1 \times d_2}$, $X_2 \in \F^{d_2 \times d_3}$, $X_3 \in \F^{d_1 \times d_4}$, and $X_4 \in \F^{d_4 \times d_3}$, we have $$X_1 \times X_2 + X_3 \times X_4 = \left( X_1 | X_3 \right) \times \left( \frac{X_2}{X_4} \right),$$ where we are writing `$|$' to denote matrix concatenation.
\end{proposition}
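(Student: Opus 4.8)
The plan is to verify this identity by a direct entrywise computation, recognizing the right-hand side as an instance of block matrix multiplication. Write $(X_1 | X_3) \in \F^{d_1 \times (d_2 + d_4)}$ for the horizontal concatenation and $\left(\frac{X_2}{X_4}\right) \in \F^{(d_2+d_4) \times d_3}$ for the vertical concatenation; then the product on the right-hand side is well-defined and lies in $\F^{d_1 \times d_3}$, matching the dimensions of $X_1 \times X_2 + X_3 \times X_4$.

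First I would fix indices $i \in [d_1]_0$ and $j \in [d_3]_0$ and expand the $(i,j)$ entry of the right-hand side as $\sum_{k \in [d_2 + d_4]_0} (X_1 | X_3)[i,k] \cdot \left(\frac{X_2}{X_4}\right)[k,j]$. I would then split this sum at $k = d_2$. For $k \in [d_2]_0$, the concatenated matrices restrict to their first blocks, so $(X_1 | X_3)[i,k] = X_1[i,k]$ and $\left(\frac{X_2}{X_4}\right)[k,j] = X_2[k,j]$, and this portion of the sum equals $\sum_{k \in [d_2]_0} X_1[i,k] \cdot X_2[k,j] = (X_1 \times X_2)[i,j]$. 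For $k \geq d_2$, writing $k = d_2 + k'$ with $k' \in [d_4]_0$, the matrices restrict to their second blocks, so $(X_1 | X_3)[i,k] = X_3[i,k']$ and $\left(\frac{X_2}{X_4}\right)[k,j] = X_4[k',j]$, and this portion equals $\sum_{k' \in [d_4]_0} X_3[i,k'] \cdot X_4[k',j] = (X_3 \times X_4)[i,j]$. Adding the two portions gives precisely the $(i,j)$ entry of $X_1 \times X_2 + X_3 \times X_4$, and since $i,j$ were arbitrary the matrices are equal.

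This is simply block matrix multiplication, so there is no genuine obstacle. The only point requiring care is bookkeeping with the indexing conventions for the concatenated matrices — making sure the first $d_2$ columns of $(X_1 | X_3)$ are identified with the columns of $X_1$ and the first $d_2$ rows of $\left(\frac{X_2}{X_4}\right)$ with the rows of $X_2$ (and the remaining $d_4$ columns/rows with those of $X_3$ and $X_4$), under the implicit bijections of the excerpt's zero-based numbering convention. With that identification in place the identity is immediate.
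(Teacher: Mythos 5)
Your proof is correct; the paper states this proposition without proof, treating it as the standard block-matrix-multiplication identity, and your entrywise verification (splitting the sum over $k$ at $d_2$) is exactly the argument one would supply. No issues.
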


\begin{lemma} \label{butterfly}
For any field $\F$, positive integers $q,n$, and matrices $M_1, \ldots, M_n \in \F^{q \times q}$, we have \begin{align}\label{butterflyeqn}\bigotimes_{i=1}^n M_i = \prod_{i=1}^n M_i^{\otimes \{ i \}} \otimes (I_{q^{n-1}})^{\otimes [n] \setminus \{ i \} }.\end{align}
\end{lemma}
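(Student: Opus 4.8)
The plan is to recognize the $i$-th factor on the right-hand side of \eqref{butterflyeqn}, namely $A_i := M_i^{\otimes \{i\}} \otimes (I_{q^{n-1}})^{\otimes [n]\setminus\{i\}}$, as an ordinary Kronecker product of $n$ matrices from $\F^{q\times q}$. Writing $I_{q^{n-1}} = I_q^{\otimes(n-1)}$ and unwinding the definition of the $A^{\otimes S_A}\otimes B^{\otimes S_B}$ notation, one checks directly from the entrywise formula that $A_i = \bigotimes_{j=1}^{n} B_{ij}$, where $B_{ii} = M_i$ and $B_{ij} = I_q$ for every $j \neq i$. Thus the right-hand side of \eqref{butterflyeqn} is exactly $\prod_{i=1}^{n} \left( \bigotimes_{j=1}^{n} B_{ij} \right)$, and it remains only to evaluate this product of Kronecker products.

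The key tool is a generalized form of the mixed-product property (\Cref{mixedproductproperty}): for any positive integers $m,n$ and matrices $X_{ij}$ ($i\in[m]$, $j\in[n]$) of compatible dimensions, one has $\prod_{i=1}^{m} \left( \bigotimes_{j=1}^{n} X_{ij} \right) = \bigotimes_{j=1}^{n} \left( \prod_{i=1}^{m} X_{ij} \right)$. I would prove this by a routine double induction: induction on $n$ lets one peel off the last tensor factor, writing $\bigotimes_{j=1}^n X_{ij} = \left(\bigotimes_{j=1}^{n-1} X_{ij}\right) \otimes X_{in}$, and then the two-factor mixed-product property from \Cref{mixedproductproperty} handles one step; induction on $m$ (the length of the matrix-multiplication chain) handles the rest. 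This is the only step that requires any bookkeeping, and even it is entirely mechanical.

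Applying the generalized mixed-product property with $m=n$ and $X_{ij}=B_{ij}$, the right-hand side of \eqref{butterflyeqn} becomes $\bigotimes_{j=1}^{n} \left( \prod_{i=1}^{n} B_{ij} \right)$. For each fixed $j$, the inner product $\prod_{i=1}^{n} B_{ij}$ is a product of $n$ matrices in $\F^{q\times q}$, all of which equal $I_q$ except the $i=j$ term, which equals $M_j$; since $I_q$ is the identity for matrix multiplication, this product is just $M_j$. Hence the right-hand side equals $\bigotimes_{j=1}^{n} M_j$, which is the left-hand side, and the lemma follows.

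As an alternative one could prove \Cref{butterfly} by a direct induction on $n$: peel off $M_n$ from $\bigotimes_{i=1}^n M_i = \left(\bigotimes_{i=1}^{n-1} M_i\right)\otimes M_n$, use the inductive factorization of $\bigotimes_{i=1}^{n-1} M_i$ as a product of $n-1$ matrices in $\F^{q^{n-1}\times q^{n-1}}$, append a trivial factor $I_{q^{n-1}}$ to that product, and Kronecker the $k$-th factor with $I_q$ for $k\le n-1$ and with $M_n$ for the appended factor, invoking \Cref{mixedproductproperty} to push the Kronecker product through the matrix product. This reaches the same conclusion but is slightly heavier on notation, which is why I would present the direct argument above. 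I do not anticipate a genuine obstacle here; the only thing to be careful about is keeping the tensor-coordinate indexing straight when identifying $A_i$ with $\bigotimes_{j=1}^n B_{ij}$.
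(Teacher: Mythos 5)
Your proposal is correct. Your primary argument takes a slightly different route from the paper's: the paper proves \Cref{butterfly} by direct induction on $n$, peeling off the last Kronecker factor via $\bigotimes_{i=1}^n M_i = \bigl(\bigotimes_{i=1}^{n-1} M_i\bigr) \otimes M_n$, applying the inductive hypothesis, and pushing the resulting product through the Kronecker product with \Cref{mixedproductproperty} --- which is precisely the ``alternative'' you sketch in your last paragraph. Your main argument instead identifies each factor $M_i^{\otimes\{i\}} \otimes (I_{q^{n-1}})^{\otimes [n]\setminus\{i\}}$ with the ordinary Kronecker product $\bigotimes_{j=1}^n B_{ij}$ (with $B_{ii}=M_i$ and $B_{ij}=I_q$ otherwise), and then invokes the interchange identity $\prod_i \bigotimes_j X_{ij} = \bigotimes_j \prod_i X_{ij}$, after which each column-product collapses to $M_j$. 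Both routes ultimately rest on the two-factor mixed-product property plus induction, so the difference is one of organization rather than substance: your version isolates the combinatorial content into a clean, reusable interchange lemma and makes the ``why'' of the factorization transparent, at the cost of having to verify the unwinding of the paper's $A^{\otimes S_A}\otimes B^{\otimes S_B}$ indexing convention against the standard Kronecker ordering (which does check out here, since the entrywise definition gives exactly $\prod_{j} B_{ij}[x[j],y[j]]$). The paper's induction avoids that indexing verification but buries the structure in the inductive bookkeeping. Either write-up would be acceptable.
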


\begin{proof}
We proceed by induction on $n$. The base case $n=1$ is true since then the right-hand side of \Cref{butterflyeqn} is simply equal to $M_1$. For the inductive step, we see that
\begin{align*}\bigotimes_{i=1}^n M_i &= \bigotimes_{i=1}^{n-1} M_i \otimes M_n \\ 
&= \left(\prod_{i=1}^{n-1} M_i^{\otimes \{ i \}} \otimes (I_{q^{n-2}})^{\otimes [n-1] \setminus \{ i \} } \right) \otimes M_n \\
&= \left( \left(\prod_{i=1}^{n-1} M_i^{\otimes \{ i \}} \otimes (I_{q^{n-2}})^{\otimes [n-1] \setminus \{ i \} } \right) \times I_{q^{n-1}} \right) \otimes (I_q \times M_n) \\
&= \left( \left(\prod_{i=1}^{n-1} M_i^{\otimes \{ i \}} \otimes (I_{q^{n-2}})^{\otimes [n-1] \setminus \{ i \} } \right) \otimes I_{2} \right) \times (I_{q^{n-1}} \otimes M_n) &&\text{ (by \Cref{mixedproductproperty})} \\
&=  \left(\prod_{i=1}^{n-1} \left( \left( M_i^{\otimes \{ i \}} \otimes (I_{q^{n-2}})^{\otimes [n-1] \setminus \{ i \} }   \right) \otimes I_{2} \right) \right) \times (I_{q^{n-1}} \otimes M_n) \\
&=  \left(\prod_{i=1}^{n-1} M_i^{\otimes \{ i \}} \otimes (I_{q^{n-1}})^{\otimes [n] \setminus \{ i \} }  \right) \times (I_{q^{n-1}} \otimes M_n) \\
&=  \prod_{i=1}^{n} M_i^{\otimes \{ i \}} \otimes (I_{q^{n-1}})^{\otimes [n] \setminus \{ i \} } ,
\end{align*}
as desired.
\end{proof}

\begin{definition}
For any field $\F$, positive integer $q$, and matrix $M \in \F^{q \times q}$, we say $M$ is an \emph{outer-1} matrix if, for all $i,j \in \{0,1,\ldots,q-1\}$ with $i=0$ or $j=0$ (or both) we have $M[i,j]=1$. We similarly say $M$ is an \emph{outer-0} matrix if we have $M[i,j]=0$ for all such $i,j$, and an \emph{outer-nonzero} matrix if we have $M[i,j]\neq 0$ for all such $i,j$.
\end{definition}

\begin{lemma} \label{reducedimsingle}
For any field $\F$, positive integer $q$, and outer-nonzero matrix $M \in \F^{q \times q}$, there are
\begin{itemize}
    \item an outer-1 matrix $M' \in \F^{q \times q}$, and
    \item two invertible diagonal matrices $D, D' \in \F^{q \times q}$,
\end{itemize}
such that $M = D \times M' \times D'$.
\end{lemma}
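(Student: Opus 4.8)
The plan is to prove the lemma by explicitly ``normalizing'' the first row and first column of $M$: we divide out a nonzero scalar from each row and each column so that the top row and left column of the resulting matrix are all $1$'s. The outer-nonzero hypothesis is exactly what makes this possible, since it guarantees $M[i,0] \neq 0$ and $M[0,j] \neq 0$ for all $i,j$, and in particular $M[0,0] \neq 0$.

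Concretely, I would define $D \in \F^{q \times q}$ to be the diagonal matrix with $D[i,i] := M[i,0]$, and $D' \in \F^{q \times q}$ to be the diagonal matrix with $D'[j,j] := M[0,j]/M[0,0]$. Both have all diagonal entries nonzero, hence are invertible. Then I would set $M' := D^{-1} \times M \times (D')^{-1}$, so that $M = D \times M' \times D'$ holds by construction, and it remains only to check that $M'$ is outer-1. Entrywise, $M'[i,j] = M[i,j] \cdot M[0,0] / (M[i,0] \cdot M[0,j])$; substituting $j = 0$ gives $M'[i,0] = 1$ for all $i$, and substituting $i = 0$ gives $M'[0,j] = 1$ for all $j$, which is precisely the outer-1 condition.

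A cleaner way to see that the construction is well defined is to think of it sequentially: $D^{-1} \times M$ has left column equal to the all-ones vector (its $(i,0)$ entry is $M[i,0]/M[i,0] = 1$), and then multiplying on the right by $(D')^{-1}$ rescales column $j$ by $M[0,0]/M[0,j]$, turning the top row into all $1$'s while leaving column $0$ untouched because it is rescaled by $M[0,0]/M[0,0] = 1$. The only point that requires any care at all is this compatibility of the two normalizations at the corner entry $(0,0)$, and it is automatic. Everything else is a one-line verification, so I do not anticipate any real obstacle.
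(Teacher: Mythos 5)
Your proposal is correct and matches the paper's proof essentially exactly: the paper defines the diagonal matrices $G = D^{-1}$ and $G' = (D')^{-1}$ with $G[i,i] = 1/M[i,0]$ and $G'[j,j] = M[0,0]/M[0,j]$, sets $M' = G \times M \times G'$, and verifies the outer-1 condition by the same entrywise computation. The only difference is the cosmetic one of whether $D, D'$ or their inverses are written down first.
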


\begin{proof}
We first define the diagonal matrices $G,G' \in \F^{q \times q}$ by: For $i \in \{0,1,\ldots,q-1\}$, set $G[i,i] = 1/M[i,0]$ and $G'[i,i] = M[0,0]/M[0,i]$. These are well-defined and invertible since $M$ is an outer-nonzero matrix. Let $M' = G \times M \times G'$; we can see that for any $i \in \{0,1,\ldots,q-1\}$ we have $M'[i,0]=M[i,0] \cdot G[i,i] \cdot G'[0,0] = M[i,0] \cdot (1/M[i,0]) \cdot (M[0,0]/M[0,0]) = 1$, and for any $j \in \{0,1,\ldots,q-1\}$ we have $M'[0,j]=M[0,j] \cdot G[0,0] \cdot G'[j,j] = M[0,j] \cdot (1/M[0,0]) \cdot (M[0,0]/M[0,j]) = 1$, so $M'$ is an outer-1 matrix. Finally we can pick $D = G^{-1}$ and $D' = G'^{-1}$ so that $M = D \times M' \times D'$.
\end{proof}

\begin{lemma} \label{reducedim}
For any field $\F$, positive integers $n,q$, and outer-nonzero matrices $M_1, \ldots, M_n \in \F^{q \times q}$, there are
\begin{itemize}
    \item outer-1 matrices $M'_1, \ldots, M'_n \in \F^{q \times q}$, and
    \item two invertible diagonal matrices $D, D' \in \F^{q^n \times q^n}$,
\end{itemize}
such that $\bigotimes_{\ell=1}^n M_\ell = D \times \left( \bigotimes_{\ell=1}^n M'_\ell \right) \times D'$.
\end{lemma}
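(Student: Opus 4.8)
The plan is to reduce to the single-matrix case, \Cref{reducedimsingle}, and then recombine using the mixed-product property. First, for each $\ell \in [n]$, apply \Cref{reducedimsingle} to the outer-nonzero matrix $M_\ell$ to obtain an outer-1 matrix $M'_\ell \in \F^{q \times q}$ together with invertible diagonal matrices $D_\ell, D'_\ell \in \F^{q \times q}$ satisfying $M_\ell = D_\ell \times M'_\ell \times D'_\ell$. The matrices $M'_1,\ldots,M'_n$ produced this way are precisely the outer-1 matrices promised by the lemma, so it only remains to assemble the two global diagonal matrices and verify the product identity.

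Next, I would set $D := \bigotimes_{\ell=1}^n D_\ell$ and $D' := \bigotimes_{\ell=1}^n D'_\ell$, both in $\F^{q^n \times q^n}$. A Kronecker product of diagonal matrices is again diagonal: by the entrywise definition, $A \otimes B[(i_A,i_B),(j_A,j_B)] = A[i_A,j_A]\cdot B[i_B,j_B]$ vanishes whenever $i_A \neq j_A$ or $i_B \neq j_B$, and iterating gives the claim for $D$ and $D'$. Likewise, a Kronecker product of invertible matrices is invertible, since $\rank(A \otimes B) = \rank(A)\cdot\rank(B)$ by the proposition above, so a Kronecker product of full-rank square matrices has full rank. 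Hence $D$ and $D'$ are invertible diagonal matrices as required.

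Finally, I would establish $\bigotimes_{\ell=1}^n M_\ell = D \times \big(\bigotimes_{\ell=1}^n M'_\ell\big) \times D'$ by substituting $M_\ell = D_\ell \times M'_\ell \times D'_\ell$ and separating the three groups of Kronecker factors. For this it suffices to prove the identity $\bigotimes_{\ell=1}^n (A_\ell \times B_\ell \times C_\ell) = \big(\bigotimes_{\ell=1}^n A_\ell\big) \times \big(\bigotimes_{\ell=1}^n B_\ell\big) \times \big(\bigotimes_{\ell=1}^n C_\ell\big)$ for conformable matrices, which follows by a short induction on $n$ using \Cref{mixedproductproperty}: the base case $n=1$ is immediate, and the inductive step peels off the $n$th factor by applying the mixed-product property twice. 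Taking $A_\ell = D_\ell$, $B_\ell = M'_\ell$, $C_\ell = D'_\ell$ yields the lemma. I do not anticipate any genuine obstacle here; the only points needing a little care are the bookkeeping in the iterated mixed-product identity and the observation that Kronecker products preserve both diagonality and invertibility of their factors.
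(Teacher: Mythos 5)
Your proposal is correct and follows essentially the same route as the paper: apply \Cref{reducedimsingle} to each factor and then use the mixed-product property (\Cref{mixedproductproperty}) to collect the diagonal factors into $D = \bigotimes_{\ell} D_\ell$ and $D' = \bigotimes_{\ell} D'_\ell$. The extra checks you include (that Kronecker products preserve diagonality and invertibility) are correct and are left implicit in the paper.
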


\begin{proof}
By \Cref{reducedimsingle}, for each $\ell \in [n]$, there are invertible diagonal matrices $D_\ell,D'_\ell \in \F^{q \times q}$ and an outer-1 matrix $M'_\ell \in \F^{q \times q}$ such that $M_\ell = D_\ell \times M'_\ell \times D'_\ell$. Then, by \Cref{mixedproductproperty},
$$\bigotimes_{\ell=1}^n M_\ell = \bigotimes_{\ell=1}^n (D_\ell \times M'_\ell \times D'_\ell) = \left( \bigotimes_{\ell=1}^n D_\ell \right) \times \left( \bigotimes_{\ell=1}^n M'_\ell \right) \times \left( \bigotimes_{\ell=1}^n D'_\ell \right).$$ We can thus pick $D = \bigotimes_{\ell=1}^n D_\ell$ and $D' = \bigotimes_{\ell=1}^n D'_\ell$ as desired.
\end{proof}

\begin{lemma} \label{rigiditydiagonal}
For any field $\F$, positive integers $q,r$, and matrices $A,B,D,D' \in \F^{q \times q}$ such that $D$ and $D'$ are invertible diagonal matrices with $A = D \times B \times D'$, we have that $\rig_A(r) = \rig_{B}(r)$.
\end{lemma}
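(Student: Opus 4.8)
The plan is to show the two inequalities $\rig_A(r) \le \rig_B(r)$ and $\rig_B(r) \le \rig_A(r)$ separately, using the fact that left- and right-multiplication by invertible diagonal matrices is a rank-preserving bijection on $\F^{q \times q}$ that also preserves the zero/nonzero pattern of a matrix.

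First I would record the key observation: if $D, D' \in \F^{q \times q}$ are invertible diagonal matrices and $E \in \F^{q \times q}$ is arbitrary, then $\nnz(D \times E \times D') = \nnz(E)$. This is because $(D \times E \times D')[i,j] = D[i,i] \cdot E[i,j] \cdot D'[j,j]$, and since $D[i,i] \ne 0$ and $D'[j,j] \ne 0$ for all $i,j$, this entry is zero if and only if $E[i,j]$ is zero. Moreover $\rank(D \times E \times D') = \rank(E)$ since $D$ and $D'$ are invertible.

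For $\rig_A(r) \le \rig_B(r)$: take a matrix $E \in \F^{q \times q}$ achieving the rigidity of $B$, i.e.\ with $\rank(B + E) \le r$ and $\nnz(E) = \rig_B(r)$. Set $E' := D \times E \times D'$. Then $A + E' = D \times B \times D' + D \times E \times D' = D \times (B+E) \times D'$, which has rank at most $r$, and by the observation above $\nnz(E') = \nnz(E) = \rig_B(r)$. Hence $\rig_A(r) \le \rig_B(r)$. For the reverse inequality, note that $B = D^{-1} \times A \times (D')^{-1}$ with $D^{-1}, (D')^{-1}$ again invertible diagonal matrices, so the same argument with the roles of $A$ and $B$ swapped gives $\rig_B(r) \le \rig_A(r)$. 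Combining the two inequalities yields $\rig_A(r) = \rig_B(r)$.

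There is no real obstacle here; the only point requiring (minor) care is the verification that conjugating by invertible diagonal matrices leaves $\nnz(\cdot)$ unchanged, which is exactly where invertibility (i.e.\ all diagonal entries nonzero) is used, and which is the diagonal-matrix sparsity property already listed among the basic facts about $\nnz$ in the preliminaries.
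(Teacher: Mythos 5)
Your proposal is correct and follows essentially the same route as the paper: decompose $B$ into a rank-$\le r$ part plus a sparse change, conjugate by $D$ and $D'$ (which preserves both rank and the nonzero pattern since the diagonals are invertible), and invoke symmetry via $B = D^{-1} \times A \times (D')^{-1}$ for the reverse inequality. No gaps.
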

\begin{proof}
By definition of $\rig_{B}(r)$, there are matrices $L,S \in \F^{q \times q}$ such that $\rank(L) \leq r$, $nnz(S) \leq \rig_{B}(r)$, and $B=L+S$. It follows that $A = D \times L \times D' + D \times S \times D'$. Since multiplying on the left or right by a full-rank diagonal matrix does not change the rank or sparsity of a matrix, this expression shows that $\rig_A(r) \leq \rig_{B}(r)$. A symmetric argument also shows that $\rig_A(r) \geq \rig_{B}(r)$ as desired.
\end{proof}

The next Lemma, which shows that the product of non-rigid matrices is also non-rigid, was also used by \cite[Lemma~2.18]{dvir2019fourier}.

\begin{lemma} \label{lem:productnotrigid}
For any field $\F$, positive integers $q,r$, and matrices $A,B,C,D \in \F^{q \times q}$ with $D$ a diagonal matrix and $C = A \times D \times B$, we have that $$\rig_C^{rc}(2r) \leq \rig_A^{rc}(r) \cdot \rig_B^{rc}(r).$$
\end{lemma}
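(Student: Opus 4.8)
The plan is to substitute near-optimal row/column-rigidity decompositions of $A$ and $B$ into the product $C = A \times D \times B$, expand into four terms, and then regroup so that three of the terms merge into a single matrix of rank at most $2r$ while the remaining term stays sparse.

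First I would fix witnesses: write $A = \tilde A + F_A$ with $\rank(\tilde A) \le r$ and $\max\{\nnzr(F_A), \nnzc(F_A)\} \le \rig_A^{rc}(r)$, and $B = \tilde B + F_B$ with $\rank(\tilde B)\le r$ and $\max\{\nnzr(F_B),\nnzc(F_B)\}\le \rig_B^{rc}(r)$; these exist by the definition of row/column rigidity. Expanding the product gives
\[ C = (\tilde A + F_A)\, D\, (\tilde B + F_B) = \big(\tilde A D B + F_A D \tilde B\big) + F_A D F_B. \]
Call the parenthesized matrix $G$ and the last matrix $S$, so $C = G + S$. Next I would bound $\rank(G) \le \rank(\tilde A D B) + \rank(F_A D \tilde B) \le \rank(\tilde A) + \rank(\tilde B) \le 2r$, using subadditivity of rank and the fact that multiplying by a matrix cannot increase rank. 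Then I would bound the sparsity of $S = F_A D F_B$: since $D$ is diagonal, multiplying by it does not increase row- or column-sparsity, and combining this with the submultiplicativity of $\nnzr$ and $\nnzc$ under matrix products (both recorded in Section 2) yields $\nnzr(S) \le \nnzr(F_A)\nnzr(F_B)$ and $\nnzc(S) \le \nnzc(F_A)\nnzc(F_B)$, hence $\max\{\nnzr(S),\nnzc(S)\} \le \rig_A^{rc}(r)\cdot\rig_B^{rc}(r)$. Finally, since $C - S = G$ has rank at most $2r$, the definition of $\rig_C^{rc}$ immediately gives $\rig_C^{rc}(2r) \le \max\{\nnzr(S),\nnzc(S)\} \le \rig_A^{rc}(r)\cdot\rig_B^{rc}(r)$.

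The one place that requires care — and the reason the argument can give $2r$ rather than $3r$ in the argument of $\rig_C^{rc}$ — is the grouping of the expanded terms. A naive split keeps the three ``low-rank-looking'' terms $\tilde A D \tilde B$, $\tilde A D F_B$, and $F_A D \tilde B$ separate, which only bounds the rank by $3r$; the trick is to observe $\tilde A D \tilde B + \tilde A D F_B = \tilde A D B$, re-absorbing the cross term into a single factor of rank $\le r$, so that $G$ is a sum of just two matrices of rank $\le r$ each. Everything else is routine sparsity bookkeeping using the product rules already in Section 2, so I do not expect further obstacles.
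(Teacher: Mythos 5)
Your proof is correct and follows essentially the same route as the paper: the paper also writes $A = L_A + S_A$, $B = L_B + S_B$, and groups the expansion as $L_A \times D \times (L_B + S_B) + S_A \times D \times L_B + S_A \times D \times S_B$, which is exactly your re-absorption of the cross term into $\tilde A D B$. The rank and sparsity bookkeeping matches the paper's as well.
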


\begin{proof}
Let $s_A := \rig_A^{rc}(r)$ and $s_B := \rig_B^{rc}(r)$. Write $A = L_A + S_A$ and $B = L_B + S_B$ where $L_A,L_B, S_A, S_B \in \F^{q \times q}$ are matrices with $\rank(L_A) \leq r$, $\rank(L_B) \leq r$, $\nnzr(S_A) \leq s_A$, $\nnzc(S_A) \leq s_A$, $\nnzr(S_B) \leq s_B$, and $\nnzc(S_B) \leq s_B$. We have that
$$C = (L_A + S_A) \times D \times (L_B + S_B) = L_A  \times D \times (L_B + S_B) + S_A \times D \times L_B + S_A \times D \times S_B.$$
The first two matrices in the right-hand-side, $L_A  \times D \times (L_B + S_B)$ and $S_A \times D \times L_B$, both have rank at most $r$, since $L_A$ and $L_B$ have rank at most $r$. The third, $M := S_A \times D \times S_B$, has both
$$\nnzr(M) \leq \nnzr(S_A) \cdot \nnzr(S_B),$$
$$\nnzc(M) \leq \nnzc(S_A) \cdot \nnzc(S_B).$$
It follows that \begin{align*}&\max\{\nnzr(M), \nnzc(M)\} \\ &\leq \max\{ \nnzr(S_A) \cdot \nnzr(S_B), \nnzc(S_A) \cdot \nnzc(S_B) \} \\ &\leq \max\{ \nnzr(S_A) , \nnzc(S_A)  \} \cdot \max\{ \nnzr(S_B), \nnzc(S_B) \} \\ &\leq s_A \cdot s_B.\end{align*}
This expression thus shows that $\rig_C^{rc}(2r) \leq s_A \cdot s_B$ as desired.
\end{proof}

\section{Framework for Designing Small Circuits from Non-Rigidity} \label{sec:framework}

We first note that an upper bound for a fixed matrix in a family of Kronecker products leads to one for the entire family.

\begin{lemma} \label{lem:fixedtobig}
For any field $\F$, fixed positive integers $q,t,d$, and matrix $M \in \F^{q \times q}$, suppose $M^{\otimes t} = \prod_{j=1}^d B_j$ for matrices $B_j$ for all $j \in [d]$ with $\nnz(B_j) = b_j$. Then, for all positive integers $n$ and $j \in [d]$ there are matrices $A_{n,j}$ with $\nnz(A_{n,j}) < b_j^{1 + n/t}$ and $M^{\otimes n} = \prod_{j=1}^d A_{n,j}$. If $t$ divides $n$, the upper bound can be further reduced to $\nnz(A_{n,j}) \leq b_j^{n/t}$.
\end{lemma}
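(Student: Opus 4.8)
My plan is to construct the factors $A_{n,j}$ as Kronecker products of the given $B_j$ (plus a small correction factor on the end factors), using the mixed-product property (\Cref{mixedproductproperty}) to see that the product of the Kronecker-powered factors is the Kronecker power of $M^{\otimes t}$. Write $n = tm + s$ with $m = \lfloor n/t \rfloor$ and $0 \le s < t$. Repeated application of \Cref{mixedproductproperty} gives $\prod_{j=1}^d X_j^{\otimes m} = \big(\prod_{j=1}^d X_j\big)^{\otimes m}$ for any matrices $X_1,\dots,X_d$ whose product is defined, so when $s = 0$ I would simply take $A_{n,j} := B_j^{\otimes m}$: then $\prod_{j=1}^d A_{n,j} = (M^{\otimes t})^{\otimes m} = M^{\otimes n}$ and $\nnz(A_{n,j}) = b_j^m = b_j^{n/t}$, which is exactly the sharper bound claimed when $t \mid n$.

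For the general case $s \ge 1$ I would peel off the leftover Kronecker factor. By \Cref{mixedproductproperty}, $M^{\otimes n} = (M^{\otimes t})^{\otimes m} \otimes M^{\otimes s} = \big(\prod_{j=1}^d B_j^{\otimes m}\big) \otimes M^{\otimes s}$, so it suffices to exhibit a $d$-fold factorization $M^{\otimes s} = \prod_{j=1}^d E_j$ with $\nnz(E_j) \le b_j$ for every $j$: then $A_{n,j} := B_j^{\otimes m} \otimes E_j$ has $\prod_{j=1}^d A_{n,j} = M^{\otimes n}$ (again by \Cref{mixedproductproperty}, with the chain of intermediate dimensions telescoping correctly), and $\nnz(A_{n,j}) = b_j^m \cdot \nnz(E_j) \le b_j^{m+1} < b_j^{m+1+s/t} = b_j^{1+n/t}$, the last inequality being strict because $s \ge 1$ and $b_j \ge 2$ (the remaining possibility, $b_j \le 1$ for some $j$, forces $M$ to have rank $\le 1$ and can be dispatched directly). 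To produce the $E_j$, I would extract the factorization of $M^{\otimes s}$ from the given one for $M^{\otimes t}$: fix any entry $M[i_0,j_0] = c \ne 0$ and set $u := e_{j_0}^{\otimes(t-s)}$ and $v := e_{i_0}^{\otimes(t-s)}$, where $e_{i_0}, e_{j_0} \in \F^q$ are standard basis vectors. Since $M^{\otimes t} = M^{\otimes s} \otimes M^{\otimes(t-s)}$, the mixed-product property gives
$$(I_{q^s} \otimes v^{\top}) \times M^{\otimes t} \times (I_{q^s} \otimes u) \;=\; M^{\otimes s} \otimes \big(v^{\top} M^{\otimes(t-s)} u\big) \;=\; c^{\,t-s} \cdot M^{\otimes s}.$$
Substituting $M^{\otimes t} = \prod_{j=1}^d B_j$ and folding $c^{-(t-s)}(I_{q^s} \otimes v^{\top})$ into the first factor and $I_{q^s} \otimes u$ into the last factor yields matrices $E_1, \dots, E_d$ with $\prod_{j=1}^d E_j = M^{\otimes s}$; and since left-multiplication by $I_{q^s} \otimes v^{\top}$ merely selects a subset of the rows of $B_1$ (and $I_{q^s} \otimes u$ selects a subset of the columns of $B_d$), we get $\nnz(E_j) \le b_j$ for all $j$, completing the construction.

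The bookkeeping — verifying that every matrix product above is dimension-compatible, and the repeated invocations of the mixed-product property — is routine. The one step that needs an idea is the general-$n$ case: we cannot instead take $m = \lceil n/t \rceil$ and ``divide out'' the $t-s$ unwanted Kronecker factors, so the leftover $M^{\otimes s}$ must genuinely be re-factored on its own while keeping $\nnz$ at most the original $b_j$, and projecting $M^{\otimes t}$ down onto $M^{\otimes s}$ through a single nonzero entry of $M$ appears to be the cleanest way to achieve this. This is precisely what the slack in the exponent (the ``$+1$'' in $b_j^{1+n/t}$, versus the tight $b_j^{n/t}$ available when $t \mid n$) is there to pay for.
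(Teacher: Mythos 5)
Your proof is correct and follows essentially the same strategy as the paper's: take the $m$-th Kronecker power of the given factorization of $M^{\otimes t}$ via the mixed-product property, and handle the case $t \nmid n$ by extracting a smaller Kronecker power as a (scaled) submatrix through a nonzero entry of $M$. The only immaterial difference is the direction of the rounding — you round $n/t$ down and re-factor the leftover $M^{\otimes s}$ out of the given factorization of $M^{\otimes t}$, whereas the paper rounds up to $n' = t\lceil n/t\rceil$ and restricts the resulting factorization of $M^{\otimes n'}$ down to one of $M^{\otimes n}$ — and both routes give the same bound $b_j^{\lceil n/t\rceil} < b_j^{1+n/t}$ (your explicit treatment of the projection and of the degenerate case $b_j \le 1$ is in fact slightly more careful than the paper's).
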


\begin{proof}
Assuming $t$ divides $n$, we will show there are matrices $A_{n,j}$ with $\nnz(A_{n,j}) = b_j^{n/t}$ and $M^{\otimes n} = \prod_{j=1}^d A_{n,j}$. If $t$ does not divide $n$, we can instead apply this construction for the next multiple $n'>n$ of $t$, and then pick the appropriate submatrix of $M^{\otimes n'}$, to get $M^{\otimes n}$; we will thus have $\nnz(A_{n,j}) = b_j^{n'/t} < b_j^{1 + n/t}$.

Now, assuming $t$ divides $n$, then we can simply write $M^{\otimes n} = (\prod_{j=1}^d B_j)^{\otimes n/t} = \prod_{j=1}^d B_j^{\otimes n/t}$, and pick $A_{n,j} := B_j^{\otimes n/t}$, which has $\nnz(A_{n,j}) = \nnz(B_j^{\otimes n/t}) = \nnz(B_j)^{n/t} = b_j^{n/t}$, as desired.
\end{proof}

Next, we observe that rigidity upper bounds can be used to give depth-2 synchronous circuit upper bounds.

\begin{lemma} \label{rigtosparsub}
For any field $\F$, fixed positive integers $r,q$, and matrix $M \in \F^{q \times q}$, there are matrices $B \in \F^{q \times (q+r)}$ and $C \in \F^{(q+r) \times q}$ such that $M = B \times C$, $\nnz(B) = q \cdot r + \rig_M(r)$, and $\nnz(C) = q \cdot (r+1)$.
\end{lemma}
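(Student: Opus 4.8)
The plan is to start from a witness for the rigidity of $M$ and turn it into the desired two-layer factorization using \Cref{sumtoconcat}. By the definition of $\rig_M(r)$, there exist matrices $L, S \in \F^{q \times q}$ with $\rank(L) \leq r$, $\nnz(S) \leq \rig_M(r)$, and $M = L + S$. Since $L$ has rank at most $r$, I would fix a rank factorization $L = U \times V$ with $U \in \F^{q \times r}$ and $V \in \F^{r \times q}$ (padding with zero columns/rows if $\rank(L) < r$, so that the inner dimension is exactly $r$). Then $M = U \times V + S = U \times V + S \times I_q$.

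Applying \Cref{sumtoconcat} with $X_1 = U$, $X_2 = V$, $X_3 = S$, $X_4 = I_q$ gives
$$ M = \left( U \mid S \right) \times \left( \frac{V}{I_q} \right). $$
I would set $B := (U \mid S) \in \F^{q \times (q+r)}$ and $C := \left( \frac{V}{I_q} \right) \in \F^{(q+r) \times q}$, so that $M = B \times C$ as required. For the sparsity bounds, $\nnz(B) = \nnz(U) + \nnz(S) \leq q \cdot r + \rig_M(r)$, since $U$ is a $q \times r$ matrix; and $\nnz(C) = \nnz(V) + \nnz(I_q) \leq q \cdot r + q = q \cdot (r+1)$. (If one wants the stated equalities rather than inequalities, one can pad $B$ and $C$ with extra nonzero entries, since the lemma is only used as an upper bound on circuit size; alternatively, read the ``$=$'' as ``$\leq$''.)

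There is essentially no serious obstacle here: the only point requiring minor care is ensuring the inner dimension is exactly $r$ (so that $B$ and $C$ have the claimed shapes $q \times (q+r)$ and $(q+r) \times q$) even when $\rank(L)$ is strictly less than $r$, which is handled by zero-padding the factorization $L = U \times V$. The rest is a direct application of \Cref{sumtoconcat} together with the observation that concatenation of matrices adds nonzero counts.
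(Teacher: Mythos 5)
Your proposal is correct and matches the paper's own proof essentially verbatim: both decompose $M = L + S$ via the rigidity witness, factor $L$ through rank $r$, and apply \Cref{sumtoconcat} to concatenate the two products (the paper merely orders the blocks as $(S \mid B')$ rather than $(U \mid S)$, which is immaterial). Your remarks about zero-padding the rank factorization and reading the stated equalities as upper bounds are also consistent with how the paper uses the lemma.
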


\begin{proof}
By definition of rigidity, we can write $M = L+S$ for matrices $L,S \in \F^{q \times q}$ with $\rank(L) = r$ and $\nnz(S) = \rig_M(r)$. In particular, there are matrices $B' \in \F^{q \times r}$ and $C' \in \F^{r \times q}$ such that $L = B' \times C'$. By \Cref{sumtoconcat}, our desired matrix decomposition is thus
$$M = \left( S | B' \right) \times \left( \frac{I_{q}}{C'} \right).$$
We have $\nnz(B) = \nnz(S) + \nnz(B') \leq \rig_M(r) + q \cdot r$, and $\nnz(C) = \nnz(I_q) + \nnz(C') \leq q + q\cdot r$.
\end{proof}

\begin{remark}\label{rigtosparsubrem}
Applying \Cref{rigtosparsub} to $M^T$ instead of $M$, we can alternatively obtain $B \in \F^{q \times (q+r)}$ and $C \in \F^{(q+r) \times q}$ such that $M = B \times C$, $\nnz(B) = q \cdot (r+1)$, and $\nnz(C) = q \cdot r + \rig_M(r)$. In other words, we can choose either $B$ or $C$ to have higher sparsity.
\end{remark}

Finally, we show how to `symmetrize' the construction of \Cref{rigtosparsub} to extend it to small circuits of any depth $d \geq 2$.

\begin{theorem}\label{thm:cktfromrigidity}
For any field $\F$, positive integers $r,q$, and matrix $M \in \F^{q \times q}$, let $$c:= \log_q((r+1) \cdot (r + \rig_M(r)/q)).$$
Then, for every positive integers $n,d$, setting $N = q^n$, the matrix $M^{\otimes n} \in \F^{N \times N}$ can be written as $M^{\otimes n} = \prod_{j=1}^d A_{n,j}$ for matrices $A_{n,j}$ with $\nnz(A_{n,j}) < q^{1-c} \cdot N^{1 + c/d}.$ If $d$ divides $n$, the upper bound can be further reduced to $\nnz(A_{n,j}) \leq N^{1 + c/d}.$
\end{theorem}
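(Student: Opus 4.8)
The plan is to chain three ingredients: the depth‑$2$ factorization coming from rigidity (\Cref{rigtosparsub}), a ``symmetrization'' that upgrades it to a \emph{balanced} depth‑$d$ factorization of a fixed Kronecker power of $M$, and \Cref{lem:fixedtobig}, which lifts any fixed‑size factorization to all $n$.

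First, apply \Cref{rigtosparsub} to $M$ and, via \Cref{rigtosparsubrem}, to $M^{T}$, to get two factorizations $M = B\times C$ and $M = \tilde B\times\tilde C$ with $\nnz(B)=\nnz(\tilde C)=qr+\rig_M(r)$ and $\nnz(C)=\nnz(\tilde B)=q(r+1)$; by definition of $c$, $\nnz(B)\cdot\nnz(C)=q(r+1)(qr+\rig_M(r))=q^{2+c}$. Now use the mixed‑product property (\Cref{mixedproductproperty}) to pair the heavy factor of one copy with the light factor of another: $M^{\otimes2}=(B\otimes\tilde B)\times(C\otimes\tilde C)=:G_1\times G_2$, with $\nnz(G_1)=\nnz(G_2)=\nnz(B)\nnz(C)=q^{2+c}=(q^{2})^{1+c/2}$.

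Second — the crux — build a depth‑$d$ decomposition of $M^{\otimes d}$ with each factor of $\nnz$ at most $q^{d+c}=(q^{d})^{1+c/d}$. For $d$ even, write $M^{\otimes d}=(M^{\otimes2})^{\otimes d/2}$, view it as a Kronecker product of $d/2$ ``doubled coordinates'' each carrying a copy of $M^{\otimes2}=G_1G_2$, and process the doubled coordinates \emph{one at a time}: layer $2j-1$ applies $G_1$ to doubled coordinate $j$ and the identity elsewhere, layer $2j$ applies $G_2$ to it. Each layer has $\nnz=\nnz(G_i)\cdot(q^{2})^{d/2-1}=q^{2+c}\cdot q^{d-2}=q^{d+c}$, there are exactly $d$ layers, and by \Cref{mixedproductproperty} layers $2j-1,2j$ multiply to $I\otimes\cdots\otimes M^{\otimes2}\otimes\cdots\otimes I$, so the full product is $(M^{\otimes2})^{\otimes d/2}=M^{\otimes d}$. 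The point of the symmetrization is precisely that, because $G_1,G_2$ are balanced, a ``two steps per block, blocks done sequentially'' schedule fits in $d$ layers — whereas a naive staggered (``brick‑wall'') use of the imbalanced factorization $M=BC$ across $d$ single coordinates would need $d+1$ layers (a load layer and an unload layer) and, if one tried to stagger the balanced $G_1,G_2$ instead, the middle layers would blow up to $q^{d+2c}$. For $d$ odd the leftover single copy of $M$ has to be threaded into the $d$‑layer schedule using the freedom of which of $B,C,\tilde B,\tilde C$ goes in which layer; this is routine bookkeeping but the fussiest part of the proof.

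Third, lift to all $n$ using \Cref{lem:fixedtobig} with $t=d$ applied to the decomposition $M^{\otimes d}=\prod_{j=1}^{d}A_{j}$ just built: when $d\mid n$ this gives $M^{\otimes n}=\prod_{j=1}^{d}A_{n,j}$ with $\nnz(A_{n,j})\le\nnz(A_j)^{n/d}\le(q^{d+c})^{n/d}=N^{1+c/d}$. For general $n$, rather than the crude padding inside \Cref{lem:fixedtobig} one writes $n=dk+\rho$ with $0\le\rho<d$, splits $M^{\otimes n}=(M^{\otimes d})^{\otimes k}\otimes M^{\otimes\rho}$, and combines the depth‑$d$ decomposition of $(M^{\otimes d})^{\otimes k}$ (factors of $\nnz\le(q^{d+c})^{k}$) with a depth‑$d$ decomposition of the small matrix $M^{\otimes\rho}$; since $\rho<d$ and $q^{1-c}>1$, such a decomposition can be arranged (by the same block‑sequencing idea, with a few padding‑identity layers) so that its factors have $\nnz\le q^{1-c}(q^{\rho})^{1+c/d}$. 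Multiplying the two decompositions factor‑by‑factor via \Cref{mixedproductproperty} yields $\nnz(A_{n,j})\le(q^{d+c})^{k}\cdot q^{1-c}(q^{\rho})^{1+c/d}=q^{1-c}N^{1+c/d}$.

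The step I expect to be the obstacle is the second one: producing the decomposition into \emph{exactly} $d$ factors — which is what passing from the imbalanced $M=BC$ to the balanced pair $G_1,G_2$ buys — and, relatedly, handling the odd‑$d$ case and the small remainder $M^{\otimes\rho}$ so that the total overhead is no worse than the claimed $q^{1-c}$ factor.
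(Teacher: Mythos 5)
Your overall pipeline is the same as the paper's: the two transposed depth-$2$ factorizations $M=B\times C$ and $M=\tilde C^{T}\ldots$ (i.e.\ \Cref{rigtosparsub} plus \Cref{rigtosparsubrem}), a depth-$d$ decomposition of the fixed power $M^{\otimes d}$ with every factor of sparsity $q^{d+c}$, and the Kronecker lift of \Cref{lem:fixedtobig}. The place where you diverge is the scheduling of that depth-$d$ decomposition, and that is exactly where your gap sits. Your "pair the coordinates into balanced blocks $G_1=B\otimes\tilde B$, $G_2=C\otimes\tilde C$ and process blocks two layers at a time" schedule is correct for even $d$, but the odd case is not routine bookkeeping within that schedule: with $(d-1)/2$ doubled coordinates, every one of the $d-1$ layers already carries exactly one $G_i$ together with identities, hence already has sparsity exactly $q^{2+c}\cdot q^{d-3}\cdot q=q^{d+c}$, and the one remaining layer would have to carry all of $M$ on the leftover coordinate, costing $q^{d-1}\cdot\nnz(M)$, which can be $q^{d+1}$. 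There is no slack to redistribute: tensoring even the lighter factor $C'$ (with $\nnz(C')=q(r+1)>q$) into an occupied layer already overshoots $q^{d+c}$. The paper's schedule is genuinely different and is the missing idea: it staggers the \emph{unbalanced} factorization across consecutive single coordinates (coordinate $j$ contributes $B$ to layer $j$ and $C$ to layer $j+1$) and closes the cycle by giving the last coordinate the transposed factorization $M=C'\times B'$, so that it contributes its light factor $C'$ to layer $1$ and its heavy factor $B'$ to layer $d$. Every layer then holds exactly one heavy and one light factor from two \emph{different} coordinates, giving $\nnz(B)\cdot\nnz(C)\cdot q^{d-2}=q^{d+c}$ for every $d\ge 2$ with no parity casework. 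This is precisely the "$d+1$ layers: load and unload" problem you identified, solved by wrap-around rather than by pairing.

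Your treatment of $d\nmid n$ also overclaims. Any depth-$d$ decomposition of the remainder block $M^{\otimes\rho}$ built from these tools has some layer carrying a heavy factor ($\nnz(B)=q\cdot q^{c}/(r+1)$, which is $\approx q^{1+c}$ when $r=1$) times $q^{\rho-1}$ worth of identities, i.e.\ sparsity about $q^{\rho+c}$ rather than your asserted $q^{1-c}(q^{\rho})^{1+c/d}$; combining with $(q^{d+c})^{k}$ gives $N\cdot q^{(k+1)c}$ per layer, which is at most $q^{1-c}N^{1+c/d}$ only when $c(2-\rho/d)\le 1$. (To be fair, the paper's own non-divisible computation, which tensors $A_{n',j}$ with a full copy of $M$ in $k$ layers, yields exponent $n'(1+c/d)+k+1$, and the claimed comparison with $(1-c)+n(1+c/d)$ reduces to $k\ge d$; so the precise constant $q^{1-c}$ in the non-divisible case is delicate there as well. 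The case $d\mid n$, which is what all the downstream corollaries actually use after \Cref{cor:cktfromrigidity}, is the one that must be airtight.) The concrete fix for your writeup is to replace the block-sequential schedule by the staggered wrap-around schedule, which handles all $d$ at once and also gives a uniform $d=3,5,\ldots$ construction for the remainder block.
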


\begin{proof}
Using \Cref{rigtosparsub} and \Cref{rigtosparsubrem}, there are matrices $B,B',C,C'$ such that $M=B \times C = C' \times B'$, $\nnz(B)=\nnz(B')=q \cdot r + \rig_M(r)$, and $\nnz(C)=\nnz(C')=q \cdot(r+1)$. We thus have the following $d$ ways to write $M$ as a product of $d$ matrices:

\begin{align*}
    M &= B \times C \times I_q \times I_q \times I_q \times \cdots \times I_q \times I_q \times I_q \\
    M &= I_q \times B \times C \times I_q \times I_q \times \cdots \times I_q \times I_q \times I_q \\
    M &= I_q \times I_q \times B \times C \times I_q \times \cdots \times I_q \times I_q \times I_q \\
    M &= I_q \times I_q \times I_q \times B \times C \times \cdots \times I_q \times I_q \times I_q \\
    &\vdots \\
    M &= I_q \times I_q \times I_q \times I_q \times I_q \times \cdots \times B \times C \times I_q \\
    M &= I_q \times I_q \times I_q \times I_q \times I_q \times \cdots \times I_q \times B \times C \\
    M &= C' \times I_q \times I_q \times I_q \times I_q \times \cdots \times I_q \times I_q \times B'. \\
\end{align*}

Applying \Cref{mixedproductproperty}, there are thus permutation matrices $P_j, P'_j$ for each $j \in [d]$ such that we can write $M^{\otimes d}$ as:
$$M^{\otimes d} = \left( P_1 \times (B \otimes C' \otimes I_{q^{d-2}}) \times P'_1 \right) \times \left( \prod_{j=2}^{d-1} P_j \times \left( B \otimes C \otimes I_{q^{d-2}} \right) \times P'_j  \right) \times \left( P_d \times (B' \otimes C \otimes I_{q^{d-2}}) \times P'_d \right).$$
Since $\nnz(B)=\nnz(B')$ and $\nnz(C)=\nnz(C')$, this is expressing $M^{\otimes d}$ as a product of $d$ matrices, each of which has sparsity
$$\nnz(B \otimes C \otimes I_{q^{d-2}}) = \nnz(B) \cdot \nnz(C) \cdot \nnz(I_{q^{d-2}}) = (q \cdot r + \rig_M(r)) \cdot (q \cdot(r+1)) \cdot q^{d-2}.$$
Assume first that $d$ divides $n$. Applying \Cref{lem:fixedtobig}, it follows that the matrix $M^{\otimes n}$ can be written as $M^{\otimes n} = \prod_{j=1}^d A_{n,j}$ for matrices $A_{n,j}$ with 
\begin{align*}\nnz(A_{n,j}) &\leq ((q \cdot r + \rig_M(r)) \cdot (q \cdot(r+1)) \cdot q^{d-2})^{n/d} \\
&= q^{n} \cdot (r + \rig_M(r)/q)^{n/d} \cdot (r+1)^{n/d} \\
&= q^{n \cdot \left(1 + \frac{\log_q((r+1) \cdot (r + \rig_M(r)/q))}{d} \right)}\\
&= N^{(1 + \frac{c}{d})},
\end{align*}
where $N=q^n$ so that $M^{\otimes n} \in \F^{N \times N}$, and $c:= \log_q((r+1) \cdot (r + \rig_M(r)/q))$, as desired.

Next, consider when $d$ does not divide $n$. Let $n'$ be the largest integer less than $n$ such that $d$ divides $n'$, and let $k = n-n'$ so $k<d$. By the above argument, there are matrices $A_{n',1}, \ldots, A_{n',d}$ such that $M^{\otimes n'} = \prod_{j=1}^d A_{n',j}$ and $\nnz(A_{n',j}) \leq q^{n' \cdot (1 + c/d)}.$ For each $1 \leq \ell \leq k$ we can also write $M = \prod_{j=1}^d ([j=\ell] ~?~ M : I_q)$. Combining these $k+1$ expressions together, again using \Cref{mixedproductproperty}, it follows that there are permutation matrices $P_j, P'_j$ for each $j \in [d]$ such that
$$M^{\otimes n} = \left( \prod_{j=1}^{k} P_j \times \left( A_{n',j} \otimes M \otimes I_{q^{k-1}} \right) \times P'_j  \right) \times \left( \prod_{j=k+1}^{d} P_j \times \left( A_{n',j} \otimes I_{q^{k}} \right) \times P'_j  \right).$$
We can calculate that $\nnz(A_{n',j} \otimes M \otimes I_{q^{k-1}}) \leq q^{n' \cdot (1+c/d) + k + 1} < q^{(1-c) + n\cdot(1+c/d)}$, and similarly $\nnz(A_{n',j} \otimes I_{q^{k}}) < q^{(1-c) + n\cdot(1+c/d)}$, which concludes the proof like before.
\end{proof}

In the proof of \Cref{thm:cktfromrigidity}, we made use of \Cref{rigtosparsubrem} that our fixed upper bound from non-rigidity can be made symmetric. For fixed upper bounds designed in other ways, this may not be the case. Below in \Cref{sec:moregeneral}, we will nonetheless show that any nontrivial fixed upper bound can be used to prove a result similar to \Cref{thm:cktfromrigidity}. For now, in this section and the next, we will focus specifically on our upper bounds from non-rigidity.

\subsection{Slightly Smaller Circuits with Larger Depth}

In this subsection, we remark that we can remove the $q^{1-c}$ factor from the circuit size in \Cref{thm:cktfromrigidity} in exchange for a slight increase in depth (but not total size):

\begin{corollary}\label{cor:cktfromrigidity}
For any field $\F$, positive integers $r,q$, and matrix $M \in \F^{q \times q}$, let $$c:= \log_q((r+1) \cdot (r + \rig_M(r)/q)).$$
Then, for every positive integers $n,d$, with $d < o(n)$, setting $N = q^n$, the matrix $M^{\otimes n} \in \F^{N \times N}$ has a synchronous linear circuit of size $(1 + o(1)) \cdot d \cdot q^{n \cdot (1 + c/d)}.$ 
\end{corollary}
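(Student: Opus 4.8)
The plan is to reduce to the exact (lossless) case $d\mid n$ of \Cref{thm:cktfromrigidity} and to absorb the leftover Kronecker factors into a few cheap extra layers. Write $n = dm + k$ with $m=\lfloor n/d\rfloor$ and $0\le k<d$; since $d<o(n)$, we may assume $n$ is large enough that $m\ge 1$, and if $k=0$ we are done immediately by \Cref{thm:cktfromrigidity}, so assume $1\le k<d$. Using the mixed-product property (\Cref{mixedproductproperty}) I would factor $M^{\otimes n} = \bigl(M^{\otimes dm}\otimes I_{q^k}\bigr)\times\bigl(I_{q^{dm}}\otimes M^{\otimes k}\bigr)$, build a synchronous circuit for each factor, and concatenate.

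For the first factor, since $d\mid dm$, \Cref{thm:cktfromrigidity} gives $M^{\otimes dm}=\prod_{j=1}^{d}A_j$ with $\nnz(A_j)\le q^{dm(1+c/d)}$; Kroneckering each layer with $I_{q^k}$ produces a depth-$d$ circuit for $M^{\otimes dm}\otimes I_{q^k}$ whose layers have sparsity $\nnz(A_j)\cdot q^{k}\le q^{(n-k)(1+c/d)+k}=q^{\,n(1+c/d)-kc/d}\le q^{n(1+c/d)}$, so this part contributes at most $d\cdot q^{n(1+c/d)}$ to the size, with no $q^{1-c}$ loss. For the second factor I would use the standard recursive \emph{fast transform}: being a Kronecker power of a $q\times q$ matrix, $M^{\otimes k}$ has a depth-$\ell$ synchronous circuit of size at most $\ell\cdot q^{\,k+k/\ell+1}$ for every $\ell\ge 1$ (split the $k$ coordinates into $\ell$ near-equal blocks, implement each block densely, and pad with identity layers when $\ell>k$); Kroneckering on the left by $I_{q^{dm}}=I_{q^{n-k}}$ multiplies each layer's sparsity by $q^{n-k}$, so $I_{q^{n-k}}\otimes M^{\otimes k}$ has a depth-$\ell$ circuit of size at most $\ell\cdot q^{\,n+k/\ell+1}$. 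Concatenation then yields a depth-$(d+\ell)$ synchronous circuit for $M^{\otimes n}$ of total size at most $d\cdot q^{n(1+c/d)}+\ell\cdot q^{\,n+k/\ell+1}$.

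It remains to choose $\ell$ balancing two demands: the second term must be $o\bigl(d\cdot q^{n(1+c/d)}\bigr)$, and $\ell$ must be $o(d)$ (or $O(1)$ when $d$ stays bounded) so that the depth increase is slight. I would take $\ell:=\max\{1,\lceil 2kd/(nc)\rceil\}$; then one checks $k/\ell\le nc/(2d)$ in either case, so the ratio of the second term to $d\cdot q^{n(1+c/d)}$ is at most $(\ell/d)\,q^{\,1-nc/(2d)}$, where $\ell/d=O(1)$ (using $k<d=o(n)$) and $q^{\,1-nc/(2d)}\to 0$ since $d=o(n)$ forces $nc/(2d)\to\infty$ (note $c>0$ because $r\ge 1$ and $q\ge 2$). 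This gives total size $(1+o(1))\,d\,q^{n(1+c/d)}$; and since $\ell\le 2d^2/(nc)+1$, which is $o(d)$ when $d\to\infty$ and equals $1$ once $d$ is bounded, the depth is only $d+o(d)$, a slight increase. I expect the one mildly delicate point — and the reason a single extra layer does not always suffice — to be that when $d$ is as large as $\Theta(\sqrt n)$ or more, the gap factor $q^{nc/d}$ coming from the main construction is small, so $\ell$ genuinely has to grow like $d^2/n$; the hypothesis $d=o(n)$ is exactly what makes a choice of $\ell$ that is simultaneously large enough and $o(d)$ possible. Everything else is routine bookkeeping with $\nnz(A\otimes B)=\nnz(A)\nnz(B)$, the mixed-product property, and the $d\mid n$ clause of \Cref{thm:cktfromrigidity}.
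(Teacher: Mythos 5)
Your proposal is correct and follows essentially the same route as the paper: factor $M^{\otimes n}=\bigl(M^{\otimes n'}\otimes I_{q^k}\bigr)\times\bigl(I_{q^{n'}}\otimes M^{\otimes k}\bigr)$ with $n'$ the largest multiple of $d$ below $n$, apply the lossless $d\mid n'$ case of \Cref{thm:cktfromrigidity} to the main factor, and absorb the leftover $M^{\otimes k}$ into a few extra synchronous layers whose cost is $o\bigl(d\cdot q^{n(1+c/d)}\bigr)$ because $d=o(n)$ forces $q^{nc/d}\to\infty$. The only (harmless) difference is in the remainder factor: the paper simply applies \Cref{thm:cktfromrigidity} at depth $k$ to get size $k\cdot q^{n+c}$, while you optimize the extra depth $\ell$ to additionally keep the total depth at $d+o(d)$ — a stronger guarantee than the corollary actually claims.
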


\begin{proof}
Let $n'$ be the integer in the range $n \geq n' > n-d$ such that $d$ divides $n'$, and let $k = n-n'$. Applying \Cref{thm:cktfromrigidity} to $M^{\otimes n'}$, we see that it has a synchronous circuit of size $d \cdot q^{n' \cdot (1 + c/d)}$. Thus, $M^{\otimes n'} \otimes I_{q^k}$ has a synchronous circuit of size $d \cdot q^{n' \cdot (1 + c/d)} \cdot q^k = d \cdot q^{n \cdot (1 + c/d)} / q^{k \cdot c/d}$. Next, again by applying \Cref{thm:cktfromrigidity}, but this time for depth $k$, we see that $M^{\otimes k}$ has a synchronous circuit of size $k \cdot q^{k+c}$, and so $I_{q^{n'}} \otimes M^{\otimes k}$ has a synchronous circuit of size $q^{n'} \cdot k \cdot q^{k+c} = k \cdot q^{n+c}$. Hence, since $M^{\otimes n} = M^{\otimes n'} \otimes M^{\otimes k} = (M^{\otimes n'} \otimes I_{q^k}) \times (I_{q^{n'}} \otimes M^{\otimes k})$, it follows that $M^{\otimes n}$ has a synchronous circuit of size 
$$d \cdot q^{n \cdot (1 + c/d)} / q^{k \cdot c/d} + k \cdot q^{n+c} = q^{n \cdot (1 + c/d)} \cdot \left( \frac{d}{q^{kc/d}} + \frac{k}{q^{c(n/d - 1)}} \right) \leq (1 + o(1)) \cdot d \cdot q^{n \cdot (1 + c/d)}.$$
\end{proof}

\begin{corollary}\label{cor:cktfromrigidity2}
For any field $\F$, positive integers $r,q$, and matrix $M \in \F^{q \times q}$, let $$c:= \log_q((r+1) \cdot (r + \rig_M(r)/q)).$$
Then, for every positive integer $n$, setting $N = q^n$, the matrix $M^{\otimes n} \in \F^{N \times N}$ has a synchronous linear circuit of size $(c \cdot e \cdot \log_e(2) + o(1)) \cdot N \cdot \log_2(N).$ 
\end{corollary}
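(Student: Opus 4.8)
The plan is to obtain \Cref{cor:cktfromrigidity2} from the family of depth-$d$ bounds already established, by tuning the depth $d$ to its optimal value. Recall from \Cref{cor:cktfromrigidity} that (for a suitable depth $d$) $M^{\otimes n}$ has a synchronous linear circuit of size $(1+o(1))\cdot d\cdot q^{n(1+c/d)}$; writing $N=q^n$ this is $(1+o(1))\cdot d\cdot N\cdot q^{nc/d}$. Treating $d$ as a continuous parameter and minimizing $h(d):=d\cdot q^{nc/d}$: since $\ln h(d)=\ln d+(nc\ln q)/d$, the minimizer is $d^\star=nc\ln q=c\ln N$, at which $q^{nc/d^\star}=q^{1/\ln q}=e$ and hence $h(d^\star)=e\cdot c\ln N=e\cdot c\cdot(\ln 2)\cdot\log_2 N$. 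Multiplying back the factor $N$ gives the target size $(ce\log_e 2)\cdot N\log_2 N$. So, modulo realizing a depth close to $d^\star$, the corollary follows by direct optimization over $d$.

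Carrying this out, I would first record the elementary facts that make the argument go through: $h$ is convex and flat at its minimum, so any integer $d=(1\pm o(1))d^\star$ already gives $h(d)=(1+o(1))h(d^\star)$; hence the only real task is to exhibit, for such a $d$, a valid depth-$d$ circuit of size $(1+o(1))\cdot d\cdot q^{n(1+c/d)}$ with no spurious constant factor. The subtlety is that $q$ and $c$ are fixed, so $d^\star=\Theta(n)$, and \Cref{cor:cktfromrigidity} is stated only for $d=o(n)$. I would therefore revisit its proof: it writes $M^{\otimes n}=(M^{\otimes n'}\otimes I_{q^{n-n'}})\times(I_{q^{n'}}\otimes M^{\otimes(n-n')})$ with $n'$ the largest multiple of $d$ that is $\le n$, builds the first factor cleanly via \Cref{thm:cktfromrigidity} (size $\le d\,q^{n(1+c/d)}$, using $d\mid n'$ and \Cref{lem:fixedtobig}), and handles the padded remainder $I_{q^{n'}}\otimes M^{\otimes(n-n')}$ at size $(n-n')\cdot q^{n+c}$. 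The hypothesis $d=o(n)$ is used only to make this remainder negligible against $d\,q^{n(1+c/d)}$; what actually suffices is that $n\bmod d=o(n)$. So the plan is: pick the integer depth $d$ to lie within a $(1+o(1))$ factor of $d^\star$ \emph{and} to have $n\bmod d=o(n)$ — e.g., by searching a short interval of integers around $d^\star$ for one with small $n\bmod d$, or equivalently by breaking $M^{\otimes n}$ into many short blocks $M^{\otimes t_i}$, building each with the clean construction of \Cref{thm:cktfromrigidity} at a depth dividing $t_i$, and summing the block sizes scaled by $q^{n-t_i}$ — and letting the granularity of these choices shrink as $n\to\infty$ so that all slack is absorbed into $o(1)$. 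Finally, plug the chosen $d$ into the displayed optimum and collect the error terms.

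The step I expect to be the main obstacle is precisely this last reconciliation: simultaneously forcing $d$ close to $d^\star$ and forcing the padded remainder to be of strictly lower order than the main term $d\,q^{n(1+c/d)}$. Because $d^\star=\Theta(n)$ rather than $o(n)$, the remainder is no longer automatically small, and the "clean" constructions are somewhat rigid in which depths they allow (they want the depth to divide the relevant Kronecker exponent). Threading between the desired leading constant and this integrality — via a number-theoretic covering argument for the choice of $d$, or via a many-blocks decomposition with carefully balanced block lengths — is where the genuine care is needed; everything else is the routine optimization computed above.
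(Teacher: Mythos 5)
Your optimization is exactly the paper's proof: the paper simply applies \Cref{cor:cktfromrigidity} with $d=c\log_e N$ and computes $d\cdot q^{n(1+c/d)}=c\ln N\cdot N\cdot e=(ce\log_e 2)N\log_2 N$. You are also right to flag the hypothesis $d<o(n)$ of \Cref{cor:cktfromrigidity}: the chosen depth $d^\star=c\,n\ln q$ is $\Theta(n)$, so the paper is invoking that corollary outside its stated range, and this is a genuine issue that the paper's own two-line proof does not address. Your diagnosis of \emph{where} $d=o(n)$ is used (only to kill the remainder term $k\cdot q^{n+c}$) is also correct.

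However, your proposed repair --- pick $d$ near $d^\star$ with $n\bmod d=o(n)$ --- does not close the gap, because the obstruction is more severe than integrality of the remainder. In the paper's actual applications one has $c\ln q>1$ (for $H_4$: $q=16$, $c\approx 0.952$, so $c\ln q\approx 2.64$; similarly for $q=8$, $c\approx 0.985$), hence $d^\star=c\,n\ln q>n$. But the construction underlying \Cref{thm:cktfromrigidity} and \Cref{lem:fixedtobig} only produces depths $d\le n$: each level must carry a full $B$ (or $C$) factor in some Kronecker coordinate, so the per-level sparsity can never drop below roughly $(r+1)\cdot N$, and more to the point each level's sparsity is $N q^{nc/d}$ with $n/d\ge 1$. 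On the admissible range $2\le d\le n$ the function $g(d)=d\,Nq^{nc/d}$ is decreasing (its unconstrained minimizer lies to the right of $n$), so the best achievable is $g(n)=q^{c}\cdot nN=\frac{q^c}{\log_2 q}N\log_2 N$ --- e.g.\ $3.5\,N\log_2 N$ for $H_4$, or $\frac{\nnz(B)+\nnz(C)}{q\log_2 q}N\log_2 N=2.25\,N\log_2 N$ if one splits the $B$ and $C$ factors onto separate levels --- which is strictly larger than the claimed $(ce\log_e 2)N\log_2 N\approx 1.79\,N\log_2 N$ (and indeed larger than the standard $2N\log_2 N$). Reaching the claimed constant would require levels of sparsity $eN$ each processing a $1/(c\ln q)<1$ fraction of a $q$-ary coordinate, i.e.\ a genuinely finer factorization of $M$ than $M=B\times C$; neither your block-balancing argument nor the paper's construction supplies one. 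So the step you identified as "the main obstacle" is in fact a real gap --- in your proposal and in the paper's proof alike --- and it is not resolved by a covering/number-theoretic choice of $d$.
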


\begin{proof}
We will apply \Cref{cor:cktfromrigidity} with $d = c \cdot \log_e(N)$. The resulting circuit size is
$$(1 + o(1)) \cdot d \cdot q^{n \cdot (1 + c/d)} = (1 + o(1)) \cdot c \cdot \log_e(N) \cdot N \cdot e = (c \cdot e \cdot \log_e(2) + o(1)) \cdot N \cdot \log_2(N).$$
\end{proof}

\section{Smaller Circuits from Rank-1 Rigidity}

In this section, we study the rank-1 rigidities of a number of families of matrices. We will find that many matrices of interest have fairly low rank-1 rigidity. These constructions can be combined with the results of the previous section to prove our main results.

\subsection{Kronecker Power Matrices}

\begin{lemma} \label{mcubed1}
For any field $\F$ and any outer-1 matrix $M \in \F^{2 \times 2}$, we have $\rig_{M^{\otimes 3}}(1) \leq 23$.
\end{lemma}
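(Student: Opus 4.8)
The plan is to reduce the claim to a finite check and then exhibit an explicit rank-$1$ matrix witnessing it. Since $M \in \F^{2\times 2}$ is outer-$1$, we must have $M[0,0]=M[0,1]=M[1,0]=1$ and $M[1,1]=x$ for some $x \in \F$, so $M^{\otimes 3}$ is the $8\times 8$ matrix with $M^{\otimes 3}[i,j] = x^{|i \wedge j|}$, where $i,j$ range over $\{0,1\}^3$ and $|i \wedge j|$ is the number of coordinates in which both $i$ and $j$ equal $1$. The obvious candidate $L = J$ (the all-ones matrix) has rank $1$ and agrees with $M^{\otimes 3}$ exactly on the $27$ pairs with $i \wedge j = 0$, which only yields $\rig_{M^{\otimes 3}}(1) \le 37$; moreover, forcing a rank-$1$ matrix to match an entire row or column of $M^{\otimes 3}$ makes it have constant rows (since the all-zero-indexed row and column of $M^{\otimes 3}$ are all-ones), and the most natural such attempts leave roughly $29$ to $33$ disagreements. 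So the whole content is to produce a cleverer rank-$1$ $L$ with $\nnz(M^{\otimes 3} - L) \le 23$.

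To construct $L$, I would search for it in the form $L = u v^{\top}$ with $u, v \in \F^8$ --- so that $\rank(L)\le 1$ is automatic --- aiming to satisfy the system $u[i]\,v[j] = x^{|i\wedge j|}$ on a suitable set $T$ of $41$ of the $64$ index pairs. Any $T$ whose bipartite graph on (rows)$\times$(columns) is connected and all of whose cycles are \emph{product-consistent} (each cycle $i_1 - j_1 - i_2 - j_2 - \cdots - i_1$ forces the relation $\sum \pm\, |i_a \wedge j_b| = 0$ among the corresponding exponents of $x$) pins down $u$ and $v$ up to scaling, and one then counts how many of the remaining entries agree as well. Two features make such a search tractable: first, $M^{\otimes 3}$ is symmetric and invariant under permuting the three tensor coordinates, so it is natural to take $v = u$ and look for a $u$ respecting (part of) this symmetry; second, one can take the entries of $u$ to be monomials, or low-degree polynomials, in $x$, so that the construction is uniform in $x$. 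I expect the write-up to simply display the resulting $u$ --- equivalently the $8\times 8$ matrix $L$, with entries small polynomials in $x = M[1,1]$ --- together with the difference $M^{\otimes 3} - L$.

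Once $L$ is in hand the rest is bookkeeping: $\rank(L)\le 1$ is immediate from the outer-product form, and $\nnz(M^{\otimes 3}-L)\le 23$ is a finite verification over $64$ entries. To obtain the statement over an arbitrary field $\F$ at once, it suffices to carry out this computation treating $x$ as a formal indeterminate over $\Q$: if at most $23$ entries of $M^{\otimes 3}-L$ are nonzero as polynomials in $x$, then for every field $\F$ and every $x\in\F$ at most $23$ of them can be nonzero, since a polynomial that vanishes identically vanishes under every specialization. The one genuinely nontrivial step is finding $L$: there appears to be no slick closed-form rank-$1$ matrix close to $M^{\otimes 3}$ --- for instance a rank-$1$ submatrix supported on a combinatorial rectangle can cover at most $8$ of the $64$ entries (distinct rows of $M^{\otimes m}$ remain non-proportional unless one deletes at least half the columns, after which only a pair of rows can survive), so the construction is essentially ad hoc and presumably located by a small search over structured candidates; everything downstream of that is routine.
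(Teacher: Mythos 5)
Your setup is right and your reduction of the lemma to ``exhibit one rank-$1$ matrix $L$ with $\nnz(M^{\otimes 3}-L)\le 23$'' is exactly the content of the statement --- but you never exhibit $L$. You describe a search space (connected, cycle-consistent supports; symmetric $u=v$; monomial entries) and then write that you ``expect the write-up to simply display the resulting $u$.'' Since the entire lemma \emph{is} that display, this is a genuine gap rather than a routine omission. For the record, the witness is a clean closed form, not an ad hoc artifact of a search: writing $M^{\otimes 3}[x,y]=\omega^{\langle x,y\rangle_\Z}$, take $L[x,y]=\omega^{[x\neq 0]+[y\neq 0]-1}$, i.e.\ $L=uv^{\top}$ with $u[x]=\omega^{[x\neq 0]}$ and $v[y]=\omega^{[y\neq 0]-1}$. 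This agrees with $M^{\otimes 3}$ whenever exactly one of $x,y$ is the zero vector, and whenever both are nonzero with $\langle x,y\rangle_\Z=1$; the disagreements are the single entry $x=y=0$ plus the $22$ pairs of nonzero $x,y$ with $\langle x,y\rangle_\Z\in\{0,2,3\}$, totaling $1+9+9+4=23$.

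Two of your side claims also need care. First, your plan to make the entries of $u,v$ \emph{polynomials} in $x=M[1,1]$ so that a single formal computation specializes to every field is in tension with the actual witness, whose entries involve $\omega^{-1}$; you would need Laurent polynomials, and then the case $\omega=0$ (where $M^{\otimes 3}=R_3$) must be handled separately --- there it is easy, e.g.\ the rank-$1$ matrix supported on row $(0,0,0)$ already leaves only $19$ errors, but your specialization argument as stated does not cover it. Second, your parenthetical that a rank-$1$ agreement pattern ``supported on a combinatorial rectangle can cover at most $8$ of the $64$ entries'' is beside the point: the agreement set of the correct $L$ is $41$ entries and is not a combinatorial rectangle, so this observation neither obstructs nor guides the construction.
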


\begin{proof}
Since $M$ is an outer-1 matrix, there is an $\omega \in \F$ such that $$M = 
\begin{bmatrix}
1  & 1 \\
1  & \omega
\end{bmatrix}.$$
We can index entries of $M^{\otimes 3}$ by vectors $x,y \in \{0,1\}^3$, so that $M^{\otimes 3}[x,y] = \omega^{\langle x,y \rangle_{\Z}}$. Consider the matrix $L \in \F^{8 \times 8}$ given by 
$$L[x,y] = \begin{cases} 
    \omega^{-1} &\text{ if } x=y=(0,0,0), \\
    1 &\text{ if } x=(0,0,0) \text{ and } y \neq (0,0,0), \\
    1 &\text{ if } x \neq (0,0,0) \text{ and } y = (0,0,0), \\
    \omega &\text{ if } x \neq (0,0,0) \text{ and } y \neq (0,0,0). \\
\end{cases}$$
$L$ has rank $1$, and we can see that $L[x,y]=M^{\otimes 3}[x,y]$ unless:
\begin{itemize}
    \item $x=y=(0,0,0)$, or
    \item $x \neq (0,0,0)$, $y \neq (0,0,0)$, and $\langle x,y \rangle_{\Z} \neq 1$.
\end{itemize}
We can count that:
\begin{itemize}
    \item When $x=(1,0,0)$, $x=(0,1,0)$, or $x=(0,0,1)$, there are 3 choices of $y \neq (0,0,0)$ with $\langle x,y \rangle_\Z = 0$.
    \item When $x=(1,1,0)$, $x=(0,1,1)$, or $x=(1,0,1)$, there is 1 choice of $y \neq (0,0,0)$ with $\langle x,y \rangle_\Z = 0$, and 2 choices with $\langle x,y \rangle_\Z = 2$.
    \item When $x=(1,1,1)$, there are 3 choices of $y \neq (0,0,0)$ with $\langle x,y \rangle_\Z = 2$, and 1 choice with $\langle x,y \rangle_\Z = 3$.
\end{itemize}
Overall, $L$ and $M^{\otimes 3}$ differ in $1\cdot 1 + 3 \cdot 3 + 3 \cdot 3 + 1 \cdot 4 = 23$ entries.
\end{proof}

\begin{lemma} \label{mcubed}
For any field $\F$ and any matrix $M \in \F^{2 \times 2}$, we have $\rig_{M^{\otimes 3}}(1) \leq 23$.
\end{lemma}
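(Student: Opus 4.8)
The plan is to reduce the general case to the outer-1 case already handled by \Cref{mcubed1}. Given an arbitrary $M \in \F^{2 \times 2}$, I would first dispose of degenerate cases where the reduction to outer-nonzero form fails, and then apply \Cref{reducedimsingle} together with \Cref{rigiditydiagonal} to replace $M$ by an outer-1 matrix without changing the rank-$1$ rigidity of the third Kronecker power.

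In detail: first suppose $M$ is outer-nonzero, i.e.\ $M[0,0], M[0,1], M[1,0]$ are all nonzero. By \Cref{reducedimsingle} there are invertible diagonal matrices $D, D' \in \F^{2\times 2}$ and an outer-1 matrix $M' \in \F^{2\times 2}$ with $M = D \times M' \times D'$. Taking third Kronecker powers and applying the mixed-product property (\Cref{mixedproductproperty}), $M^{\otimes 3} = (D^{\otimes 3}) \times (M'^{\otimes 3}) \times (D'^{\otimes 3})$, where $D^{\otimes 3}$ and $D'^{\otimes 3}$ are invertible diagonal $8\times 8$ matrices. Then \Cref{rigiditydiagonal} gives $\rig_{M^{\otimes 3}}(1) = \rig_{M'^{\otimes 3}}(1) \leq 23$ by \Cref{mcubed1}.

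The remaining cases are when at least one of $M[0,0], M[0,1], M[1,0]$ is zero. Here I expect the main (though still routine) work: $M^{\otimes 3}$ is then quite sparse, or low-rank, or can be brought into outer-1 form after a permutation. For instance, if $M[0,0] = 0$ then every entry $M^{\otimes 3}[x,y]$ with some coordinate where both $x$ and $y$ are $0$ vanishes, so $M^{\otimes 3}$ has at most $2^3 \cdot 2^3 - (2^3\cdot 2^3 - 1^{\text{something}})$... more simply, $\nnz(M^{\otimes 3}) = \nnz(M)^3 \le 3^3 = 27$ when one entry is zero, and an all-zero matrix already has rank $\le 1$, so changing those $\le 27$ entries to $0$ trivially reaches rank $0 \le 1$ — but we need $\le 23$. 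So I would instead argue: if a row or column of $M$ is zero then $\rank(M) \le 1$, hence $\rank(M^{\otimes 3}) \le 1$ and $\rig_{M^{\otimes 3}}(1) = 0$; otherwise $M$ has a zero entry but no zero row or column, which forces (up to swapping rows and/or columns, i.e.\ conjugating by permutation matrices, which preserve $\rig(\cdot)(1)$) that the zero entry is $M[1,1]$ and the other three are nonzero — but then scaling by diagonal matrices as in \Cref{reducedimsingle} makes $M$ outer-1 with $\omega = 0$, landing back in the case covered by \Cref{mcubed1} (whose proof allows $\omega = 0$).

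The only genuine obstacle is making sure the permutation/scaling case analysis is exhaustive and that permuting rows and columns of $M^{\otimes 3}$ really does preserve rank-$1$ rigidity — this follows because a row/column permutation of $M$ induces a (tensor-product) row/column permutation of $M^{\otimes 3}$, and permuting rows and columns changes neither rank nor number of nonzero entries, so it preserves $\rig_{\cdot}(1)$ exactly. Assembling these observations, every $M \in \F^{2\times 2}$ either has $\rig_{M^{\otimes 3}}(1) = 0$ (when $M$ has a zero row or column) or reduces, via diagonal scaling and row/column permutations, to an outer-1 matrix, giving $\rig_{M^{\otimes 3}}(1) \le 23$ by \Cref{mcubed1}.
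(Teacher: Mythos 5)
Your main case is exactly the paper's proof: for outer-nonzero $M$, reduce to an outer-1 matrix via diagonal scaling (\Cref{reducedimsingle}, or the paper's \Cref{reducedim}) together with \Cref{rigiditydiagonal}, then invoke \Cref{mcubed1}. You are in fact more careful than the paper, whose two-line proof silently assumes $M$ is outer-nonzero. However, your case analysis for degenerate $M$ has one hole: a $2\times 2$ matrix can have a zero entry, no zero row, and no zero column, yet still not be reducible to outer-nonzero form by row/column swaps — namely when it has exactly two nonzero entries in a diagonal or anti-diagonal pattern, e.g.\ $I_2$ or the swap matrix $\bigl(\begin{smallmatrix}0&1\\1&0\end{smallmatrix}\bigr)$ (possibly weighted). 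Such an $M$ has rank $2$, so your ``zero row or column'' branch does not catch it, and no permutation puts ``the'' zero at position $[1,1]$ because there are two zeros while every outer-nonzero $2\times2$ matrix has at most one. The fix is immediate and worth stating: here $\nnz(M^{\otimes 3})=\nnz(M)^3=8$, so zeroing out all $8$ nonzero entries yields the zero matrix and $\rig_{M^{\otimes 3}}(1)\le 8\le 23$.

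One further nitpick: your parenthetical that the proof of \Cref{mcubed1} ``allows $\omega=0$'' is not literally true — that proof sets $L[(0,0,0),(0,0,0)]=\omega^{-1}$, which is undefined at $\omega=0$. You are still entitled to cite \Cref{mcubed1} in that case, since its \emph{statement} covers all outer-1 matrices and does hold at $\omega=0$ (e.g.\ the rank-$1$ matrix supported on the single row $x=(0,0,0)$ differs from $R_3$ in only $19$ entries), but the justification as written points at the wrong thing.
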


\begin{proof}
By \Cref{reducedim} and \Cref{rigiditydiagonal}, it is sufficient to consider the case when $M$ is an outer-1 matrix. The result then follows from \Cref{mcubed1}.
\end{proof}

\begin{theorem}
For any field $\F$, matrix $M \in \F^{2 \times 2}$, and positive integers $d,n>1$, the matrix $M^{\otimes n} \in \F^{N \times N}$ for $N = 2^n$ has a depth-$d$ linear circuit of size $2^{\eps} \cdot N^{1 + (1-\eps)/d}$ for some constant $\eps > 0.01526$.
\end{theorem}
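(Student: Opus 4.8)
The plan is to combine the Kronecker-power bootstrapping of \Cref{thm:cktfromrigidity} with a sufficiently good rank-1 rigidity bound for a fixed power of $M$. Recall that \Cref{thm:cktfromrigidity} tells us that if we have a $q \times q$ matrix $M'$ with $\rig_{M'}(1) \leq s$, then setting $c := \log_q(2 \cdot (1 + s/q))$, the matrix $(M')^{\otimes n}$ has a depth-$d$ synchronous linear circuit of size $q^{1-c} \cdot N^{1+c/d}$ for all $n,d$. So the whole game is to find a fixed Kronecker power $M^{\otimes t}$ of our original $2 \times 2$ matrix $M$, viewed as the $q \times q$ matrix with $q = 2^t$, whose rank-1 rigidity is small enough that the resulting $c$ satisfies $q^{1-c} \leq 2^{\eps}$ and $1 - c \geq \eps$, i.e. $c \leq 1 - \eps$; a short computation shows both demands reduce to making $c$ small, and the claimed $\eps > 0.01526$ should come out of plugging in the specific numbers.

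Concretely, I would apply \Cref{mcubed}, which already gives $\rig_{M^{\otimes 3}}(1) \leq 23$ for every $2\times 2$ matrix $M$ over every field. Here $q = 2^3 = 8$, $s = 23$, so $c = \log_8(2 \cdot (1 + 23/8)) = \log_8(2 \cdot 31/8) = \log_8(31/4)$. Since $\log_8(31/4) = \log_2(31/4)/3 = (\log_2 31 - 2)/3 \approx (4.954 - 2)/3 \approx 0.9848$, we get $1 - c \approx 0.01526$, which matches the stated bound on $\eps$. Then \Cref{thm:cktfromrigidity} (applied to the $8 \times 8$ matrix $M^{\otimes 3}$, noting $(M^{\otimes 3})^{\otimes m} = M^{\otimes 3m}$) gives, for $N = 8^m = 2^{3m}$, a depth-$d$ circuit of size $8^{1-c} \cdot N^{1+c/d} = 2^{3(1-c)} \cdot N^{1+c/d}$. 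There is a bookkeeping subtlety: the exponent should be written in terms of $N = 2^n$ rather than $N = 2^{3m}$, and the multiplicative prefactor $2^{3(1-c)}$ needs to be massaged into the form $2^{\eps}$ with $\eps = 1 - c$; also $n$ need not be a multiple of $3$, so one pads with an identity block (exactly as in the proof of \Cref{lem:fixedtobig} and \Cref{thm:cktfromrigidity}) at the cost of absorbing a bounded factor, which is where the $<$ (rather than $\leq$) and the $d$-dependence in the general statement of \Cref{intro:mainthm} come from. I expect the main obstacle is purely one of careful exponent arithmetic — checking that after re-expressing everything in terms of $n = 3m$ (or $n$ near a multiple of $3$) and bounding the padding overhead, the final size is genuinely $\leq 2^\eps \cdot N^{1+(1-\eps)/d}$ with a single clean constant $\eps > 0.01526$ working simultaneously for the prefactor and the exponent.

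For the improved bound $\eps > 0.04816$ in the Walsh-Hadamard case $M = H_1$, the same scheme applies but with a better fixed-rigidity input: instead of $\rig_{M^{\otimes 3}}(1) \leq 23$ one uses a dedicated construction showing $\rig_{H_4}(1) \leq 96$ (mentioned in the introduction), so that $q = 2^4 = 16$, $s = 96$, and $c = \log_{16}(2 \cdot (1 + 96/16)) = \log_{16}(14) = \log_2(14)/4 \approx 3.807/4 \approx 0.9518$, giving $1 - c \approx 0.0482$. This part is not needed for the theorem as stated in this excerpt (which only claims $\eps > 0.01526$), so I would prove the general case here via \Cref{mcubed} and \Cref{thm:cktfromrigidity}, defer the $H_4$ computation, and simply remark that substituting the $H_4$ bound in place of the $M^{\otimes 3}$ bound yields the sharper constant. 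The only genuinely new ingredient beyond what is already available in the excerpt is the numeric verification that $\log_8(31/4) < 1 - 0.01526$, together with the padding argument to handle $n \not\equiv 0 \pmod 3$ and to convert the $8^{1-c}$ prefactor into $2^\eps \cdot d$ form.
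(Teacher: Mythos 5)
Your proposal is correct and is essentially identical to the paper's own proof: it applies \Cref{thm:cktfromrigidity} to the $8\times 8$ matrix $M^{\otimes 3}$ with $r=1$ and the bound $\rig_{M^{\otimes 3}}(1)\leq 23$ of \Cref{mcubed}, yielding $c=\log_8(31/4)<0.98474=1-\eps$. The prefactor subtlety you flag (that \Cref{thm:cktfromrigidity} literally gives $q^{1-c}=8^{1-c}=2^{3\eps}$ rather than $2^{\eps}$) is a genuine observation, but it is equally unaddressed in the paper's one-line proof, which simply writes $2^{1-c}$; it can be repaired by the depth-rebalancing trick of \Cref{cor:cktfromrigidity} or by absorbing the constant into the exponent for large $n$.
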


\begin{proof}
Applying \Cref{thm:cktfromrigidity} with $M^{\otimes 3}$, $q=8$, and $r=1$, combined with the rigidity bound of \Cref{mcubed}, shows that $M^{\otimes n}$ has a depth-$d$ linear circuit of size $2^{1-c} \cdot N^{1 + c/d}$ for
$$c = \log_q\left( (r+1) \cdot \left(r + \frac{\rig_M(r)}{q} \right) \right) \leq \log_8\left( 2 \cdot \left(1 + \frac{23}{8} \right) \right) < 0.98474 = 1-\eps.$$
\end{proof}

\begin{corollary}
For any field $\F$, matrix $M \in \F^{2 \times 2}$, and positive integer $n>1$, the matrix $M^{\otimes n} \in \F^{N \times N}$ for $N = 2^n$ has a synchronous linear circuit of size $((1-\eps) \cdot e \log_e(2) + o(1)) \cdot N \log_2 N$ for some constant $\eps > 0.01526$.
\end{corollary}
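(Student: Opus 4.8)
The plan is to derive this corollary as a direct consequence of \Cref{cor:cktfromrigidity2} combined with the rigidity bound of \Cref{mcubed}, in exactly the same way the preceding theorem followed from \Cref{thm:cktfromrigidity}. Recall that \Cref{mcubed} gives, for any $M \in \F^{2 \times 2}$, that $\rig_{M^{\otimes 3}}(1) \leq 23$. We want to apply \Cref{cor:cktfromrigidity2} to the $q \times q$ matrix $M^{\otimes 3}$ with $q = 2^3 = 8$ and $r = 1$, and to the Kronecker power exponent $m$ where $n = 3m$ (handling the case where $3 \nmid n$ by the usual padding-with-identity trick already used inside \Cref{thm:cktfromrigidity}, or by noting the resulting $o(1)$ loss is absorbed).

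First I would compute the relevant exponent $c$. By the formula in \Cref{cor:cktfromrigidity2} applied to $M^{\otimes 3}$ with $q = 8$, $r = 1$,
$$c = \log_8\!\left( (r+1)\cdot\left(r + \tfrac{\rig_{M^{\otimes 3}}(1)}{q}\right) \right) \leq \log_8\!\left( 2 \cdot \left(1 + \tfrac{23}{8}\right) \right) = \log_8\!\left( \tfrac{31}{4} \right) < 0.98474.$$
Setting $\eps := 1 - c > 0.01526$ gives $c = 1 - \eps$. Then \Cref{cor:cktfromrigidity2}, applied with the base matrix $M^{\otimes 3}$ and Kronecker exponent $m = n/3$ so that $(M^{\otimes 3})^{\otimes m} = M^{\otimes n} \in \F^{N \times N}$ with $N = q^m = 8^{n/3} = 2^n$, yields a synchronous linear circuit for $M^{\otimes n}$ of size $(c \cdot e \cdot \log_e(2) + o(1)) \cdot N \log_2 N = ((1-\eps)\cdot e \log_e(2) + o(1)) \cdot N \log_2 N$, which is exactly the claimed bound.

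The only mild subtlety is the divisibility issue: \Cref{cor:cktfromrigidity2} as stated takes a Kronecker power $(M^{\otimes 3})^{\otimes m}$, so it directly handles $N = 2^{3m}$. For general $n$, write $n = 3m + k$ with $k \in \{1,2\}$; then $M^{\otimes n} = M^{\otimes 3m} \otimes M^{\otimes k}$, and one computes a circuit for $M^{\otimes 3m} \otimes I_{2^k}$ from the corollary (size $2^k$ times larger, hence still $(c\cdot e\cdot\log_e 2 + o(1)) N\log_2 N$ since $2^k = O(1)$ and $m = n/3 - O(1)$), composed with the trivial $O(N)$-size circuit for $I_{2^{3m}} \otimes M^{\otimes k}$; the latter is a lower-order term. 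This is the same padding argument already carried out in the proof of \Cref{cor:cktfromrigidity}, so there is no real obstacle here. Indeed, I do not expect any step of this proof to be difficult: the entire content is the rank-$1$ rigidity bound $\rig_{M^{\otimes 3}}(1) \leq 23$ from \Cref{mcubed}, which has already been established, plus the bookkeeping in \Cref{cor:cktfromrigidity2}, which is also already in hand. The proof is essentially a one-line invocation of these two results with the arithmetic $\log_8(31/4) < 0.98474$ checked.
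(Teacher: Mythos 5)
Your proposal is correct and matches the paper's proof exactly: the paper also obtains this corollary by applying \Cref{cor:cktfromrigidity2} to $M^{\otimes 3}$ with the rigidity bound $\rig_{M^{\otimes 3}}(1) \leq 23$ from \Cref{mcubed}, giving $c = \log_8(31/4) < 0.98474$ and hence $\eps > 0.01526$. Your extra care about the case $3 \nmid n$ is a reasonable elaboration of a detail the paper leaves implicit, but it is the same argument.
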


\begin{proof}
Apply \Cref{cor:cktfromrigidity2} with the same rigidity bound of \Cref{mcubed}.
\end{proof}

\subsection{Walsh-Hadamard Transform}

\begin{lemma} \label{had2rig}
Over any field $\F$ with $\ch(\F) \neq 2$, we have $\rig_{H_2}(1) = 4$.
\end{lemma}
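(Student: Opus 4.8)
The plan is to prove both inequalities $\rig_{H_2}(1) \leq 4$ and $\rig_{H_2}(1) \geq 4$ directly, by elementary calculation on the $4 \times 4$ matrix $H_2$. Since $H_2 = H_1^{\otimes 2}$, its entries are $H_2[x,y] = (-1)^{\langle x, y\rangle}$ for $x,y \in \{0,1\}^2$, so written out it is the $4\times 4$ sign matrix with first row and first column all $1$'s. For the upper bound, I would mimic the construction in the proof of \Cref{mcubed1}: take the all-ones rank-$1$ matrix $L = J_4$ (or a slight variant), and count how many entries of $H_2$ differ from it. The entry $H_2[x,y]$ equals $-1$ exactly when $\langle x,y\rangle$ is odd, i.e. when both $x,y$ are among the three nonzero vectors and $\langle x,y\rangle_{\Z} = 1$; a direct count shows there are exactly $4$ such pairs (for each of the three nonzero $x$, one nonzero $y$ with odd inner product, but the pair $(x,x)$ with $x$ of weight $2$ has even inner product — careful bookkeeping is needed, and I expect the right count to land on $4$). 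Changing those $4$ entries to $+1$ turns $H_2$ into $J_4$, which has rank $1$, giving $\rig_{H_2}(1) \leq 4$.

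For the lower bound $\rig_{H_2}(1) \geq 4$, I would argue that changing at most $3$ entries of $H_2$ cannot drop its rank to $1$. The cleanest route is: any $3$ entries of a $4\times4$ matrix miss at least one full row and at least one full column, so there is a $2\times 2$ submatrix of $H_2$ none of whose entries were touched; if I can show every modification of $H_2$ in $\leq 3$ places still contains a nonsingular $2\times 2$ submatrix, then the modified matrix has rank $\geq 2$. Concretely, $H_2$ has many $2\times 2$ submatrices with determinant $\pm 2$ (nonzero since $\ch(\F)\neq 2$), and I need to check that for any choice of $3$ changed positions, at least one such "good" $2\times 2$ submatrix avoids all three — this is a finite combinatorial case check exploiting that the changed cells occupy at most $3$ rows and at most $3$ columns. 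This is exactly where $\ch(\F) \neq 2$ enters: over a field of characteristic $2$, $H_2$ is the all-ones matrix and already has rank $1$, so the hypothesis is essential.

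The main obstacle I anticipate is the lower bound's combinatorial case analysis: one must verify that no clever placement of $3$ changes can simultaneously "break" all the $2\times2$ minors of $H_2$ that are nonzero. I would streamline this by symmetry — $H_2$ has a large automorphism group (permuting/negating coordinates of the indexing vectors, plus transposition) acting on its entries, so one reduces to a few representative cases for the set of $3$ changed positions (e.g. all three in distinct rows and distinct columns; two in one row; etc.) and checks each. The upper-bound count is routine once the construction is pinned down; I'd double-check it against the alternative of using the $\omega = -1$ specialization of the $L$ from \Cref{mcubed1}'s proof style, and pick whichever gives exactly $4$.
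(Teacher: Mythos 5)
Your upper bound, as primarily proposed, fails on the count: $H_2$ has \emph{six} entries equal to $-1$, not four. Writing $H_2[x,y]=(-1)^{\langle x,y\rangle}$ for $x,y\in\{0,1\}^2$, the rows $x=(1,0)$, $(0,1)$, $(1,1)$ each contain exactly two $-1$'s, so changing every $-1$ to $+1$ to reach $J_4$ costs $6$ changes and only shows $\rig_{H_2}(1)\leq 6$. The fallback you mention in passing --- the $\omega=-1$ specialization of the rank-one matrix $L$ from the proof of \Cref{mcubed1} --- is the one that works: that $L$ (which is $-1$ at $(0,0)$, $+1$ when exactly one index is $(0,0)$, and $-1$ when both indices are nonzero) differs from $H_2$ in exactly $4$ entries, namely at $(0,0)$ and at the three pairs of nonzero $x,y$ with $\langle x,y\rangle_\Z$ even. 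This is precisely the paper's construction, so you need to commit to it and do that count rather than the $J_4$ one.

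For the lower bound your route genuinely differs from the paper's. The paper simply notes that $H_2$ contains four disjoint copies of $H_1$, each of rank $2$; a submatrix of a rank-$1$ matrix has rank at most $1$, so each copy must receive at least one change, forcing at least $4$ changes total. Your plan --- find an untouched nonsingular $2\times2$ minor after any $3$ changes --- can be completed, and without the heavy case analysis you anticipate: pick an untouched row $r$ and untouched column $c$ (these exist since $3$ changes occupy at most $3$ rows and $3$ columns); any two distinct rows of $H_2$ agree in exactly $2$ of the $4$ positions, so for each of the $3$ rows $r'\neq r$ exactly $2$ of the $3$ columns $c'\neq c$ make the minor on rows $\{r,r'\}$ and columns $\{c,c'\}$ nonsingular (determinant $\pm 2\neq 0$ since $\ch(\F)\neq 2$); that is $6$ good positions $(r',c')$, of which at most $3$ are touched, so an untouched nonsingular minor survives. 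Do note, though, that your intermediate claim that an untouched row plus an untouched column already yields an untouched $2\times2$ submatrix is not by itself correct --- the fourth corner $(r',c')$ may be a changed entry --- which is exactly why the counting step above is needed. The paper's four-block argument is shorter and is the one I would use.
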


\begin{proof} First, to see that $\rig_{H_2}(1) \leq 4$, we can verify that
\begin{align*}H_2 = 
\begin{bmatrix}
1  & 1  & 1  & 1   \\
1 & -1  & 1 & -1  \\
1 & 1 & -1  & -1   \\
1  & -1 & -1 & 1 
\end{bmatrix}
= 
\begin{bmatrix}
-1  & 1  & 1  & 1   \\
1 & -1  & -1 & -1  \\
1 & -1 & -1  & -1   \\
1  & -1 & -1 & -1 
\end{bmatrix} + \begin{bmatrix}
2 & 0 & 0 & 0   \\
0 & 0 & 2 & 0  \\
0 & 2 & 0 & 0   \\
0 & 0 & 0 & 2 
\end{bmatrix}.
\end{align*}
This is the sum of a rank-1 matrix (where each row after the first is the negation of the first row), and a matrix with 4 nonzero entries, as desired.

The bound $\rig_{H_2}(1) \geq 4$ actually follows from the known general lower bound $\rig_{H_n}(r) \geq 2^{2n-2}/r$~\cite{midrijanis2005three,de2006lower}, but we prove it here for completeness using the simple proof strategy of~\cite{midrijanis2005three}. Recall that we can write $H_2$ as a block matrix as
\begin{align*}H_2 = 
\begin{bmatrix}
H_1 & H_1 \\
H_1 & -H_1
\end{bmatrix}.
\end{align*}
Each copy of $H_1$ has rank $2$, so we must change at least one entry in each $H_1$ to drop the rank of the whole matrix to $1$. Since there are four disjoint copies, we must change at least four entries.
\end{proof}

\begin{lemma} \label{had3rig}
Over any field $\F$, we have $\rig_{H_3}(1) \leq 22$.
\end{lemma}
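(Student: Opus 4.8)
The plan is to reuse, essentially verbatim, the rank-$1$ matrix constructed in the proof of \Cref{mcubed1} specialized to the outer-$1$ matrix $H_1$ (which corresponds to $\omega = -1$), and then observe that the special value $\omega = -1$ buys us exactly one more matching entry over the generic bound of $23$ from \Cref{mcubed}. Concretely, I would index the rows and columns of $H_3$ by vectors in $\{0,1\}^3$, so that $H_3[x,y] = (-1)^{\langle x, y\rangle_\Z}$. If $\ch(\F) = 2$ then $H_3$ is the all-ones matrix and $\rig_{H_3}(1) = 0 \leq 22$, so assume $\ch(\F) \neq 2$ and define $L \in \F^{8 \times 8}$ by $L[x,y] = -1$ if $x = y = (0,0,0)$, $L[x,y] = 1$ if exactly one of $x,y$ equals $(0,0,0)$, and $L[x,y] = -1$ if both $x \neq (0,0,0)$ and $y \neq (0,0,0)$. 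This is the matrix $L$ from \Cref{mcubed1} with $\omega = -1$; it has rank at most $1$, being the outer product of the vector equal to $1$ at $(0,0,0)$ and $-1$ elsewhere with the vector equal to $-1$ at $(0,0,0)$ and $1$ elsewhere.

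The second step is to count the entries on which $L$ and $H_3$ disagree. They agree on every entry with exactly one of $x,y$ equal to $(0,0,0)$, since there $\langle x,y\rangle_\Z = 0$ and both matrices have value $1$; they disagree on the single entry $x = y = (0,0,0)$. Among the $49$ entries with $x,y$ both nonzero, $L[x,y] = -1$ while $H_3[x,y] = (-1)^{\langle x,y\rangle_\Z}$, so they disagree exactly when $\langle x,y\rangle_\Z$ is even. I would split this by the Hamming weight of $x$, exactly as in \Cref{mcubed1}: for each of the three weight-$1$ vectors $x$ there are $3$ nonzero $y$ with $\langle x,y\rangle_\Z$ even; for each of the three weight-$2$ vectors $x$ there are again $3$ such $y$; and for $x = (1,1,1)$, $\langle x,y\rangle_\Z$ equals the Hamming weight of $y$, so there are exactly $3$ nonzero $y$ (those of weight $2$) with $\langle x,y\rangle_\Z$ even. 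Summing, $L$ and $H_3$ differ in $1 + 3\cdot 3 + 3\cdot 3 + 3 = 22$ entries, which gives $\rig_{H_3}(1) \leq 22$.

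There is no genuine obstacle here; the statement reduces to a finite computation, and the only care needed is the bookkeeping in the weight-based case analysis. The one conceptual point worth flagging is precisely \emph{where} the improvement over the generic-$\omega$ bound of $23$ comes from: in \Cref{mcubed1} the entry indexed by $\big((1,1,1),(1,1,1)\big)$ is a disagreement because $\omega^3 \neq \omega$ for generic $\omega$, whereas for $\omega = -1$ we have $(-1)^3 = -1$, so that entry now matches and the count drops to $22$. An equivalent and maximally self-contained phrasing would simply invoke \Cref{mcubed1} with $\omega = -1$: the matrix $L$ produced there differs from $M^{\otimes 3} = H_3$ in the $23$ enumerated entries, except that the entry indexed by $\big((1,1,1),(1,1,1)\big)$ is now also a match, leaving $22$ disagreements.
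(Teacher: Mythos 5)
Your proposal is correct and matches the paper's proof essentially verbatim: both reuse the rank-$1$ matrix $L$ from \Cref{mcubed1} with $\omega=-1$ and observe that the single entry $x=y=(1,1,1)$, a disagreement for generic $\omega$ because $\omega^3\neq\omega$, becomes a match when $\omega=-1$, dropping the count from $23$ to $22$. Your explicit recount and the separate $\ch(\F)=2$ case are fine but not needed beyond the one-line observation the paper makes.
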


\begin{proof}
We use the same construction as in \Cref{mcubed}, with $\omega=-1$ so that $M^{\otimes 3} = H_3$. In this case, there is one more correct entry than in the general case, since when $x=y=(1,1,1)$, we have $M^{\otimes 3}[x,y]=\omega^3$ and $L[x,y]=\omega$, but these are equal when $\omega=-1$, so the number of errors is only $23-1=22$.
\end{proof}

\begin{lemma}\label{had4rig}
Over any field $\F$, we have $\rig_{H_4}(1) \leq 96$.
\end{lemma}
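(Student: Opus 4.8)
The plan is to produce an explicit rank-$1$ matrix $L \in \F^{16 \times 16}$ that agrees with $H_4$ in all but $96$ entries, using the factorization $H_4 = H_1^{\otimes 4} = H_2 \otimes H_2$. The building block is the rank-$1$ matrix constructed in the proof of \Cref{had2rig}: there we wrote $H_2 = L_2 + S$, where $L_2$ has rank $1$ and $S$ is supported on a set $E$ of $4$ entries. The feature of that construction I would single out is that on $E$ one has $H_2 = 1$ and $L_2 = -1$; equivalently, $L_2[a,b] = (-1)^{[(a,b) \in E]} \cdot H_2[a,b]$ for every $a, b \in \{0,1,2,3\}$ (this also holds trivially off $E$). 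I would then set $L := L_2 \otimes L_2$. Since the rank of a Kronecker product is the product of the ranks, $\rank(L) = \rank(L_2)^2 = 1$, so $L$ is a valid rank-$1$ competitor for $H_4$.

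For the count, index the rows and columns of $H_4$ by pairs $(a, a')$ with $a, a' \in \{0,1,2,3\}$ (the first two and last two coordinates), so that $H_4[(a,a'),(b,b')] = H_2[a,b] \cdot H_2[a',b']$ and $L[(a,a'),(b,b')] = L_2[a,b] \cdot L_2[a',b']$. Substituting the identity above, $L[(a,a'),(b,b')] = (-1)^{[(a,b) \in E] + [(a',b') \in E]} \cdot H_4[(a,a'),(b,b')]$. Hence $L$ and $H_4$ can disagree only at entries for which exactly one of $(a,b)$, $(a',b')$ lies in $E$: when both lie in $E$ the sign is $(-1)^2 = 1$, so the entries agree (and in fact both equal $1$), and when neither lies in $E$ the sign is $1$. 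The number of index pairs with exactly one of $(a,b), (a',b')$ in $E$ is $2 \cdot |E| \cdot (16 - |E|) = 2 \cdot 4 \cdot 12 = 96$, which gives $\rig_{H_4}(1) \le 96$. When $\ch(\F) = 2$ the sign $-1$ equals $1$, so in fact $L = H_4$ and the bound holds trivially, consistent with $H_1$, hence $H_4$, already having rank $1$ in that case; no separate argument is needed.

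The only real subtlety — and the step I would double-check carefully — is the choice of rank-$1$ matrix to Kronecker-square. The naive analogue of the \Cref{had3rig} construction (a rank-$1$ matrix equal to $-1$ off the all-zero row/column pattern) already fails here: for $H_4$ it produces $1 + 15 \cdot 7 = 106$ disagreements, not $96$. What makes the block construction work is precisely the sharpened property that $L_2$ equals $-H_2$ (rather than merely disagreeing with it) on its error set $E$, which forces the ``both coordinates wrong'' block to cancel back to $H_4$ instead of contributing a further $|E|^2 = 16$ errors. I would verify the sign-flip property directly against the explicit matrices in the proof of \Cref{had2rig}, and confirm the partition $144 + 96 + 16 = 256$ of all index pairs according to whether $0$, $1$, or $2$ of $(a,b), (a',b')$ lie in $E$.
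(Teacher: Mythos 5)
Your proposal is correct and follows essentially the same route as the paper: both take the Kronecker square of the rank-$1$ approximation $A$ of $H_2$ from \Cref{had2rig} and use the fact that all entries involved lie in $\{-1,1\}$, so the ``both coordinates wrong'' entries cancel back to agreement, leaving exactly $2\cdot 4\cdot 12 = 96$ disagreements. Your explicit framing of this as the sign-flip property $L_2 = -H_2$ on the error set, and your remark that the naive analogue of the \Cref{had3rig} construction would give $106$ errors, are accurate elaborations of the same argument.
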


\begin{proof}
In the proof of \Cref{had2rig}, we showed there is a matrix $A \in \{-1,1\}^{4 \times 4}$ which differs from $H_2$ in $4$ entries, and which has rank $1$ over any field. Let $B = A^{\otimes 2} \in \{-1,1\}^{16 \times 16}$. We have that $\rank(B) = \rank(A)^2 = 1$. Indexing the rows and columns of $H_2$ by $\{0,1,2,3\}$, and the rows and columns of $H_4$ by $\{0,1,2,3\}^2$, we see that for $a,b,c,d \in \{0,1,2,3\}$ we have
$$\frac{B[(a,b),(c,d)]}{H_4[(a,b),(c,d)]} = \frac{A[a,c] \cdot A[b,d]}{H_2[a,c] \cdot H_2[b,d]}.$$
This will equal $1$ (and hence the $[(a,b),(c,d)]$ entries of $B$ and $H_4$ will be equal) whenever either:
\begin{itemize}
    \item $A[a,c]=H_2[a,c]$ and $A[b,d]=H_2[b,d]$, which happens for $(16-4)^2 = 144$ values of $a,b,c,d \in \{0,1,2,3\}$, or
    \item $A[a,c]\neq H_2[a,c]$ and $A[b,d]\neq H_2[b,d]$ (since all these values are in $\{-1,1\}$), which happens for $4^2 = 16$ values of $a,b,c,d \in \{0,1,2,3\}$.
\end{itemize}
Thus, $B$ only differs from $H_4$ in $16^2 - 144 - 16 = 96$ entries, as desired.
\end{proof}

\begin{remark}
I verified using a brute-force search that \Cref{had3rig} and \Cref{had4rig} are tight over any field $\F$ with $\ch(\F) \neq 2$. I unfortunately haven't found more enlightening proofs of these facts.
\end{remark}

\begin{theorem}
For any field $\F$ and positive integers $d,n>1$, the matrix $H_n \in \F^{N \times N}$ for $N = 2^n$ has a depth-$d$ linear circuit of size $\leq 2^{\eps} \cdot N^{1 + (1-\eps)/d + O(d/n)}$ for some constant $\eps > 0.04816$.
\end{theorem}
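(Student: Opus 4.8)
The plan is to feed the rank-$1$ rigidity bound $\rig_{H_4}(1) \le 96$ from \Cref{had4rig} into the framework of \Cref{thm:cktfromrigidity}, applied to the $16 \times 16$ matrix $H_4 = H_1^{\otimes 4}$. Concretely, I would invoke \Cref{thm:cktfromrigidity} with $M = H_4$, $q = 16$, and $r = 1$; the exponent it produces is
$$c = \log_{q}\left((r+1)\left(r + \frac{\rig_{H_4}(r)}{q}\right)\right) = \log_{16}\left(2\left(1 + \frac{96}{16}\right)\right) = \log_{16}(14),$$
which one checks is strictly less than $0.95184$, so that setting $\eps := 1 - c$ gives the claimed $\eps > 0.04816$. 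Since $H_4^{\otimes m} = H_{4m}$ and $16^m = 2^{4m}$, \Cref{thm:cktfromrigidity} immediately yields, for every positive integer $m$, a depth-$d$ synchronous linear circuit for $H_{4m}$ of size at most $d \cdot 16^{1-c} \cdot (2^{4m})^{1+c/d}$.

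The next step is to pass from $H_{4m}$ to $H_n$ for arbitrary $n$, which is needed because the above only produces circuits for indices that are multiples of $4$. Here I would set $m = \lceil n/4 \rceil$, so $2^{4m} < 16 N$ with $N = 2^n$, and use the fact that $H_n$ is a principal submatrix of $H_{4m}$: writing row and column indices of $H_{4m}$ as vectors in $\{0,1\}^{4m}$ and restricting to those whose last $4m - n$ coordinates vanish recovers exactly $H_n$, since $H_{4m}[x,y] = (-1)^{\langle x,y\rangle}$. Restricting a linear circuit to a principal submatrix --- deleting the inputs and outputs outside the submatrix together with any gate that thereby becomes dead --- can only shrink the circuit and cannot increase its depth (this is the same trick used in the proof of \Cref{lem:fixedtobig} when a dimension fails to divide). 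Hence $H_n$ has a depth-$d$ synchronous linear circuit of size at most $d \cdot 16^{1-c} \cdot (16 N)^{1+c/d} = d \cdot 16^{2-c+c/d} \cdot N^{1+c/d}$.

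Finally I would massage this into the stated form. Since $16^{2-c+c/d} \le 16^2 = 2^8$ is an absolute constant it is at most $N^{8/n}$, and $d \le 2^d = N^{d/n}$; combining these with $c \le 1-\eps$ bounds the size by $N^{1 + c/d + (d+8)/n} \le N^{1 + (1-\eps)/d + O(d/n)} \le 2^{\eps} \cdot N^{1 + (1-\eps)/d + O(d/n)}$, which is exactly the claimed bound. For $n$ bounded by an absolute constant or for $d \ge n$, the right-hand side is already $\Omega(N^{\Omega(1)})$ larger than the trivial depth-$d$ fast Walsh-Hadamard circuit, so there is nothing to check in that range.

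I expect essentially no obstacle here: the one genuinely new ingredient is \Cref{had4rig}, and the rest is a black-box application of \Cref{thm:cktfromrigidity} followed by elementary exponent bookkeeping. The only point that deserves a careful (if short) justification is the reduction in the second paragraph --- that cutting $H_{4m}$ down to the principal submatrix $H_n$ preserves a valid depth-$d$ circuit --- but this is immediate because zeroing out input coordinates can only delete wires, never add them.
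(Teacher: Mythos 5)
Your proposal is correct and follows essentially the same route as the paper: apply \Cref{thm:cktfromrigidity} to $M=H_4$ with $q=16$, $r=1$ and the bound $\rig_{H_4}(1)\le 96$ of \Cref{had4rig}, giving $c=\log_{16}(14)<0.95184$. Your explicit padding to $H_{4\lceil n/4\rceil}$ and the exponent bookkeeping absorbing the factors $d$ and $16^{O(1)}$ into $N^{O(d/n)}$ are exactly the details the paper's terse proof leaves implicit (and are what account for the $O(d/n)$ term in the statement).
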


\begin{proof}
Applying \Cref{thm:cktfromrigidity} with $H_4$, $q=16$, and $r=1$, combined with the rigidity bound of \Cref{had4rig}, shows that $H_n=H_1^{\otimes n}$ has a depth-$d$ linear circuit of size $2^{1-c} \cdot N^{1 + c/d}$ for
$$c = \log_q\left( (r+1) \cdot \left(r + \frac{\rig_M(r)}{q} \right) \right) \leq \log_{16}\left( 2 \cdot \left(1 + \frac{96}{16} \right) \right) < 0.95184 = 1-\eps.$$
\end{proof}

\begin{corollary}
For any field $\F$ and positive integer $n>1$, the matrix $H_n \in \F^{N \times N}$ for $N = 2^n$ has a synchronous linear circuit of size $((1-\eps) \cdot e \log_e(2) + o(1)) \cdot N \log_2 N$ for some constant $\eps > 0.04816$.
\end{corollary}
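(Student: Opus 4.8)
The plan is to feed the rank-$1$ rigidity bound $\rig_{H_4}(1)\le 96$ of \Cref{had4rig} into the generic machinery of \Cref{cor:cktfromrigidity2}, which turns such a bound into an $O(N\log N)$-size synchronous circuit, taking $H_4\in\F^{16\times 16}$ as the fixed base matrix of the Kronecker family. Concretely, set $q=16$, $r=1$, $M=H_4$ in \Cref{cor:cktfromrigidity2}; with $\rig_{H_4}(1)\le 96$ the relevant exponent is
\[
c \;:=\; \log_q\!\Bigl((r+1)\cdot\bigl(r+\tfrac{\rig_{H_4}(1)}{q}\bigr)\Bigr)\;=\;\log_{16}\!\Bigl(2\cdot\bigl(1+\tfrac{96}{16}\bigr)\Bigr)\;=\;\log_{16}(14).
\]
A short numerical check gives $\log_{16}(14)<0.95184$, so setting $\eps:=1-c$ we have $\eps>0.04816$, and \Cref{cor:cktfromrigidity2} hands us, for every $m$, a synchronous linear circuit for $H_4^{\otimes m}$ of size $(c\cdot e\log_e(2)+o(1))\cdot N_0\log_2 N_0$ with $N_0=16^m$. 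Since $(1-\eps)\,e\log_e(2)=c\,e\log_e(2)$, this is exactly the claimed bound whenever $n$ is a multiple of $4$, because then $H_n=H_1^{\otimes n}=H_4^{\otimes(n/4)}$.

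It then remains to handle $n$ not divisible by $4$. Write $n=4m+s$ with $0\le s\le 3$ and $N_0:=16^m=2^{4m}$, so that $N_0\ge N/8$ and $N_0\cdot 2^s=N$. Using the mixed-product property (\Cref{mixedproductproperty}), $H_n=H_4^{\otimes m}\otimes H_s=\bigl(H_4^{\otimes m}\otimes I_{2^s}\bigr)\times\bigl(I_{N_0}\otimes H_s\bigr)$. Tensoring the circuit for $H_4^{\otimes m}$ with $I_{2^s}$ multiplies its wire count by the constant $2^s$, giving size $(c\,e\log_e(2)+o(1))\cdot N_0 2^s\log_2 N_0=(c\,e\log_e(2)+o(1))\cdot N\log_2 N_0\le(c\,e\log_e(2)+o(1))\cdot N\log_2 N$; and $I_{N_0}\otimes H_s$ is just $N_0$ disjoint copies of a constant-size transform, contributing only $O(N_0)=O(N)=o(N\log_2 N)$ additional wires. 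Summing the wires of the two depths then yields the bound $((1-\eps)\,e\log_e(2)+o(1))\cdot N\log_2 N$ for all $n>1$, with the finitely many tiny $n$ (those with $m=0$) absorbed into the asymptotics since any circuit for them has size $O(1)$.

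I do not expect any genuine obstacle here: essentially all the work is already packaged in the earlier results — \Cref{cor:cktfromrigidity2} internally optimizes the depth parameter $d=c\log_e N$ in \Cref{cor:cktfromrigidity}, which in turn rests on the non-rigidity-to-circuit framework of \Cref{thm:cktfromrigidity} and \Cref{lem:fixedtobig}, and the only new ingredient needed is the rigidity estimate \Cref{had4rig}. The one place that needs a touch of care is the reduction to $n$ divisible by $4$: it must be routed through the Kronecker-product identity above rather than by padding $n$ up to the next multiple of $4$ and extracting a submatrix, since a constant-factor blow-up in the dimension — harmless for the depth-$d$ statement \Cref{intro:mainthm} — would corrupt the leading constant in front of $N\log_2 N$ in this sharper $O(N\log N)$ bound. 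Finally, one should confirm the arithmetic $\log_{16}(14)<0.95184$, equivalently $1+\log_2 7<3.80736$, which holds (with only a thin margin), consistent with the stated $\eps>0.04816$.
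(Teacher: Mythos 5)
Your proposal is correct and is essentially the paper's own proof, which simply applies \Cref{cor:cktfromrigidity2} with $M=H_4$, $q=16$, $r=1$ and the bound $\rig_{H_4}(1)\le 96$ from \Cref{had4rig} to get $c=\log_{16}(14)<0.95184$. Your extra step handling $n\not\equiv 0\pmod 4$ via $H_n=(H_4^{\otimes m}\otimes I_{2^s})\times(I_{16^m}\otimes H_s)$ is a correct treatment of a detail the paper's one-line proof leaves implicit, and you are right that it must be done this way rather than by padding, so as not to disturb the leading constant.
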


\begin{proof}
Apply \Cref{cor:cktfromrigidity2} with the same rigidity bound of \Cref{had4rig}.
\end{proof}

\subsection{Fourier Transform}

In order to use the approach of \Cref{thm:cktfromrigidity} to prove that the $N \times N$ Fourier transform $F_N$ has depth-$d$ circuits of size $O(N^{1 + c/d})$ for some $c<1$, we would need it to be the case that, for some positive integers $N>r>0$, we have
$$\log_N((r+1) \cdot (r + \rig_{F_N}(r)/N)) < 1.$$
We next remark that known rigidity lower bounds for $F_N$ show that this is never the case. In fact, the proof extends to any Vandermonde matrix.

\begin{proposition}
For any positive integers $N>r\geq 0$, the $N \times N$ Fourier transform matrix $F_N$ has$$(r+1) \cdot (r + \rig_{F_N}(r)/N) \geq N.$$
\end{proposition}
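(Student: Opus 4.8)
The plan is to reduce the claim to a known rigidity lower bound for the Fourier transform and then perform a short algebraic manipulation. The relevant bound is $\rig_{F_N}(r) \ge (N-r)^2/(r+1)$, due to Shparlinski~\cite{Shparlinski}; I will use this bound rather than the asymptotically stronger $\Omega(\tfrac{N^2}{r}\log(N/r))$ bound, since only the former has the constant factors needed here. (If one wants the accompanying remark that the argument applies to an arbitrary Vandermonde matrix with distinct nodes to be self-contained, one would instead re-derive $\rig_V(r) \ge (N-r)^2/(r+1)$ from the fact that every $N \times (r+1)$ column-submatrix of such a $V$ has full column rank, via the standard counting argument; but for $F_N$ the quoted bound suffices.)

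Next I would substitute. Writing $R := \rig_{F_N}(r)$ and using $R \ge (N-r)^2/(r+1)$, we get
$$(r+1)\left(r + \frac{R}{N}\right) \;\ge\; r(r+1) + \frac{(N-r)^2}{N},$$
so it remains to check that the right-hand side is at least $N$. Multiplying through by $N$, this is equivalent to $N\,r(r+1) + (N-r)^2 \ge N^2$; expanding the left-hand side as $N r^2 - N r + N^2 + r^2$ and cancelling $N^2$, the inequality becomes $N r^2 - N r + r^2 \ge 0$, i.e.\ $r\big(r(N+1) - N\big) \ge 0$.

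The last step is to note that this holds for every integer $r$ with $0 \le r \le N-1$: for $r = 0$ both sides vanish, and for $r \ge 1$ we have $r(N+1) \ge N+1 > N$, so the product is strictly positive. This proves the proposition. I do not anticipate a real obstacle: the entire content is choosing the rigidity bound with the right small-$r$ constants, and the only mild point to watch is that equality is essentially attained at $r = 0$ (where $\rig_{F_N}(0) = N^2$), so no asymptotically looser bound would do.
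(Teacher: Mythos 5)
Your proof is correct and follows essentially the same route as the paper: both invoke Shparlinski's bound $\rig_{F_N}(r) \ge (N-r)^2/(r+1)$ and then verify the resulting inequality by elementary algebra (the paper writes the slack as $\frac{1}{N}(r^2 + Nr(r-1)) \ge 0$, you write it as $r(r(N+1)-N) \ge 0$; these are the same computation). The paper also includes, for completeness, a proof of Shparlinski's lemma via the full-rank property of consecutive-row Vandermonde submatrices, exactly as you sketch parenthetically.
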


\begin{proof} Shparlinski~\cite{Shparlinski} shows that $\rig_{F_N}(r) \geq (N-r)^2/(r+1)$; for completeness, we prove this below in \Cref{ShparlinskiLB}. It then follows that:
\begin{align*}
    (r+1) \cdot \left(r + \frac{\rig_{F_N}(r)}{N}\right) &\geq (r+1) \cdot \left(r + \frac{(N-r)^2}{(r+1)\cdot N}\right) \\
    &= \frac{1}{N}\left(N^2 + r^2 + Nr(r-1) \right) \\
    &\geq\frac{1}{N}\left(N^2\right) \\
    &= N.\qedhere
\end{align*}
\end{proof}

We next prove a Lemma which we will need in the proof of Shparlinski's rigidity lower bound.
\begin{lemma} \label{fourierriglemma}
For any positive integers $N>r \geq 0$, any integer $0 \leq k < n-r$, and any $S \subseteq [n]_0$ of size $|S|=r$, let $M_{k,S}$ be the $r \times r$ submatrix of $F_N$ consisting of the rows of $\{k, k+1, k+2, \ldots, k+r-1\}$ and the columns of $S$. Then, $M_{k,S}$ has full rank.
\end{lemma}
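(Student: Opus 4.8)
The plan is to exhibit $M_{k,S}$ as a Vandermonde matrix times an invertible diagonal matrix. Recall that $F_N[i,j] = \omega_N^{i\cdot j}$ with $\omega_N = e^{2\pi i/N}$ a primitive $N$-th root of unity, so the entries of $M_{k,S}$ are $M_{k,S}[a,s] = \omega_N^{(k+a)\cdot s}$ for $a \in \{0,1,\ldots,r-1\}$ and $s \in S$. (Here the hypothesis $k + r - 1 < N$ is used only to guarantee that $\{k,k+1,\ldots,k+r-1\}$ are legitimate row indices of $F_N$.)

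First I would factor each entry as $\omega_N^{(k+a)s} = \omega_N^{ks}\cdot(\omega_N^{s})^{a}$. Reading this off, $M_{k,S} = V \times D$, where $D \in \C^{r\times r}$ is the diagonal matrix with $D[s,s] = \omega_N^{ks}$ (rows and columns indexed by $S$), and $V \in \C^{r\times r}$ is the matrix with $V[a,s] = (\omega_N^{s})^{a}$, which is (a transpose of) the Vandermonde matrix on the nodes $\{\omega_N^{s} : s \in S\}$.

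Then I would check that both factors are invertible. The matrix $D$ is invertible because each diagonal entry $\omega_N^{ks}$ is a nonzero complex number. For $V$, since $\omega_N$ is a primitive $N$-th root of unity and the $r$ elements of $S$ are distinct residues in $[N]_0$, the nodes $\omega_N^{s}$ for $s\in S$ are pairwise distinct, so $\det(V) = \pm\prod_{s<s',\, s,s'\in S}(\omega_N^{s'} - \omega_N^{s}) \neq 0$. Hence $M_{k,S} = V\times D$ is a product of invertible $r\times r$ matrices, so it is itself invertible, i.e.\ it has full rank $r$, as claimed.

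There is no real obstacle here: the argument is a direct computation. The only points needing a moment's care are the index bookkeeping — identifying which index plays the role of the Vandermonde node and which is the exponent — and the observation that it is the distinctness of the column indices $s \in S$, not of the chosen consecutive row indices, that makes the Vandermonde determinant nonzero.
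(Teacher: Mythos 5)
Your proof is correct and is essentially the same argument as the paper's: both reduce to the fact that, up to rescaling columns by the nonzero factors $\omega_N^{ks}$, the submatrix is a Vandermonde matrix on the pairwise distinct nodes $\{\omega_N^{s} : s \in S\}$, whose nonsingularity you cite via the determinant product formula and the paper proves via the equivalent ``a nonzero polynomial of degree at most $r-1$ cannot have $r$ roots'' argument. If anything you are slightly more careful than the paper, which writes the entries as $\omega_N^{j\cdot s}$ for $j \in [r]_0$ and thereby silently drops the row offset $k$ that your diagonal factor $D$ handles explicitly.
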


\begin{proof}
Indexing the rows of $M_{k,s}$ by $[r]_0$ and the columns by $S$, we have for $j \in [r]_0$ and $s \in S$ that $M_{k,S}[j,s] = \omega_N^{j \cdot s} = (\omega_N^{s})^j$, where $\omega_N = e^{2 i \cdot \pi / N} \in \C$ is a primitive $N$th root of unity. Assume to the contrary that $M_{k,S}$ does not have full rank. Thus, there is a nontrivial linear combination of its rows summing to zero. This means that there are $a_0, a_1, \ldots, a_{r-1} \in \C$, which are not all $0$, such that, for each $s \in S$, we have $$\sum_{j=0}^{r-1} a_j \cdot (\omega_N^{s})^j = 0.$$ In other words, the $r$ different values $\{ \omega_N^s \mid s \in S\}$ are all roots of the polynomial $p(z) = \sum_{j=0}^{r-1} a_j \cdot z^j$. However, $p$ is a nonzero polynomial of degree at most $r-1$, so it cannot have $r$ roots, a contradiction.
\end{proof}

\begin{lemma}[\cite{Shparlinski}] \label{ShparlinskiLB}
For any positive integers $N > r \geq 0$, we have $\rig_{F_N}(r) \geq (N-r)^2/(r+1)$.
\end{lemma}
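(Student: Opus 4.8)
The plan is to run the standard Valiant-style ``untouched minor'' counting argument, taking care to invoke only submatrices of $F_N$ that are genuinely known to be nonsingular. Let $B$ be a matrix achieving the rigidity, so $\rank(F_N + B) \le r$ and $\nnz(B) = \rig_{F_N}(r) =: s$, and write $\tilde M := F_N + B$. The goal is to show $s \ge (N-r)^2/(r+1)$.

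First I would record the structural fact about $F_N$ that the counting rests on: for every $k$ with $0 \le k \le N - r - 1$ and every set $S'$ of at least $r+1$ columns, the submatrix of $F_N$ on rows $\{k, k+1, \ldots, k+r\}$ and columns $S'$ has rank $r+1$. This follows from exactly the Vandermonde computation in the proof of \Cref{fourierriglemma}, with $r$ replaced by $r+1$: any $(r+1)\times(r+1)$ square submatrix (pick $r+1$ columns of $S'$) factors as $\mathrm{diag}\big((\omega_N^{k s})_s\big)$ times the Vandermonde matrix with entries $(\omega_N^{s})^{j}$ for $j \in [r+1]_0$, whose nodes $\omega_N^{s}$ are distinct because $\omega_N$ is a primitive $N$th root of unity, hence it is invertible. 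I expect this to be the only delicate point: it is tempting but \emph{false} for composite $N$ to assert that \emph{every} $(r+1)\times(r+1)$ submatrix of $F_N$ is nonsingular (the ``spark of the DFT'' phenomenon), so the whole argument must be arranged to use only contiguous blocks of rows, which is precisely what \Cref{fourierriglemma} is designed to supply.

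Next, for each $k \in \{0,1,\ldots,N-r-1\}$ I would let $U_k := \{\, j : B[i,j] = 0 \text{ for all } i \in \{k,\ldots,k+r\}\,\}$ be the columns that are ``clean'' on the row-block $\{k,\ldots,k+r\}$. On those rows and columns, $\tilde M$ agrees entrywise with $F_N$, so the corresponding submatrix of $\tilde M$ has the same rank as that submatrix of $F_N$; by the structural fact, if $|U_k| \ge r+1$ that rank is $r+1$, contradicting $\rank(\tilde M) \le r$. Hence $|U_k| \le r$, so at least $N-r$ columns are ``dirty'' for the block $\{k,\ldots,k+r\}$, meaning each such column contains a nonzero entry of $B$ in some row $i \in \{k,\ldots,k+r\}$.

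Finally I would double-count $T := \sum_{k=0}^{N-r-1} \big|\{\, j : j \text{ is dirty for block } k \,\}\big|$. The previous step gives $T \ge (N-r)^2$. Re-summing over columns instead: for a fixed $j$, the set of blocks $k$ for which $j$ is dirty is contained in $\bigcup_{i : B[i,j]\neq 0} \{k : i - r \le k \le i\}$, a union of as many intervals of length $r+1$ as there are nonzero entries in column $j$ of $B$, so it has size at most $(r+1)\cdot(\text{number of nonzeros of } B \text{ in column } j)$; summing over $j$ yields $T \le (r+1)\,\nnz(B) = (r+1)s$. Combining the two bounds gives $(N-r)^2 \le (r+1)s$, i.e.\ $\rig_{F_N}(r) = s \ge (N-r)^2/(r+1)$, as claimed. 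The degenerate case $r=0$ is covered automatically: the row-blocks are single rows, $|U_k|\le 0$ forces every entry of $B$ to be nonzero, so $s = N^2 = (N-0)^2/(0+1)$.
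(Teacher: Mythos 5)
Your proposal is correct and follows essentially the same route as the paper: the contiguous-row Vandermonde nonsingularity fact (the paper's \Cref{fourierriglemma}, applied with $r+1$ in place of $r$), plus an averaging argument over the $N-r$ row-blocks $\{k,\ldots,k+r\}$ using the observation that each changed entry affects at most $r+1$ blocks. The only cosmetic difference is that you phrase the averaging as a double count of (block, dirty column) pairs, whereas the paper applies pigeonhole to select a single block $k^*$ with few changes; these are equivalent.
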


\begin{proof}
Suppose that one can change $t$ entries of $F_N$ to make its rank at most $r$. For $k \in [N-r]_0$, let $t_k$ be the number of changes which are in rows $\{k, k+1, k+2, \ldots, k+r\}$. Since each change contributes to at most $r+1$ of the $t_k$ values, we have that $\sum_{k=0}^{N-r-1} T_k \leq (r+1) \cdot t$. Thus, by the pigeonhole principle, there must be a $k^* \in [N-r]_0$ such that $t_{k^*} \leq (r+1) \cdot t / (N-r)$. Let $S \subseteq [N]_0$ be the columns of $F_N$ such that none of the changes in rows $\{k^*, k^*+1, k^*+2, \ldots, k^*+r\}$ is in a column of $S$. It must be that $|S| \leq r$, since otherwise, by \Cref{fourierriglemma}, the matrix $M_{k^*,S}$ has rank $r+1$ and we did not make any changes to it. On the other hand, by definition, $|S| \geq N - t_{k^*} \geq N - (r+1) \cdot t / (N-r)$. It follows that $r \geq N - (r+1) \cdot t / (N-r)$, which rearranges to the desired $t \geq (N-r)^2/(r+1)$.
\end{proof}

\subsection{Disjointness} \label{sec:disjconst}

Recall the Disjointness marix $R_n \in \F^{N \times N}$ from \Cref{subsubsec:matricesdefined}. The approach of \Cref{thm:cktfromrigidity} can be used to prove that $R_n$ has depth-$d$ linear circuits of size $N^{1 + (1-\eps)/d}$. However, since $R_n$ is very sparse (it has $\nnz(R_n) = 3^n \leq N^{1.585}$) it is almost immediate that it has depth-$d$ circuits of size $O(N^{1 + c/d})$ for $c = \log_2(1.5) < 0.585$. In fact, using a construction of Jukna and Sergeev~\cite{jukna2013complexity}, we can do even better than this, improving to $c<0.5432$. We give the construction in the remainder of this section.

\begin{lemma}[{\cite[Lemma~4.2]{jukna2013complexity}}]
Let $t = \log_2(1 + \sqrt{2}) < 1.28$. For any field $\F$ and positive integer $n$, there are matrices $A_n, B_n \in \F^{2^n \times 2^n}$ with $\nnz(A_n), \nnz(B_n) \leq O(2^{t \cdot n})$ such that $R_n = A_n \times B_n$. 
\end{lemma}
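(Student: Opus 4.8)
The plan is to give an explicit recursive construction of $A_n$ and $B_n$, exploiting that $R_n = R_1^{\otimes n}$ is a Kronecker power. The basic structural fact is the block identity $R_n = \begin{bmatrix} R_{n-1} & R_{n-1} \\ R_{n-1} & 0 \end{bmatrix}$, equivalently $R_n = R_1 \otimes R_{n-1}$, together with its two-step refinement $R_n = R_2 \otimes R_{n-2}$. Iterating $R_n = (R_1 \otimes I) \times (I \otimes R_{n-1})$ gives the standard depth-$\Theta(n)$ linear circuit of size $O(n 2^n)$, and collapsing it to depth $2$ in the obvious ``butterfly'' way (splitting the $n$ coordinates into two halves) produces $R_n = A_n \times B_n$ with $\nnz(A_n), \nnz(B_n) \approx 6^{n/2}$; since $\sqrt 6 > 1 + \sqrt 2$, this alone does not suffice, and some extra idea is needed.

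To beat the butterfly I would use the lopsided sparsity of the disjointness matrix on top of its recursive structure. The cleanest route is to produce, for some fixed constant $k$, a two-matrix factorization $R_k = P \times Q$ with $P, Q \in \F^{2^k \times 2^k}$ of full rank and with $\max\{\nnz(P), \nnz(Q)\}$ strictly below the butterfly's $6^{k/2}$, ideally at most $(1+\sqrt 2)^k$; one then bootstraps via $R_n = R_k^{\otimes \lfloor n/k \rfloor} \otimes R_{n \bmod k}$. Indeed, taking the $\lfloor n/k \rfloor$-th Kronecker power of the identity $R_k = P \times Q$, applying the mixed-product property (\Cref{mixedproductproperty}), and tensoring with a trivial factorization of $R_{n \bmod k}$ --- the same ``fixed construction implies a construction for all $n$'' pattern as \Cref{lem:fixedtobig} --- yields $R_n = A_n \times B_n$ with both $\nnz(A_n)$ and $\nnz(B_n)$ at most $\big(\max\{\nnz(P),\nnz(Q)\}\big)^{\lfloor n/k \rfloor} \cdot 3^k = O\big((1+\sqrt 2)^n\big) = O(2^{tn})$. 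The point is that almost all of the $3^k$ nonzeros of $R_k$ lie in the single row and the single column indexed by $0^k$, both all-ones; so two of the $2^k$ middle coordinates can reproduce the rank-$1$ ``frame'' of $R_k$ exactly, at a cost of only $O(2^k)$ nonzeros (negligible next to $(1+\sqrt 2)^k$), leaving a far sparser residual matrix --- $R_k$ with its $0^k$ row and column deleted --- to be handled by a butterfly on a smaller disjointness matrix, or by iterating this ``peeling'' a constant number of times.

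The hard part is the last ingredient: pinning down an explicit constant $k$ and a gadget with $\max\{\nnz(P), \nnz(Q)\} \le (1+\sqrt 2)^k$ --- a delicate finite optimization over factorizations of a fixed small matrix. The fact that the target rate is exactly the silver ratio $1 + \sqrt 2$, the positive root of $x^2 = 2x + 1$, strongly hints that the sharp construction is governed by a two-term recursion $a_n = 2 a_{n-1} + a_{n-2}$; so an alternative route is to define $(A_n, B_n)$ directly by such a recursion off the block identity $R_n = \begin{bmatrix} R_{n-1} & R_{n-1} \\ R_{n-1} & 0 \end{bmatrix}$, the subtlety being to realize it while keeping the depth equal to $2$ rather than $\Theta(n)$. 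Finally, since $R_k$ and every intermediate matrix in the construction have entries in $\{-1, 0, 1\}$, the factorization is valid verbatim over every field $\F$, including those of characteristic $2$.
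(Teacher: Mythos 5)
There is a genuine gap here: you have correctly diagnosed the shape of the problem (the silver ratio $1+\sqrt2$, root of $x^2=2x+1$, signals a two-term recursion, and a fixed gadget could in principle be tensored up via \Cref{lem:fixedtobig} and \Cref{mixedproductproperty}), but the one ingredient that actually carries the proof --- the construction achieving rate $1+\sqrt2$ --- is never supplied, and the heuristic you offer for it is based on a false premise. You assert that ``almost all of the $3^k$ nonzeros of $R_k$ lie in the single row and the single column indexed by $0^k$''; in fact those contain only $2^{k+1}-1$ of the $3^k$ ones, a vanishing fraction, and the residual matrix after deleting that row and column still has $3^k-2^{k+1}+1$ nonzeros and is not a smaller disjointness matrix. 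So the ``peeling'' step does not produce a far sparser residual, and iterating it a constant number of times cannot bridge the gap from $\sqrt6$ down to $1+\sqrt2$. A secondary issue: even granting a fixed-$k$ gadget with $\max\{\nnz(P),\nnz(Q)\}\le c\,(1+\sqrt2)^k$ for some constant $c>1$, Kronecker powering yields $c^{n/k}(1+\sqrt2)^n$, i.e.\ $O(2^{(t+\eps)n})$ rather than $O(2^{tn})$, unless the gadget meets the rate exactly.

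The actual proof runs the recursion on all of $R_n$ at once rather than on a fixed block. One recursively partitions the $1$s of $R_n$ into all-$1$s combinatorial rectangles of two shapes --- $s\times s$ squares and $2s\times s$ rectangles --- using the block identity: the three copies of an $s\times s$ square in $R_{n-1}$ regroup inside $R_n$ into one $s\times s$ square and one $2s\times s$ rectangle, and the three copies of a $2s\times s$ rectangle regroup into one $2s\times s$ rectangle and one $2s\times 2s$ square. Writing $s_n$ and $r_n$ for the total (shorter) side-lengths of the squares and rectangles, one gets $\binom{s_n}{r_n}=\bigl[\begin{smallmatrix}1&2\\1&1\end{smallmatrix}\bigr]\binom{s_{n-1}}{r_{n-1}}$, whose dominant eigenvalue is $1+\sqrt2$, so the total side-length is $O((1+\sqrt2)^n)$. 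A partition into all-$1$s rectangles $I_i\times J_i$ converts directly into a depth-$2$ factorization $R_n=A_n\times B_n$ with one middle coordinate per rectangle, costing $|I_i|+|J_i|$ nonzeros, which gives the bound. This ``partition into monochromatic rectangles, tracked by shape'' idea is the piece missing from your argument.
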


\begin{proof}
We show how to partition the $1$s of $R_n$ into squares (all-1s combinatorial rectangles with the same number of rows and columns) and rectangles (all-1s combinatorial rectangles with \emph{twice as many} rows as columns). Our partition is defined recursively. Let $s_n$ be the sum of the side-lengths of the squares in the partition of $R_n$, and let $r_n$ be the sum of the shorter side-lengths of the rectangles. For $$R_1 := \begin{bmatrix}
1  & 1    \\
1  & 0 
\end{bmatrix},$$
we can see that $s_1=r_1=1$. Next, from the recursive definition $$R_n := \begin{bmatrix}
R_{n-1}  & R_{n-1}    \\
R_{n-1}  & 0 
\end{bmatrix},$$
we see that the three copies of any $s \times s$ square in $R_{n-1}$ can be partitioned into a $s \times s$ square and a $2s \times s$ rectangle in $R_n$, and the three copies of any $2s \times s$ rectangle in $R_{n-1}$ can be partitioned into a $2s \times s$ rectangle and a $2s \times 2s$ square in $R_n$. It follows that we get the recurrence
$$\begin{bmatrix}
s_n    \\
r_n 
\end{bmatrix} = \begin{bmatrix}
1  & 2    \\
1  & 1 
\end{bmatrix} \times \begin{bmatrix}
s_{n-1}    \\
r_{n-1} 
\end{bmatrix}.$$ Since the matrix $\begin{bmatrix}
1  & 2    \\
1  & 1 
\end{bmatrix}$ has eigenvalues $1 \pm \sqrt{2}$, it follows that $s_n, r_n \leq O((1 + \sqrt{2})^n)$. We have thus written the $1$s of $R_n$ as a disjoint sum of combinatorial rectangles whose side-lengths sum to $O((1 + \sqrt{2})^n) = O(2^{t \cdot n})$, from which the result follows.
\end{proof}

Following the same construction as \Cref{thm:cktfromrigidity}, we get:

\begin{proposition}
For any field $\F$ and any positive integers $n,d$, let $N = 2^n$ and let $c = 2(\log_2(1+\sqrt{2}) - 1) < 0.5432$. There are $d$ matrices $A_{n,1}, \ldots, A_{n,d}$ such that $R_n = \prod_{j=1}^d A_{n,j}$ and $\nnz(A_{n,j}) \leq O(N^{1 + c/d})$ for all $j \in [d]$.
\end{proposition}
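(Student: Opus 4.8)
The plan is to run the construction in the proof of \Cref{thm:cktfromrigidity} essentially verbatim, but seeded by the sparse depth-$2$ factorization $R_m = A_m \times B_m$ of the preceding lemma (\cite[Lemma~4.2]{jukna2013complexity}) in place of one coming from a rigidity bound. The one simplification is that, because $R_m^{\otimes d} = R_{md}$, applying the seed factorization at scale $m = n/d$ (when $d \mid n$) already produces $R_n$ after the symmetrization step, so there is no need for the final ``scale-up'' via \Cref{lem:fixedtobig}.

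So, assume first that $d \mid n$, write $t := \log_2(1+\sqrt 2)$ and $q := 2^{n/d}$, and note $R_n = R_{n/d}^{\otimes d}$ with $R_{n/d} \in \F^{q \times q}$. The preceding lemma gives $R_{n/d} = A_{n/d} \times B_{n/d}$ with $\nnz(A_{n/d}), \nnz(B_{n/d}) \le O(2^{t \cdot n/d}) = O(q^{t})$; moreover $R_{n/d}$ is symmetric (since $R_1$ is), so also $R_{n/d} = B_{n/d}^{T} \times A_{n/d}^{T}$, which supplies the ``reversed'' factorization that \Cref{rigtosparsubrem} supplies in the proof of \Cref{thm:cktfromrigidity}. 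Now I carry out exactly the symmetrization of that proof, with $M = R_{n/d}$ and with $A_{n/d}, B_{n/d}, B_{n/d}^{T}, A_{n/d}^{T}$ playing the roles of $B, C, C', B'$ there: write $R_{n/d}$ in the $d$ ``shifted'' ways as a length-$d$ product containing a single copy of $A_{n/d}$ and of $B_{n/d}$ (with the wrap-around way for $j = d$), take the Kronecker product of all $d$ of these identities, and apply the mixed-product property (\Cref{mixedproductproperty}). This yields permutation matrices $P_j, P'_j$ and a factorization $R_n = R_{n/d}^{\otimes d} = \prod_{j=1}^{d} A_{n,j}$ in which each $A_{n,j}$ is a permutation-conjugate of a Kronecker product of one of $A_{n/d}, A_{n/d}^{T}$, one of $B_{n/d}, B_{n/d}^{T}$, and $I_{q^{d-2}}$; in particular $\nnz(A_{n,j}) = \nnz(A_{n/d}) \cdot \nnz(B_{n/d}) \cdot q^{d-2} \le O(q^{2t + d - 2})$.

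It then remains to read off the exponent: $q^{2t+d-2} = 2^{(n/d)(2t+d-2)} = 2^{\,n + (2t-2)n/d} = N \cdot N^{(2t-2)/d} = N^{1 + c/d}$ with $c = 2(t-1) = 2(\log_2(1+\sqrt 2) - 1) < 0.5432$, so each $\nnz(A_{n,j}) \le O(N^{1+c/d})$, as desired. When $d \nmid n$, set $n'' := d \lceil n/d \rceil \le n + d$, apply the case just handled to $R_{n''}$, and note that $R_n$ is the top-left $2^{n} \times 2^{n}$ block of $R_{n''} = R_{n''-n} \otimes R_n$ (using $R_{n''-n}[0,0] = 1$); writing that block as $\Pi_{\mathrm{r}} \times R_{n''} \times \Pi_{\mathrm{c}}$ for the coordinate-selection $0/1$ matrices $\Pi_{\mathrm{r}}, \Pi_{\mathrm{c}}$ and folding $\Pi_{\mathrm{r}}$ into the first factor and $\Pi_{\mathrm{c}}$ into the last gives a depth-$d$ factorization of $R_n$ whose factors have sparsity $O((2^{n''})^{1+c/d}) = O(N^{1+c/d})$ (the hidden constant now depending on $d$, which is harmless for the stated bound; a $d$-uniform constant can be recovered by spreading the $n''-n$ extra coordinates across the factors as in the last paragraph of the proof of \Cref{thm:cktfromrigidity}).

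I do not anticipate a genuine obstacle: essentially all the work has already been done in \Cref{thm:cktfromrigidity}, and the only input particular to $R_n$ is the sparse depth-$2$ factorization from \cite[Lemma~4.2]{jukna2013complexity} together with the symmetry of $R_n$ (which lets the wrap-around factor keep the same sparsity as the others). The two points that want a little care are pure bookkeeping: checking that the symmetrized product equals $R_n$ on the nose rather than up to a row/column permutation — this is exactly what the matrices $P_j, P'_j$ from the non-commutativity of the Kronecker product absorb, just as in \Cref{thm:cktfromrigidity} — and the $d \nmid n$ padding, which likewise mirrors that proof.
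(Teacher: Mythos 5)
Your proposal is correct and is essentially the paper's own argument: the paper proves this proposition in one line by ``following the same construction as \Cref{thm:cktfromrigidity}'' seeded with the Jukna--Sergeev factorization, which is exactly what you carry out, with the symmetry of $R_{n/d}$ correctly supplying the wrap-around factorization that \Cref{rigtosparsubrem} provides in the rigidity setting. Your choice to apply the seed factorization at scale $n/d$ directly (rather than at a fixed scale followed by \Cref{lem:fixedtobig}) is a sensible refinement that keeps the hidden constant in $O(2^{tn})$ from being raised to a power of $n$, and your handling of $d \nmid n$ matches the paper's padding argument.
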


\section{Rigidity of Disjointness}

Recall that $R_1 := \begin{bmatrix}
1  & 1    \\
1  & 0 
\end{bmatrix}$ and $R_n := R^{\otimes n}$. For $x,y \in \{0,1\}^n$, we can equivalently define:
$$R_n[x,y] = \begin{cases} 0 &\text{ if there is an $\ell\in [n]_0$ such that $x[\ell]=y[\ell]=1$,} \\ 1 &\text{ otherwise.}  \end{cases}$$

For positive integers $k\leq n$, write $\binom{n}{<k} := \sum_{i=0}^{k-1} \binom{n}{i}$ and $\binom{n}{\leq k} := \sum_{i=0}^{k} \binom{n}{i}$. By standard bounds, we have that $$\binom{n}{< k} \leq \binom{n}{\leq k} \leq 2^{n},$$ and if $k \leq n/2$, then $$\binom{n}{< k} \leq \binom{n}{\leq k} \leq 2^{n \cdot H(n/k)},$$ where $H(p)$ is the binary entropy function.

\begin{lemma} \label{lem:fewrows}
For any positive integers $k \leq n$, we can remove $\binom{n}{< k}$ rows and $\binom{n}{< k}$ columns of $R_n$, so that the number of nonzero entries in any row or column of the resulting matrix is at most $\binom{n-k}{\leq n-2k}$.
\end{lemma}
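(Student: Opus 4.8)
The plan is to delete exactly the rows and columns indexed by low-Hamming-weight vectors. The starting observation is the structure of the nonzero pattern of $R_n$: by the combinatorial description recalled just above, $R_n[x,y]=1$ precisely when the supports of $x$ and $y$ are disjoint, i.e.\ when $y$, viewed as a subset of $[n]$, is contained in the complement of $x$. Hence row $x$ of $R_n$ has exactly $2^{\,n-\nnz(x)}$ nonzero entries, and the same holds for columns. So the dense rows and columns are exactly those indexed by vectors of small weight, and it is natural to remove them.

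Concretely, I would delete every row $x$ with $\nnz(x) < k$ and every column $y$ with $\nnz(y)<k$. For each value $i \in \{0,1,\dots,k-1\}$ there are $\binom{n}{i}$ vectors of weight $i$, so this removes $\sum_{i=0}^{k-1}\binom{n}{i} = \binom{n}{<k}$ rows and $\binom{n}{<k}$ columns, matching the claimed counts. Now fix a surviving row $x$, so $\nnz(x)\geq k$. Its surviving nonzero entries lie in columns $y$ that are disjoint from $x$ and have $\nnz(y)\geq k$; these are exactly the subsets of the complement of $x$ — a set of size $m := n-\nnz(x) \leq n-k$ — of size at least $k$. Using $\binom{m}{i}=\binom{m}{m-i}$, the number of such $y$ is $\sum_{i=k}^{m}\binom{m}{i} = \sum_{j=0}^{m-k}\binom{m}{j} = \binom{m}{\leq m-k}$.

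It then remains to bound $\binom{m}{\leq m-k}$ by $\binom{n-k}{\leq n-2k}$ for every $m \leq n-k$. I would do this by showing that $g(m) := \sum_{i=k}^{m}\binom{m}{i}$ is nondecreasing in $m$: Pascal's rule gives $g(m+1) = \sum_{i=k}^{m+1}\big(\binom{m}{i}+\binom{m}{i-1}\big) = g(m) + \sum_{i=k-1}^{m}\binom{m}{i} \geq g(m)$. Since $m\leq n-k$, this yields $\binom{m}{\leq m-k} = g(m) \leq g(n-k) = \sum_{i=k}^{n-k}\binom{n-k}{i} = \binom{n-k}{\leq n-2k}$, again by symmetry of binomial coefficients. (When $k>n/2$ the bound degenerates to $0$, and indeed a surviving row $x$ has complement of size $n-\nnz(x)<k$, so it keeps no nonzero entries — consistent.) The column bound is identical, since $R_n$ is a symmetric matrix (disjointness is a symmetric relation) and we removed the same weight threshold on columns.

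I do not expect a genuine obstacle here; the only place that needs care is the final re-indexing and monotonicity step — getting the direction of the inequality right when passing from $\binom{m}{\leq m-k}$ with $m\leq n-k$ up to $\binom{n-k}{\leq n-2k}$, and handling the degenerate range $k>n/2$ cleanly.
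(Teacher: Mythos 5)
Your proposal is correct and follows essentially the same route as the paper: remove the rows and columns indexed by vectors of Hamming weight below $k$, then count the surviving nonzero entries of a row $x$ as the subsets of size at least $k$ of the complement of the support of $x$. The only difference is that you make explicit (via Pascal's rule) the monotonicity step bounding $\sum_{i \geq k}\binom{m}{i}$ for $m \leq n-k$ by the same sum at $m = n-k$, which the paper's proof uses implicitly.
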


\begin{proof}
Our construction is as follows. We remove the rows corresponding to $x \in \{0,1\}^k$ with $|x| < k$, so that the number we remove is indeed $\binom{n}{<k}$. We similarly remove the $\binom{n}{<k}$ columns corresponding to $y \in \{0,1\}^k$ with $|y| < k$.

Now, consider any $x \in \{0,1\}^n$ corresponding to a row we have not removed. Thus, $|x| \geq k$. For a given $y \in \{0,1\}^n$ which corresponds to a column we have not removed (and hence with $|y| \geq k$), we have $R_n[x,y]=1$ if and only if there is no $\ell \in [n]_0$ such that $x[\ell]=y[\ell]=1$. In other words, defining $S_x := \{\ell \in [n]_0 \mid x[\ell]=1\}$ and $S_y := \{\ell \in [n]_0 \mid y[\ell]=1\}$ (which are in bijection with $x$ and $y$), we have that $|S_y| \geq k$ and $S_y \subseteq [n]_0 \setminus S_x$. Since $|[n]_0 \setminus S_x| = n - |x| \geq n-k$, the number of choices for $y$ is hence at most the number of ways to choose a set $S_y$ of size at least $k$ from a set $[n]_0 \setminus S_x$ of size at most $n-k$, which is $$\sum_{j=k}^{n-k} \binom{n-k}{j} = \sum_{j=0}^{n-2k} \binom{n-k}{j+k} = \sum_{j=0}^{n-2k} \binom{n-k}{n - 2k - j} = \binom{n-k}{\leq n-2k}.$$ The bound on the number of nonzero entries in a column is identical.
\end{proof}

\begin{theorem} \label{thm:rnotrigid}
For any field $\F$, positive integer $n$, and $a \in (0,1)$, we have $\rig_{R_n}^{rc}\left(2 \cdot \binom{n}{<an}\right) \leq \binom{(1-a)n}{\leq (1-2a)n}$ over $\F$. In particular:
\begin{itemize}
    \item For any $\eps > 0$ we have $\rig_{R_n}^{rc}(2^{(\eps \log_2(1/\eps) + O(\eps)) \cdot n}) \leq 2^{(1-\eps) \cdot n}$, and
    \item For sufficiently small $\eps>0$ we have $\rig_{R_n}^{rc}(2^{(1 - \Theta(\eps^2 / \log^2(1/\eps)) \cdot n}) \leq 2^{\eps \cdot n}$, and
    \item We have $\rig_{R_n}^{rc}(O(2^{0.981 \cdot n})) < o( 2^{n/2} )$.
\end{itemize}
\end{theorem}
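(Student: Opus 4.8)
The plan is to read the main inequality off \Cref{lem:fewrows}, using only the trivial fact that a matrix supported on $m$ rows together with $m'$ columns has rank at most $m+m'$. Fix $a \in (0,1)$, set $k := \lceil an \rceil$, and split $R_n = D_n + S_n$: here $S_n$ is the matrix produced by \Cref{lem:fewrows} (i.e. $R_n$ with the $\binom{n}{<k}$ rows indexed by $x$ with $|x|<k$ and the $\binom{n}{<k}$ columns indexed by $y$ with $|y|<k$ zeroed out), and $D_n := R_n - S_n$ is $R_n$ restricted to exactly those removed rows and columns. Then $\rank(D_n) \le 2\binom{n}{<k}$, since $D_n$ is a sum of a matrix with at most $\binom{n}{<k}$ nonzero rows and a matrix with at most $\binom{n}{<k}$ nonzero columns, while $\nnzr(S_n), \nnzc(S_n) \le \binom{n-k}{\le n-2k}$ by \Cref{lem:fewrows}. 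Taking $B := -S_n$, we get $\rank(R_n + B) = \rank(D_n) \le 2\binom{n}{<k}$ and $\max\{\nnzr(B), \nnzc(B)\} \le \binom{n-k}{\le n-2k}$, which is precisely $\rig_{R_n}^{rc}\!\big(2\binom{n}{<k}\big) \le \binom{n-k}{\le n-2k}$. Rounding $an$ up to $k$ only perturbs the bounds by $\poly(n)$ factors, which is absorbed into all the asymptotic statements below, so there I will freely assume $an \in \Z$.

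The three itemized consequences then follow by choosing $a$ well and invoking the standard binomial estimates $\binom{m}{\le j} \le 2^m$ and $\binom{n}{<k} \le 2^{n \cdot H(k/n)}$ (the latter valid for $k \le n/2$) recalled just before \Cref{lem:fewrows}. For the first bullet I would take $a = \eps$: the error is $\binom{(1-\eps)n}{\le (1-2\eps)n} \le 2^{(1-\eps)n}$, and the rank is $2\binom{n}{<\eps n} \le 2^{1 + n \cdot H(\eps)}$ with $H(\eps) = \eps\log_2(1/\eps) + O(\eps)$ by the expansion of $H$ at $0$. For the second bullet I would take $a = \tfrac12 - \delta$ with $\delta = \Theta(\eps/\log(1/\eps))$; then $1 - 2a = 2\delta$ and $1 - a = \tfrac12 + \delta$, the error is at most $2^{(\frac12 + \delta) n \cdot H(2\delta/(\frac12 + \delta))} \le 2^{O(\delta \log(1/\delta)) \cdot n} \le 2^{\eps n}$ using $H(x) = O(x\log(1/x))$ for small $x$, and the rank is $2\binom{n}{<(\frac12 - \delta)n} \le 2^{1 + n \cdot H(\frac12 - \delta)} = 2^{(1 - \Theta(\delta^2))n}$ using $H(\tfrac12 - \delta) = 1 - \Theta(\delta^2)$ (which is \Cref{binentropy}(3) for $q = 2$ together with the matching lower bound on $H(\tfrac12) - H(\tfrac12 - \delta)$); since $\delta^2 = \Theta(\eps^2/\log^2(1/\eps))$ this is exactly the claimed shape. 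For the third bullet I would choose $a$ as large as possible subject to $2\binom{n}{<an} = O(2^{0.981 n})$, i.e. the root $a \approx 0.419$ of $H(a) = 0.981$ (so in particular $a > 1/3$, which is needed for the entropy bound on the error to apply), and then check numerically that $(1-a) \cdot H\!\big(\tfrac{1-2a}{1-a}\big) < \tfrac12$, so that $\binom{(1-a)n}{\le (1-2a)n} \le 2^{(1-a)n \cdot H((1-2a)/(1-a))} = o(2^{n/2})$; the exponent comes out to roughly $0.498$.

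I do not expect a real obstacle: once \Cref{lem:fewrows} is available, the main inequality is immediate, and the corollaries amount to substitution plus routine binomial-coefficient bookkeeping. The only subtle point is the optimization behind the second bullet, where the rank exponent $1 - \Theta(\delta^2)$ must be traded off against the error exponent $\Theta(\delta \log(1/\delta))$; extracting the $\eps^2/\log^2(1/\eps)$ form requires using the quadratic decay of $H$ near $\tfrac12$ (via \Cref{binentropy}) and the $x\log(1/x)$ growth of $H$ near $0$ simultaneously, and then setting $\delta \asymp \eps/\log(1/\eps)$. For the third bullet, the one thing that needs verification is that the numerics genuinely put the error exponent below $\tfrac12$ — which they do, though only barely.
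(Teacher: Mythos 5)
Your proposal is correct and follows essentially the same route as the paper: the main inequality is exactly \Cref{lem:fewrows} plus the observation that zeroing a row or column is a rank-one update, and your parameter choices for the three bullets ($a=\eps$, $a=\tfrac12-\delta$ with $\delta=\Theta(\eps/\log(1/\eps))$, and $a\approx 0.418$ near the threshold where the error exponent hits $\tfrac12$) match the paper's, including the numerics for the third bullet.
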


\begin{proof}
This follows from setting $k = a \cdot n$ in \Cref{lem:fewrows}, since setting one row or column of a matrix to zero is a rank-one update. For the particular parameter settings:

To see that $\rig_{R_n}^{rc}(2^{(\eps \log_2(1/\eps) + O(\eps)) \cdot n}) \leq 2^{(1-\eps) \cdot n}$, pick $a = \eps$. In that case, $2 \cdot \binom{n}{<\eps n} \leq 2^{H(\eps) \cdot n} \cdot \poly(n) \leq 2^{(\eps \log_2(1/\eps) + O(\eps)) \cdot n}$, and $\binom{(1-\eps)n}{\leq (1-2\eps)n} \leq 2^{(1-\eps)n}$.

To see that $\rig_{R_n}^{rc}(2^{(1 - \Theta(\eps^2 / \log^2(1/\eps)) \cdot n}) \leq 2^{\eps \cdot n}$, pick $a = 1/2 - \delta$ for an appropriate $\delta>0$ we will determine shortly. In that case, $2 \cdot \binom{n}{<(1/2 - \delta) \cdot n} \leq 2^{H(1/2 - \delta) \cdot n} \leq 2^{(1 - \Theta(\delta^2)) \cdot n}$, and $\binom{(1/2+\delta)n}{\leq 2\delta n} \leq 2^{(1/2 + \delta) \cdot H(4\delta/(1+2\delta)) \cdot n} \leq 2^{\Theta(\delta \log(1/\delta)) \cdot n}$. The result follows by picking $\delta$ such that the quantity $\Theta(\delta \log(1/\delta))$ in the sparsity bound is equal to $\eps$. In that case, $\delta^2 = \Theta(\eps^2 / \log^2(1/\eps))$.

To see that $\rig_{R_n}^{rc}(2^{0.89 \cdot n}) \leq \eps \cdot 2^{n/2}$, let $a^* \approx 0.4178$ be the larger solution in $[0,1/2]$ to $(1-a) \cdot H((1-2a)/(1-a)) = 1/2$. Then, for any $a>a^*$ it follows that $\binom{(1-a)n}{\leq (1-2a)n} < o( 2^{n/2})$, and $\binom{n}{<an} \leq 2^{H(a) \cdot n} \leq O(2^{H(a^*) \cdot n}) < O(2^{0.981 \cdot n})$.
\end{proof}

\section{Expressing Other Matrices In Terms Of Disjointness}

\begin{definition}
For any field $\F$, positive integer $n$, and function $f : \{0,1\}^n \to \F$, let $V_f \in \F^{2^n \times 2^n}$ denote the matrix which is given by, for $x,y \in \{0,1\}^n$, $V_f[x,y] := f(x \vee y)$, where `$x \vee y$' denotes the bit-wise OR of $x$ and $y$.
\end{definition}

\begin{definition}
For any field $\F$, positive integer $n$, and function $f : \{0,1\}^n \to \F$, let $a_f \in \F^{2^n}$ denote the vector with, for $z \in \{0,1\}^n$, the entry $a_f[z] := f(z)$. Let $b_f \in \F^{2^n}$ be the vector $b_f := R_n^{-1} \times a_f$. Let $D_f \in \F^{2^n \times 2^n}$ be the diagonal matrix of the entries of $b_f$, meaning for $z \in \{0,1\}^n$, we have $D_f[z,z] := b_f[z]$.
\end{definition}

\begin{lemma} \label{diagexpression}
For any field $\F$, positive integer $n$, and function $f : \{0,1\}^n \to \F$, we have $$V_f = R_n \times D_f \times R_n.$$
\end{lemma}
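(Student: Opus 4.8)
The plan is to verify the claimed identity entrywise. First I would record the one structural fact we need about $R_n$: since $R_1 = \begin{bmatrix} 1 & 1 \\ 1 & 0\end{bmatrix}$ has determinant $-1$, it is invertible over every field, and therefore so is $R_n = R_1^{\otimes n}$ (its determinant is a power of $-1$, up to sign from reordering). Hence $b_f := R_n^{-1} \times a_f$ is well-defined over any $\F$, i.e.\ $b_f$ is the unique vector with $R_n \times b_f = a_f$. I would also note that $R_1$, hence $R_n$, is symmetric.

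The heart of the argument is the combinatorial identity $R_n[x,z]\cdot R_n[z,y] = R_n[x\vee y, z]$ for all $x,y,z\in\{0,1\}^n$. Using the description of $R_n$ recalled at the start of \Cref{sec:bigproof} (identifying each $\{0,1\}^n$ vector with its support set), $R_n[x,z]=1$ exactly when the support of $z$ is disjoint from the support of $x$, and similarly for $R_n[z,y]$; since all entries lie in $\{0,1\}$, the product $R_n[x,z]\cdot R_n[z,y]$ equals $1$ iff $z$'s support is disjoint from both $x$'s and $y$'s supports, which happens iff it is disjoint from the support of $x\vee y$, i.e.\ iff $R_n[x\vee y,z]=1$.

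Given this, I would expand the matrix product: for any $x,y\in\{0,1\}^n$,
$$\big(R_n \times D_f \times R_n\big)[x,y] = \sum_{z\in\{0,1\}^n} R_n[x,z]\cdot D_f[z,z]\cdot R_n[z,y] = \sum_{z\in\{0,1\}^n} R_n[x\vee y, z]\cdot b_f[z],$$
using that $D_f$ is diagonal with $D_f[z,z] = b_f[z]$ and then the identity above. The right-hand side is precisely the $(x\vee y)$-th coordinate of $R_n \times b_f$, which equals $a_f[x\vee y] = f(x\vee y) = V_f[x,y]$ by the defining property of $b_f$. Since $x,y$ were arbitrary, this proves $V_f = R_n \times D_f \times R_n$.

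\textbf{Main obstacle.} There is no real obstacle here; the only points requiring care are justifying that $R_n$ is invertible over an arbitrary field (so that $D_f$ makes sense) and getting the combinatorial identity $R_n[x,z]R_n[z,y] = R_n[x\vee y,z]$ right, since it is the one place the specific structure of the disjointness matrix enters.
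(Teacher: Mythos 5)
Your proof is correct and matches the paper's argument essentially line for line: both expand $(R_n \times D_f \times R_n)[x,y]$ entrywise and use the observation that $z$ is disjoint from both $x$ and $y$ iff it is disjoint from $x \vee y$, then apply $R_n \times b_f = a_f$. Your added remark that $R_n$ is invertible (so $b_f$ is well-defined) is a sensible extra check that the paper leaves implicit in the definition of $b_f$.
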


\begin{proof}
Recall that for $x,y \in \{0,1\}^n$, $$R_n[x,y] = \begin{cases} 1 &\text{ if } \langle x,y \rangle_{\Z} = 0, \\ 0 &\text{ otherwise.}  \end{cases}$$ It follows that, for any $x,y \in \{0,1\}^n$:
\begin{align*}
    (R_n \times D_f \times R_n)[x,y] 
    &= \sum_{z \in \{0,1\}^n}  R_n[x,z] \cdot D_f[z,z] \cdot R_n[z,y]\\
    &= \sum_{\overset{z \in \{0,1\}^n}{\langle x,z \rangle_{\Z} = \langle z,y \rangle_{\Z} = 0}}  D_f[z,z] \\
    &= \sum_{\overset{z \in \{0,1\}^n}{\langle (x \vee y),z \rangle_{\Z} = 0}}  D_f[z,z] \\
    &= \sum_{\overset{z \in \{0,1\}^n}{\langle (x \vee y),z \rangle_{\Z} = 0}}  b_f[z] \\
    &= \sum_{z \in \{0,1\}^n}  R_n[(x \vee y),z] \cdot b_f[z] \\
    &= (R_n \times b_f)[(x \vee y)] \\
    &= a_f[(x \vee y)] \\
    &= f(x \vee y) \\
    &= V_f[x,y], \\
\end{align*}
as desired.
\end{proof}

\begin{lemma} \label{lem:outer1tofunct}
For any field $\F$, positive integer $n$, and outer-1 matrices $M_1, \ldots M_n \in \F^{2 \times 2}$, there is a function $f : \{0,1\}^n \to \F$ and permutation matrices $\Pi_n, \Pi'_n \in \F^{2^n \times 2^n}$ such that $$\bigotimes_{i=1}^n M_i = \Pi_n \times V_f \times \Pi'_n.$$
\end{lemma}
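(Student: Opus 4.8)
The plan is to reduce the whole statement to a change of variables by bit-complementation. First I would record the shape of the objects involved. Since each $M_\ell$ is outer-$1$, it is completely determined by the single scalar $\omega_\ell := M_\ell[1,1] \in \F$, and one checks directly that $M_\ell[i,j] = \omega_\ell^{\,i \wedge j}$ for $i,j \in \{0,1\}$ (with the convention $0^0 = 1$), where $\wedge$ denotes logical AND of bits. Consequently, writing entries of $\bigotimes_{\ell=1}^n M_\ell$ by vectors $x,y \in \{0,1\}^n$ and using the definition of the Kronecker product,
$$\left(\bigotimes_{\ell=1}^n M_\ell\right)[x,y] = \prod_{\ell=1}^n M_\ell[x[\ell],y[\ell]] = \prod_{\ell=1}^n \omega_\ell^{\,x[\ell] \wedge y[\ell]} = \prod_{\ell \,:\, x[\ell]=y[\ell]=1} \omega_\ell,$$
so this entry depends on $x$ and $y$ only through the coordinate-wise AND $x \wedge y$.

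Next I would introduce the function and the permutation. Define $f : \{0,1\}^n \to \F$ by $f(z) := \prod_{\ell \,:\, z[\ell]=0} \omega_\ell$, and let $P \in \{0,1\}^{2^n \times 2^n} \subseteq \F^{2^n \times 2^n}$ be the permutation matrix of the involution $x \mapsto \bar x$ that complements every coordinate, i.e.\ $P[x,y]=1$ iff $y=\bar x$. The claim to verify is then
$$\bigotimes_{\ell=1}^n M_\ell = P \times V_f \times P,$$
so that one may take $\Pi_n = \Pi'_n = P$. Indeed, left-multiplication by $P$ complements the row index and right-multiplication by $P$ complements the column index, so $(P \times V_f \times P)[x,y] = V_f[\bar x,\bar y] = f(\bar x \vee \bar y)$. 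By De Morgan, $\bar x \vee \bar y = \overline{x \wedge y}$, and therefore $f(\bar x \vee \bar y) = f(\overline{x \wedge y}) = \prod_{\ell \,:\, (x \wedge y)[\ell]=1}\omega_\ell$, which is exactly the entry computed in the first step. (As $x,y$ range over $\{0,1\}^n$, so does $x \wedge y$ — take $x=y$ — so no information is lost and $f$ is genuinely an unrestricted function of its argument.)

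Honestly, there is no serious obstacle here: the entire content of the lemma is the observation that an outer-$1$ Kronecker product is a function of the coordinate-wise AND of its two indices, that $V_f$ is the analogous ``function of the OR'', and that complementing bits converts one into the other. The only two things to be careful about are (i) getting the direction of the complement right so that $V_f$ itself, and not its transpose, appears, and (ii) confirming that $P$ is a permutation matrix over $\F$ in the paper's sense (exactly one nonzero entry, equal to $1$, per row and column), which is immediate since $x \mapsto \bar x$ is a bijection of $\{0,1\}^n$. This lemma is precisely the bridge that, combined with \Cref{diagexpression} together with the diagonal-reduction results \Cref{reducedim} and \Cref{rigiditydiagonal}, allows an arbitrary Kronecker product $\bigotimes_\ell M_\ell$ to be written in the form \Cref{eqn:decomp}, and hence to inherit the non-rigidity of $R_n$.
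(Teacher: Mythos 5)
Your proof is correct and takes essentially the same route as the paper: the paper permutes each $2\times 2$ factor (swapping indices $0$ and $1$), which is exactly your global bit-complementation permutation $P = \bigl(\begin{smallmatrix}0&1\\1&0\end{smallmatrix}\bigr)^{\otimes n}$, and its function $f(z)=\prod_i g_i(z[i])$ with $g_i(0)=\omega_i$, $g_i(1)=1$ is exactly your $f(z)=\prod_{\ell: z[\ell]=0}\omega_\ell$. Your version is just a more explicit entrywise verification of the same identity.
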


\begin{proof}
For each $i \in [n]$, let $\omega_i \in \F$ be the element such that $$M_i = \begin{bmatrix}
1 & 1 \\
1 & \omega_i
\end{bmatrix}.$$
Further define $M'_i \in \F^{2 \times 2}$ by $$M' = \begin{bmatrix}
\omega_i & 1 \\
1 & 1
\end{bmatrix}.$$
$M'_i$ is a permutation of the rows and columns of $M_i$, so it suffices to prove the result for $\bigotimes_{i=1}^n M'_i$ instead of $\bigotimes_{i=1}^n M_i$. For $i \in [n]$, letting $g_i : \{0,1\} \to \F$ be defined by $g_i(0)=\omega_i$ and $g_i(1) =1$, we see that $M'_i = V_{g_i}$. Thus, defining $f : \{0,1\}^n \to \F$ by $$f(z[1], \ldots, z[n]) = \prod_{i=1}^{n} g(z[i]),$$ it follows that $\bigotimes_{i=1}^n M'_i = V_f$, as desired.
\end{proof}

\begin{lemma} \label{lem:prodtofunct}
For any field $\F$, positive integer $n$, and outer-nonzero matrices $M_1, \ldots M_n \in \F^{2 \times 2}$, there is a function $f : \{0,1\}^n \to \F$ and weighted permutation matrices $\Pi_n, \Pi'_n \in \F^{2^n \times 2^n}$ such that $$\bigotimes_{i=1}^n M_i = \Pi_n \times V_f \times \Pi'_n.$$
\end{lemma}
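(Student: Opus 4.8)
The plan is to reduce \Cref{lem:prodtofunct} to \Cref{lem:outer1tofunct} by peeling off diagonal scalings, just as \Cref{reducedim} peels diagonal matrices off of a Kronecker product of outer-nonzero matrices. First I would apply \Cref{reducedim} to the matrices $M_1, \ldots, M_n$: this produces outer-1 matrices $M'_1, \ldots, M'_n \in \F^{2\times 2}$ and invertible diagonal matrices $D, D' \in \F^{2^n \times 2^n}$ with $\bigotimes_{i=1}^n M_i = D \times \left( \bigotimes_{i=1}^n M'_i \right) \times D'$. Then I would invoke \Cref{lem:outer1tofunct} on the $M'_i$ to obtain a function $f : \{0,1\}^n \to \F$ and permutation matrices $\Pi_n, \Pi'_n$ with $\bigotimes_{i=1}^n M'_i = \Pi_n \times V_f \times \Pi'_n$. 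Substituting, we get $\bigotimes_{i=1}^n M_i = (D \times \Pi_n) \times V_f \times (\Pi'_n \times D')$.

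The only remaining point is that $D \times \Pi_n$ and $\Pi'_n \times D'$ are weighted permutation matrices. This is immediate from the definitions: a permutation matrix has exactly one nonzero entry in each row and column, and left- or right-multiplying by an invertible diagonal matrix rescales rows or columns respectively without changing which entries are nonzero, so the product still has exactly one nonzero entry per row and column, i.e.\ it is a weighted permutation matrix. Hence setting $\Pi''_n := D \times \Pi_n$ and $(\Pi''_n)' := \Pi'_n \times D'$ gives the claimed factorization $\bigotimes_{i=1}^n M_i = \Pi''_n \times V_f \times (\Pi''_n)'$ with $\Pi''_n, (\Pi''_n)'$ weighted permutation matrices.

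There is essentially no obstacle here — the lemma is a routine generalization of \Cref{lem:outer1tofunct} from outer-1 to outer-nonzero matrices, absorbing the extra diagonal rescalings into the permutation matrices to turn them into weighted permutation matrices. The one thing to be careful about is bookkeeping the order of multiplication (the diagonal from \Cref{reducedim} on the left combines with $\Pi_n$ on the left, and the diagonal on the right combines with $\Pi'_n$ on the right), but once that is set up correctly the proof is two lines.
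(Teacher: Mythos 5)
Your proposal is correct and follows exactly the paper's own argument: apply \Cref{reducedim} to reduce to outer-1 matrices, invoke \Cref{lem:outer1tofunct}, and absorb the invertible diagonal matrices into the permutation matrices to obtain weighted permutation matrices. The paper's proof is precisely this two-line reduction, so nothing further is needed.
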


\begin{proof}
By \Cref{reducedim}, there are outer-1 matrices $M'_1, \ldots, M'_n \in \F^{2 \times 2}$ and invertible diagonal matrices $D, D' \in \F^{2^n \times 2^n}$ such that $\bigotimes_{i=1}^n M_i = D \times (\bigotimes_{i=1}^n M'_i) \times D'$. The result then follows by applying \Cref{lem:outer1tofunct} to $\bigotimes_{i=1}^n M'_i$.
\end{proof}

\begin{theorem} \label{rigthm2by2}
For any field $\F$ and positive integer $n$, let $M \in \F^{2^n \times 2^n}$ be a matrix of any of the following forms:
\begin{itemize}
    \item $M = V_f$ for any function $f : \{0,1\}^n \to \F$, or
    \item $M = \bigotimes_{\ell=1}^n M_i$ for any matrices $M_1, \ldots, M_n \in \F^{2 \times 2}$.
\end{itemize}
Then, for any $a \in (0,1)$, we have $\rig_{M}^{rc}\left(4 \cdot \binom{n}{<an}\right) \leq \binom{(1-a)n}{\leq (1-2a)n}^2$ over $\F$. In particular:
\begin{itemize}
    \item For sufficiently small $\eps>0$ we have $\rig_{M}^{rc}(2^{(1 - \Theta(\eps^2 / \log^2(1/\eps)) \cdot n}) \leq 2^{\eps \cdot n}$, and
    \item We have $\rig_{M}^{rc}(O(2^{0.981 \cdot n})) < o( 2^n )$.
\end{itemize}
\end{theorem}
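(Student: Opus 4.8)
The plan is to deduce both cases from the non-rigidity of the disjointness matrix $R_n$ already established in \Cref{thm:rnotrigid}, together with two structural facts from the preliminaries: that $V_f$ factors through $R_n$ by \Cref{diagexpression} ($V_f = R_n \times D_f \times R_n$ with $D_f$ diagonal), and that a product of non-rigid matrices is non-rigid (\Cref{lem:productnotrigid}). So the heart of the argument is a short composition, and the only real work is reducing the Kronecker-product family to the $V_f$ family.

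First I would handle $M = V_f$. Applying \Cref{lem:productnotrigid} to $V_f = R_n \times D_f \times R_n$, with the diagonal matrix $D_f$ in the role of the middle factor, gives $\rig_{V_f}^{rc}(2r) \leq (\rig_{R_n}^{rc}(r))^2$ for every $r$. Substituting $r = 2 \cdot \binom{n}{<an}$ and invoking the bound $\rig_{R_n}^{rc}(2\binom{n}{<an}) \leq \binom{(1-a)n}{\leq(1-2a)n}$ from \Cref{thm:rnotrigid} yields exactly $\rig_{V_f}^{rc}(4\binom{n}{<an}) \leq \binom{(1-a)n}{\leq(1-2a)n}^2$. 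The two ``in particular'' estimates then follow by feeding in the two corresponding parameter regimes from \Cref{thm:rnotrigid}: in the first one applies the $R_n$ bound with $\eps/2$ in place of $\eps$ so that squaring the sparsity bound recovers $2^{\eps n}$ (while the extra factor of $2$ on the rank parameter is absorbed into the $\Theta(\cdot)$), and the second is analogous with the factor of $2$ absorbed into the $O(\cdot)$.

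Next I would handle $M = \bigotimes_{\ell} M_\ell$. If every $M_\ell$ is outer-nonzero after permuting its rows and columns, then \Cref{lem:prodtofunct} writes $M = \Pi_n \times V_f \times \Pi'_n$ for weighted permutation matrices $\Pi_n, \Pi'_n$ and some $f$; since multiplying on either side by a weighted permutation matrix merely permutes and rescales rows and columns, it preserves rank, $\nnzr$, and $\nnzc$, so $\rig_M^{rc}(r) = \rig_{V_f}^{rc}(r)$ and we are back in the previous case. The only factors left to worry about are those with $\nnz(M_\ell) \leq 2$ (which cannot be made outer-nonzero), and up to a row/column permutation each such factor is the zero matrix, a matrix with a zero row or column (rank $\leq 1$), an invertible diagonal matrix, or an invertible anti-diagonal matrix. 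Anti-diagonal factors are weighted permutations and fold into $\Pi_n$ and $\Pi'_n$; a rank-$\leq 1$ factor can be written as $D' \times V_g \times D''$ with $g \equiv 1$ and $D', D''$ diagonal (not necessarily invertible), so I would absorb it using the observation that left- or right-multiplication by \emph{any} diagonal matrix cannot increase $\rig^{rc}$ (from $M = L + S$ with $\rank(L) \leq r$ one gets $\hat D M = \hat D L + \hat D S$ with $\rank(\hat D L) \leq r$, $\nnzr(\hat D S) \leq \nnzr(S)$, $\nnzc(\hat D S) \leq \nnzc(S)$, and similarly on the right); and a block of $m$ invertible-diagonal factors jointly contributes a full-rank diagonal matrix times $I_{2^m} \otimes (\cdot)$, where $\rig^{rc}$ of $I_{2^m} \otimes B$ is controlled by $\rig^{rc}$ of $B$ by making each of the $2^m$ copies of $B$ low-rank. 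Carrying out this bookkeeping reduces the fully general Kronecker case to a $V_f$ over possibly fewer variables, which suffices for the ``in particular'' asymptotic bounds.

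I expect the $V_f$ case and the outer-nonzero Kronecker case to be essentially immediate given the lemmas already in place; the genuinely fiddly step — the one I would flag as the main technical nuisance — is the last one, namely keeping the rank parameter under control in the general Kronecker case when many $2 \times 2$ factors are invertible diagonal matrices, since each such block shifts the rank budget by a factor of $2^m$. This never affects the conceptual core, which lives entirely in the combinatorial sparsity of $R_n$ exploited by \Cref{thm:rnotrigid}, so I would present the first two cases cleanly and dispatch the degenerate factors with a short case analysis, permitting a slightly looser constant in the exact bound for the general Kronecker family if the bookkeeping forces it (the asymptotic ``in particular'' statements being unaffected in any case).
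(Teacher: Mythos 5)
Your treatment of the $V_f$ case and of the outer-nonzero Kronecker case is correct and coincides with the paper's: compose $V_f = R_n \times D_f \times R_n$ (\Cref{diagexpression}) with \Cref{lem:productnotrigid} and \Cref{thm:rnotrigid}, and reduce outer-nonzero products to $V_f$ via \Cref{lem:prodtofunct}. The parameter bookkeeping for the two ``in particular'' bounds is also right. The gap is exactly where you flagged it, but it is worse than ``a slightly looser constant'': your plan for a block of $m$ invertible-diagonal factors is to bound $\rig^{rc}(I_{2^m}\otimes B)$ by making each of the $2^m$ copies of $B$ low rank, which multiplies the rank parameter by $2^m$. If $m = cn$ this turns a rank budget of $2^{(n-m)(1-\delta)}$ into $2^{n-(n-m)\delta}$, so the stated bound $\rig_M^{rc}\bigl(4\binom{n}{<an}\bigr)\le\binom{(1-a)n}{\le(1-2a)n}^2$ is lost outright, and even the asymptotic bounds degrade (and collapse entirely as $m\to n$, where you would have to switch to a separate ``already sparse'' argument). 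A similar blowup threatens the rank-$\le 1$ factors if treated tensor-wise, e.g.\ $n/2$ copies of the matrix with rows $(1,1)$ and $(0,0)$ multiply the residual row sparsity by $2^{n/2}$ — though your alternative of writing such a factor as $D'\times V_g\times D''$ with possibly singular diagonals and noting that diagonal multiplication never increases $\rig^{rc}$ does dodge that particular instance. What it cannot dodge is the invertible-diagonal factor, which is not expressible as $D'\times V_g\times D''$ or as a permuted $V_g$, so some genuinely new idea is needed there.

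The paper's resolution is different and avoids all of this bookkeeping: observe that \emph{every} $2\times 2$ matrix with at most two nonzero entries can be row/column permuted so that its support is contained in the support of $R_1$ (this covers diagonal, anti-diagonal, and rank-$\le 1$ factors uniformly). Tensoring these $k$ factors gives a matrix $B$ supported inside $R_k$, and the full product becomes (weighted permutations and diagonals) interleaved with the two matrices $R_{n-k}\otimes B$ and $R_{n-k}\otimes I_{2^k}$, each of whose support lies inside that of $R_n$ after permutation. Since the proof of \Cref{thm:rnotrigid} (via \Cref{lem:fewrows}) uses only the support of $R_n$ — remove the $\binom{n}{<an}$ rows and columns indexed by low-weight $x\in\{0,1\}^n$ of the \emph{whole} $2^n\times 2^n$ matrix at once — the same rank/sparsity trade-off applies verbatim to these two matrices, and one application of \Cref{lem:productnotrigid} finishes with the exact claimed constants. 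The essential point you are missing is this global row-removal over all $n$ coordinates simultaneously, as opposed to removing rows block-by-block inside each copy of $B$; the latter is what costs you the factor $2^m$ in rank.
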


\begin{proof}
For $M = V_f$, this follows by substituting the expression from \Cref{diagexpression} and the rigidity bound from \Cref{thm:rnotrigid} into \Cref{lem:productnotrigid}.

For $M = \bigotimes_{\ell=1}^n M_i$, let $k$ be the number of $i \in \{1,2,\ldots,n\}$ such that $M_i$ has at most two nonzero entries, and assume without loss of generality that $M_1, \ldots, M_k$ are the matrices with at most two nonzero entries. 

For each $i > k$ we can permute the rows and columns of the $2 \times 2$ matrix $M_i$ so that it is an outer-nonzero matrix, so combining \Cref{lem:prodtofunct} with \Cref{diagexpression} shows that we can write
$$\bigotimes_{\ell=k+1}^n M_i = \Pi_{n-k} \times R_{n-k} \times D \times R_{n-k} \times \Pi_{n-k}',$$
where $\Pi_{n-k}, D, \Pi_{n-k}' \in \F^{2^{n-k} \times 2^{n-k}}$ are weighted diagonal matrices, and $R_{n-k} \in \F^{2^{n-k} \times 2^{n-k}}$ is the disjointness matrix.

For each $i \leq k$, we can permute the rows and columns of the $2 \times 2$ matrix $M_i$ so that its nonzero entries are a subset of those of $R_1$. It follows that there is a matrix $B \in \F^{2^k \times 2^k}$ whose nonzero entries are a subset of those of $R_k$ such that $B = \bigotimes_{\ell=1}^k M_i$.

Letting $I_{2^k} \in \F^{2^k \times 2^k}$ denote the identity matrix, and applying \Cref{mixedproductproperty}, we can write
$$\bigotimes_{\ell=1}^n M_i = (\Pi_{n-k} \otimes I_{2^k}) \times (R_{n-k} \otimes B) \times (D \otimes I_{2^k}) \times (R_{n-k} \otimes I_{2^k}) \times (\Pi_{n-k}' \otimes I_{2^k}).$$

The three matrices $\Pi_{n-k} \otimes I_{2^k}$, $D \otimes I_{2^k}$, and $\Pi_{n-k}' \otimes I_{2^k}$ are weighted permutation matrices. The rigidity bound of \Cref{thm:rnotrigid} holds for the two matrices $R_{n-k} \otimes B$ and $R_{n-k} \otimes I_{2^k}$, since they are each Kronecker products of $R_{n-k}$ and a matrix whose nonzero entries are a subset of those of $R_k$ (after permuting the rows of $I_{2^k}$), and so their nonzero entries are a subset of those of $R_n$. We can thus once again apply \Cref{lem:productnotrigid} to conclude the desired rigidity upper bound for $M$.
\end{proof}

\section{Extension to Kronecker Products of Larger Matrices} \label{sec:bigproof}

\begin{theorem} \label{thm:mainrigidity}
For any field $\F$, positive integer $q>1$, matrices $M_1, \ldots, M_n \in \F^{q \times q}$, and sufficiently small $\eps>0$, the Kronecker product $M := \bigotimes_{\ell=1}^n M_\ell \in \F^{N \times N}$ for $N =  q^n$ has $$\rig_M^{rc}(N^{1 - O(2^{-q} q \log(q) \cdot \eps^2 / \log^2(1/\eps))}) \leq N^{\eps},$$
where the $O$ hides a universal constant. In particular, if $q \leq O(\log n)$, then $M$ is not Valiant-rigid.
\end{theorem}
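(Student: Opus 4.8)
We argue by induction on $q$. The base case $q=2$ is precisely \Cref{rigthm2by2} (restricted to the Kronecker-product form, which is all that is needed here). For the inductive step fix $q\ge 3$. If some $M_\ell=0$ then $M=0$ and there is nothing to prove, so after permuting the rows and columns of each factor we may assume $M_\ell[0,0]\ne 0$ for all $\ell$. Write $M_\ell=J_\ell+L_\ell$, where $J_\ell$ is the rank-$1$ matrix given by $J_\ell[i,j]=M_\ell[i,0]\cdot M_\ell[0,j]/M_\ell[0,0]$ (so $J_\ell$ agrees with $M_\ell$ on its first row and first column) and $L_\ell:=M_\ell-J_\ell$ has its first row and first column identically $0$, i.e. is supported on the lower-right $(q-1)\times(q-1)$ block. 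Expanding, $M=\bigotimes_{\ell=1}^n(J_\ell+L_\ell)=\sum_{K\subseteq[n]}M_K$, where $M_K:=(\bigotimes_{\ell\notin K}J_\ell)\otimes(\bigotimes_{\ell\in K}L_\ell)$.

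Two facts drive the argument. Writing $\mathrm{supp}(x):=\{\ell\in[n]:x[\ell]\ne 0\}$ for $x\in[q]_0^n$: (a) $M_K[x,y]=0$ unless $K\subseteq\mathrm{supp}(x)$ and $K\subseteq\mathrm{supp}(y)$, since each $L_\ell$ is nonzero only where both its indices are nonzero; (b) $\bigotimes_{\ell\in K}L_\ell$, read as a $(q-1)^{|K|}\times(q-1)^{|K|}$ matrix, is a Kronecker product of $|K|$ matrices of size $(q-1)\times(q-1)$, so the inductive hypothesis applies to it and writes it as $\tilde L_K+\tilde S_K$ with $\rank(\tilde L_K)\le(q-1)^{|K|(1-\gamma_{q-1})}$ and $\nnzr(\tilde S_K),\nnzc(\tilde S_K)\le(q-1)^{|K|\eps}$, where $\gamma_{q-1}$ is the deficit furnished by the theorem at $q-1$. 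Put $\mu:=\tfrac{q-1}{q}n$ (the mean of $\mathrm{Bin}(n,\tfrac{q-1}{q})$) and let $\delta>0$ be a parameter to be fixed. We now perform a sequence of updates to $M$, each changing it by a matrix of rank at most $N^{1-\Omega(\delta^2/\log q)}$: (i) zero out every row $x$ and column $y$ with $\nnz(x)>\mu+\delta n$ or $\nnz(y)>\mu+\delta n$ --- there are at most $q^n\cdot 2^{-\Omega(\delta^2 n)}$ of each, by a Chernoff bound; (ii) subtract $\sum_{|K|<\mu-\delta n}M_K$ --- each such $M_K$ has rank at most $(q-1)^{|K|}$, and $\sum_{k<\mu-\delta n}\binom nk(q-1)^k=N\cdot\Pr[\mathrm{Bin}(n,\tfrac{q-1}{q})<\mu-\delta n]\le N^{1-\Omega(\delta^2/\log q)}$; (iii) discard every $M_K$ with $|K|>\mu+\delta n$, which by (a) is already $0$ on all surviving rows; (iv) for each surviving $K$ (those with $|K|\in[\mu-\delta n,\mu+\delta n]$) split $M_K=((\bigotimes_{\ell\notin K}J_\ell)\otimes\tilde L_K)+((\bigotimes_{\ell\notin K}J_\ell)\otimes\tilde S_K)$ and absorb the first summand into the low-rank part; the total rank absorbed here is at most $\sum_k\binom nk(q-1)^{k(1-\gamma_{q-1})}=(1+(q-1)^{1-\gamma_{q-1}})^n\le N^{1-\Omega(\gamma_{q-1})}$.

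It remains to bound the sparsity of $S:=\sum_{K\ \mathrm{surviving}}(\bigotimes_{\ell\notin K}J_\ell)\otimes\tilde S_K$ on the surviving rows and columns; this is the crux. The apparent danger is that $\bigotimes_{\ell\notin K}J_\ell$ is a dense rank-$1$ matrix over the $\approx n/q$ coordinates outside $K$, so a naive estimate would multiply the row-sparsity by $q^{n-|K|}=N^{\Theta(1/q)}$, which is far too large. The weight restrictions defuse this: if $x,y$ are surviving and $M_K[x,y]\ne 0$ then $K\subseteq\mathrm{supp}(y)$ by (a), and since $|\mathrm{supp}(y)|\le\mu+\delta n$ while $|K|\ge\mu-\delta n$, the vector $y$ is zero on all but at most $2\delta n$ of the $\le n/q+\delta n$ coordinates outside $K$ --- so the dense factor contributes only $2^{O(\delta\log(1/\delta))n}$ to the row-sparsity, not $q^{n-|K|}$. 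Likewise, by (a) again, a given surviving row $x$ can meet only those surviving $K$ with $K\subseteq\mathrm{supp}(x)$, of which there are at most $2^{O(\delta\log(1/\delta))n}$. Together with $\nnzr(\tilde S_K)\le(q-1)^{|K|\eps}$ this gives $\nnzr(S)\le 2^{O(\delta\log(1/\delta))n}\cdot(q-1)^{(\mu+\delta n)\eps}$; using \Cref{binentropy}(1) in the form $\tfrac{q-1}{q}\log_2(q-1)=\log_2 q-H(1/q)$, this is at most $N^\eps$ as soon as $\delta$ is taken to be a small enough multiple of $\eps\cdot H(1/q)/\log(1/\eps)$. The bound $\nnzc(S)\le N^\eps$ is symmetric.

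Collecting everything, after these updates $M$ is the sum of a matrix of rank at most $N^{1-\gamma_q}$ and a matrix with row- and column-sparsity at most $N^\eps$, where $\gamma_q=\Omega(\min(\delta^2/\log q,\ \gamma_{q-1}))$; with $\delta$ as above and $\gamma_2=\Theta(\eps^2/\log^2(1/\eps))$ from \Cref{rigthm2by2}, a short induction on $q$ (using $H(1/q)=\Theta(\tfrac{\log q}{q})$ and that each step loses only a bounded factor) yields the claimed bound $\gamma_q=\Omega(2^{-q}\,q\log q\cdot\eps^2/\log^2(1/\eps))$ on $\rig_M^{rc}$. Finally, when $q\le O(\log n)$ this $\gamma_q$ is at least $1/\poly(n)>0$, so $N^{1-\gamma_q}>N/\log\log N$ for all large $N$; since $\rig_M(r)\le N\cdot\rig_M^{rc}(r)$ we get $\rig_M(N/\log\log N)\le N^{1+\eps}$, and since $\eps>0$ is arbitrary, $M$ is not Valiant-rigid. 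The main obstacle is the sparsity estimate just described: in the $q=2$ base case the rank-$1$ factor causes no trouble because there $M_K$ collapses entirely to a rank-$1$ matrix, whereas for $q\ge 3$ it is genuinely dense, and the new ingredient is to exploit the Hamming-weight restriction on the columns so that this dense factor costs only a negligible multiplicative factor in the sparsity.
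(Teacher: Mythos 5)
Your proposal is correct and follows essentially the same route as the paper's proof: induction on $q$ with \Cref{rigthm2by2} as the base case, the rank-one-plus-$(q-1)$-block splitting of each factor, the binomial expansion over subsets $K$, low-rank removal of the extreme-$|K|$ terms and of the heavy-Hamming-weight rows and columns, the inductive hypothesis applied to the middle terms, and — crucially — the same sparsity count exploiting that a surviving $x$ (resp.\ $y$) admits only $2^{O(\delta\log(1/\delta))n}$ relevant sets $K\subseteq\mathrm{supp}(x)$ and only $2^{O(\delta\log(1/\delta))n}$ nonzero coordinates of $y$ outside $K$. The only deviations are cosmetic (you discard the large-$|K|$ terms by noting they vanish on surviving rows, a slight simplification over the paper's rank bound for that sum), apart from one sign slip in the final sentence: you need $N^{1-\gamma_q}<N/\log\log N$, i.e.\ $N^{\gamma_q}>\log\log N$, not the reverse inequality as written.
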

In the remainder of this section, we prove \Cref{thm:mainrigidity}. We proceed by induction on $q$. The base case $q=2$ was given by \Cref{rigthm2by2}. Suppose $q>2$, and that the result is known already for $q-1$.

We may assume that $M_\ell \in \F^{q \times q}$ is an outer-nonzero matrix for all $\ell \in [n]$ since our proof below will only use the pattern of nonzero entries of the matrix, similar to the proof of \Cref{rigthm2by2}. By \Cref{reducedim}, we may further assume without loss of generality that $M_\ell \in \F^{q \times q}$ is an outer-1 matrix for all $\ell \in [n]$.
For nonnegative integers $i$, let $J_i \in \F^{q^i \times q^i}$ denote the $q^i \times q^i$ matrix whose entries are all $1$s. There are thus outer-0 matrices $A_1, \ldots, A_n \in \F^{q \times q}$ such that $M_\ell = J_1 + A_\ell$ for each $\ell \in [n]$.

For each subset $K \subseteq [n]$ let $A_K := \bigotimes_{\ell \in K} A_\ell$. This is the Kronecker product of $|K|$ different $(q-1)\times(q-1)$ matrices, padded with $(q^{|K|} - (q-1)^{|K|})$ rows and columns of $0$s. By the inductive hypothesis, for every $\eps>0$, setting $\eps' = O(2^{q-1} (q-1) \log(q-1) \cdot \eps^2 / \log^2(1/\eps))$  there are matrices $L_K, S_K \in \F^{q^{|K|} \times q^{|K|}}$ such that:
\begin{itemize}
    \item $A_K = L_K + S_K$,
    \item $\rank(L_K) \leq (q-1)^{|K|\cdot(1-\eps')}$, and
    \item for a given row $x \in [q]_0^{|K|}$ of $S_k$:
    \begin{itemize}
        \item If there is any $i \in [|K|]_0$ such that $x[i]=0$, then every entry of row $x$ of $S_K$ is $0$,
        \item Otherwise, there are at most $(q-1)^{|K| \cdot \eps}$ nonzero entries in row $x$ of $S_K$.
    \end{itemize}
    (and similar for a given column of $S_k$), and thus $\rank(S_k) \leq (q-1)^{|K|}$.
\end{itemize}

Now we can expand $M$:
\begin{align*}M 
&= \bigotimes_{\ell=1}^n M_\ell \\
&= \bigotimes_{\ell=1}^n (J_1 + A_\ell) \\
&= \sum_{K \subseteq [n]} A_K^{\otimes K} \otimes J_1^{\otimes [n]\setminus K} &(*)\\
&= \left( \sum_{K \subseteq [n]} L_K^{\otimes K} \otimes J_1^{\otimes [n]\setminus K} \right) + \left( \sum_{K \subseteq [n]} S_K^{\otimes K} \otimes J_1^{\otimes [n]\setminus K} \right)\\
\end{align*}

Let us first note that the first of these two matrices has low rank. Indeed, its rank can be bounded as
\begin{align*}
\rank\left( \sum_{K \subseteq [n]} L_K^{\otimes K} \otimes J_1^{\otimes [n]\setminus K} \right) 
&\leq \sum_{K \subseteq [n]} \rank\left( L_K^{\otimes K} \otimes J_1^{\otimes [n]\setminus K} \right) \\
&= \sum_{K \subseteq [n]} \rank\left( L_K\right) \cdot \rank\left( J_1^{\otimes (n-|K|)} \right) \\
&= \sum_{K \subseteq [n]} \rank\left( L_K\right)  \\
&\leq \sum_{K \subseteq [n]} (q-1)^{|K|\cdot(1-\eps')}  \\
&= \sum_{k=0}^n \binom{n}{k} \cdot (q-1)^{k\cdot(1-\eps')}  \\
&= \left(1 +  (q-1)^{(1-\eps')} \right)^n  \\
&= q^{n \cdot (1-\eps'')},  \\
\end{align*}

where $\eps''$ is given by $$\eps'' := \frac{\log(\frac{q}{(q-1)^{1-\eps'}+1})}{\log(q)} = \eps' \cdot\frac{(q-1) \log(q-1)}{q \log(q)} + O(\eps'^2). $$

It remains to show that the second matrix, $\sum_{K \subseteq [n]} S_K^{\otimes K} \otimes J_1^{\otimes [n]\setminus K}$, is not rigid. We partition it into three parts, for some $\delta>0$ to be determined, and letting $a := (q-1)/q$:
\begin{align*}
&\sum_{K \subseteq [n]} S_K^{\otimes K} \otimes J_1^{\otimes [n]\setminus K} 
\\ &= \left( \sum_{\overset{K \subseteq [n]}{|K| < (a - \delta)\cdot n}} S_K^{\otimes K} \otimes J_1^{\otimes [n]\setminus K} \right) + \left( \sum_{\overset{K \subseteq [n]}{|K| > (a + \delta)\cdot n}} S_K^{\otimes K} \otimes J_1^{\otimes [n]\setminus K} \right) + \left( \sum_{\overset{K \subseteq [n]}{(a + \delta)\cdot n \geq |K| \geq (a - \delta)\cdot n}} S_K^{\otimes K} \otimes J_1^{\otimes [n]\setminus K} \right)
\end{align*}
We will show that the first and second parts are low-rank, and that the third part is non-rigid. For the first, we bound similar to before (and using \Cref{binentropy} to bound $H$) that:
\begin{align*}
\rank\left( \sum_{\overset{K \subseteq [n]}{|K| < (a - \delta)\cdot n}} S_K^{\otimes K} \otimes J_1^{\otimes [n]\setminus K} \right)
&\leq \sum_{\overset{K \subseteq [n]}{|K| < (a - \delta)\cdot n}} \rank\left(  S_K^{\otimes K} \otimes J_1^{\otimes [n]\setminus K} \right) \\
&\leq \sum_{k=0}^{(a - \delta)\cdot n} \binom{n}{k} \cdot (q-1)^k \\
&\leq ((a - \delta)\cdot n) \cdot \binom{n}{(a - \delta)\cdot n} \cdot (q-1)^{(a - \delta)\cdot n} \\
&\leq O(n^2) \cdot 2^{H(a - \delta) \cdot n} \cdot (q-1)^{(a - \delta)\cdot n} \\
&= O(n^2) \cdot 2^{H(1/q + \delta) \cdot n} \cdot (q-1)^{(a - \delta)\cdot n} \\
&\leq 2^{(\log_2(q) - a \log_2(q-1) + \delta \log_2(q-1) - \Theta(q \cdot \delta^2)) \cdot n} \cdot (q-1)^{(a - \delta)\cdot n} \\
&= 2^{(\log_2(q) - \Theta(q \cdot \delta^2)) \cdot n} \\
&= q^{n(1 - \Theta(\delta^2 q / \log(q)))}. \\
\end{align*}

We can almost identically bound the rank of the second part by:
$$\rank\left( \sum_{\overset{K \subseteq [n]}{|K| > (a + \delta)\cdot n}} S_K^{\otimes K} \otimes J_1^{\otimes [n]\setminus K} \right) \leq O(n^2) \cdot 2^{H(1/2-\delta) \cdot n} \cdot (q-1)^{(a+\delta)\cdot n} \leq q^{n(1 - \Theta(\delta^2 q / \log(q)))}.$$

Finally, it remains to consider the third part:
$$B := \sum_{\overset{K \subseteq [n]}{(a + \delta)\cdot n \geq |K| \geq (a - \delta)\cdot n}} S_K^{\otimes K} \otimes J_1^{\otimes [n]\setminus K}.$$

We will show that after a small number of rows and columns of $B$ are removed, it is a sparse matrix. Since changing one row or column of a matrix is a rank-$1$ update, this will show that $B$ is not rigid and complete our proof.

The rows and columns we remove are those corresponding to $x \in [q]_0$ with $\nnz(x) \geq (a + \delta)\cdot n$. The number of these rows and columns is $$\sum_{k=(a + \delta)\cdot n}^{n} \binom{n}{k} \cdot (q-1)^{n-k},$$ which is again upper bounded by $q^{n(1 - \Theta(\delta^2 q / \log(q)))}$ similar to the previous two sums. 

Finally, let us show that there are not many nonzero entries remaining in any row or column of $B$. Consider a row $x \in [q]_0$ that we did not remove, meaning $\nnz(x) < (a + \delta)\cdot n$. Suppose, for some $K \subseteq [n]$ with $(a + \delta)\cdot n \geq |K| \geq (a - \delta)\cdot n$, that $S_K^{\otimes K} \otimes J_1^{\otimes [n]\setminus K}$ has nonzero entries in row $x$. That means there cannot be any $\ell \in K$ such that $x[\ell]=0$. The number of choices for $K$ is hence at most
$$\sum_{k=(a - \delta)\cdot n}^{(a + \delta)\cdot n}\binom{\nnz(x)}{k} \leq (2 \delta n) \cdot \binom{(a+\delta) \cdot n}{(a - \delta)\cdot n} \leq O(n) \cdot 2^{(a+\delta)\cdot H(2\delta/(a+\delta))\cdot n} \leq 2^{\Theta(\delta \cdot \log(1/\delta)) \cdot n}.$$

For each such $K$, how many nonzero entries does it contribute to row $x$? A simple upper bound is $\nnzr(S_K) \cdot \nnzr(J_1^{\otimes(n-|K|)})$, but we can get a better bound by noting that many of the columns with those nonzero entries have been removed. Indeed, for a $y \in [q]_0^n$, the entry $B[x,y]$ will be nonzero and not removed earlier only if:
\begin{itemize}
    \item $\nnz(y) < (a + \delta) \cdot n$, and
    \item $S_K[x|_K,y|_K] \neq 0$.
\end{itemize}
In particular, this latter condition requires that $\nnz(y|_K) = |K|$, which means only $(a + \delta) \cdot n - |K| \leq 2 \delta n$ entries of $y|_{[n]\setminus K}$ may be nonzero. There are thus:
\begin{itemize}
    \item $\leq (q-1)^{|K| \cdot \eps}$ choices for $y|_K$, by definition of $S_K$, and
    \item $\leq \binom{n-|K|}{2\delta n} \cdot (q-1)^{2 \delta n}$ choices for $y|_{[n]\setminus K}$ because at most $2\delta n$ of its entries may be nonzero.
\end{itemize}
The total number of such $y$ is thus at most
\begin{align*}
    (q-1)^{|K| \cdot \eps} \cdot \binom{n-|K|}{2\delta n} \cdot (q-1)^{2 \delta n}
    &\leq (q-1)^{(a+\delta) \cdot n \cdot \eps} \cdot \binom{(1/q + \delta)n}{2\delta n} \cdot (q-1)^{2 \delta n} \\
    &\leq O(n) \cdot 2^{n \cdot (\eps(a+\delta)\log(q-1) + (1/q+\delta)H(2\delta/(1/q+\delta)) + 2\delta \log(q-1))} \\
    &\leq O(n) \cdot 2^{n \cdot ((a\eps + 2\delta + \eps\delta)\log(q-1) + 2\delta \log((1/q+\delta)/(2\delta)))} \\
    &\leq 2^{n \cdot (a\eps\log(q-1) + 2\delta\log(1/\delta) + O(\delta))} \\
    &= q^{n \cdot (\frac{(q-1)\log(q-1)}{q \log(q)}\eps + 2\delta\log(1/\delta)/\log(q) + O(\delta))}.
\end{align*}

In summary, $M$ can be written as the sum of a matrix of rank at most
$$q^{n\cdot(1- \frac{(q-1)\log(q-1)}{q \log q} \eps' + O(\eps'^2))} + q^{n\cdot(1 - \Theta(\delta^2 q / \log(q)))},$$
and a matrix with row/column sparsity at most
$$q^{n \cdot (\frac{(q-1)\log(q-1)}{q \log(q)}\eps + 2\delta\log(1/\delta)/\log(q) + O(\delta))}.$$

Let $c = \frac{(q-1)\log(q-1)}{2 q \log(q)}\eps$, so that $\eps' = O(2^{q} (q-1) \log(q-1) \cdot \eps^2 / \log^2(1/\eps)) = O(\frac{2^{q-1} q^2 \log^2(q)}{(q-1)\log(q-1)} \cdot c^2 / \log^2(1/c))$, and pick $\delta$ such that $c = \delta \log(1/\delta) / \log(q)$. This shows as desired that $$\rig_M^{rc}(N^{1 - O(2^q q \log(q) \cdot c^2 / \log^2(1/c))}) \leq N^{c}.$$

\subsection{Extension to Functions with Larger Domains}

\begin{theorem} \label{thm:mainrigidity2}
For any field $\F$, positive integer $q>1$, and function $f : \{0,1,\ldots,q-1\}^n \to \F$, define the matrix $V_f \in \F^{q^n \times q^n}$ by, for $x,y \in \{0,1,\ldots,q-1\}^n$, $$V_f[x,y] = f\left( \max\{x[0],y[0]\}, \max\{x[1],y[1]\}, \max\{x[2],y[2]\}, \ldots, \max\{x[n-1],y[n-1]\} \right).$$
For any sufficiently small $\eps>0$, the matrix $V_f \in \F^{N \times N}$ for $N =  q^n$ has $$\rig_{V_f}^{rc}(N^{1 - O(2^{-q} q \log(q) \cdot \eps^2 / \log^2(1/\eps))}) \leq N^{\eps},$$
where the $O$ hides a universal constant. In particular, if $q \leq O(\log n)$, then $V_f$ is not Valiant-rigid.
\end{theorem}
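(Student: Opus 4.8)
The plan is to reduce \Cref{thm:mainrigidity2} to the Kronecker-power case \Cref{thm:mainrigidity}, by writing $V_f$ as a product of a constant number of matrices each of which is already known to be non-rigid. This mirrors exactly the treatment of the bit-wise-$\mathrm{OR}$ matrix in the $q=2$ case via \Cref{diagexpression}: there $V_f$ was written as $R_n \times D_f \times R_n$, and here the role of $R_n$ will be played by a Kronecker power of the $q \times q$ upper-triangular all-ones matrix. Concretely, let $P_1 \in \F^{q \times q}$ be defined by $P_1[a,c] = 1$ if $c \geq a$ and $P_1[a,c] = 0$ otherwise; since $P_1$ is upper-triangular with all diagonal entries equal to $1$, it is invertible over every field, so $(P_1^{\otimes n})^{-1} = (P_1^{-1})^{\otimes n}$. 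Define $a_f \in \F^{q^n}$ by $a_f[z] = f(z)$ for $z \in [q]_0^n$, set $b_f := (P_1^{-1})^{\otimes n} \times a_f$, and let $D_f \in \F^{q^n \times q^n}$ be the diagonal matrix with $D_f[z,z] = b_f[z]$.

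The first step is to establish the decomposition
$$V_f = P_1^{\otimes n} \times D_f \times (P_1^T)^{\otimes n}.$$
This follows from a direct computation in the style of \Cref{diagexpression}. For $x,y \in [q]_0^n$, the $(x,y)$ entry of the right-hand side is $\sum_{z \in [q]_0^n} \left( \prod_{\ell} P_1[x[\ell], z[\ell]] \cdot P_1[y[\ell], z[\ell]] \right) \cdot b_f[z]$. Since $P_1[x[\ell],z[\ell]] \cdot P_1[y[\ell],z[\ell]] = [z[\ell] \geq x[\ell]] \cdot [z[\ell] \geq y[\ell]] = [z[\ell] \geq \max\{x[\ell],y[\ell]\}]$, the product over $\ell$ is the indicator that $z \geq \max\{x,y\}$ coordinate-wise, so this entry equals $\sum_{z \geq \max\{x,y\}} b_f[z]$. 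By the choice of $b_f$ we have $a_f = P_1^{\otimes n} \times b_f$, i.e. $\sum_{z \geq w} b_f[z] = a_f[w] = f(w)$ for every $w \in [q]_0^n$; taking $w = \max\{x,y\}$ shows the entry equals $f(\max\{x,y\}) = V_f[x,y]$.

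The second step combines this with \Cref{thm:mainrigidity} and \Cref{lem:productnotrigid}. The matrix $P_1^{\otimes n}$ is a Kronecker power of the fixed $q \times q$ matrix $P_1$, and $(P_1^T)^{\otimes n} = (P_1^{\otimes n})^T$ is a Kronecker power of $P_1^T$; hence by \Cref{thm:mainrigidity}, for every sufficiently small $\eps > 0$ both satisfy $\rig^{rc}(N^{1 - O(2^{-q} q \log(q) \cdot \eps^2 / \log^2(1/\eps))}) \leq N^{\eps}$ (for the transpose one may instead note that $\rig^{rc}$ is transpose-invariant, since transposition swaps $\nnzr$ and $\nnzc$ and preserves rank). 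Applying \Cref{lem:productnotrigid} to $V_f = P_1^{\otimes n} \times D_f \times (P_1^T)^{\otimes n}$ with the diagonal matrix $D_f$, and writing $r := N^{1 - O(2^{-q} q \log(q) \cdot \eps^2/\log^2(1/\eps))}$, gives
$$\rig_{V_f}^{rc}(2r) \leq \rig_{P_1^{\otimes n}}^{rc}(r) \cdot \rig_{(P_1^T)^{\otimes n}}^{rc}(r) \leq N^{2\eps}.$$
Replacing $\eps$ by $\eps/2$ and absorbing the resulting factor of $2$ in the rank (which only affects lower-order terms) into the universal constant in the exponent yields the stated bound, and the conclusion that $V_f$ is not Valiant-rigid when $q \leq O(\log n)$ follows exactly as in \Cref{thm:mainrigidity}.

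I expect no substantial obstacle: the entire difficulty is already encapsulated in \Cref{thm:mainrigidity}, and this theorem is obtained essentially for free from it. The only points requiring care are bookkeeping ones — orienting the triangular factor and its transpose on the correct sides of $D_f$ so that the middle matrix stays diagonal and the product of threshold indicators collapses to the coordinate-wise maximum (rather than the minimum), and defining $b_f$ via the inverse of the correct Kronecker power. One could alternatively prove \Cref{thm:mainrigidity2} by directly repeating the induction on $q$ from \Cref{thm:mainrigidity} with $\max$ in place of the Kronecker-product structure, using the analogue of the $M_\ell = J_1 + A_\ell$ decomposition; but the black-box reduction above is shorter and reuses all of that work.
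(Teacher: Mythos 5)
Your proof is correct, but it takes a genuinely different route from the paper's. The paper does \emph{not} reduce \Cref{thm:mainrigidity2} to \Cref{thm:mainrigidity} as a black box; instead it re-runs the induction on $q$ from the proof of \Cref{thm:mainrigidity}, replacing the expansion $\bigotimes_\ell (J_1 + A_\ell) = \sum_K A_K^{\otimes K}\otimes J_1^{\otimes[n]\setminus K}$ by an inclusion--exclusion (M\"obius-type) decomposition $V_f = \sum_{S\subseteq[n]} V_{f_S}^{\otimes S}\otimes J_1^{\otimes [n]\setminus S}$ with $f_S(z)=\sum_{T\subseteq S}(-1)^{|S|-|T|}f(g_T(z))$, and then reuses the rank/sparsity estimates verbatim with $A_K$ replaced by $V_{f_K}$. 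You instead generalize \Cref{diagexpression} from $q=2$ to general $q$: the factorization $V_f = P_1^{\otimes n}\times D_f\times (P_1^T)^{\otimes n}$ with $P_1$ the upper-triangular all-ones matrix is correct (the indicator product $[z[\ell]\geq x[\ell]]\cdot[z[\ell]\geq y[\ell]]$ does collapse to $[z[\ell]\geq\max\{x[\ell],y[\ell]\}]$, $P_1^{\otimes n}$ is invertible so $b_f$ exists for every $f$, and both outer factors are Kronecker products of fixed $q\times q$ matrices, so \Cref{thm:mainrigidity} applies to them as stated), and composing with \Cref{lem:productnotrigid} and rescaling $\eps$ only perturbs the universal constant in the exponent, as you say. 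Your argument is shorter and more modular, and it makes the relationship between the two matrix families in the paper transparent; the paper's route has the advantage of being self-contained within a single induction and of exhibiting the $\max$-structured matrices as sums of padded lower-alphabet instances of themselves, which is the form the inductive hypothesis needs there. Both are valid proofs of the stated theorem.
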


\begin{proof}
Just like in the proof of \Cref{thm:mainrigidity}, we proceed by induction on $q$. The base case $q=2$ was given by \Cref{rigthm2by2}. Suppose $q>2$, and that the result is known already for $q-1$.

For any $T \subseteq [n]_0$, we define $g_T : [q-1]_0^{|T|} \to [q]_0^n$ as follows. Let $t_1, t_2, \ldots, t_{|T|}$ be an enumeration of the elements of $T$. Then, for $z \in [q-1]_0^{|T|}$ and $i \in [n]_0$ we define:
$$g_T(z)[i] := \begin{cases}
    0 &\text{ if } i \notin T,\\
    z_{t_j} + 1 &\text{ if } i = t_j \in T.
\end{cases}$$

For every set $S \subseteq [n]$, we define the function $f_S : \{0,1,2,\ldots,q-2\}^{|S|} \to \F$ as, for any $z \in [q]_0^n$,
$$f_S(z) = \sum_{T \subseteq S} (-1)^{|S|-|T|} \cdot f(g_T(z)).$$

I now claim that
$$V_f = \sum_{S \subseteq [n]} V_{f_S}^{\otimes S} \otimes J_1^{\otimes [n] \setminus S}.$$
Once I show this, we can simply substitute it in for Equation~$(*)$ in the proof of \Cref{thm:mainrigidity}, and the remainder of the proof is exactly the same (with $A_K$ replaced by $V_{f_K}$ throughout). 

For $z \in [q]_0^n$, let $S_z \subseteq [n]$ be the set of indices $i$ with $z[i] \neq 0$. Notice that, for $x,y \in [q]_0^n$, letting $z \in [q]_0^n$ be the entry-wise max of $x$ and $y$, we have that:
$$\left( \sum_{S \subseteq [n]} V_{f_S}^{\otimes S} \otimes J_1^{\otimes [n] \setminus S} \right)[x,y] =  \sum_{S \subseteq [n]} \left(V_{f_S}^{\otimes S} \otimes J_1^{\otimes [n] \setminus S} [x,y]\right) =  \sum_{S \subseteq [n]} ([S \subseteq S_z] ~?~ f_S(z) : 0).$$

 It thus suffices to show that for all $z \in [q]_0^n$, we have $\sum_{S \subseteq  S_z} f_S(z) = f(z)$. We can verify this by using inclusion-exclusion:
\begin{align*}\sum_{S \subseteq  S_z} f_S(z) &= \sum_{S \subseteq  S_z} \sum_{T \subseteq S} (-1)^{|S|-|T|} \cdot f(g_T(z))  
\\ &= \sum_{T \subseteq S_z} \sum_{T \subseteq S \subseteq S_z}(-1)^{|S|-|T|} \cdot f(g_T(z))
\\ &= \sum_{T \subseteq S_z} f(g_T(z)) \cdot \sum_{T \subseteq S \subseteq S_z}(-1)^{|S|-|T|} 
\\ &= \sum_{T \subseteq S_z} f(g_T(z)) \cdot \sum_{k=0}^{|S_z|-|T|} \binom{|S_z| - |T|}{k} \cdot (-1)^k
\\ &= f(g_{S_z}(z))
\\ &= f(z).
\end{align*}
Here, we used the fact that $\sum_{k=0}^n \binom{n}{k} \cdot (-1)^k = 0$ unless $n=0$.
\end{proof}

Note that \Cref{thm:mainrigidity2} also holds with `$\max$' replaced with `$\min$', as this corresponds to appropriately permuting the truth table of $f$.

\section{Kronecker Products and Matrix Multiplication} \label{sec:mmult}

\begin{definition}
For any field $\F$ and positive integers $m,n,p$, let $MM_\F(m,n,p)$ denote the smallest size of an arithmetic circuit for computing the product of an $m \times n$ matrix and a $n \times p$ matrix over $\F$. For instance, $MM_\F(n,n,n) \leq n^{\omega + o(1)}$ where $\omega \leq 2.373$~\cite{williams2012multiplying,le2014powers} is the matrix multiplication exponent.
\end{definition}

\begin{lemma} \label{lem:prodidentity}
For any field $\F$, positive integers $q,N,$ and matrix $M \in \F^{q \times q}$, the linear transformation $M \otimes I_{N}$ can be computed by an arithmetic circuit of size $MM_\F(q,q,N)$.
\end{lemma}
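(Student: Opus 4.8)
The plan is to recognize the linear map $M \otimes I_N$ on $\F^{qN}$ as nothing more than the operation ``left-multiply by the fixed matrix $M$'' acting on $q \times N$ matrices, and then specialize a generic matrix-multiplication circuit. Concretely, I would index the coordinates of $\F^{qN}$ by pairs $(i,j) \in [q]_0 \times [N]_0$ following the Kronecker-product indexing convention, and identify an input vector $b \in \F^{qN}$ with the matrix $B \in \F^{q \times N}$ given by $B[i,j] := b[(i,j)]$. A one-line computation of an arbitrary coordinate of the output gives
\[
\big((M \otimes I_N)\, b\big)[(i,j)] \;=\; \sum_{(k,\ell) \in [q]_0 \times [N]_0} M[i,k]\, I_N[j,\ell]\, b[(k,\ell)] \;=\; \sum_{k \in [q]_0} M[i,k]\, B[k,j] \;=\; (M \times B)[i,j],
\]
so under this identification the linear transformation $M \otimes I_N$ is exactly the map $B \mapsto M \times B$, with $M$ a $q\times q$ matrix and $B$ a $q \times N$ matrix.

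Next I would invoke the definition of $MM_\F(q,q,N)$: there is an arithmetic circuit $C$ over $\F$ with $q^2 + qN$ input variables --- call them $\{m_{i,k}\}_{i,k \in [q]_0}$ and $\{b_{k,j}\}_{k \in [q]_0,\, j \in [N]_0}$ --- and $qN$ outputs, computing the $qN$ entries of the product of the generic matrices $(m_{i,k})$ and $(b_{k,j})$, and having size $MM_\F(q,q,N)$. Specializing $C$ by substituting the field constant $M[i,k] \in \F$ for each input variable $m_{i,k}$ yields an arithmetic circuit $C'$ whose inputs are the $qN$ variables $\{b_{k,j}\}$, whose outputs are the $qN$ entries of $M \times B$, and whose size is at most that of $C$. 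Identifying the inputs $b_{k,j}$ with the coordinates $b[(k,j)]$ and the outputs with the coordinates of $(M \otimes I_N)\, b$ via the computation above, $C'$ computes the linear transformation $M \otimes I_N$ and has size $\le MM_\F(q,q,N)$, as claimed.

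There is essentially no hard step here; the statement is folklore. The only point worth a sentence of care is the size bound under specialization: one must note that the arithmetic circuit model allows constants from $\F$ as inputs, so plugging the entries of $M$ into $C$ produces a bona fide arithmetic circuit of no larger size (one could even prune gates that become multiplications by $0$, though this is not needed). It is also worth remarking that we only claim an arithmetic circuit, not a linear circuit: the circuit $C$ is bilinear in its two matrix arguments, and specializing one argument leaves a circuit that is linear in $B$ but may use the entries of $M$ as constants inside its multiplication gates, which is consistent with the conclusion asserting only an arithmetic circuit.
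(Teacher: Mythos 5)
Your proposal is correct and matches the paper's proof: both identify $(M\otimes I_N)\,b$ with the matrix product $M\times B$ for the $q\times N$ matrix $B$ formed from $b$, and then hard-wire $M$ into a circuit of size $MM_\F(q,q,N)$. Your write-up is simply a more detailed version of the same one-line argument.
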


\begin{proof}
Computing $(M \otimes I_{N}) \times v$ for a vector $v \in \F^{q \cdot N}$ is equivalent to computing $M \times v_\ell$ for all $N$ of the vectors $v_1, \ldots, v_N \in \F^q$ whose concatenation gives $v$. This, in turn, is equivalent to multiplying $M \times (v_1 | v_2 | \cdots | v_N)$, which can be done with a circuit of size $MM_\F(q,q,N)$ as desired.
\end{proof}

\begin{lemma} \label{butterflytomm}
For any field $\F$, positive integers $q,n,k$ such that $k$ divides $n$, and matrices $M_1, \ldots, M_n \in \F^{q \times q}$, the linear transformation $M := \bigotimes_{\ell=1}^n M_\ell \in \F^{q^n \times q^n}$ can be computed by an arithmetic circuit of size $k \cdot MM_\F(q^{n/k},q^{n/k},q^{n \cdot (k-1)/k})$.
\end{lemma}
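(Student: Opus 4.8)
The plan is to group the $n$ tensor factors into $k$ consecutive blocks, each of size $n/k$. For $i \in [k]$ let $S_i := \{(i-1)(n/k)+1, (i-1)(n/k)+2, \ldots, i(n/k)\}$ and set $B_i := \bigotimes_{\ell \in S_i} M_\ell \in \F^{q^{n/k} \times q^{n/k}}$. By associativity of the Kronecker product, $M = \bigotimes_{i=1}^k B_i$. Now I would apply \Cref{butterfly} to the $k$ matrices $B_1, \ldots, B_k$, with $q^{n/k}$ playing the role of ``$q$'' and $k$ playing the role of ``$n$''; this yields
$$M = \prod_{i=1}^k B_i^{\otimes\{i\}} \otimes (I_{q^{n/k}})^{\otimes [k] \setminus \{i\}}.$$
Thus $M$ is a product of $k$ matrices, each of which is a single $q^{n/k}\times q^{n/k}$ block tensored (in some coordinate order) with an identity of dimension $q^{n(k-1)/k}$.

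Next I would show each factor is cheap. For each $i$, the matrix $B_i^{\otimes\{i\}} \otimes (I_{q^{n/k}})^{\otimes [k]\setminus\{i\}}$ differs from $B_i \otimes I_{q^{n(k-1)/k}}$ only by a permutation of the $k$ coordinate-blocks, so by the permutation identity in the definition of the Kronecker product (equivalently, by repeated use of \Cref{mixedproductproperty}) there are permutation matrices $P_i, P_i'$ with
$$B_i^{\otimes\{i\}} \otimes (I_{q^{n/k}})^{\otimes [k]\setminus\{i\}} = P_i \times \left( B_i \otimes I_{q^{n(k-1)/k}} \right) \times P_i'.$$
By \Cref{lem:prodidentity}, $B_i \otimes I_{q^{n(k-1)/k}}$ has an arithmetic circuit of size $MM_\F(q^{n/k}, q^{n/k}, q^{n(k-1)/k})$; since a permutation matrix is computed with zero gates (it is merely a relabeling of wires), the $i$th factor has a circuit of that same size.

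Finally, composing these $k$ circuits in series --- feeding the outputs of the circuit for one factor into the inputs of the next, in the order dictated by the product above --- computes $M$, with total size the sum of the individual sizes, namely $k \cdot MM_\F(q^{n/k}, q^{n/k}, q^{n(k-1)/k})$, as claimed.

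I do not anticipate a genuine obstacle here: the only points requiring care are (i) checking that \Cref{butterfly} applies verbatim with $B_i$ in place of $M_i$ (it does, since the $B_i$ are square of equal size), and (ii) the routine bookkeeping that permutation matrices contribute nothing to circuit size and that size is additive under serial composition. If one preferred to avoid the permutation matrices entirely, one could instead establish a ``blocked'' version of \Cref{butterfly} directly by the same induction, but folding the blocks into \Cref{butterfly} as above seems cleanest.
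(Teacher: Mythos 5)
Your proposal is correct and follows essentially the same route as the paper: group the factors into $k$ blocks of $n/k$, apply \Cref{butterfly} to the block matrices, and invoke \Cref{lem:prodidentity} on each of the $k$ resulting factors (which are permutations of $B_i \otimes I_{q^{n(k-1)/k}}$). The only differences are cosmetic, such as your explicit remark that permutations cost nothing.
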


\begin{proof}
For each $\ell \in [k]_0$, define the matrix $M'_\ell \in \F^{q^{n/k} \times q^{n/k}}$ by $$M'_\ell := \bigotimes_{i=1}^{n/k}M_{i + \ell \cdot n/k}.$$ Hence, $$\bigotimes_{\ell=0}^{k-1} M'_\ell = \bigotimes_{i=1}^n M_i = M_n.$$ Applying \Cref{butterfly} to the $M'_\ell$ matrices shows that, in order to compute $M_n$, it suffices to compute $k$ linear transformations, where the $\ell$th, for $\ell \in [k]_0$, is a permutation of the rows and columns of $M'_\ell \otimes I_{q^{n \cdot (k-1)/k}}$. By \Cref{lem:prodidentity}, each can be computed by an arithmetic circuit of size $MM_\F(q^{n/k},q^{n/k},q^{n \cdot (k-1)/k})$, as desired.
\end{proof}

\begin{corollary} \label{cor:mmtohad}
Suppose that, for any integer $k>1$, we have $MM_\F(n,n,n^{k-1}) \leq o(n^k \log n)$. Then, for any field $\F$, fixed positive integer $q$, positive integer $n$, and matrices $M_1, \ldots, M_n \in \F^{q \times q}$, the linear transformation $M := \bigotimes_{\ell=1}^n M_\ell \in \F^{N \times N}$ (with $N = q^n$) can be computed by an arithmetic circuit of size $o(N \log N)$. 
\end{corollary}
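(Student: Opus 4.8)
The plan is to apply \Cref{butterflytomm} with a single fixed value of $k$ — in fact $k=2$ already suffices — after a routine reduction that removes the divisibility requirement of that lemma.

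\emph{First, the reduction to $k\mid n$.} Fix an integer $k>1$. If $k$ does not divide $n$, let $n'$ be the smallest multiple of $k$ with $n'\ge n$, so $n\le n'<n+k$, and extend the list of matrices by setting $M_{n+1}=\cdots=M_{n'}:=I_q$. Then $\bigotimes_{\ell=1}^{n'}M_\ell = M\otimes I_{q^{n'-n}}$, and since $(M\otimes I_{q^{n'-n}})(v\otimes e_0)=(Mv)\otimes e_0$ for $e_0$ the first standard basis vector, any arithmetic circuit computing the linear transformation $\bigotimes_{\ell=1}^{n'}M_\ell$ yields one of no larger size computing $M$ (pad the input vector and discard the unneeded output coordinates). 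Because $k$ is a constant, $q^{n'}<q^{k}\cdot N=O(N)$, so it suffices to bound the circuit size of $\bigotimes_{\ell=1}^{n'}M_\ell$ in the case $k\mid n'$.

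\emph{Then, invoke \Cref{butterflytomm} and the hypothesis.} Since $k\mid n'$, \Cref{butterflytomm} gives an arithmetic circuit for $\bigotimes_{\ell=1}^{n'}M_\ell$ of size $k\cdot MM_\F(q^{n'/k},q^{n'/k},q^{n'(k-1)/k})$. Writing $m:=q^{n'/k}$, we have $q^{n'(k-1)/k}=m^{k-1}$ and $q^{n'}=m^k=\Theta(N)$, so the circuit has size $k\cdot MM_\F(m,m,m^{k-1})$. Applying the hypothesis of the corollary with this fixed $k$ yields $MM_\F(m,m,m^{k-1})\le o(m^k\log m)$ as $m\to\infty$, and hence a circuit of size $k\cdot o(m^k\log m)=o(m^k\log(m^k))=o(N\log N)$, using that $k$ is constant and $m^k=\Theta(N)$. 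This is the desired bound.

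\emph{Where care is needed.} There is no substantial obstacle; the statement is a short reduction through \Cref{butterflytomm}. The one point worth emphasizing is that the hypothesis $MM_\F(n,n,n^{k-1})\le o(n^k\log n)$ is an asymptotic statement with $k$ held fixed, which is exactly why it is legitimate to apply \Cref{butterflytomm} at a single constant value of $k$ rather than a growing one. This should be contrasted with feeding in the \emph{known} rectangular matrix multiplication bound of \Cref{intro:rectmm}, which only gives $MM_\F(m,m,m^{k-1})\le f(k)\cdot m^{k+O(1/\log k)}$: there one is forced to let $k\to\infty$ with $m$ so that $m^{O(1/\log k)}$ is dominated by $\log m$, and one must additionally control the growth of the constant $f(k)$. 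That more delicate case is the one discussed in \Cref{introsec:mmult} and treated in the remainder of this section.
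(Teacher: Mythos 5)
Your proposal is correct and follows essentially the same route as the paper's proof: fix a single constant $k$, invoke \Cref{butterflytomm} to reduce to one rectangular matrix product of size $k\cdot MM_\F(q^{n/k},q^{n/k},q^{n(k-1)/k})$, and apply the hypothesis. The only difference is that you explicitly handle the divisibility requirement $k\mid n$ by padding with identity factors, a minor point the paper's one-line proof glosses over; your padding argument is valid.
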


\begin{proof}
Applying \Cref{butterflytomm}, we see that $M$ can be computed by an arithmetic circuit of size $k \cdot MM_\F(q^{n/k},q^{n/k},q^{n \cdot (k-1)/k})$. By assumption, this is $o((q^{n/k})^k \log(q^{n/k})) = o(N \log N)$, as desired.
\end{proof}

In fact, as $k$ gets large, it is known that the exponent of $MM_\F(n,n,n^{k-1})$ is the desired $k$:

\begin{proposition}[\cite{huang1998fast}] \label{prop:rectmm}
For every field $\F$ and integer $k>1$, we have $MM_\F(n,n,n^{k-1}) \leq O(n^{k \cdot \log_{k-1}(k)})$. Here, the $O$ is hiding a function of $k$. Note that the exponent is $$k \cdot \log_{k-1}(k) = k + O\left(\frac{1}{\log k}\right).$$
\end{proposition}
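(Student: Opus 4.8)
The plan is to obtain \Cref{prop:rectmm} from the theory of fast rectangular matrix multiplication, following Huang and Pan~\cite{huang1998fast}; I sketch the structure of such an argument. It is cleanest to work with the rectangular exponent $\omega(1,1,k-1)$, defined so that $MM_\F(n,n,n^{k-1}) \le n^{\omega(1,1,k-1)+o(1)}$; the goal is then equivalent to $\omega(1,1,k-1) \le k\log_{k-1}(k)$. The first step is the standard boilerplate that this exponent controls $MM_\F(n,n,n^{k-1})$ for every $n$: the matrix multiplication tensor is submultiplicative under Kronecker product, so a (border) bilinear algorithm for the fixed tensor $\langle m,m,m^{k-1}\rangle$ of size $R$ yields, by taking tensor powers, an algorithm for $\langle m^t,m^t,m^{t(k-1)}\rangle$ of size $R^{t}\cdot m^{o(t)}$, where the $m^{o(t)}$ absorbs the conversion of border rank to exact rank and the recursive additions (lower order here since $R$ exceeds every matrix dimension involved). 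Padding a general $n$ up to the next power of $m$ then gives $MM_\F(n,n,n^{k-1}) \le f(k)\cdot n^{\log_m R}$ for a function $f$ of $k$ alone, which is exactly the shape of the claimed bound. So it suffices to produce, for a suitable growing base size $m$, an algorithm of size $m^{k\log_{k-1}(k)+o(1)}$ — equivalently, to bound the exponent $\omega(1,1,k-1)$.

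The heart of the argument is the bound $\omega(1,1,k-1)\le k\log_{k-1}(k)$, and I would obtain it via Strassen's laser method together with Schönhage's asymptotic sum inequality. The naive routes fall short of the slack $k+O(1/\log k)$ that is needed: both iterating a single fixed-size rectangular algorithm and writing $\langle m,m,m^{k-1}\rangle=\langle m,m,m\rangle\otimes\langle 1,1,m^{k-2}\rangle$ and invoking fast square multiplication leave an additive $\Theta(1)$ overhead in the exponent (in the second case exactly $\omega-2$), which would collapse only under $\omega=2$. Instead I would start from a simple base tensor $T$ — say a small rectangular matrix multiplication tensor with a carefully chosen set of entries deleted, selected so that its border rank is strictly below its rank — take a large power $T^{\otimes s}$, and then apply a degeneration (zeroing out of coordinates) that exhibits inside $T^{\otimes s}$ a large direct sum of pairwise-disjoint matrix multiplication tensors, all of the rectangular shape $\langle m',m',(m')^{k-1}\rangle$. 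Using $\underline{R}(T^{\otimes s})\le\underline{R}(T)^{s}$, Schönhage's asymptotic sum inequality then converts this border-rank bound into an inequality (schematically $\sum_i(\mathrm{vol}_i)^{\rho}\le\underline{R}(T)^{s}$) relating the exponent $\rho$ of these pieces to the border rank; balancing the number and location of the deleted entries against the degeneration parameters, and optimizing, drives $\rho=\omega(1,1,k-1)$ down to $k\log_{k-1}(k)$ — equivalently, one is engineering the identity $(k-1)^{k\log_{k-1}k}=k^{k}$.

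The main obstacle is precisely this base construction and its optimization: one must choose the starting tensor, how many and which entries to delete, and the laser-method parameters so that the resulting exponent collapses to the stated closed form, and one must separately control how the overhead function $f(k)$ grows through the border-to-exact conversion and through the padding and recursion. Carrying this out carefully is the content of Huang and Pan~\cite{huang1998fast} (building on Coppersmith's rectangular-multiplication techniques~\cite{coppersmith1982rapid}), so I would cite it for the precise statement rather than reproduce the optimization. For the applications in this paper, only the asymptotic form $k\log_{k-1}(k)=k+O(1/\log k)$ is used, and (as discussed in \Cref{sec:mmult}) it is the behaviour of $f(k)$ for slowly growing $k$ that would warrant a sharper, separate analysis.
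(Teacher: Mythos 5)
Your proposal is correct and takes essentially the same route as the paper: both reduce the statement to the exponent bound $\omega(1,1,k-1)\le k\log_{k-1}(k)$, which is cited from Huang--Pan rather than re-derived, and both invoke Sch\"onhage's theorem (tensor powering, border-rank-to-rank conversion, and padding) to turn that exponent bound into an actual circuit of size $f(k)\cdot n^{k\log_{k-1}(k)}$ with $f$ depending only on $k$. The extra discussion of the laser method is accurate background but not a different argument.
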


\begin{proof}[Proof sketch]
This follows from \cite[{Equation~(7.1)}]{huang1998fast}. In the notation of their Equation~(7.1), using $q=r=k$ and a small $\beta>0$, we find that $\omega(1,1,k) < (k+1) \cdot \log_k(k+1)$. The result then follows by applying Sch{\"o}nhage's theorem~\cite{schonhage1981partial}, using the notation of \cite[{Theorem~2.1}]{huang1998fast} with $\eps = (k+1) \cdot \log_k(k+1) - \omega(1,1,k)$, which is a function of only $k$.
\end{proof}

Unfortunately, in order to combine \Cref{prop:rectmm} with \Cref{cor:mmtohad} to construct an arithmetic circuit of size $o(N \log N)$, we would need to pick $k = \Omega(\log N / \log\log N)$ in order for the non-leading term from $MM_\F(n,n,n^{k-1})$ (i.e. $(N^{1/k})^{O(1/\log k)} = N^{O(1/k\log k)}$) to be negligible. However, in that case, the $O$ in \Cref{prop:rectmm} is hiding a growing function of $N$, which swamps our savings unless that growing function is relatively small:

\begin{corollary}
Let $f(k)$ be the constant factor hidden in \Cref{prop:rectmm}, and suppose that $f(k) < o(\log k)$. Then, for any field $\F$, fixed positive integer $q$, positive integer $n$, and matrices $M_1, \ldots, M_n \in \F^{q \times q}$, the linear transformation $M := \bigotimes_{\ell=1}^n M_\ell \in \F^{N \times N}$ (with $N = q^n$) can be computed by an arithmetic circuit of size $o(N \log N)$. 
\end{corollary}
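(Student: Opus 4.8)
The plan is to re-run the argument behind \Cref{cor:mmtohad}, but with the number of blocks $k$ growing with $n$, so that the constant factor $f(k)$ from \Cref{prop:rectmm} — which \Cref{cor:mmtohad} was allowed to ignore — is still swamped by the savings. (For \emph{fixed} $k$, \Cref{prop:rectmm} only gives $MM_\F(n,n,n^{k-1}) \leq f(k)\cdot n^{k\log_{k-1}(k)} = n^{k}\cdot n^{\Theta(1)}$, which is not $o(n^k\log n)$, so the hypothesis of \Cref{cor:mmtohad} is not available and we must argue directly.) First I would fix, for every sufficiently large $n$, the block size $m := \lceil\log_2 n\rceil$ and set $k := \lceil n/m\rceil$, and partition $[n]$ into $k$ consecutive blocks $B_1,\dots,B_k$ of sizes $m_\ell := |B_\ell| \in \{\lfloor n/k\rfloor,\lceil n/k\rceil\} = \Theta(\log n)$, so $k = \Theta(n/\log n)$. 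Grouping the Kronecker factors accordingly, $M = \bigotimes_{\ell=1}^{k} M'_\ell$ with $M'_\ell := \bigotimes_{i\in B_\ell} M_i \in \F^{q^{m_\ell}\times q^{m_\ell}}$, and by repeated application of \Cref{mixedproductproperty} (exactly as in the proof of \Cref{butterfly}, but allowing unequal block sizes) we get $M = \prod_{\ell=1}^{k}\bigl(I_{q^{a_\ell}}\otimes M'_\ell\otimes I_{q^{b_\ell}}\bigr)$ with $a_\ell := \sum_{j<\ell}m_j$ and $b_\ell := \sum_{j>\ell}m_j = n - m_\ell - a_\ell$. By \Cref{lem:prodidentity}, each factor — being a row/column permutation of $M'_\ell\otimes I_{q^{n-m_\ell}}$ — has an arithmetic circuit of size $MM_\F(q^{m_\ell},q^{m_\ell},q^{n-m_\ell})$, so $M$ has a circuit of size at most $\sum_{\ell=1}^{k}MM_\F(q^{m_\ell},q^{m_\ell},q^{n-m_\ell})$.

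Next I would bound each term using \Cref{prop:rectmm} with the aspect ratio rounded \emph{down}. Put $t_\ell := \lfloor n/m_\ell\rfloor = \Theta(n/\log n)$ (which is $>1$ for $n$ large), so that $(q^{m_\ell})^{t_\ell-1} = q^{m_\ell(t_\ell-1)}$ divides $q^{n-m_\ell}$ with quotient $q^{\,n\bmod m_\ell}$. Splitting the $q^{m_\ell}\times q^{n-m_\ell}$ factor into $q^{\,n\bmod m_\ell}$ blocks of $q^{m_\ell(t_\ell-1)}$ columns and applying \Cref{prop:rectmm} to each block,
\[
MM_\F(q^{m_\ell},q^{m_\ell},q^{n-m_\ell}) \;\leq\; q^{\,n\bmod m_\ell}\cdot f(t_\ell)\cdot\bigl(q^{m_\ell}\bigr)^{t_\ell\log_{t_\ell-1}(t_\ell)}.
\]
Since $\log_{t-1}(t)-1 = \ln\!\bigl(1+\tfrac{1}{t-1}\bigr)/\ln(t-1) \leq 1/\bigl((t-1)\ln(t-1)\bigr)$ and $(t_\ell-1)\ln(t_\ell-1) = \Theta(n)$, we get $m_\ell t_\ell\bigl(\log_{t_\ell-1}(t_\ell)-1\bigr) \leq n\cdot O(1/n) = O(1)$; combined with $m_\ell t_\ell = n - (n\bmod m_\ell)$ this gives $\bigl(q^{m_\ell}\bigr)^{t_\ell\log_{t_\ell-1}(t_\ell)} \leq q^{O(1)}\cdot q^{\,n-(n\bmod m_\ell)}$, so after the $q^{\,n\bmod m_\ell}$ factors cancel, each term is at most $O(f(t_\ell)\cdot N)$. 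Crucially, all $t_\ell = \Theta(n/\log n)$, so the hypothesis $f(k)=o(\log k)$ yields $f(t_\ell) = o(\log(n/\log n)) = o(\log n)$ uniformly in $\ell$.

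Summing over the $k = \Theta(n/\log n)$ blocks, $M$ has an arithmetic circuit of size $\sum_{\ell} O(f(t_\ell)N) = O\!\bigl(\tfrac{n}{\log n}\cdot o(\log n)\cdot N\bigr) = o(nN)$; since $q$ is a fixed constant, $\log N = n\log_2 q = \Theta(n)$, so this is $o(N\log N)$, as claimed. The main obstacle — really the only departure from the $k$-fixed reasoning of \Cref{cor:mmtohad} — is the bookkeeping around divisibility: because $k$ grows with $n$, we cannot afford to pad $n$ up to a multiple of $k$ (that would cost a $q^{\Theta(k)}$, hence super-polynomial, blow-up in $N$), so we must allow unequal block sizes; and we must round the rectangular aspect ratio $t_\ell$ \emph{down} rather than up, since rounding up would inflate each block's cost by a spurious factor of $q^{\Theta(\log n)} = \poly(n)$, which summed over $k$ blocks would overwhelm the target. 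Keeping the $f(t_\ell)$ term is exactly what pins down the choice $k = \Theta(n/\log n)$: slower growth of $k$ leaves $\bigl(q^{m_\ell}\bigr)^{t_\ell\log_{t_\ell-1}(t_\ell)}$ super-polynomially larger than $N$, while faster growth makes the product $k\cdot f(t_\ell)$ too large given that we only know $f(t_\ell) = o(\log t_\ell)$.
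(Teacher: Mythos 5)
Your proof is correct and follows essentially the same route as the paper: the paper simply applies \Cref{butterflytomm} with $k = \log N/\log\log N$ and bounds the result by $O(k\cdot f(k)\cdot N) = o(N\log N)$ via \Cref{prop:rectmm}. The only difference is that you explicitly handle the divisibility and rounding issues (unequal block sizes, rounding the aspect ratio down) that the paper's two-line proof elides, which is a welcome bit of extra care but not a different argument.
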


\begin{proof}
Applying \Cref{butterflytomm} with $k = \log N / \log\log N$, the resulting circuit size upper bound is $O(k \cdot f(k) \cdot N) < o(k \log k \cdot N) = o(N \log N)$. 
\end{proof}

\section{Arithmetic Complexity}

In this section, we focus on the complexity of linear transformations using arithmetic circuits in which each gate has fan-in 2. This is often the best model for counting the exact number of arithmetic operations needed to compute a given linear transformation.

\begin{lemma}
For any field $\F$ and positive integer $n$, let $M \in \F^{2^n \times 2^n}$ be a matrix of any of the following forms:
\begin{itemize}
    \item $M = V_f$ for any function $f : \{0,1\}^n \to \F$, or
    \item $M = \bigotimes_{\ell=1}^n M_i$ for any matrices $M_1, \ldots, M_n \in \F^{2 \times 2}$.
\end{itemize}
Then, $M^{\otimes n} \in \F^{N \times N}$ (with $N = 2^n$) can be computed by an arithmetic circuit with $N \log_2 N$ addition gates and $3N$ multiplication gates.
\end{lemma}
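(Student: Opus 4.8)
The plan is to treat the two families separately, and in each case to write the matrix at hand — call it $M \in \F^{N\times N}$ with $N = 2^n$ — as a product of a bounded number of factors of two types. A \emph{type-(i)} factor will be an $N\times N$ diagonal or weighted permutation matrix; since such a matrix has at most one nonzero entry per row, its linear transformation is computed with at most $N$ multiplication gates (one per nonzero entry) and no addition gates. A \emph{type-(ii)} factor will be an $N\times N$ matrix of the form $\bigotimes_{i=1}^n C_i$ in which each $C_i \in \F^{2\times 2}$ is a $0/1$ matrix whose linear transformation can be computed with at most one addition gate and no multiplication gate (for instance $I_2$, the matrix $R_1$, the all-ones matrix $J_1 = \left[\begin{smallmatrix}1&1\\1&1\end{smallmatrix}\right]$, or $\left[\begin{smallmatrix}1&0\\1&1\end{smallmatrix}\right]$); by the butterfly identity \Cref{butterfly} we have $\bigotimes_{i=1}^n C_i = \prod_{i=1}^n C_i^{\otimes\{i\}}\otimes I_{2^{n-1}}$, and the $i$th factor applies $C_i$ independently to $2^{n-1}$ disjoint coordinate pairs, so it costs at most $2^{n-1}$ addition gates and no multiplications; hence a type-(ii) factor is computable with at most $n\cdot 2^{n-1} = \tfrac12 N\log_2 N$ addition gates and no multiplication gates.

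For $M = V_f$ I would invoke \Cref{diagexpression}, which gives $V_f = R_n \times D_f \times R_n$ with $R_n = R_1^{\otimes n}$ of type (ii) and $D_f$ diagonal of type (i); composing the three resulting subcircuits gives a circuit for $V_f$ with $N\log_2 N$ addition gates and at most $N \le 3N$ multiplication gates. For $M = \bigotimes_{i=1}^n M_i$, the crux is a decomposition lemma for a single $2\times 2$ matrix: every $M_i \in \F^{2\times 2}$ can be written $M_i = D_i^{(1)} C_i^{(1)} D_i^{(2)} C_i^{(2)} D_i^{(3)}$ with each $D_i^{(j)}$ a $2\times 2$ diagonal or weighted permutation matrix and each $C_i^{(j)}$ a $0/1$ matrix of the kind allowed in type (ii). I would prove this by a short case analysis on the rank and zero pattern of $M_i$: a rank-$\le 1$ matrix $M_i = uv^{T}$ is $\operatorname{diag}(u)\, J_1\, \operatorname{diag}(v)$; a rank-$2$ matrix with a zero entry is either already a weighted permutation or, after a row/column permutation, has the shape $\left[\begin{smallmatrix}a&0\\c&d\end{smallmatrix}\right] = \operatorname{diag}(a,c)\left[\begin{smallmatrix}1&0\\1&1\end{smallmatrix}\right]\operatorname{diag}(1,d/c)$ with $c \neq 0$; and the one case that genuinely needs both $C$-factors, namely $M_i$ of rank $2$ with all four entries $a,b,c,d$ nonzero, is handled by
\begin{equation*}
\begin{bmatrix} a & b \\ c & d \end{bmatrix}
= \begin{bmatrix} b/d & 0 \\ 0 & 1 \end{bmatrix}
  \begin{bmatrix} 1 & 1 \\ 1 & 0 \end{bmatrix}
  \begin{bmatrix} c & 0 \\ 0 & (ad-bc)/b \end{bmatrix}
  \begin{bmatrix} 1 & 1 \\ 1 & 0 \end{bmatrix}
  \begin{bmatrix} 1 & 0 \\ 0 & d/c \end{bmatrix},
\end{equation*}
which one verifies by direct multiplication (using $c + (ad-bc)/b = ad/b$); in the shorter cases one pads with identity factors so that all $M_i$ use the same five-factor shape. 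Then the mixed-product property \Cref{mixedproductproperty}, applied factor by factor, gives $M = \bigl(\bigotimes_i D_i^{(1)}\bigr)\bigl(\bigotimes_i C_i^{(1)}\bigr)\bigl(\bigotimes_i D_i^{(2)}\bigr)\bigl(\bigotimes_i C_i^{(2)}\bigr)\bigl(\bigotimes_i D_i^{(3)}\bigr)$, where the three $\bigotimes_i D_i^{(j)}$ are type-(i) (each has at most one nonzero per row, so $\le N$ multiplications and $0$ additions) and the two $\bigotimes_i C_i^{(j)}$ are type-(ii). Composing the five subcircuits produces a circuit for $M$ with at most $N\log_2 N$ addition gates and at most $3N$ multiplication gates, which matches the stated bound (and can be padded with trivial gates to hit it exactly).

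The hard part is the five-factor identity for a generic invertible $2\times 2$ matrix. It is exactly this identity that forces the two ``butterfly'' $R_1$-layers (contributing $\tfrac12 N\log_2 N + \tfrac12 N\log_2 N = N\log_2 N$ additions) and the three ``scaling'' diagonal layers (contributing $3N$ multiplications), and so it is the place where the constant $3$ in the multiplication bound originates. Everything else — checking the handful of remaining $2\times 2$ cases, counting gates in the butterfly layers, and bookkeeping the mixed-product expansion — is routine.
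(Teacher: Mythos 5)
Your proposal is correct, and its skeleton is the same as the paper's: in both arguments $M$ is written as three diagonal/weighted-permutation layers (giving the $3N$ multiplications) sandwiching two ``disjointness-like'' Kronecker-product layers, each of which is unrolled by \Cref{butterfly} into $\log_2 N$ butterfly stages of $N/2$ additions apiece (giving the $N\log_2 N$ additions). For $M=V_f$ the two treatments coincide exactly ($V_f = R_n\times D_f\times R_n$ via \Cref{diagexpression}). Where you diverge is the Kronecker-product case: the paper routes $\bigotimes_i M_i$ through \Cref{lem:prodtofunct} back to a $V_f$ (conjugating each factor to an outer-$1$ matrix and recognizing it as a $V_{g_i}$), whereas you decompose each $2\times 2$ factor directly as $D^{(1)}C^{(1)}D^{(2)}C^{(2)}D^{(3)}$ and tensor the five layers via \Cref{mixedproductproperty}. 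Your key identity checks out: with $R=\bigl[\begin{smallmatrix}1&1\\1&0\end{smallmatrix}\bigr]$ one computes $R\,\mathrm{diag}(c,(ad-bc)/b)\,R = \bigl[\begin{smallmatrix}ad/b&c\\c&c\end{smallmatrix}\bigr]$, and the outer diagonals restore $\bigl[\begin{smallmatrix}a&b\\c&d\end{smallmatrix}\bigr]$, valid whenever $b,c,d\neq 0$. What your route buys is completeness on the degenerate cases: the paper's citation of \Cref{lem:prodtofunct} literally requires each $M_i$ to be outer-nonzero (the general zero patterns are only handled later, in \Cref{rigthm2by2}), while your explicit case analysis (rank $\le 1$ via $\mathrm{diag}(u)\,J_1\,\mathrm{diag}(v)$, a single zero entry via a triangular factorization, all entries nonzero via the five-factor identity) covers every $2\times 2$ matrix directly; padding the shorter cases with $I_2$ factors only lowers the addition count. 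What it costs is that the argument is specific to $q=2$ and does not by itself give the uniform-algorithm corollary the paper derives from the $V_f$ form, but for the lemma as stated this is immaterial.
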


\begin{proof}
By \Cref{diagexpression} and \Cref{lem:prodtofunct}, any such $M$ can be written as the product of three diagonal matrices and two copies of $R_n$. It thus suffices to show that $R_n$ has an arithmetic circuit with $\frac12 N \log_2 N$ addition gates. By \Cref{butterfly}, to compute $R_n$, it suffices to compute $\log_2 N$ different copies of $A := R_1 \otimes I_{N/2}$. In $A$, half the rows have two $1$s, which can be computed by a single addition gate, and the other half of the rows have a single $1$ and don't need any gates to compute (we just output one of the inputs). Thus, in total, $A$ needs $N/2$ addition gates, so $R_n$ needs $\frac12 N \log_2 N$ addition gates, as desired.
\end{proof}

In fact, we can make this algorithm uniform, since the relevant diagonal matrices can all also be constructed by evaluating $R_n$:

\begin{lemma}
For any field $\F$, positive integer $n$, and function $f : \{0,1\}^n \to \F$, letting $N = 2^n$, suppose there is an algorithm that outputs the truth table of $f$ (i.e. evaluates $f$ on all $N$ inputs from $\{0,1\}^n$) in time $T$. Let $M$ be the time to perform a multiplication over $\F$, and $A$ be the time to perform an addition or subtraction over $\F$. Then, there is an algorithm which, given as input $x \in \F^{N}$, outputs $V_f \times x$ in time $O(T + A \cdot N \log N + M \cdot N)$.
\end{lemma}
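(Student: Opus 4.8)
The plan is to use the factorization $V_f = R_n \times D_f \times R_n$ supplied by \Cref{diagexpression}, where $D_f$ is the diagonal matrix whose diagonal is $b_f := R_n^{-1} \times a_f$ and $a_f \in \F^N$ is the truth table of $f$. The whole algorithm is then: a one-time precomputation of the vector $b_f$, followed by, on each query $x \in \F^N$, three passes --- apply $R_n$, scale coordinatewise by $b_f$, and apply $R_n$ again. The only arithmetic-complexity facts needed are that evaluating the truth table $a_f$ costs $T$ by hypothesis, that applying $R_n$ or $R_n^{-1}$ to a vector costs $O(A \cdot N \log N)$ field additions/subtractions and no multiplications, and that the diagonal scaling by $b_f$ costs exactly $N$ multiplications.

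For the cost of $R_n$ and $R_n^{-1}$: since $\det R_1 = -1 \neq 0$ over every field, $R_1$ is invertible, with $R_1^{-1} = \left[\begin{smallmatrix} 0 & 1 \\ 1 & -1 \end{smallmatrix}\right]$, and by the mixed-product property $R_n^{-1} = (R_1^{-1})^{\otimes n}$. Applying \Cref{butterfly} with all $M_i$ equal to $R_1$ (respectively all equal to $R_1^{-1}$), the linear transformation $R_n$ (respectively $R_n^{-1}$) factors as a product of $n = \log_2 N$ matrices, the $i$-th of which is a permutation of $R_1 \otimes I_{N/2}$ (respectively $R_1^{-1} \otimes I_{N/2}$). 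Each such layer is evaluated with $N/2$ additions or subtractions and no multiplications: half of the output coordinates of the layer simply copy an input coordinate (a wire), and the other half are obtained from a single addition or subtraction of two input coordinates, exactly as in the fast Walsh-Hadamard transform and in the preceding Lemma's bound of $\tfrac12 N \log_2 N$ addition gates for $R_n$. Over the $n$ layers this is $O(A \cdot N \log N)$ in total.

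Assembling the algorithm: first compute the truth table $a_f$ in time $T$; then compute $b_f = R_n^{-1} \times a_f$ in time $O(A \cdot N \log N)$ using the butterfly just described, and store its $N$ entries --- this specifies $D_f$. Given the input $x$, compute $u := R_n \times x$ in time $O(A \cdot N \log N)$, then $v := D_f \times u$ by setting $v[z] = b_f[z] \cdot u[z]$ for every $z \in \{0,1\}^n$, in time $O(M \cdot N)$, and finally output $R_n \times v$ in time $O(A \cdot N \log N)$. Correctness is immediate from \Cref{diagexpression}, and summing the costs yields the claimed $O(T + A \cdot N \log N + M \cdot N)$.

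There is no genuine obstacle here; the statement is bookkeeping layered on top of \Cref{diagexpression} and \Cref{butterfly}. The one point deserving care is ensuring that no multiplications appear beyond the $N$ coming from $D_f$: this relies on the fact that all nonzero entries of both $R_1$ and $R_1^{-1}$ lie in $\{-1,0,1\}$, so every gate in the butterfly realizations of $R_n$ and $R_n^{-1}$ is a pure addition, a pure subtraction, or a wire, and on the fact that the query-time applications of $R_n$ do not recompute anything about $f$ since $b_f$ is precomputed once. This is also why the multiplication count is $N$ here rather than the $3N$ of the preceding non-uniform lemma, which allowed the more general product-of-two-$2\times2$-matrices form and hence unfolded into three diagonal factors instead of the single diagonal factor $D_f$.
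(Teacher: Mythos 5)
Your proof is correct and is exactly the argument the paper intends (the paper states this lemma without a written proof, but the preceding remark and the lemma before it point to precisely this route: factor $V_f = R_n \times D_f \times R_n$ via \Cref{diagexpression}, precompute $b_f = R_n^{-1} a_f$ with a butterfly for $(R_1^{-1})^{\otimes n}$, and note that all butterfly layers use only additions/subtractions since the entries of $R_1$ and $R_1^{-1}$ lie in $\{-1,0,1\}$). Your accounting of $N$ rather than $3N$ multiplications and the verification that $R_1$ is invertible over every field are both correct.
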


For $f = AND$, this corresponds to the algorithm for the Orthogonal Vectors problem with $n$ vectors in dimension $d$ with running time $O(n + d \cdot 2^d)$. We hence get a the same running time for any such problem for a function $f : \{0,1\}^n \to \F$.

\section{Generalizing the Approach of \Cref{sec:framework}} \label{sec:moregeneral}

In \Cref{sec:framework} we showed how to convert a rigidity upper bound for a matrix $M$ into a low-depth circuit upper bound for $M^{\otimes n}$. A key intermediate step was that from a circuit upper bound for $M$ itself, one can take Kronecker powers to get a circuit for $M^{\otimes n}$ for any $n$. In this section, we generalize this to show that if $M \in \F^{q \times q}$ has a nontrivial construction $M = B_1 \times B_2 \times \cdots B_d$ where $\prod_{i=1}^d \nnz(B_i) < q^{d+1}$ then this can still give a nontrivial circuit upper bound for $M^{\otimes n}$ of depth $d$ and size $O(q^{n(1 + (1 - \eps)/d)})$, even if $\nnz(B_i)$ is greater than $q^{1 + 1/d}$ for some of the $i$. Note that we can achieve $\prod_{i=1}^d \nnz(B_i) = q^{d+1}$ by picking $B_1 = M$ and $B_2 = \cdots = B_d = I_q$. This more general result was not needed in our construction in \Cref{sec:framework}, since the constructions from non-rigidity were naturally symmetric, but they could be useful for designing upper bounds in other ways.

\begin{lemma} \label{lem:evenout}
For any field $\F$ and positive integers $q,d$, and matrix $M \in \F^{q \times q}$, suppose there are real numbers $a_1, \ldots, a_d \geq 1$ such that, for any positive integer $n$, the matrix $M^{\otimes n}$ can be written as $M^{\otimes n} = A_{n,1} \times A_{n,2} \times \cdots \times A_{n,d}$ for some matrices with $\nnz(A_{n,\ell})=O(q^{a_\ell \cdot n})$ for all $\ell \in [d]$. Let $j^* = \argmax_{j \in [d]} a_j$, and let $$a := 1 + \frac{a_{j^*} - 1}{1 + d \cdot a_{j^*} - \sum_{j=1}^d a_j}.$$ Then, for any positive integer $n$, we can write $M^{\otimes n} = B_{n,1} \times B_{n,2} \times \cdots \times B_{n,d}$ for some matrices with $\nnz(B_{n,j})=O(q^{a \cdot n})$ for all $j \in [d]$.

In particular, if $(\sum_{j=1}^d a_j)/d < 1 + 1/d$, then $a < 1 + \frac{1}{d}$.
\end{lemma}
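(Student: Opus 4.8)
The plan is to build, for $n$ a multiple of a suitable fixed integer $T$, a depth-$d$ factorization of $M^{\otimes n}$ out of $T$ ``copies'' of a depth-$d$ factorization of $M^{\otimes(n/T)}$, giving each copy one of two kinds of schedule and apportioning the copies between the kinds so that the per-layer sparsities come out equal. Concretely, fix a large integer $n'$ and apply the hypothesis to write $M^{\otimes n'} = A_{n',1}\times\cdots\times A_{n',d}$ with $\nnz(A_{n',\ell}) \le C\cdot q^{a_\ell n'}$ for all $\ell$. Then $M^{\otimes Tn'} = (M^{\otimes n'})^{\otimes T}$, and by the mixed-product property (\Cref{mixedproductproperty}) this equals $\prod_{\ell=1}^d B_\ell$ as soon as, for each of the $T$ copies $c$, we fix matrices $C^{(c)}_1,\dots,C^{(c)}_d$ with $C^{(c)}_1\times\cdots\times C^{(c)}_d = M^{\otimes n'}$ and set $B_\ell := \bigotimes_{c=1}^T C^{(c)}_\ell$; no permutations are needed, since the Kronecker factors of $B_\ell$ appear in the same order for every $\ell$. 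I partition the copies into a group $G_0$ of size $t_0$, whose copies use the ``honest'' schedule $C^{(c)}_\ell = A_{n',\ell}$, and groups $G_1,\dots,G_d$ of sizes $t_1,\dots,t_d$, where a copy in $G_j$ computes $M^{\otimes n'}$ entirely inside layer $j$, i.e. $C^{(c)}_j = M^{\otimes n'}$ and $C^{(c)}_\ell = I_{q^{n'}}$ for $\ell\neq j$; for these copies I only need the trivial bound $\nnz(M^{\otimes n'})\le q^{2n'}$.

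With this setup $\nnz(B_\ell)=\prod_c \nnz(C^{(c)}_\ell) \le C^{t_0}\cdot\big(q^{a_\ell n'}\big)^{t_0}\big(q^{2n'}\big)^{t_\ell}\big(q^{n'}\big)^{T-t_0-t_\ell} = C^{t_0}\cdot q^{\,n'\left(t_0(a_\ell-1)+t_\ell+T\right)}$, so it suffices to pick nonnegative integers $t_0,\dots,t_d$ with $\sum_{i=0}^d t_i = T$ and $t_0(a_\ell-1)+t_\ell \le (a-1)T$ for every $\ell\in[d]$. The ``ideal'' choice is $t_0 = T/\big(1+da_{j^*}-\sum_j a_j\big)$ (the denominator is $\ge 1$ since $da_{j^*}\ge\sum_j a_j$, so this is well defined even when $a_{j^*}=1$) together with $t_j = t_0\cdot(a_{j^*}-a_j)$ for $j\in[d]$: all $t_i$ are nonnegative, $t_{j^*}=0$, a short computation gives $\sum_{i=0}^d t_i = t_0\big(1+da_{j^*}-\sum_j a_j\big)=T$, and for every $\ell$ one gets $t_0(a_\ell-1)+t_\ell = t_0(a_{j^*}-1) = (a-1)T$ with $a$ exactly as in the statement. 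Hence $\nnz(B_\ell)\le C^{T}\,q^{aTn'} = O(q^{a\cdot(Tn')})$ (note $t_0\le T$ is a fixed integer, so $C^{t_0}=O(1)$), which is the claimed factorization for $M^{\otimes m}$ with $m=Tn'$. For general $m$, writing $m = Tn'+\rho$ with $0\le\rho<T$ and $M^{\otimes m} = M^{\otimes Tn'}\otimes M^{\otimes\rho} = \prod_\ell\big(B_\ell\otimes D_\ell\big)$ with $D_1 = M^{\otimes\rho}$ and $D_\ell = I_{q^\rho}$ otherwise changes each sparsity by at most the constant factor $q^{2T}$.

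I expect the only real fuss to be the integrality of the group sizes: the ideal $t_i$ above are real numbers, so one must round them to integers summing to $T$. When the $a_j$ are rational one simply takes $T$ to be a common denominator and the construction is exact; in general a routine rounding (or, equivalently, replacing each $a_j$ by a slightly larger rational) perturbs each $t_i$ by $O(1)$ and is absorbed into the bound. Everything else is bookkeeping with \Cref{mixedproductproperty} and the identities $\nnz(A\otimes B)=\nnz(A)\nnz(B)$ and $\nnz(I_{q^{n'}})=q^{n'}$. Finally, the ``in particular'' is immediate: since $1+da_{j^*}-\sum_j a_j>0$, the inequality $a<1+\tfrac1d$ is equivalent to $d(a_{j^*}-1)<1+da_{j^*}-\sum_j a_j$, i.e. to $\sum_j a_j<d+1$, i.e. to $\big(\sum_j a_j\big)/d<1+\tfrac1d$.
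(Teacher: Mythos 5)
Your proposal is correct and follows essentially the same route as the paper's proof: allocating a $t_0/T$ fraction of the Kronecker factors to the ``honest'' $d$-layer factorization and a $t_j/T$ fraction to a single-layer computation in layer $j$, with $t_0/T = 1/(1+d a_{j^*}-\sum_j a_j)$ and $t_j/T = (a_{j^*}-a_j)\,t_0/T$, exactly matching the paper's choice of exponents $b$ and $b_j$. Your treatment is if anything slightly cleaner in two minor respects — you address the integrality of the exponents explicitly (the paper absorbs this into the $O(\cdot)$), and your derivation of the ``in particular'' clause as the direct equivalence $a < 1+\tfrac1d \iff \sum_j a_j < d+1$ is simpler than the paper's monotonicity argument.
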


\begin{proof}
We first need one piece of notation: For matrices $S,T$ of the same dimensions, and a Boolean predicate $P$, we write $(P ~?~ S : T)$ to denote the matrix $$(P ~?~ S : T) := \begin{cases} S &\text{ if $P$ is true,} \\ T &\text{ if $P$ is false.} \end{cases}$$

Let $b, b_1, \ldots, b_d$ be positive real numbers which sum to $1$ to be determined. By assumption, for each $j \in [d]$, there is a matrix $A_{bn, j}$ with $\nnz(A_{bn, j}) = O(q^{b \cdot a_j \cdot n})$, and $M^{\otimes b n} = \prod_{j=1}^d A_{bn,j}$. We can hence write:
\begin{align*}
    M^{\otimes n} &= M^{\otimes b n} \otimes \bigotimes_{\ell=1}^d M^{\otimes b_\ell \cdot n} \\
    &= \left( \prod_{j=1}^d A_{bn,j} \right) \otimes \bigotimes_{\ell=1}^d \left( \prod_{j=1}^d ([j=\ell] ~?~ M^{\otimes b_\ell \cdot n} : I_{q^{b_\ell \cdot n}}) \right) \\
    &= \prod_{j=1}^d \left( A_{bn,j} \otimes \bigotimes_{\ell=1}^d ([j=\ell] ~?~ M^{\otimes b_\ell \cdot n} : I_{q^{b_\ell \cdot n}}) \right)\\
    &= \prod_{j=1}^d P_j \times \left( A_{bn,j} \otimes M^{\otimes b_j n} \otimes I_{q^{n(1-b-b_j)}} \right) \times P'_j,\\
\end{align*}
for appropriate permutation matrices $P_j, P'_j$ for each $j \in [d]$, by \Cref{mixedproductproperty}. We will pick $$B_{n,j} := P_j \times \left( A_{bn,j} \otimes M^{\otimes b_j n} \otimes I_{q^{n(1-b-b_j)}} \right) \times P'_j,$$ so it is indeed the case that $M^{\otimes n} = B_{n,1} \times B_{n,2} \times \cdots \times B_{n,d}$. Let us now bound $\nnz(B_{n,j})$:
\begin{align*}
    \nnz(B_{n,j}) &= \nnz(P_j \times \left( A_{bn,j} \otimes M^{\otimes b_j n} \otimes I_{q^{n(1-b-b_j)}} \right) \times P'_j) \\
    &= \nnz(A_{bn,j} \otimes M^{\otimes b_j n} \otimes I_{q^{n(1-b-b_j)}} ) \\
    &= \nnz(A_{bn,j}) \cdot \nnz( M^{\otimes b_j n} ) \cdot \nnz( I_{q^{n(1-b-b_j)}} ) \\
    &\leq O(q^{b \cdot a_j \cdot n}) \cdot q^{2b_j n}  \cdot q^{n(1-b-b_j)}  \\
    &= O(q^{(1 + b_j + (a_j-1)b) \cdot n} ).
\end{align*}

We pick $$b := \frac{1}{1 + \sum_{j=1}^d (a_{j^*} - a_j)},$$ and for all $j \in [d]$, we pick $$b_j := (a_{j^*} - a_j) \cdot b,$$
so that $b + \sum_{j=1}^d b_j = 1$. Hence, for every $j \in [d]$, we have from the calculation above that
$$\nnz(B_{n,j}) \leq O(q^{(1 + b_j + (a_j-1)b) \cdot n} ) = O(q^{(1 + (a_{j^*} - a_j) \cdot b + (a_j-1)b) \cdot n} ) = O(q^{(1 + (a_{j^*}-1)b) \cdot n} ),$$
as desired.

For the `in particular' sentence of the Lemma statement: Suppose $\sum_{j=1}^d a_j / d = 1 + c/d$ for some $0 \leq c < 1$. It follows that $$a = 1 + \frac{a_{j^*} - 1}{1 + d \cdot a_{j^*} - d - c}.$$
The derivative of this expression with respect to $a_{j^*}$ is $(1-c)/(a_{j^*} d - c - d + 1)^2$, which is always nonnegative, so for a fixed $c$, the value of $a$ is maximized when $a_{j*}$ is as large as possible. Since $a_j \geq 1$ for all $j \in [d]$, we must have that
$$a_{j^*} = \left(\sum_{j=1}^d a_j \right) - \left(\sum_{j \in [d], j \neq j^*} a_j \right) \leq (d+c) - (d-1) \cdot 1 = c+1.$$
We therefore have that
$$a \leq  1 + \frac{(c+1) - 1}{1 + d \cdot (c+1) - d - c} = 1 + \frac{c}{1+c(d-1)} < 1 + \frac{1}{d},$$
as desired.
\end{proof}

When the matrix $M$ is symmetric (i.e. satisfies $M = M^T$), we can get an improved exponent (by improving on the choice of $a_{j^*}$):
\begin{lemma}\label{lem:evenoutsym}
For any field $\F$ and positive integers $q,d$, and matrix $M \in \F^{q \times q}$ with $M=M^T$, suppose there are real numbers $a_1, \ldots, a_d \geq 1$ such that, for any positive integer $n$, the matrix $M^{\otimes n}$ can be written as $M^{\otimes n} = A_{n,1} \times A_{n,2} \times \cdots \times A_{n,d}$ for some matrices with $\nnz(A_{n,\ell})=O(q^{a_\ell \cdot n})$ for all $\ell \in [d]$. Define $a_{j^*} := \max_{j \in [d]} (a_j+ a_{d-j})/2$, and let $$a := 1 + \frac{a_{j^*} - 1}{1 + d \cdot a_{j^*} - \sum_{j=1}^d a_j}.$$ Then, for any positive integer $n$, we can write $M^{\otimes n} = B_{n,1} \times B_{n,2} \times \cdots \times B_{n,d}$ for some matrices with $\nnz(B_{n,j})=O(q^{a \cdot n})$ for all $j \in [d]$.

In particular, if $(\sum_{j=1}^d a_j)/d < 1 + 1/d$, then $a < 1 + \frac{1}{d}$.
\end{lemma}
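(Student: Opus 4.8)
The plan is to symmetrize the given decomposition using $M = M^T$, and then invoke \Cref{lem:evenout}. The key observation is that transposition turns a Kronecker product into the Kronecker product of the transposes and preserves the number of nonzeros, so $(M^{\otimes n})^T = (M^T)^{\otimes n} = M^{\otimes n}$: the hypothesized decomposition $M^{\otimes n} = A_{n,1} \times \cdots \times A_{n,d}$ (with $\nnz(A_{n,\ell}) = O(q^{a_\ell n})$) therefore also yields $M^{\otimes n} = A_{n,d}^T \times \cdots \times A_{n,1}^T$, whose $\ell$-th factor $A_{n,d+1-\ell}^T$ has $O(q^{a_{d+1-\ell} n})$ nonzeros.

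Next I would tensor these two decompositions. Using $M^{\otimes 2n} = M^{\otimes n} \otimes M^{\otimes n}$ and repeatedly applying the mixed-product property (\Cref{mixedproductproperty}) --- the forward decomposition in the left copy, the transposed one in the right copy --- gives, with no permutation matrices needed,
\[
M^{\otimes 2n} = \prod_{\ell=1}^d \left( A_{n,\ell} \otimes A_{n,d+1-\ell}^T \right),
\]
and the $\ell$-th factor has $\nnz(A_{n,\ell}) \cdot \nnz(A_{n,d+1-\ell}) = O(q^{(a_\ell + a_{d+1-\ell}) n})$. Writing $a_\ell'' := (a_\ell + a_{d+1-\ell})/2 \geq 1$, this exhibits for every even $m$ a depth-$d$ synchronous decomposition of $M^{\otimes m}$ whose $\ell$-th factor has $O(q^{a_\ell'' m})$ nonzeros; for odd $m$ the same holds up to constant factors by writing $M^{\otimes m} = M^{\otimes(m-1)} \otimes M$ and combining with the trivial decomposition $M = M \times I_q \times \cdots \times I_q$ via \Cref{mixedproductproperty} (the extra Kronecker factor inflates each sparsity by only a constant, using $a_\ell'' \geq 1$). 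Thus the hypotheses of \Cref{lem:evenout} are met with the exponents $a_1'', \ldots, a_d''$.

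Finally I would apply \Cref{lem:evenout} to these exponents: its output exponent is $1 + (\max_\ell a_\ell'' - 1)/(1 + d\max_\ell a_\ell'' - \sum_\ell a_\ell'')$, and since $\sum_\ell a_\ell'' = \sum_\ell a_\ell$ while $\max_\ell a_\ell'' = a_{j^*}$ by definition, this is exactly the claimed $a$; \Cref{lem:evenout} then supplies $B_{n,1}, \ldots, B_{n,d}$ with $M^{\otimes n} = \prod_j B_{n,j}$ and $\nnz(B_{n,j}) = O(q^{an})$. The ``in particular'' clause is immediate from that of \Cref{lem:evenout}, as $\frac1d\sum_\ell a_\ell'' = \frac1d\sum_\ell a_\ell < 1 + \frac1d$.

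There is no real obstacle beyond bookkeeping: the dimension-matching in the mixed-product property (clean and permutation-free in the even case), and the integrality/odd-$m$ padding, handled exactly as non-multiples are in the proof of \Cref{lem:evenout}. The one conceptual point --- that $M = M^T$ lets one replace the single worst exponent $\max_\ell a_\ell$ by the smaller balanced quantity $\max_\ell (a_\ell + a_{d+1-\ell})/2$ --- is achieved entirely by the transpose-and-tensor step.
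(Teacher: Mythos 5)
Your proof is correct and takes essentially the same route as the paper: both exploit $M=M^T$ to tensor the given decomposition with its reversed transpose, obtaining $\prod_{\ell}\bigl(A_{\ell}\otimes A_{d+1-\ell}^T\bigr)$ via the mixed-product property, and then feed the averaged exponents $(a_\ell+a_{d+1-\ell})/2$ into \Cref{lem:evenout}. If anything you are slightly more careful than the paper, which works with $M^{\otimes n}=M^{\otimes n/2}\otimes(M^{\otimes n/2})^T$ without addressing odd $n$ and writes the reversed index as $d-j$ (an off-by-one) where your $d+1-\ell$ is the intended one.
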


\begin{proof}
We can write \begin{align*}
    M^{\otimes n} &= M^{\otimes n/2} \otimes (M^{\otimes n/2})^T 
    \\ &= \left( \prod_{j=1}^d A_{n/2,j} \right) \otimes \left( \prod_{j=1}^d A_{n/2,d-j}^T \right) 
    \\ &= \prod_{j=1}^d \left( A_{n/2,j}  \otimes  A_{n/2,d-j}^T \right).
\end{align*}
The result then follows by applying \Cref{lem:evenout} to this new expression of $M^{\otimes n}$ as a product of $d$ matrices, since for $\ell \in [d]$, we have
$$\nnz\left( A_{n/2,j}  \otimes  A_{n/2,d-j}^T \right) = \nnz\left( A_{n/2,j}\right) \cdot \nnz\left( A_{n/2,d-j}\right) \leq O\left( q^{(a_j + a_{d-j}) \cdot n/2} \right).$$
\end{proof}

\section*{Acknowledgements}
I would like to thank Amol Aggarwal, Chi-Ning Chou, Ben Edelman, Alexander Golovnev, DD Liu, Jon Schneider, Leslie Valiant, Virginia Vassilevska Williams, and Ryan Williams for helpful discussions throughout this project. I'd especially like to thank Virginia Vassilevska Williams for pointing out \Cref{prop:rectmm} to me, and anonymous reviewers for many helpful comments.

\bibliographystyle{alpha}
\bibliography{papers}

\end{document}